\algnewcommand\algorithmicinput{\textbf{Input:}}
\algnewcommand\INPUT{\item[\algorithmicinput]}
\algnewcommand\algorithmicoutput{\textbf{Output:}}
\algnewcommand\OUTPUT{\item[\algorithmicoutput]}
\title{Algorithm Engineering for Cut Problems}
\author{Alexander Noe, MSc BSc}
\declaretheorem[numberwithin=section]{theorem}
\declaretheorem[numberlike=theorem]{lemma}
\declaretheorem[numberlike=theorem]{claim}
\def\MdN{\ensuremath{\mathbb{N}}}
\newcommand{\Oh}[1]{\mathcal{O}\!\left( #1\right)}
\newcommand{\Otilde}[1]{\tilde{\mathcal{O}}\!\left( #1\right)}
\newcommand{\Is}{:=}
\newcommand{\etal}{et~al.\ }
\newcommand{\eg}{e.g.\ }
\newcommand{\ie}{i.e.\ }
\newcommand{\CC}{C\texttt{++}}
\newcommand{\cut}{\mathcal{C}}
\newcommand{\bestcut}{\widehat{\cut}}
\newcommand{\wgt}{\mathcal{W}}
\newcommand{\bestwgt}{\widehat{\wgt}}
\newcommand{\vopt}{\mathcal{V}}
\newcommand{\queue}{\mathcal{Q}}
\newcommand*{\shorterDots}{.\kern-0.06em.\kern-0.06em.} 
\newcommand{\prOne}{\textttA{HeavyEdge}}
\newcommand{\prTwo}{\textttA{ImbalancedVertex}}
\newcommand{\prThree}{\textttA{ImbalancedTriangle}}
\newcommand{\prFour}{\textttA{HeavyNeighborhood}}
\newcommand{\optZero}{\textttA{BasicCactus}}
\newcommand{\optOne}{\textttA{+Connectivity}}
\newcommand{\optTwo}{\textttA{+LocalContract}}
\newcommand{\optThree}{\textttA{+DegreeOne}}
\newcommand{\optFour}{\textttA{+C\&LInCactus}}
\newcommand{\optFive}{\textttA{+D1InCactus}}
\newcommand{\optSix}{\textttA{FullAlgorithm}}
\def\theLetterSpace{-0.5pt}
\def\extraWordSpace{-0.5pt}
\newcommand\spaceout[2][\theLetterSpace]{%
  \def\LocalLetterSpace{#1}\expandafter\spaceouthelpA#2 \relax\relax}
\def\spaceouthelpA#1 #2\relax{%
  \spaceouthelpB#1\relax\relax%
  \ifx\relax#2\else\kern\extraWordSpace\ \kern\LocalLetterSpace\spaceouthelpA#2\relax\fi
}
\def\spaceouthelpB#1#2\relax{%
  #1%
  \ifx\relax#2\else
    \kern\LocalLetterSpace\spaceouthelpB#2\relax%
  \fi
}
\def\textttA#1{\texttt{\spaceout{#1}}}
\newcommand{\vcbase}{\texttt{VieCut-MTC}}
\newcommand{\exact}{\texttt{Exact-MTC}}
\newcommand{\inexact}{\texttt{Inexact-MTC}}
\def\reflist#1#2{
   \list
   {\theenumi.}{\settowidth\labelwidth{[#1]}\leftmargin\labelwidth
   \advance\leftmargin\labelsep
   \usecounter{enumi}\listparindent-\leftmargin}
   \renewcommand{\theenumi}{#2-\arabic{enumi}}
   \def\newblock{\hskip .11em plus .33em minus -.07em}
   \sloppy
   \sfcode`\.=1000\relax}
\begin{document}

\frontmatter

\makepagestyle{titlepage}
\makeoddhead{titlepage}{}{}{\includegraphics{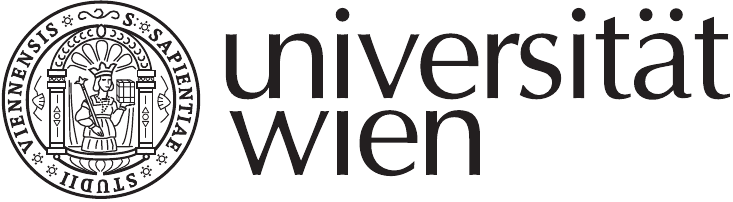}}

\calccentering{\unitlength}                         
\begin{adjustwidth*}{\unitlength}{-\unitlength}     
\begin{adjustwidth}{-1cm}{-1cm}

\thispagestyle{titlepage}
{\centering

\sffamily

~
\vfill

\vfill

\HUGE \textbf{\textsc{Dissertation / Doctoral Thesis}}\\

\vfill

\normalsize Titel der Dissertation / Title of the Doctoral Thesis \\

\huge \textbf{\thetitle}

\vfill

\normalsize verfasst von / submitted by \\
\Large \theauthor

\vfill

\normalsize angestrebter akademischer Grad / in partial fulfillment of the
requirement for the degree of \\
\Large Doktor der Technischen Wissenschaften (Dr.\,techn.)

\vfill

\normalsize

\begin{tabbing}
field of study as it appears on
the student record sheet: \hspace{0.5em} \= Informatik \kill
Wien, 2021 / Vienna, 2021 \\
\\
Studienkennzahl lt. Studienblatt: / \\
degree programme code as it appears on
the student\\ record sheet: \> A 786 880 \\
Dissertationsgebiet lt. Studienblatt: /\\ field of study as it appears on
the student record sheet: \> Informatik \\
Betreuerin: / Supervisor: \> Univ.-Prof.\ Dr.\ Monika Henzinger \\
\end{tabbing}

\vspace{-7ex}

~
} 

\end{adjustwidth}
\end{adjustwidth*}

\normalfont


\cleardoublepage
\begin{abstract}
Graphs are a natural representation of data from various contexts, such as social
connections, the web, road networks, and many more. In the last decades, many of
these networks have become enormous, requiring efficient
algorithms to cut networks into smaller, more readily comprehensible blocks. In this
work, we aim to partition the vertices of a graph into
multiple blocks while minimizing the number of edges that connect different
blocks. There is a multitude of cut or partitioning problems that have been the
focus of research for multiple decades. This work develops highly-efficient
algorithms for the \emph{(global) minimum cut problem}, the \emph{balanced graph
partitioning problem} and the \emph{multiterminal cut problem}. All of these
algorithms are efficient in practice and freely available for use\footnote{\url{https://github.com/VieCut/VieCut}}. In
particular, we obtain the following results and algorithms:
\begin{itemize}
\item Fast heuristic and exact shared-memory parallel algorithms for the
\emph{(global) minimum cut problem}. We present efficient implementations of
existing techniques and combine them with novel approaches to give algorithms
that find a minimum cut in huge networks significantly faster than
state-of-the-art algorithms. Our
heuristic algorithm has a lower empirically observed error rate than existing
inexact algorithms for the problem.
\item The first engineered algorithm that finds \emph{all} (global) minimum cuts and
returns a compact \emph{cactus graph} data structure which represents all of them in graphs
with billions of edges in a few minutes. With a multitude of data reduction
techniques, we improve the running time of state-of-the-art algorithms by up to
multiple orders of magnitude. Based on the representation of all minimum cuts,
we are able to find the most balanced minimum cut in time linear to the size of
the cactus graph.
\item A \emph{fully-dynamic minimum cut} algorithm that efficiently
maintains the minimum cut on a graph under edge insertions and deletions. While
there is theoretical work, our algorithm is the first
implementation of a fully-dynamic algorithm for the problem. Our
algorithm uses the theoretical foundation and builds on it with efficient and
finely-tuned implementations to give an algorithm that gives up to multiple
orders of magnitude speedup to static recomputation. 
\item An integer linear programming (ILP) based meta-heuristic for the \emph{balanced
graph partitioning problem}. As ILPs do not scale to large inputs, we define a
much smaller model that allows us to use symmetry breaking and make the approach
more scalable. This gives a powerful local search meta-heuristic that can
improve given high-quality partitionings even further. We incorporate this
meta-heuristic into an existing evolutionary algorithm to give an algorithm that
computes state-of-the-art partitionings from scratch.
\item A shared-memory parallel exact branch-and-reduce algorithm for the
\emph{multiterminal cut problem}. For this algorithm, we develop and engineer
highly-efficient data reduction rules to transform a problem into a much smaller
equivalent problem. Additionally we give an inexact algorithm that gives
high-quality solutions for very hard problems in reasonable time.
\end{itemize}
\end{abstract}

\cleardoublepage
\begin{abstract}
Graphen sind eine natürliche Representation von Daten aus zahlreichen Kontexten,
zum Beispiel Verbindungen in sozialen Netzwerken, Web-Netzwerken,
Straßennetzwerken und vielen weiteren. In den letzten Jahrzehnten sind viele
dieser Netzwerke zu enormer Größe gewachsen, was effiziente Algorithmen zu ihrer
Partitionierung in kleinere, eher begreifliche Teile erforderlich macht. In
dieser Arbeit versuchen wir, die Knoten von Graphen in mehrere Blöcke zu
partitionieren, so dass die Anzahl von Kanten, welche Blockgrenzen schneiden,
minimiert wird. Es gibt eine Vielzahl von Schnitt- und
Partitionierungsproblemen auf Graphen, welche Bereits seit Jahrzehnten erforscht
werden. Diese Arbeit entwickelt hocheffiziente Algorithmen für das \emph{(Global)
Minimum Cut Problem}, das \emph{Balanced Graph Partitioning Problem} und das
\emph{Multiterminal Cut Problem}. Alle hierbei entwickelten Algorithmen sind
effizient in der Praxis und frei
nutzbar\footnote{\url{https://github.com/VieCut/VieCut}}. Im Einzelnen haben
wir die folgenden Ergebnisse erzielt und Algorithmen entwickelt:

\begin{itemize}
\item Schnelle \emph{heuristische} und \emph{exakte} shared-memory parallele Algorithmen für
das \emph{(Global) Minimum Cut Problem}. Wir präsentieren effiziente
Implementierungen bestehender Methoden und kombinieren diese mit neuartigen
Verfahren, um Algorithmen zu entwickeln, die einen minimalen Schnitt signifikant
schneller finden können als der bisherige Stand der Forschung. Die heuristische Variante
unseres Algorithmus hat hierbei auch eine deutlich niedrigere empirisch
beobachtete Fehlerrate als bestehende inexakte Algorithmen für das Problem.
\item Der erste praktisch effiziente Algorithmus, welcher \emph{alle} global
minimalen Schnitte eines Graphen findet und eine kompakte \emph{Cactus Graph
Datenstruktur} bildet, welche diese Schnitte repräsentiert. Unser Algorithmus
findet alle minimalen Schnitte in Graphen mit bis zu mehreren Milliarden Kanten
und mehreren Millionen minimalen Schnitten in wenigen Minuten. Mithilfe einer
Vielzahl von Datenreduktionstechniken verbessern wir die Laufzeit von
bestehenden Algorithmen um bis zu mehreren Größenordnungen. Ausgehend von der
Cactus Graph Repräsentation sind wir auch in der Lage, den \emph{Most Balanced
Minimum Cut} in Laufzeit linear zur Größe des Kaktusgraphen zu finden.
\item Ein \emph{fully-dynamic Minimum Cut} Algorithmus, welcher effizient einen
minimalen Schnitt eines Graphen unter Kanteneinfügungen und -löschungen aufrecht
erhält. Während es bereits theoretische Forschung zu diesem Problem gibt, ist
unser Algorithmus der erste implementierte fully-dynamic Algorithmus für das
Problem. Unsere Arbeit nutzt die bestehenden theoretischen Grundlagen und
kombiniert sie mit effizienten und fein abgestimmten Implementierungen, um zu einem
Algorithmus zu gelangen, welcher um bis zu mehrere Größenordnungen schneller ist
als Neuberechnung mit statischen Algorithmen.
\item Eine Metaheuristik auf Basis von ganzzahliger linearer Optimierung für das
\emph{Balanced Graph Partitioning Problem}. Da ganzzahlige lineare Programme
nicht für große Eingaben skalieren, definieren wir ein deutlich kleineres
Modell, auf welchem wir das Problem unter Zuhilfenahme von Symmetry Breaking
skalierbar machen. Dies resultiert in einer mächtigen Metaheuristik zur lokalen
Suche, welche existierende hochqualitative Partitionierungen noch weiter
verbessern kann. Wir binden diese Metaheuristik in einen existierenden
evolutionären Algorithmus ein und erhalten so einen Algorithmus, der
Partitionierung hoher Qualität selbst erzeugen kann.
\item Ein shared-memory paralleler exakter Algorithmus für das
\emph{Multiterminal Cut Problem}. Für diesen branch-and-reduce Algorithmus entwickeln wir
hocheffiziente Datenreduktionsregeln, um ein Problem in ein viel kleineres
äquivalentes Problem umzuwandeln. Außerdem präsentieren wir einen inexakten
Algorithmus, welcher hochqualitative Lösungen für extrem schwere Instanzen in
annehmbarer Zeit liefert.
\end{itemize}
\end{abstract}

\cleardoublepage
\renewcommand{\abstractname}{Acknowledgments}
\begin{abstract}
Thank you to everyone who made these last four years a very enjoyable time - I learned a lot and got to explore very interesting problems! 

First and foremost, I would like to thank my advisors Monika Henzinger and
Christian Schulz for their support and guidance with my research. You gave me many
opportunities to learn new things and approach fascinating problems. Thank you
for being so incredibly generous with your time and expertise!

Also, I would like to thank Darren Strash for great
collaboration on the papers we wrote together! Thank you also to Ulrich
Meyer and Ulrik Brandes who agreed to review this thesis. I
could not have wished for a better thesis committee.

I am deeply thankful to Andrew Goldberg for providing me with the incredible 
opportunity to do an internship at Amazon. On this note, I am
very thankful to Quico Spaen, Nhat Le, Larissa Petroianu, Mauricio Resende, Tim Jacobs, and many
others for making this internship memorable and enjoyable. I learned
so much from all of you!

Being a part of the TAA research group over the past few years has been a great
experience! I want to thank Stefan Neumann, Gramoz Goranci, Bernhard Schuster,
Alexander Svozil, Marcelo Fonseca Faraj, Wolfgang Ost, and Richard Paul for being great office
mates! Thank you to Stefan, Alexander, and Gramoz for helping me find my way in the
wonderful city of Vienna; and to Marcelo and Wolfgang for many fruitful
discussions. I would also like to thank Sebastian Forster, Kathrin Hanauer,
Rudolf Hürner, Sagar Kale, Shahbaz Khan, Ami Paz, Pan Peng, Xiaowei Wu, Vaidehi Srinivas, Ulrike
Frolik-Steffan, Iris Gundacker, and Christina Licayan for being wonderful
colleagues and making these years very enjoyable!

I am very grateful to my friends and family for their endless support during my
years of university and graduate school. I thank my parents Birgitt and Wolfgang
for always being there for me.

Finally, I would like to wholeheartedly thank my partner Anique-Marie Cabardos
for her love and support. Thank you for keeping me happy and motivated and thank
you for your valuable help with proofreading manuscripts and parts of this thesis!

\newpage

The research leading to these results has received funding from the European
Research Council under the European Community's Seventh Framework Programme
(FP7/2007-2013) /ERC grant agreement No. 340506.

Partially supported by DFG grant SCHU 2567/1-2.

Moreover, we gratefully  acknowledge  the  Gauss  Centre for
Supercomputing e.V.  (www.gauss-centre.eu) for funding this project by providing
computing time on the GCS Supercomputer SuperMUC at Leibniz Supercomputing
Centre (www.lrz.de).

We further thank the Vienna Scientific Cluster (VSC) for providing high
performance computing resources.

\end{abstract}

\cleardoublepage
\renewcommand{\abstractname}{Bibliographic Note}
\begin{abstract}
Several results in this thesis were already published in conference and journal
papers and thus the chapters of this thesis are based on the following papers:
\begin{itemize}
    \item \textbf{Chapter 3:} Monika Henzinger, Alexander Noe, Christian Schulz
    and Darren Strash. \emph{``Practical Minimum Cut Algorithms''}. In: \emph{ALENEX.},
    2018, pp. 48--61 \\ \url{https://arxiv.org/abs/1708.06127} \\
    Monika Henzinger, Alexander Noe, Christian Schulz
    and Darren Strash \emph{``Practical Minimum Cut Algorithms''}. In: \emph{ACM
    JEA.}, 2018, Vol. 23, Article 1.8 pp. 1-22 \\ \url{https://doi.org/10.1145/3274662}
    \item \textbf{Chapter 4:} Monika Henzinger, Alexander Noe and Christian
    Schulz. \emph{``Shared-memory Exact Minimum Cuts''}. In: \emph{IPDPS.},
    2019., pp. 13--22 \\ \url{https://arxiv.org/abs/1808.05458}
    \item \textbf{Chapter 5:} Monika Henzinger, Alexander Noe, Christian Schulz
    and Darren Strash. \emph{``Finding All Global Minimum Cuts in Practice''}.
    In \emph{ESA.}, 2020., Article 59, pp. 1--20 \\ \url{https://arxiv.org/abs/2002.06948}
    \item \textbf{Chapter 6:} Monika Henzinger, Alexander Noe and Christian
    Schulz. \emph{``Practical Fully Dynamic Minimum Cut Algorithms''}.
    Manuscript., 2021. \\ \url{https://arxiv.org/abs/2101.05033}    
    \item \textbf{Chapter 7:} Alexandra Henzinger, Alexander Noe and Christian Schulz.
    \emph{``ILP-based Local Search for Graph Partitioning''}. In: \emph{SEA.},
    2018., Article 4, pp. 1--15 \\ \url{https://arxiv.org/abs/1802.07144} \\
    Alexandra Henzinger, Alexander Noe and Christian Schulz.
    \emph{``ILP-based Local Search for Graph Partitioning''}. In: \emph{ACM
    JEA.}, 2020., Vol. 25, Article 9, pp. 1--26 \\  \url{https://doi.org/10.1145/3398634}
    \item \textbf{Chapter 8:} Monika Henzinger, Alexander Noe and Christian
    Schulz. \emph{``Shared-memory Branch-and-reduce for Multiterminal Cuts''}.
    In: \emph{ALENEX.}, 2020., pp. 42--55 \\ \url{https://arxiv.org/abs/1908.04141} \\
    Monika Henzinger, Alexander Noe and Christian Schulz. \emph{``Faster
    Parallel Multiterminal Cuts''}, Manuscript., 2020. \\ \url{https://arxiv.org/abs/2004.11666}
\end{itemize}

Authors appear in alphabetical order in all listed publications.

\end{abstract}

\cleardoublepage
\tableofcontents*

\mainmatter

\chapter{Introduction}

\section{Motivation}

In the last few decades, world-spanning networks have created a plethora of
structured and unstructured data. One very prominent example is the internet,
which has seen the creation and growth of many networks, some of them to immense
scale. This immense scale makes extracting information from the networks a hard
task and necessitates the \emph{partitioning} of networks into smaller, more
readily comprehensible blocks. \emph{Graphs} are a good abstraction to constitute such
networks in a way that is understandable both for humans and machines. In a
graph, we have a set of \emph{vertices}, where each vertex represents an entity, such
as a person, street address or a work package in a computer program. If two
vertices are linked, such as friends in a social network or street
addresses that are connected by a road, they are connected by an \emph{edge}.
This work focuses on \emph{undirected graphs}, \ie edges do not have a direction
and a connection from A to B implies that B is also connected to A. In some
graphs, vertices and edges have \emph{weights}, for example if we have a graph
that depicts a complex program where vertices are subprograms and connections
represent communication, vertex weights indicate the computational complexity of
a subprogram and edge weights indicate communication volume.

\emph{Graph algorithms} aim to solve problems on such a graph. In this work, we
look at various \emph{cut problems} or \emph{partitioning problems}, problems in which we want to partition the set of vertices into two or more subsets. Due to the
large scale of global connections we want to be able to partition them into more
manageable subgraphs. In all of the problems discussed in this dissertation, we
aim to partition the set of vertices in such a way that the total weight of \emph{cut
edges}, \ie edges that connect vertices in different blocks, or number of cut
edges in graphs without edge weights, is minimized. We call the weight sum of
cut edges the \emph{cut size}. This allows the partitioning
of networks in such a way that communication over block boundaries in computing
networks or separated relationships in social networks is as small as possible.

In this dissertation, we look at three important cut problems. In
Part~\ref{p:mincut} we look at the \emph{minimum cut problem} or \emph{global
minimum cut problem} where the aim is to find the smallest cut between two
non-empty blocks of vertices without making any restrictions on the size of
either block. In Part~\ref{p:gp}, we look at the \emph{balanced graph
partitioning problem}. In this problem we aim to partition the vertex set into
$k$ blocks of roughly equal size so that the cut size is minimal.
Part~\ref{p:mtc} deals with the \emph{multiterminal cut problem}, where, given a
set of $k$ vertices called \emph{terminals}, we want to find the smallest cut
that pairwisely separates all terminals. The three parts of the dissertation are
mostly independent; however, some techniques and ideas are shared between
algorithms for different problems. We then give a brief re-introduction in the
latter part and also cross-reference to the previous usage for further details.

We use the methodology and techniques of \emph{algorithm
engineering}~\cite{sanders2009algorithm} to give algorithms which give fast and strong solutions on a wide
variety of different real-world instances but also stand on a sound theoretical
base. In the methodology of algorithm engineering, algorithms are designed and
analyzed using realistic machine models. In contrast to algorithm theory, these
algorithms are then implemented and evaluated using experiments on data from
real-world applications. Based on these experiments, we amend our design and repeat
this inductive cycle until our algorithm is satisfactory. One important
aspect is that the results of the implementation can be published as
algorithm libraries so that other people can use them. As we develop algorithms
for fundamental graph problems in this dissertation, we publish all of our
algorithms under the permissive MIT license so that they can be used as building
blocks for complex systems. The implementations in
Parts~\ref{p:mincut}~and~\ref{p:mtc} are available as the \emph{VieCut} (Vienna
Minimum Cuts) library~\footnote{\url{https://github.com/VieCut/VieCut}}, the
implementations in Part~\ref{p:gp} are integrated into the
KaHIP~\footnote{\url{https://github.com/KaHIP/KaHIP}} graph partitioning
framework~\cite{kabapeE,kaffpaE}. For a detailed description of the methodology
of algorithm engineering we refer the reader to~\cite{sanders2009algorithm}.

For the minimum cut problem and the multiterminal cut problem, we develop and use
a multitude of \emph{local reduction rules} or \emph{kernelization rules}. These
reduction rules are related to the concept of fixed-parameter tractable (FPT)
algorithms, where a hard problem can be solved efficiently as long as some
problem parameter is not too large. FPT algorithms have long been a
well-established field in algorithm theory, however only few of the techniques
are implemented and tested on real datasets, and their practical potential is
far from understood. More recently, the engineering aspect has gained some
momentum. There are several experimental studies in this area that take up ideas
from FPT or kernelization theory, e.g.~for independent sets (or equivalently
vertex
cover)~\cite{akiba-tcs-2016,DBLP:conf/sigmod/ChangLZ17,dahlum2016accelerating,DBLP:conf/alenex/Lamm0SWZ19,DBLP:journals/corr/abs-1908-06795,DBLP:conf/alenex/Hespe0S18},
for cut tree construction\cite{DBLP:conf/icdm/AkibaISMY16}, for treewidth
computations
\cite{bannach_et_al:LIPIcs:2018:9469,DBLP:conf/esa/Tamaki17,koster2001treewidth},
for the feedback vertex set problem
\cite{DBLP:conf/wea/KiljanP18,DBLP:conf/esa/FleischerWY09}, for the dominating
set problem~\cite{10.1007/978-3-319-55911-7_5},  for the maximum cut
problem~\cite{DBLP:journals/corr/abs-1905-10902}, for the cluster editing
problem~\cite{Boecker2011}, and the matching problem~\cite{DBLP:conf/esa/KorenweinNNZ18}. In this dissertation, we make heavy use of
data reduction techniques to improve the performance of algorithms for the
minimum cut problem and the multiterminal cut problem. A recent survey on data
reduction rules in practice is given in~\cite{abu2020recent}. This survey covers
data reduction for the global minimum cut problem and the multiterminal cut
problem, as well as a multitude of other problems. 

\section{Main Contributions and Outline}

This thesis consists of three individual parts, each addressing a fundamental
cut problem. In this section, we give a brief overview where we briefly
introduce the problems and then give the main contributions in this
dissertation. In the introductory sections or chapters of each part we will give
a more detailed outline.

\subsection{Part I: Minimum Cut}

In the first part of this dissertation we study the \emph{(global) minimum cut problem}. This
part is larger than the others, as we give inexact and exact shared-memory
parallel algorithms for the problem, as well as an algorithm that finds all
minimum cuts and an algorithm that maintains a minimum cut on a dynamically
changing graph in which edges are inserted and deleted in arbitrary order. The
minimum cut problem on a graph is to partition the vertices into two non-empty
sets so that the sum of edge weights between the two sets is
minimized. The minimum cut problem is one of the most fundamental graph problems
and has seen a large amount of research. In Chapter~\ref{c:mincut}, we give a
brief overview over this research and introduce in a bit more detail algorithms that we use in the
following chapters. We first give a practical shared-memory
parallel heuristic algorithm in Chapter~\ref{c:viecut}. This algorithm
repeatedly reduces the input graph size with both heuristic and exact techniques
by identifying and contracting edges that are likely or provably not part of a
minimum cut. It is significantly faster than existing algorithms and has a lower
empirically observed error rate than other inexact algorithms. Based on this
inexact algorithm and practically efficient parallelization of an existing
sequential algorithm, in Chapter~\ref{c:exmc}, we then give a shared-memory
parallel \emph{exact} algorithm that provably finds a minimum cut for large
graphs. Using $12$ cores, this algorithm outperforms the state-of-the-art for exact
minimum cut algorithms by a factor of up to $12.9$ on some graphs.

In Chapter~\ref{c:allmc} we follow that up with an exact shared-memory parallel
algorithm that finds \emph{all} global minimum cuts in a graph and returns a
compact \emph{cactus graph} data structure that represents them all. This
algorithm is able to solve instances with more than a billion edges and millions
of minimum cuts in a few minutes on a single shared-memory parallel machine. We
also give a new linear-time algorithm that, given a cactus graph data structure
that represents all minimum cuts, gives the most balanced minimum cut.

Chapter~\ref{c:dynmc} then details our algorithm
that maintains a global minimum cut on a dynamically changing graph under edge
insertions and deletions. As an edge insertion increases the value of some cuts
but leaves most cuts untouched, it is useful to have a data structure with all
minimum cuts, so that we only remove the minimum cuts whose value changed and
retain all others without expensive recomputation. Our dynamic algorithm
outperforms existing static algorithms by up to multiple orders of magnitude. 
While there have been various theoretical algorithms for finding all minimum
cuts in a graph as well as for maintaining the minimum cut on a dynamically
changing graph, to the best of our knowledge, our algorithms are the first
publically available implementations for these problems.

\subsection{Part II: Balanced Graph Partitioning}

In the second part of this dissertation, we study the \emph{balanced graph partitioning
problem}. The balanced graph partitioning problem on an undirected graph with
positive vertex and edge weights is to partition the vertex set into $k \geq 2$
blocks so that every block has roughly the same sum of contained node weights.
More precisely, every block has a weight limit of $(1 + \epsilon)$ times the
average block weight, \ie the sum of all node weights in the graph divided by
the number of blocks, for a given $\epsilon \geq 0$. In this dissertation, we
present a novel meta-heuristic for the balanced graph partitioning problem. Our
approach is based on integer linear programs that solve the partitioning problem
to optimality. However, since those programs typically do not scale to large
inputs, we adapt them to heuristically improve a given partition. We do so by
defining a much smaller model that allows us to use symmetry breaking and other
techniques that make the approach scalable. For example, in Walshaw’s well-known
benchmark tables~\cite{wswebsite}, we are able to improve roughly half of all
entries when the number of blocks is high. Additionally, we include our
techniques in a memetic framework~\cite{kabapeE} and develop a crossover operation based on
the proposed techniques. This extended evolutionary algorithm produces
high-quality partitions from scratch. For half of the hard problems from
Walshaw's graph partitioning benchmark, the result of our algorithm is at least
as good as the previous best result. For $17\%$, the solution given is better
than the previous best solution.

\subsection{Part III: Multiterminal Cut Problem}

In the third and final part of this dissertation we study the \emph{multiterminal
cut problem}. The multiterminal cut problem, given an undirected graph with positive
edge weights and a set of $k$ terminal vertices, is to
partition the vertex set into $k$ blocks so that each block contains exactly one
terminal vertex. We present a fast shared-memory parallel exact algorithm for
the multiterminal cut problem. In particular, we engineer existing as well as
new efficient data reduction rules to transform the graph into a smaller
equivalent instance. We use these reduction rules within a branch-and-reduce
framework and combine this framework with an integer linear programming solver
to give an algorithm that can solve a wide variety of large instances.
Additionally, we present an inexact heuristic algorithm that gives high-quality
solutions for very hard instances in reasonable time. Among other techniques, we
use local search to significantly improve a given solution to the problem. Our
algorithms achieve improvements in running time of up to multiple orders of
magnitude over the ILP formulation without data reductions.

\part{The (Global) Minimum Cut Problem}
\label{p:mincut}

\chapter{Minimum Cut}
\label{c:mincut}

\section{Introduction} 

Given an undirected graph with non-negative edge weights, the \emph{minimum cut
problem} is to partition the vertices into two sets so that the sum of edge
weights between the two sets is minimized. An edge that crosses the partition
boundary is called a \emph{cut edge}. A cut that minimizes the weight sum
of cut edges for all possible cuts is called the \emph{minimum cut} or
\emph{global minimum cut} of the graph. In graphs where each edge has unit
weight, a minimum cut is often also referred to as the \textit{edge
connectivity} of a graph~\cite{nagamochi1992computing,henzinger2017local}. A
variant of the minimum cut problem is the problem of finding \emph{all global
minimum cuts} in a graph.

The minimum cut problem has
applications in many fields. In particular, for network
reliability~\cite{karger2001randomized,ramanathan1987counting}, assuming equal
failure chance on edges, the smallest edge cut in the network has the highest
chance to disconnect the network; in VLSI
design~\cite{krishnamurthy1984improved}, a minimum cut can be used to minimize
the number of connections between microprocessor blocks; and it is further used
as a subproblem in the branch-and-cut algorithm for solving the Traveling
Salesman Problem and other combinatorial problems~\cite{padberg1991branch}.
Minimum cuts in similarity graphs can be used to find
clusters~\cite{wu1993optimal,hartuv2000clustering}. In community detection, the
absence of a small cut inside a cluster can indicate a likely community in a
social network~\cite{cai2005mining}. In graph drawing~\cite{kant1993algorithms},
minimum cuts are used to separate the network. Finding all minimum cuts is an
important subproblem for edge-connectivity augmentation
algorithms~\cite{gabow1991applications,naor1997fast}.

Part~\ref{p:mincut} of this dissertation is based on our papers on the global
minimum cut problem. This chapter gives a brief overview of preliminaries
and related work. In Section~\ref{p:mincut:s:preliminaries}, we will 
introduce the notation and preliminaries used throughout this part of the
dissertation. We give an overview of related work on the minimum cut problem
and related problems in Section~\ref{p:mincut:s:related}. We aim to give
a general overview of algorithms and research and give some more detail about some
of the algorithms and techniques used in later chapters of this part. We then
give a fast heuristic shared-memory parallel algorithm for the global minimum
cut problem in Chapter~\ref{c:viecut} and based on this work, an exact
shared-memory parallel algorithm in Chapter~\ref{c:exmc}. In
Chapter~\ref{c:allmc}, we give an algorithm that finds all minimum cuts in a
graph and gives their compact cactus graph representation. We use this cactus
graph representation to maintain the global minimum cut in a dynamic graph, \ie
a graph in which edges are deleted and inserted over time. This dynamic
algorithm is given in Chapter~\ref{c:dynmc}.

\section{Preliminaries}
\label{p:mincut:s:preliminaries}

Let $G = (V, E, c)$ be a weighted undirected simple graph with vertex set $V$,
edge set $E \subset V \times V$ and non-negative edge weights $c: E \rightarrow
\MdN$. We extend $c$ to a set of edges $E' \subseteq E$ by summing the weights
of the edges; that is, let $c(E')\Is \sum_{e=(u,v)\in E'}c(u,v)$ and let $c(u)$
denote the sum of weights of all edges incident to vertex $v$. Let $n = |V|$ be
the number of vertices and $m = |E|$ be the number of edges in $G$. The
\emph{neighborhood} $N(v)$ of a vertex $v$ is the set of vertices adjacent to
$v$. The \emph{weighted degree} of a vertex is the sum of the weights of its
incident edges. For brevity, we simply call this the \emph{degree} of the
vertex. For a set of vertices $A\subseteq V$, we denote by $E[A]\Is \{(u,v)\in E
\mid u\in A, v\in V\backslash A\}$; that is, the set of edges in $E$ that start
in $A$ and end in its complement. A cut $(A, V \backslash A)$ is a partitioning
of the vertex set $V$ into two non-empty \emph{partitions} $A$ and $V \backslash
A$, each being called a \emph{side} of the cut. The \emph{capacity} or
\emph{weight} of a cut $(A, V \backslash A)$ is $c(A) = \sum_{(u,v) \in E[A]}
c(u,v)$. A \emph{minimum cut} is a cut $(A, V \backslash A)$ that has smallest
capacity $c(A)$ among all cuts in $G$. For two non-overlapping vertex sets $A
\subset V$ and $B \subset V$, the capacity of the cut $c(A,B) = \sum_{(u,v) \in
E, u \in A, v \in B} c(u,v)$ is the weight of all edges that connect vertices in
$A$ with vertices in $B$.

We use $\lambda(G)$ (or simply $\lambda$, when its meaning is clear) to denote
the value of the minimum cut over all non-empty $A \subset V$. For two vertices
$s$ and $t$, we denote $\lambda(G,s,t)$ as the capacity of the smallest cut of
$G$, where $s$ and $t$ are on different sides of the cut. $\lambda(G,s,t)$ is
also known as the \emph{minimum s-t-cut} of the graph. $\lambda(G,s,t)$ is also
called the \emph{connectivity} of vertices $s$ and $t$. The connectivity
$\lambda(G,e)$ of an edge $e=(s,t)$ is defined as $\lambda(G,s,t)$, the
connectivity of its incident vertices. At any point in the execution of a
minimum cut algorithm, $\hat\lambda(G)$ (or simply $\hat\lambda$) denotes the
smallest upper bound of the minimum cut that the algorithm discovered up to that
point. For a vertex $u \in V$, the size of the \emph{trivial cut} $(\{u\},
V\backslash \{u\})$ is equal to the vertex degree of $u$. For most minimum cut
algorithms, $\hat\lambda(G)$ is initially set to the value of the minimum degree in $G$,
as this is the weight of the trivial cut which separates the minimum degree
vertex from the rest of the vertex set. When \emph{clustering} a graph, we are
looking for \emph{blocks} of nodes $V_1$,\ldots,$V_k$ that partition $V$, that
is, $V_1\cup\cdots\cup V_k=V$ and $V_i\cap V_j=\emptyset$ for $i\neq j$. The
parameter $k$ is usually not given in advance.

Many algorithms for the minimum cut problem use \emph{graph contraction}.
Given an edge $e = (u, v) \in E$, we define $G/(u, v)$ (or $G/e$) to be the
graph after \emph{contracting edge} $(u, v)$. In the contracted graph, we delete
vertex $v$ and all edges incident to this vertex. For each edge $(v, w) \in E$,
we add an edge $(u, w)$ with $c(u, w) = c(v, w)$ to~$G$ or, if the edge already
exists, we give it the edge weight $c(u,w) + c(v,w)$. Given an edge $e \in (V
\times V) \backslash E$, we define $G+e$ to be the graph after \emph{inserting}
edge $e$ and given an edge $e \in E$ we define $G-e$ to be the graph after
\emph{deleting} edge $e$.

A graph with $n$ vertices can have up to $\Omega(n^2)$ minimum
cuts~\cite{karger2000minimum}. To see that this bound is tight, consider an
unweighted cycle with $n$ vertices. Each set of $2$ edges in this cycle is a
minimum cut of $G$. This yields a total of $\binom{n}{2}$ minimum cuts. However,
all minimum cuts of an arbitrary graph $G$ can be represented by a cactus graph
$C_G$ with up to $2n$ vertices and $\Oh{n}$ edges~\cite{nagamochi2000fast}. A
cactus graph is a connected graph in which any two simple cycles have at most
one vertex in common. In a cactus graph, each edge belongs to at most one simple
cycle. 

To represent all minimum cuts of a graph $G$ in an edge-weighted cactus graph
$C_G = (V(C_G), E(C_G))$, each vertex of $C_G$ represents a possibly empty set
of vertices of $G$ and each vertex in $G$ belongs to the set of one vertex in
$C_G$. Let $\Pi$ be a function that assigns to each vertex of $C_G$ a set of
vertices of $G$. Then every cut $(S, V(C_G) \backslash S)$ corresponds to a
minimum cut $(A, V \backslash A)$ in $G$ where $A=\cup_{x\in S} \Pi(x)$. In
$C_G$, all edges that do not belong to a cycle have weight $\lambda$ and all
cycle edges have weight $\frac{\lambda}{2}$. A minimum cut in $C_G$ consists of
either one tree edge or two edges of the same cycle. We denote by $n^*$ the
number of vertices in $C_G$ and $m^*$ the number of edges in $C_G$. The weight
$c(v)$ of a vertex $v \in C_G$ is equal to the number of vertices in $G$ that
are assigned to $v$.

\section{Related Work}
\label{p:mincut:s:related}

We now review algorithms for the global minimum cut and related problems. A
closely related problem is the \textit{minimum s-t-cut} problem, which asks for
a minimum cut with nodes $s$ and $t$ in different partitions. Ford and
Fulkerson~\cite{ford1956maximal} proved that minimum $s$-$t$-cut is equal to
maximum $s$-$t$-flow. Gomory and Hu~\cite{gomory1961multi} observed that the
(global) minimum cut can be computed with $n-1$ minimum $s$-$t$-cut
computations. For the following decades, this result by Gomory and Hu was used
to find better algorithms for global minimum cut using improved maximum flow
algorithms~\cite{karger1996new}. One of the fastest known maximum flow
algorithms is the push-relabel algorithm~\cite{goldberg1988new} by
Goldberg and Tarjan, which computes a maximum $s$-$t$-flow in
$\Oh{mn\log{\frac{n^2}{m}}}$. Using their algorithm to find maximum
$s$-$t$-flows, the algorithm of Gomory and Hu finds a global minimum cut in
$\Oh{mn^2\log{\frac{n^2}{m}}}$.

Hao and Orlin~\cite{hao1992faster} adapt the push-relabel algorithm to pass
information to future flow computations. When an iteration of the push-relabel algorithm is
finished, they implicitly merge the source and sink vertices to form a new sink and find
a new source vertex. Vertex heights are maintained over multiple iterations of
push-relabel. With these techniques they achieve a total running time of
$\Oh{mn\log{\frac{n^2}{m}}}$ for a graph with $n$ vertices and $m$ edges, which is
asymptotically equal to a single run of the push-relabel algorithm.

Padberg and Rinaldi~\cite{padberg1990efficient} give a set of heuristics to find
edges which can be contracted without affecting the minimum cut.
Chekuri~\etal\cite{Chekuri:1997:ESM:314161.314315} give an implementation of
these heuristics that can be performed in time linear in the graph size. Using
these heuristics it is possible to sparsify a graph while preserving at least
one minimum cut in the graph. In Section~\ref{p:mincut:ss:pr} we outline their
results, as our algorithms for the minimum cut problem make use of them.

Nagamochi \etal\cite{nagamochi1992computing,nagamochi1994implementing} give a
minimum cut algorithm which does not use any flow computations. Instead, their
algorithm uses maximum spanning forests to find a non-empty set of contractible
edges. This contraction algorithm is run until the graph is contracted into a
single node. The algorithm has a running time of $\Oh{mn+n^2\log{n}}$. As our
exact algorithm is partially based on their contraction routine, we summarize their
results in Section~\ref{p:mincut:ss:noi}. Wagner and
Stoer~\cite{stoer1997simple} give a simpler variant of the algorithm of
Nagamochi, Ono and Ibaraki~\cite{nagamochi1994implementing}, which has the
same asymptotic time complexity. The performance of this algorithm on real-world
instances, however, is significantly worse than the performance of the
algorithms of Nagamochi, Ono and Ibaraki or Hao and Orlin, as shown
independently in experiments conducted by J\"unger
\etal\cite{junger2000practical} and
Chekuri~\etal\cite{Chekuri:1997:ESM:314161.314315}. In fact, both the
algorithms of Hao and Orlin or Nagamochi, Ono and Ibaraki achieve close to
linear running time on most benchmark
instances~\cite{junger2000practical,Chekuri:1997:ESM:314161.314315}. Based on
the algorithm of Nagamochi, Ono and Ibaraki, Matula~\cite{matula1993linear}
gives a $(2+\varepsilon)$-approximation algorithm for the minimum cut problem.
The algorithm contracts more edges than the algorithm of Nagamochi, Ono and
Ibaraki to guarantee a linear time complexity while still guaranteeing a
$(2+\varepsilon)$-approximation~factor. 

Based on the observations that the contraction of an edge not in a minimum cut
does not affect the value of said cut and that a minimum cut contains by
definition only a small fraction of the edge set, Karger~\cite{karger1993global}
gives a simple algorithm that contracts random edges until the graph has only
two vertices left and then evaluates the cut value between them. They prove that
by repeating this process $\Oh{n^2 \log{n}}$ times, the contraction algorithm
finds a minimum cut with high probability. Thus, one can find a minimum cut in
$\Oh{mn^2 \log{n}}$ in unweighted and $\Oh{mn^2 \log^3{n}}$ in weighted graphs
with high probability. Karger and Stein~\cite{karger1996new} show that minimum
cut edges are contracted more often near the end of the contraction routine when
the graph has only few vertices left. Their random contraction algorithm
contracts a small set of edges, recurses twice and continues the contraction in
both subproblems. Therefore the later stages are performed more often and the
recursive contraction process only needs to be performed $\Oh{\log^2{n}}$ times
to find a minimum cut with high probability. This algorithm finds a minimum cut
with high probability in $\Oh{n^2 \log^3{n}}$ and was the first algorithm to
break the $\Otilde{mn}$ barrier. The $\Otilde{}$ notation ignores logarithmic
factors. Gianinazzi~\etal~\cite{gianinazzi2018communication} give a parallel
implementation of the algorithm of Karger and Stein. Other than that, there are
no parallel implementation of either algorithm known to us. More recently, the
randomized contraction-based algorithm of
Ghaffari~\etal\cite{ghaffari2020faster} solves the minimum cut problem on
unweighted graphs in $\Oh{m \log{n}}$ or $\Oh{m+n \log^3{n}}$.

Kawarabayashi and Thorup~\cite{kawarabayashi2015deterministic} give a
deterministic near-linear time algorithm for the minimum cut problem on
unweighted graphs, which runs in $\Oh{m \log^{12}{n}}$. Their algorithm works by
growing contractible regions using a variant of
PageRank~\cite{page1999pagerank}. It was improved by
Henzinger~\etal\cite{henzinger2017local} to run in $\Oh{m \log^2{n} \log \log^2
n}$ time, which is the currently fastest deterministic algorithm on unweighted
graphs. Li and Panigrahi~\cite{li2020maxflows} give a deterministic algorithm that finds a
global minimum cut on weighted graphs in $\Oh{m^{1+\epsilon}}$ plus
poly-logarithmic maximum flows for any constant $\epsilon > 0$.

Another approach to the global minimum cut problem is \emph{tree packing}.
Nash-Williams~\cite{nash1961edge} proves that any graph with minimum cut
$\lambda$ contains a set of $\lambda / 2$ edge-disjoint spanning trees. Such a
tree packing can be found using Gabow's algorithm~\cite{gabow1995matroid} in
$\Oh{m\lambda \log{n}}$. Karger~\cite{karger2001randomized} introduces the
concept of \emph{$k$-respecting cuts}, where a cut $k$-respects a tree if it
only cuts up to $k$ tree edges. In his algorithm,
Karger~\cite{karger2001randomized} finds a set of $\Oh{\log n}$ spanning trees
so that the minimum cut $1$- or $2$-respects any of them with high probability.
For each of the spanning trees, the algorithm computes the minimum cut that $1$-
or $2$-respects it. This algorithm finds a minimum cut with high
probability in $\Oh{m \log^3{n}}$.
Gawrychowski~\etal\cite{gawrychowski2020minimum} improve the running time of
this algorithm to $\Oh{m \log^2{n}}$, which is the currently fastest algorithm
for the global minimum cut problem on weighted graphs.
Bhardwaj~\etal\cite{DBLP:conf/swat/BhardwajLS20} give a simpler
tree-packing-based algorithm with a running time of $\Oh{m \log^3{n}}$ -- matching the
algorithm of Karger~\cite{karger2001randomized} -- and implement a version with
a running time of $\Oh{m \log^4{n}}$. This implementation compares favorably
against the algorithms of Karger and Stein~\cite{karger2000minimum} and Stoer
and Wagner~\cite{stoer1997simple}, however they do not compare their algorithm
to algorithms that outperformed these by up to multiple orders of magnitudes in
other experimental
evaluations~\cite{junger2000practical,Chekuri:1997:ESM:314161.314315}, such as
the algorithms of Nagamochi~\etal\cite{nagamochi1994implementing} or the
algorithm of Hao and Orlin~\cite{hao1992faster}. Mukhopadhyay and
Nanongkai~\cite{DBLP:conf/stoc/MukhopadhyayN20} give an algorithm to find a
minimum $2$-respecting cut in $\Oh{m \log n + n \log^4{n}}$. They also give a
streaming variant of their algorithm that requires $\Otilde{n}$ space and
$\Oh{\log n}$ passes to compute the global minimum cut. Recently,
Li~\cite{li2020deterministic} gave a deterministic algorithm using the
techniques of Karger that finds a minimum cut in weighted graphs
in $\Oh{m^{1+o(1)}}$.

Recently, Georgiadis~\etal\cite{georgiadis2021experimental} carried out an experimental study of
global minimum cut algorithms on directed graphs. Their experimental study shows
that the directed version of Gabow's algorithm~\cite{gabow1995matroid} performs
well in practice; and for graphs with a low minimum cut value $\lambda$, local
search based algorithms~\cite{chechik2017faster,forster2020computing} also
perform well.

\subsection{Finding \emph{All} Global Minimum Cuts}

Even though a graph can have up to $\binom{n}{2}$ minimum
cuts~\cite{karger2000minimum}, there is a compact representation of all minimum
cuts of a graph called \emph{cactus graph} with $\Oh{n}$ vertices and edges, as
described earlier in Section~\ref{p:mincut:s:preliminaries}.
Karzanov and Timofeev~\cite{karzanov1986efficient} give the first polynomial
time algorithm to construct the cactus representation for all minimum cuts.
Picard and Queyranne~\cite{picard1980structure} show that all minimum cuts
separating two specified vertices can be found from a single maximum flow between them.
Thus, similar to the classical algorithm of Gomory and Hu~\cite{gomory1961multi}
for the minimum cut problem, we can find all minimum cuts in $n-1$ maximum flow
computations. The algorithm of Karzanov and
Timofeev~\cite{karzanov1986efficient} combines all those minimum cuts into a
cactus graph representing all minimum cuts. Nagamochi and
Kameda~\cite{nagamochi1994canonical} give a representation of all minimum cuts
separating two vertices $s$ and $t$ in a so-called $(s,t)$-cactus
representation. Based on this $(s,t)$-cactus representation,
Nagamochi~\etal\cite{nagamochi2000fast} give an algorithm that finds all minimum
cuts and gives the minimum cut cactus in $\Oh{nm + n^2 \log{n} + n^*m\log{n}}$,
where $n^*$ is the number of vertices in the cactus.
Fleischer~\cite{fleischer1999building} gives an algorithm based on the flow
algorithm of Hao and Orlin that gives the cactus representation of all minimum
cuts in a graph in the same asymptotic running time, $\Oh{mn\log{\frac{n^2}{m}}}$.

The aforementioned recursive contraction algorithm of Karger and Stein~\cite{karger1996new}
above not only finds a single minimum cut, but is able to find all minimum
cuts of a graph in $\Oh{n^2 \log^3{n}}$ with high probability. Based on
the algorithm of Karzanov and Timofeev~\cite{karzanov1986efficient} and its
parallel variant given by Naor and Vazirani~\cite{naor1991representing}, they
show how to give the cactus representation of the graph in the same asymptotic
time. Likewise, the recent algorithm of Ghaffari~\etal\cite{ghaffari2020faster}
finds all \emph{non-trivial minimum cuts} (\ie minimum cuts where each side
contains at least two vertices) of a simple unweighted graph in $\Oh{m
\log^2{n}}$ time. Using the techniques of Karger and Stein, the algorithm can
trivially give the cactus representation of all minimum cuts in $\Oh{n^2
\log{n}}$. 

While there are implementations of the algorithm of Karger and
Stein~\cite{Chekuri:1997:ESM:314161.314315,gianinazzi2018communication}
for the minimum cut problem, to the best of our knowledge there are no published
implementations of either of the algorithms to find the cactus graph
representing all minimum cuts.

A closely related problem is the cut tree problem (or Gomory-Hu tree problem),
which aims to find a tree $T = (V,E_T,c_T)$, such that for each two vertices
$u,v \in V$, the weight of the minimum $u$-$v$-cut is equal to the lightest edge
weight on the unique path from $u$ to $v$ on $T$. This problem was first solved
by Gomory and Hu~\cite{gomory1961multi} using $n-1$ minimum $s$-$t$-cuts and has
been studied experimentally by Goldberg and
Tsioutsiouliklis~\cite{goldberg2001cut} and Akiba~\etal\cite{akiba2016cut}, who
solve the cut tree problem for graphs with millions of vertices and up to one
billion edges in a few hours. Hartmann and
Wagner~\cite{DBLP:conf/isaac/HartmannW12} give a fully-dynamic algorithm to
construct and maintain a cut tree under edge insertions, deletions, and weight
changes.

\subsection{Dynamic Minimum Cut}

The field of dynamic graph algorithms~\cite{eppstein1999dynamic} gives
algorithms that maintain a solution to a graph problem on \emph{dynamic graphs},
\ie graphs that are undergoing updates such as the insertion or deletion of
edges in the graph. A dynamic algorithm allows an efficient update of the
solution instead of recomputing the solution from scratch. An algorithm performs
an \emph{update} when an edge is inserted or deleted and a \emph{query} when we
ask for a solution, \eg the value of the minimum cut on the graph. A dynamic
graph algorithm is called \emph{incremental} if edges are only inserted and
\emph{decremental} if edges are only deleted. If edges are both inserted and
deleted, we call the algorithm \emph{fully dynamic}.

Henzinger~\cite{henzinger1995approximating} gives the first incremental minimum
cut algorithm, which maintains the exact minimum cut with an amortized update
time of $\Oh{\lambda \log{n}}$ per edge insertion and query time of $\Oh{1}$.
The algorithm of Henzinger maintains the cactus graph of all minimum cuts and
invalidates minimum cuts whose weight was increased due to an edge insertion. If
there are not remaining minimum cuts, the algorithm recomputes all minimum cuts
from scratch. Goranci~\etal\cite{goranci2018incremental} manage to remove the
dependence on $\lambda$ from the update time and give an incremental algorithm
with $\Oh{\log^3{n} \log \log^2{n}}$ amortized time per edge insertion and
$\Oh{1}$ query time. They combine techniques of the incremental minimum cut
algorithm of Henzinger with the quasi-linear static minimum cut algorithms of
Kawarabayashi and Thorup~\cite{kawarabayashi2015deterministic} and
Henzinger~\etal\cite{henzinger2017local}.

For minimum cut values up to polylogarithmic size, Thorup~\cite{thorup2007fully}
gives a fully dynamic algorithm with $\Otilde{\sqrt{n}}$ worst-case time per
edge update. The algorithm of Thorup uses tree packing similar to the static
algorithm of Karger~\cite{karger2001randomized}. Note that all of these
algorithms are limited to unweighted graphs. For \emph{planar} graphs with
arbitrary edge weights, {\L}{\k{a}}cki and Sankowski~\cite{lkacki2011min} give a
fully-dynamic algorithm with $\Oh{n^{5/6}\log^{5/3}{n}}$ time per update and
query. To the best of our knowledge, there exists no implementation of any of
these algorithms.

\section[Further Detail on Some Algorithms]{Further Detail on Some Algorithms for the Minimum Cut Problem}
\label{p:mincut:s:algs}

\begin{figure}
  \includegraphics[width=\textwidth]{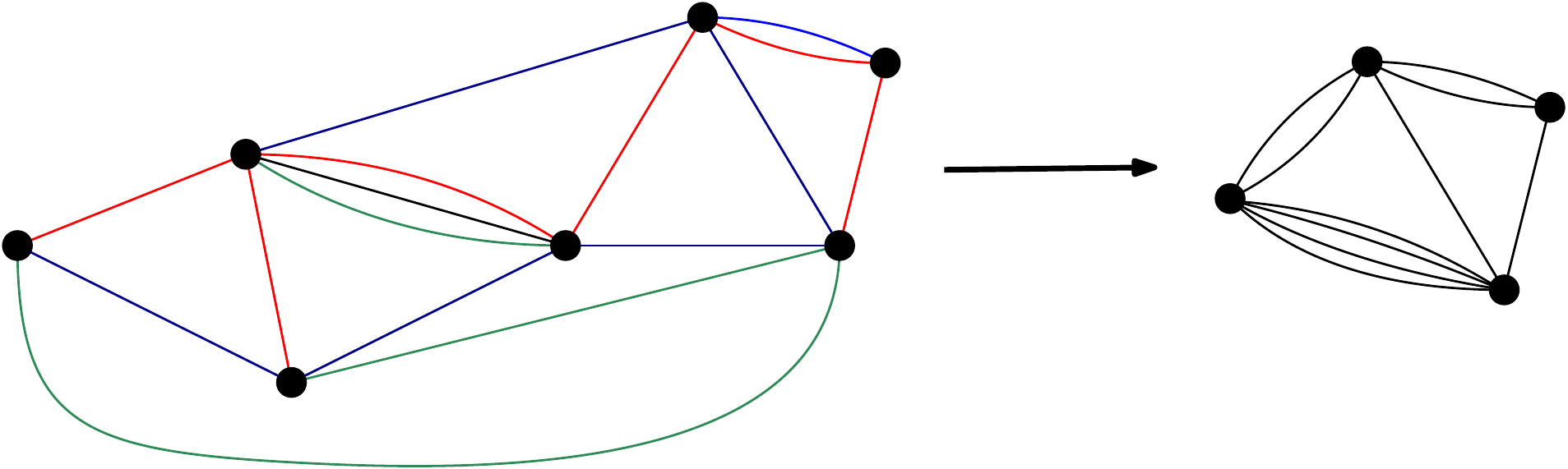}
  \caption{Left: Partition of edge-set into edge-disjoint spanning forests.
  Right: resulting graph after contracting black and green spanning forests.\label{fig:capforest}}
\end{figure}

\subsection{Algorithm of Nagamochi, Ono and Ibaraki}
\label{p:mincut:ss:noi}

We discuss the algorithm by Nagamochi, Ono and
Ibaraki~\cite{nagamochi1992computing,nagamochi1994implementing} in greater
detail since our work relies heavily on their results. The minimum cut algorithm
of Nagamochi~\etal works on graphs with positive integer weights. The intuition
behind the algorithm is as follows: imagine you have an unweighted graph with
minimum cut value exactly one. Then any spanning tree must contain at least one
edge of each of the minimum cuts. Hence, after computing a spanning tree, every
remaining edge can be contracted without losing the minimum cut. Nagamochi~\etal
extend this idea to the case where the graph can have edges with positive weight
as well as the case in which the minimum cut is bounded by $\hat \lambda$. The
first observation is the following: assume that you already found a cut in the
current graph of size~$\hat \lambda$ and you want to find out whether there is a
cut of size~$< \hat \lambda$. Then the contraction process only needs to ensure
that the contracted graph contains all cuts having a value \emph{strictly}
smaller than~$\hat \lambda$. To do so, Nagamochi~\etal build \emph{edge-disjoint
maximum spanning forests} and contract all edges that are not in one of
the~$\hat\lambda - 1$ first spanning forests, as those connect vertices that
have connectivity of at least $\hat\lambda$. Note that the edge-disjoint maximum
spanning forest \emph{certifies} for any edge $e=(u,v)$ that is not in the
forest that the minimum cut between $u$ and $v$ is at least $\hat \lambda$.
Hence, the edge can be ``safely'' contracted. As weights are integral, this
guarantees that the contracted graph still contains all cuts that are
\emph{strictly} smaller than~$\hat\lambda$. Figure~\ref{fig:capforest} shows a
small graph where the edge set is partitioned into edge-disjoint maximum
spanning forests. For this, an edge $e$ of weight $c(e)$ is replaced with $c(e)$
unweighted edges. The first two spanning forests (red, blue) are trees, the
subsequent ones (green, black) are not. As the minimum vertex degree is $3$, the upper bound
for the minimum cut $\hat\lambda = 3$. For each green and black edge $e =
(u,v)$, we can find a path from $u$ to $v$ that only consists of red edges and
one that only consists of blue edges. Thus, the connectivity $\lambda(u,v) \geq
3$ and no cut of value $< 3$ can separate $u$ and $v$. Using this information,
we can contract all green and black edges and repeat this process on the
resulting graph until there are only two vertices left.

Since it would be inefficient to directly compute $\hat \lambda - 1$ edge
disjoint maximum spanning trees and the running time would then depend on the
value of the minimum cut $\lambda$, the authors give a modified algorithm
CAPFOREST to be able to detect contractable edges faster. This is done by
computing a lower bound for the connectivity of the endpoints of an edge which
serves as a certificate for an edge to be contractable. If the lower bound for an
edge $e$ is $\geq \hat\lambda$, then $e$ can be contracted, as no cut smaller
than $\hat\lambda$ contains it. The minimum cut algorithm
of Nagamochi~\etal has a worst
case running time of $\Oh{mn+n^2\log n}$. In experimental
evaluations~\cite{Chekuri:1997:ESM:314161.314315,junger2000practical,henzinger2018practical},
it is one of the fastest exact minimum cut algorithms, both on real-world and
generated instances.

We now take a closer look at details of the algorithm. To find contractable
edges, the algorithm uses a modified \emph{breadth-first graph traversal} (BFS)
algorithm CAPFOREST. The CAPFOREST algorithm starts at an arbitrary vertex. In
each step, the algorithm visits (scans) the vertex $v$ that is most strongly
connected to the already visited vertices. For this purpose, a priority queue
$\mathcal{Q}$ is used, in which the connectivity strength of each vertex $r: V
\to \MdN$ to the already discovered vertices is used as a key. When scanning a
vertex $v$, the value $r(w)$ is kept up to date for every unscanned neighbor $w$
of $v$ by setting \ie $r(w) := r(w) + c(e)$. Moreover, for each edge $e =
(v,w)$, the algorithm computes a lower bound~$q(e)$ for the connectivity, \ie
the smallest cut $\lambda(G, v, w)$, which places $v$ and $w$ on different sides
of the cut. To be precise, it is set to the connectivity strength of $w$ to the
already scanned vertices $q(e) := r(w)$. The vertices are
scanned in an order such that the next scanned vertex is the unscanned vertex
with the highest connection strength value $r$ (the order used by the
algorithm). Using this order,
Nagamochi~\etal\cite{nagamochi1994implementing,nagamochi1992computing} show that
$r(w)$ is a lower bound on $\lambda(G,v,w)$. The order in which
the vertices are scanned is important for the correctness of the algorithm.

For an edge that has connectivity $\lambda(G,v,w) \geq \hat\lambda$, we know
that there is no cut smaller than $\hat\lambda$ that places $v$ and $w$ in
different partitions. If an edge $e$ is not in a given cut $(A,V \backslash A)$,
it can be contracted without affecting the cut. Thus, we can contract edges with
connectivity of at least $\hat\lambda$ without losing any cuts smaller than
$\hat\lambda$. As $q(e) \leq \lambda(G,u,v)$ (lower bound), all edges with $q(e)
\geq \hat\lambda$ are contracted.

Afterwards, the algorithm continues on the contracted graph. A single iteration
of the subroutine can be performed in $\Oh{m+n \log n}$ time. The authors show that
in each BFS run, at least one edge of the graph can be
contracted~\cite{nagamochi1992computing}. This yields a total running time of
$O(mn+n^2 \log n)$. However, in practice the number of iterations is typically
much less than $n-1$, rather it is often proportional to $\log n$.

\subsection{Exact Reductions by Padberg and Rinaldi}
\label{p:mincut:ss:pr}
Padberg and Rinaldi~\cite{padberg1990efficient} give conditions that allow for
shrinking the size of a graph. They prove the following lemma which allows the
contraction of an edge $e = (u,v)$.

\begin{lemma}~[Padberg and Rinaldi~\cite{padberg1990efficient}, Corollary 2.2] \label{lem:pr-nph}  
  Let $u \neq v \in V$. If there exists $Y \subseteq N(u) \cap N(v)$ so that \\
  (a) $c(u) \leq 2c(\{u\},Y+\{v\}-T)$, or \\
  (b) $c(v) \leq 2c(\{v\},T+\{u\})$ \\
  holds for all $T \subseteq Y$, then either $c(u)$ or $c(v)$ is a minimum cut
  or there exists a minimum cut $X$,$Y$ such that both $u \in X$ and $v \in X$.
\end{lemma}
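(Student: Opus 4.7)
The plan is an uncrossing argument by contradiction. Suppose, aiming at a contradiction, that (i) neither $(\{u\}, V \setminus \{u\})$ nor $(\{v\}, V \setminus \{v\})$ is a minimum cut, and (ii) every minimum cut separates $u$ from $v$. Pick any minimum cut $(X, V \setminus X)$ with $u \in X$ and $v \in V \setminus X$; assumption (i) forces $|X| \geq 2$ and $|V \setminus X| \geq 2$, since otherwise $(X, V \setminus X)$ would coincide with one of the trivial cuts excluded by (i).

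Now I would pick the specific witness $T := X \cap Y$, which is a subset of $Y$, so by the hypothesis either (a) or (b) holds for this particular $T$. The key is to translate the hypothesis into statements about the cut $(X, V \setminus X)$. Note that $Y \setminus T = Y \cap (V \setminus X)$ and $v \in V \setminus X$, so $(Y \setminus T) \cup \{v\} \subseteq V \setminus X$; monotonicity of $c(\{u\}, \cdot)$ then gives $c(\{u\}, (Y \setminus T) \cup \{v\}) \leq c(\{u\}, V \setminus X)$. Symmetrically, $T \cup \{u\} \subseteq X$, so $c(\{v\}, T \cup \{u\}) \leq c(\{v\}, X)$. Combining these inclusions with (a) and (b) respectively, the hypothesis becomes either $c(u) \leq 2 c(\{u\}, V \setminus X)$ or $c(v) \leq 2 c(\{v\}, X)$.

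To finish, I would apply the elementary single-vertex swap identity: moving $u$ out of $X$ changes the cut weight by $2 c(\{u\}, X \setminus \{u\}) - c(u) = c(u) - 2 c(\{u\}, V \setminus X)$, and moving $v$ into $X$ changes it by $c(v) - 2 c(\{v\}, X)$. In case (a) the first quantity is non-positive, so $(X \setminus \{u\}, (V \setminus X) \cup \{u\})$ is a cut of weight at most $c(X)$, hence a minimum cut; since $|X| \geq 2$ it has both sides non-empty, and it places $u$ and $v$ together. In case (b), analogously, $(X \cup \{v\}, (V \setminus X) \setminus \{v\})$ is a valid minimum cut with $u, v$ on the same side. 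Either outcome contradicts assumption (ii).

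The main obstacle, and the only delicate step, is correctly unpacking the ``$+$'' and ``$-$'' in the hypothesis as union and set difference and then establishing the two set inclusions $(Y \setminus T) \cup \{v\} \subseteq V \setminus X$ and $T \cup \{u\} \subseteq X$ for the choice $T = X \cap Y$; once those are in place, the rest is a one-line weight computation. A secondary subtlety is ensuring the swapped cut has non-empty sides, which is exactly where the exclusion of the two trivial cuts in (i) is used.
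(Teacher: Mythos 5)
Your proof is correct. Note that the thesis itself does not prove this statement: it is quoted as Corollary~2.2 of Padberg and Rinaldi with a citation, so there is no in-paper argument to compare against. Your uncrossing argument --- choosing the witness $T = X \cap Y$ for a minimum cut $(X, V\setminus X)$ separating $u$ from $v$, using monotonicity of $c(\{u\},\cdot)$ and $c(\{v\},\cdot)$ to pass from the hypothesis to $c(u) \leq 2c(\{u\}, V\setminus X)$ or $c(v) \leq 2c(\{v\}, X)$, and then applying the single-vertex swap identity --- is exactly the standard argument behind the Padberg--Rinaldi result, and the two delicate points you flag (the set inclusions and the non-emptiness of the swapped cut's sides, which is where excluding the trivial cuts is needed) are handled correctly.
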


If Lemma~\ref{lem:pr-nph} holds for an edge $e=(u,v)$, it can be contracted since
the trivial cuts $c(u)$ and $c(v)$ were already evaluated and an edge that is
not part of a minimum cut can be contracted without affecting the value of said
minimum cut. Unfortunately, checking Lemma~\ref{lem:pr-nph} is NP-complete in
general (\cite{padberg1990efficient}, Remark 2.3) as the knapsack problem can be
reduced to the problem. It is thus not feasible to check Lemma~\ref{lem:pr-nph}
for every edge, especially not for edges whose incident vertices have a large
shared neighborhood. In their work, Padberg and Rinaldi give a set of conditions
that follow from Lemma~\ref{lem:pr-nph} and can be checked faster. These
conditions are given in Lemma~\ref{lem:pr14}~and~Figure~\ref{fig:local}.

\begin{figure*}[t!]
  \centering
  \includegraphics[width=\textwidth]{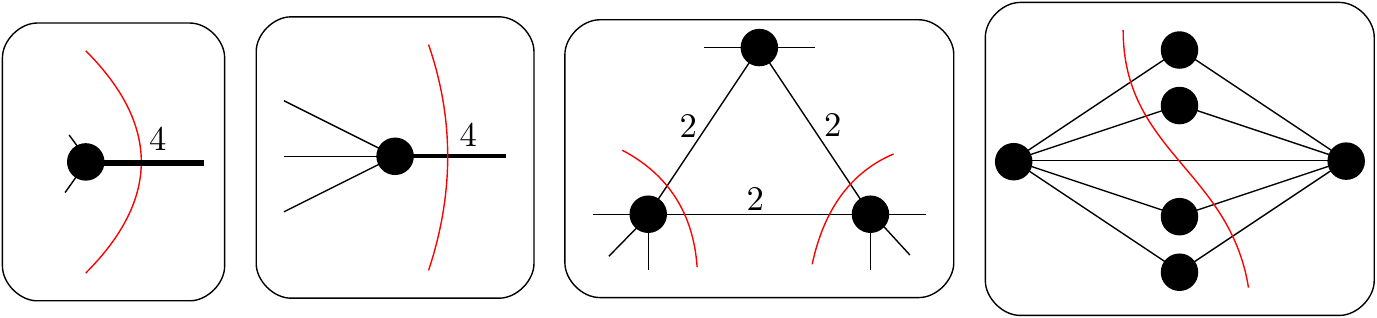}
  \caption{Reductions of Padberg and Rinaldi~\cite{padberg1990efficient}}
  \label{fig:local}
\end{figure*}

\begin{lemma}~[Padberg and Rinaldi~\cite{padberg1990efficient}]
  If two vertices $v,w \in V$ with an edge $(v,w) \in E$ satisfy at least one of
  the following four conditions and $(v,w)$ is not the only edge adjacent to
  either $v$ or $w$, then they can be contracted without increasing the value of
  the minimum cut:
  \begin{enumerate}
  \item $c(v,w) \geq \hat\lambda$,
  \item $c(v) \leq 2 c(v, w)$ or $c(w) \leq 2 c(v, w)$,
  \item $\exists u \in V$ such that $c(v) \leq 2 \{c(v,w) + c(v,u)\}$ and $c(w)
\leq 2\{c(v,w)+c(w,u)\}$, or
  \item $c(v, w) + \sum_{u \in V}\min\{c(v,u), c(w,u)\} \geq \hat\lambda$.
  \end{enumerate}
  \label{lem:pr14}
\end{lemma}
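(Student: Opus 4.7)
The plan is to show that under each of the four conditions there exists a minimum cut of $G$ in which $v$ and $w$ lie on the same side; once this is established, contracting $(v,w)$ cannot destroy any minimum cut and therefore cannot increase $\lambda(G)$. The common strategy is to take an arbitrary cut $(A,V\backslash A)$ that \emph{does} separate $v$ and $w$ (say $v\in A$, $w\notin A$) and either bound its weight from below by $\hat\lambda$ directly, or produce by a local ``swap'' move a cut of no larger weight in which $v$ and $w$ sit together. Throughout the argument I use the fact that $\hat\lambda$ already dominates every trivial cut, so in particular $\hat\lambda\le c(v)$ and $\hat\lambda\le c(w)$.

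Conditions~1 and~4 admit a direct lower bound on every separating cut. For condition~1, the edge $(v,w)$ itself must be cut, giving weight at least $c(v,w)\ge\hat\lambda\ge\lambda$. For condition~4, every $u\in V\backslash\{v,w\}$ lies on exactly one side of the cut, so at least one of the edges $(u,v),(u,w)$ crosses it, contributing at least $\min\{c(v,u),c(w,u)\}$; combined with the unavoidable $c(v,w)$, any separating cut has weight at least $c(v,w)+\sum_{u\in V}\min\{c(v,u),c(w,u)\}\ge\hat\lambda\ge\lambda$ (the terms with $u\in\{v,w\}$ vanish).

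Conditions~2 and~3 use a swap argument. The key identity is that moving $v$ from $A$ to $V\backslash A$ changes the cut weight by exactly $c(v)-2\,c(v,V\backslash A)$, and the symmetric move on $w$ changes it by $c(w)-2\,c(w,A)$. For condition~2, assume WLOG $c(v)\le 2c(v,w)$; since $(v,w)$ is cut we have $c(v,V\backslash A)\ge c(v,w)\ge c(v)/2$, so the swap on $v$ does not increase the cut weight and merges $v$ with $w$. For condition~3, I split on where the witness vertex $u$ lies: if $u\in V\backslash A$ then $c(v,V\backslash A)\ge c(v,w)+c(v,u)\ge c(v)/2$ and the swap on $v$ works; if $u\in A$ then $c(w,A)\ge c(v,w)+c(w,u)\ge c(w)/2$ and the swap on $w$ works. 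The two inequalities in condition~3 are exactly what is needed in the two cases.

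The step I expect to be the main obstacle is handling the degenerate cases of the swap arguments, where $A=\{v\}$ or $V\backslash A=\{w\}$ and so the swap would leave an empty side and not produce a legitimate cut. In precisely those cases the original cut is the trivial cut $c(v)$ or $c(w)$, both of which satisfy $\ge\hat\lambda\ge\lambda$ because $\hat\lambda$ is initialized to the minimum degree; this is also why the statement insists that $(v,w)$ not be the only incident edge of either endpoint, so that the minimum-degree vertex is not forced to be $v$ or $w$ after contraction. With these degeneracies absorbed into $\hat\lambda$, each of the four arguments collapses to a one-line weight comparison, yielding in every case a minimum cut that does not separate $v$ and $w$, and hence $\lambda(G/(v,w))=\lambda(G)$.
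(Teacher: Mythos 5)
Your proof is correct and follows essentially the same route as the paper's own justification: conditions 1 and 4 are handled by lower-bounding every cut that separates $v$ and $w$ by $\hat\lambda$, and conditions 2 and 3 by the swap (replacement) argument that moves $v$ or $w$ across the cut without increasing its weight, with the case split on the location of the witness $u$ for condition 3. Your explicit treatment of the degenerate swaps (where one side would become empty) and of why the hypothesis that $(v,w)$ is not the only incident edge is needed is a detail the paper leaves implicit, but it does not change the approach.
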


Condition $1$ contracts every edge $e$ whose weight is $\geq \hat\lambda$. By
definition of a cut, we know that no cut that contains edge $e$ can have weight
$< \hat\lambda$. It is therefore safe to contract edge $e$ without losing any
cuts smaller than the smallest cut already found.

Condition $2$ contracts an edge $e=(v,w)$, if its weight is at least half the
degree of one of its incident vertices. In other words, $e$ is at least as heavy
as all other edges incident to one of $v$ or $w$. Without loss of generality, let $v$ be that
vertex. For every cut that contains $e$, we can find another cut that replaces it
with all other edges incident to $v$. As $c(v) \leq 2c(v,w)$, this cut is at
most as heavy as the original cut. Edge $e$ can therefore be contracted, as
there is at least one minimum cut that does not contain it. Condition $3$ is
closely related but additionally uses information from the shared neighborhood
of $v$ and $w$. If there is a vertex $u$ in the shared neighborhood of $v$ and
$w$ (\ie $u$, $v$, and $w$ form a triangle), so that the two triangle edges
incident to $v$ and $w$ respectively each have weight of at least half of their
respective vertex degree, every cut that separates $v$ and $w$ can be replaced
with one of smaller or equal weight that does not separate them. Note that the
condition does not require that $u$ is in the shared neighborhood of $v$ and
$w$; however, if it is not, condition $2$ already detects every contractible edge
that condition $3$ does.

Condition $4$ uses the whole shared neighborhood of $v$ and $w$. Each cut that
separates vertices $v$ and $w$ has to contain $(v,w)$ and for every shared
neighbor $u \in N(v) \cap N(w)$, either edge $(u,v)$ or $(u,w)$. Thus, we can
sum up over the lighter edge for each shared neighbor and find a lower bound for
the connectivity $\lambda(u,v)$. If this bound is already $\geq \hat\lambda$,
$(u,v)$ can be contracted.

In their experimental evaluation of various algorithms for the minimum cut
problem, Chekuri~\etal\cite{Chekuri:1997:ESM:314161.314315} use these reductions
to improve the performance of the algorithms by contracting edges that fulfill
either of the criteria in Lemma~\ref{lem:pr14}. Conditions $1$ and $2$ can be
exhaustively checked in linear time.

In order to check conditions $3$ and $4$ exhaustively, potentially all triangles
need to be checked. As an arbitrary graph can have up to
$\Theta(m^{\frac{3}{2}})$ triangles~\cite{schank2005finding}, an exhaustive
check introduces excessive running time penalties.
Chekuri~\etal\cite{Chekuri:1997:ESM:314161.314315} thus perform linear-time
passes that check these conditions on a subset of vertex sets as follows. In the
beginning of a pass, their algorithm marks each vertex as unscanned and then
scans vertices in order. When scanning vertex $v$, their algorithm checks
conditions $3$ and $4$ for each unscanned neighbor $w$ of $v$. In this check of
$v$ and $w$, they test condition $3$ for all vertices $u$ in the common
neighborhood $N(v) \cap N(w)$. When iterating over all vertices in the common
neighborhood, they compute the sum in condition $4$ by adding up the smaller of
the two edge weights for each vertex in the common neighborhood. Afterwards they
mark both $v$ and $w$ as scanned. This ensures a time complexity of $\Oh{n+m}$,
as each edge is processed at most twice. However, not all possible edges $(v,w)$
are tested to see whether the incident vertices $v$ and $w$ can be contracted.

\begin{table}[!ht]
  \setlength\intextsep{0pt}
  \centering
  \resizebox*{.46\textwidth}{!}{
  \begin{tabular}{l||r|r|r|r|r}
         \hline
  \multicolumn{6}{c}{Graph Family A}\\\hline
    Graph & $n$ & $m$ & $\lambda$ & $\delta$ & $n^*$ \\\hline\hline
    \texttt{com-orkut} & $2.4M$ & $112M$ & \numprint{14} & \numprint{16} & \numprint{2} \\
    & \numprint{114190} & $18M$ & \numprint{89} & \numprint{95} & \numprint{2} \\
    & \numprint{107486} & $17M$ & \numprint{76} & \numprint{98} & \numprint{2} \\
    & \numprint{103911} & $17M$ & \numprint{70} & \numprint{100} & \numprint{2} \\\hline
    \texttt{eu-2005} & \numprint{605264} & $15M$ & \numprint{1} & \numprint{10} & \numprint{63} \\
    & \numprint{271497} & $10M$ & \numprint{2} & \numprint{25} & \numprint{3} \\
    & \numprint{58829} & $3.7M$ & \numprint{29} & \numprint{60} & \numprint{2} \\
    & \numprint{5289} & \numprint{464821} & \numprint{19} & \numprint{100} & \numprint{2}\\\hline
    \texttt{gsh-2015-host}  & $25M$ & $1.3B$ & \numprint{1} & \numprint{10} & \numprint{175}\\
     & $5.3M$ & $944M$ & \numprint{1} & \numprint{50} & \numprint{32}\\
          & $2.6M$ & $778M$ & \numprint{1} & \numprint{100} & \numprint{16}\\
         & \numprint{98275} & $188M$ & \numprint{1} & \numprint{1000} & \numprint{3} \\\hline
  \texttt{hollywood-2011} & $1.3M$ & $109M$ & \numprint{1} & \numprint{20} & \numprint{13}\\
   & \numprint{576111} & $87M$ & \numprint{6} & \numprint{60} & \numprint{2} \\
          & \numprint{328631} & $71M$ & \numprint{77} & \numprint{100} & \numprint{2}\\
          & \numprint{138536} & $47M$ & \numprint{27} & \numprint{200} & \numprint{2}\\\hline
          \texttt{twitter-2010}  & $13M$ & $958M$ & \numprint{1} & \numprint{25} & \numprint{2} \\
            & $10M$ & $884M$ & \numprint{1} & \numprint{30} & \numprint{3} \\
               & $4.3M$ & $672M$ & \numprint{3} & \numprint{50} & \numprint{3}\\
                & $3.5M$ & $625M$ & \numprint{3} & \numprint{60} & \numprint{2}\\\hline
    \texttt{uk-2002} & $9M$ & $226M$ & \numprint{1} & \numprint{10} & \numprint{1940} \\
     & $2.5M$ & $115M$ & \numprint{1} & \numprint{30} & \numprint{347}\\
          & \numprint{783316} & $51M$ & \numprint{1} & \numprint{50} & \numprint{138}\\
          & \numprint{98275} & $11M$ & \numprint{1} & \numprint{100} & \numprint{20} \\\hline
    \texttt{uk-2007-05} & $68M$ & $3.1B$ & \numprint{1} & \numprint{10} & \numprint{3202}\\
    & $16M$ & $1.7B$ & \numprint{1} & \numprint{50} & \numprint{387} \\
          & $3.9M$ & $862M$ & \numprint{1} & \numprint{100} & \numprint{134}\\
          & \numprint{223416} & $183M$ & \numprint{1} & \numprint{1000} & \numprint{2}\\\hline

         \hline\hline
         \multicolumn{6}{c}{Graph Family B}\\\hline
\texttt{amazon} & \numprint{64813} & \numprint{153973} & \numprint{1}& \numprint{1} & \numprint{10068} \\\hline
\texttt{auto} & \numprint{448695} & $3.31M$ & \numprint{4} & \numprint{4} & \numprint{43} \\
& \numprint{448529} & $3.31M$ & \numprint{5} & \numprint{5} & \numprint{102} \\
& \numprint{448037} & $3.31M$ & \numprint{6} & \numprint{6} & \numprint{557} \\
& \numprint{444947} & $3.29M$ & \numprint{7} & \numprint{7} & \numprint{1128} \\
& \numprint{437975} & $3.24M$ & \numprint{8} & \numprint{8} & \numprint{2792} \\
& \numprint{418547} & $3.10M$ & \numprint{9} & \numprint{9} & \numprint{5814} \\ \hline
\texttt{caidaRouterLevel} & \numprint{190914} & \numprint{607610} & \numprint{1} & \numprint{1} & \numprint{49940} \\\hline
\texttt{cfd2} & \numprint{123440} & $1.48M$ & \numprint{7} & \numprint{7} & \numprint{15} \\\hline
\texttt{citationCiteseer} & \numprint{268495} & $1.16M$ & \numprint{1} & \numprint{1} & \numprint{43031} \\
& \numprint{223587} & $1.11M$ & \numprint{2} & \numprint{2} & \numprint{33423} \\
& \numprint{162464} & \numprint{862237} & \numprint{3} & \numprint{3} & \numprint{23373} \\
& \numprint{109522} & \numprint{435571} & \numprint{4} & \numprint{4} & \numprint{16670} \\
& \numprint{73595} & \numprint{225089} & \numprint{5} & \numprint{5} & \numprint{11878} \\
& \numprint{50145} & \numprint{125580} & \numprint{6} & \numprint{6} & \numprint{8770} \\\hline
\texttt{cnr-2000} & \numprint{325557} & $2.74M$ & \numprint{1} & \numprint{1} & \numprint{87720} \\
& \numprint{192573} & $2.25M$ & \numprint{2} & \numprint{2} & \numprint{33745} \\
& \numprint{130710} & $1.94M$ & \numprint{3} & \numprint{3} & \numprint{11604} \\
& \numprint{110109} & $1.83M$ & \numprint{4} & \numprint{4} & \numprint{9256} \\
& \numprint{94664} & $1.77M$ & \numprint{5} & \numprint{5} & \numprint{4262} \\
& \numprint{87113} & $1.70M$ & \numprint{6} & \numprint{6} & \numprint{5796} \\ 
& \numprint{78142} & $1.62M$ & \numprint{7} & \numprint{7} & \numprint{3213} \\
& \numprint{73070} & $1.57M$ & \numprint{8} & \numprint{8} & \numprint{2449} \\\hline
\texttt{coAuthorsDBLP} & \numprint{299067} & \numprint{977676} & \numprint{1} & \numprint{1} & \numprint{45242} \\\hline
\texttt{cs4} & \numprint{22499} & \numprint{43858} & \numprint{2} & \numprint{2} & \numprint{2} \\\hline
\texttt{delaunay\_n17} & \numprint{131072} & \numprint{393176} & \numprint{3} & \numprint{3} & \numprint{1484} \\ \hline
\texttt{fe\_ocean} & \numprint{143437} & \numprint{409593} & \numprint{1} & \numprint{1} & \numprint{40} \\\hline
\texttt{kron-logn16} & \numprint{55319} & $2.46M$ & \numprint{1} & \numprint{1} & \numprint{6325} \\\hline
\texttt{luxembourg} & \numprint{114599} & \numprint{239332} & \numprint{1} & \numprint{1} & \numprint{23077} \\\hline
\texttt{vibrobox} & \numprint{12328} & \numprint{165250} & \numprint{8} & \numprint{8} & \numprint{625} \\\hline
\texttt{wikipedia} & \numprint{35579} & \numprint{495357} & \numprint{1} & \numprint{1} & \numprint{2172} \\\hline
  \end{tabular}
  }
  \quad
  \resizebox*{.50\textwidth}{!}{
  \begin{tabular}{l||r|r|r|r|r}
  \hline
    \multicolumn{6}{c}{Graph Family B (continued)}\\\hline
    Graph & $n$ & $m$ & $\lambda$ & $\delta$ & $n^*$ \\\hline\hline
    \texttt{amazon-2008} & \numprint{735323} & $3.52M$ & \numprint{1} & \numprint{1} & \numprint{82520}\\
    & \numprint{649187} & $3.42M$ & \numprint{2} & \numprint{2} & \numprint{50611}\\
    & \numprint{551882} & $3.18M$ & \numprint{3} & \numprint{3} & \numprint{35752}\\
    & \numprint{373622} & $2.12M$ & \numprint{5} & \numprint{5} & \numprint{19813}\\
    & \numprint{145625} & \numprint{582314} & \numprint{10} & \numprint{10} & \numprint{64657}\\\hline
    \texttt{coPapersCiteseer} & \numprint{434102} & $16.0M$ & \numprint{1} & \numprint{1} & \numprint{6372} \\
    & \numprint{424213} & $16.0M$ & \numprint{2} & \numprint{2} & \numprint{7529} \\
    & \numprint{409647} & $15.9M$ & \numprint{3} & \numprint{3} & \numprint{7495} \\
    & \numprint{379723} & $15.5M$ & \numprint{5} & \numprint{5} & \numprint{6515} \\
    & \numprint{310496} & $13.9M$ & \numprint{10} & \numprint{10} & \numprint{4579} \\\hline
    \texttt{eu-2005} & \numprint{862664} & $16.1M$ & \numprint{1} & \numprint{1} & \numprint{52232} \\
    & \numprint{806896} & $16.1M$ & \numprint{2} & \numprint{2} & \numprint{42151} \\
    & \numprint{738453} & $15.7M$ & \numprint{3} & \numprint{3} & \numprint{21265} \\
    & \numprint{671434} & $13.9M$ & \numprint{5} & \numprint{5} & \numprint{18722} \\
    & \numprint{552566} & $11.0M$ & \numprint{10} & \numprint{10} & \numprint{23798} \\\hline
    \texttt{hollywood-2009} & $1.07M$ & $56.3M$ & \numprint{1} & \numprint{1} & \numprint{11923} \\
    & $1.06M$ & $56.2M$ & \numprint{2} & \numprint{2} & \numprint{17386} \\
    & $1.03M$ & $55.9M$ & \numprint{3} & \numprint{3} & \numprint{21890} \\
    & \numprint{942687} & $49.2M$ & \numprint{5} & \numprint{5} & \numprint{22199} \\
    & \numprint{700630} & $16.8M$ & \numprint{10} & \numprint{10} & \numprint{19265} \\\hline
    \texttt{in-2004} & $1.35M$ & $13.1M$ & \numprint{1} & \numprint{1} & \numprint{278092} \\
    & \numprint{909203} & $11.7M$ & \numprint{2} & \numprint{2} & \numprint{89895} \\
    & \numprint{720446} & $9.2M$ & \numprint{3} & \numprint{3} & \numprint{45289} \\
    & \numprint{564109} & $7.7M$ & \numprint{5} & \numprint{5} & \numprint{33428} \\
    & \numprint{289715} & $5.1M$ & \numprint{10} & \numprint{10} & \numprint{12947} \\\hline
    \texttt{uk-2002} & $18.4M$ & $261.6M$ & \numprint{1} & \numprint{1} & $2.5M$ \\
    & $15.4M$ & $254.0M$ & \numprint{2} & \numprint{2} & $1.4M$ \\
    & $13.1M$ & $236.3M$ & \numprint{3} & \numprint{3} & \numprint{938319} \\
    & $10.6M$ & $207.6M$ & \numprint{5} & \numprint{5} & \numprint{431140} \\
    & $7.6M$ & $162.1M$ & \numprint{10} & \numprint{10} & \numprint{298716} \\
    & \numprint{657247} & $26.2M$ & \numprint{50} & \numprint{50} & \numprint{24139} \\
    & \numprint{124816} & $8.2M$ & \numprint{100} & \numprint{100} & \numprint{3863} \\
    \hline\hline
    \multicolumn{6}{c}{Graph Family C}\\\hline
    Dynamic Graph & $n$ & Insertions & Deletions & Batches & $\lambda$ \\ \hline\hline
    \texttt{aves-weaver-social} & \numprint{445} & \numprint{1423} & \numprint{0} & \numprint{23} & \numprint{0} \\
    \texttt{ca-cit-HepPh} & \numprint{28093} & $4.60M$ & \numprint{0} & \numprint{2337} & \numprint{0} \\
    \texttt{ca-cit-HepTh} & \numprint{22908} & $2.67M$ & \numprint{0} & \numprint{219} & \numprint{0} \\
    \texttt{comm-linux-kernel-r} & \numprint{63399} & $1.03M$ & \numprint{0} & \numprint{839643} & \numprint{0} \\
    \texttt{copresence-InVS13} & \numprint{987} & \numprint{394247} & \numprint{0}& \numprint{20129} & \numprint{0} \\
    \texttt{copresence-InVS15} & \numprint{1870} & $1.28M$ & \numprint{0} & \numprint{21536} & \numprint{0} \\
    \texttt{copresence-LyonS} & \numprint{1922} & $6.59M$ & \numprint{0} & \numprint{3124} & \numprint{0} \\
    \texttt{copresence-SFHH} & \numprint{1924} & $1.42M$ & \numprint{0} & \numprint{3149} & \numprint{0} \\
    \texttt{copresence-Thiers} & \numprint{1894} & $18.6M$ & \numprint{0} & \numprint{8938} & \numprint{0} \\
    \texttt{digg-friends} & \numprint{279630} & $1.73M$ & \numprint{0} & $1.64M$ & \numprint{0} \\
    \texttt{edit-enwikibooks} & \numprint{134942} & $1.16M$ & \numprint{0} & $1.13M$ & \numprint{0} \\
    \texttt{fb-wosn-friends} & \numprint{63731} & $1.27M$ & \numprint{0} & \numprint{736675} & \numprint{0} \\
    \texttt{ia-contacts\_dublin} & \numprint{10972} & \numprint{415912} & \numprint{0} & \numprint{76944} & \numprint{0} \\
    \texttt{ia-enron-email-all} & \numprint{87273} & $1.13M$ & \numprint{0} & \numprint{214908} & \numprint{0} \\
    \texttt{ia-facebook-wall} & \numprint{46952} & \numprint{855542} & \numprint{0} & \numprint{847020} & \numprint{0} \\
    \texttt{ia-online-ads-c} & $15.3M$ & \numprint{133904} & \numprint{0} & \numprint{56565} & \numprint{0} \\
    \texttt{ia-prosper-loans} & \numprint{89269} & $3.39M$ & \numprint{0} & \numprint{1259} & \numprint{0} \\
    \texttt{ia-stackexch-user} & \numprint{545196} & $1.30M$ & \numprint{0} & \numprint{1154} & \numprint{1} \\
    \texttt{ia-sx-askubuntu-a2q} & \numprint{515273} & \numprint{257305} & \numprint{0} & \numprint{257096} & \numprint{0} \\
    \texttt{ia-sx-mathoverflow} & \numprint{88580} & \numprint{390441} & \numprint{0} & \numprint{390051} & \numprint{0} \\
    \texttt{ia-sx-superuser} & \numprint{567315} & $1.11M$ & \numprint{0} & $1.10M$ & \numprint{0} \\
    \texttt{ia-workplace-cts} & \numprint{987} & \numprint{9827} & \numprint{0} & \numprint{7104} & \numprint{0} \\
    \texttt{imdb} & \numprint{150545} & \numprint{296188} & \numprint{0} & \numprint{7104} & \numprint{0} \\
    \texttt{insecta-ant-colony1} & \numprint{113} & \numprint{111578} & \numprint{0} & \numprint{41} & \numprint{4285} \\
    \texttt{insecta-ant-colony2} & \numprint{131} & \numprint{139925} & \numprint{0} & \numprint{41} & \numprint{3742} \\
    \texttt{insecta-ant-colony3} & \numprint{160} & \numprint{241280} & \numprint{0} & \numprint{41} & \numprint{1539} \\
    \texttt{insecta-ant-colony4} & \numprint{102} & \numprint{81599} & \numprint{0} & \numprint{41} & \numprint{1838} \\
    \texttt{insecta-ant-colony5} & \numprint{152} & \numprint{194317} & \numprint{0} & \numprint{41} & \numprint{6671} \\
    \texttt{insecta-ant-colony6} & \numprint{164} & \numprint{247214} & \numprint{0} & \numprint{39} & \numprint{2177} \\
    \texttt{mammalia-voles-kcs} & \numprint{1218} & \numprint{4258} & \numprint{0} & \numprint{64} & \numprint{0} \\
    \texttt{SFHH-conf-sensor} & \numprint{1924} & \numprint{70261} & \numprint{0} & \numprint{3509} & \numprint{0} \\
    \texttt{soc-epinions-trust} & \numprint{131828} & \numprint{717129} & \numprint{123670} & \numprint{939} & \numprint{0} \\
    \texttt{soc-flickr-growth} & $2.30M$ & $33.1M$ & \numprint{0} & \numprint{134}& \numprint{0} \\
    \texttt{soc-wiki-elec} & \numprint{8297} & \numprint{83920} & \numprint{23093} & \numprint{101014} & \numprint{0} \\
    \texttt{soc-youtube-growth} & $3.22M$ & $12.2M$ & \numprint{0} & \numprint{203} & \numprint{0} \\
    \texttt{sx-stackoverflow} & $2.58M$ & \numprint{392515} & \numprint{0} & \numprint{384680} & \numprint{0} \\
    \hline
  \end{tabular}
  }
  \caption{Statistics of the static and dynamic graphs used in experiments.\label{p:mincut:table:graphs}}
\end{table}

\section{Graph Instances}
\label{rwgraphs}

In our experiments, we use a wide variety of large static and dynamic graph
instances. These are social graphs, web graphs, co-purchase matrices,
cooperation networks and some generated instances. These large graphs
from~\cite{bader2013graph,BRSLLP,BoVWFI,davis2011university,nr-aaai15,nr-sigkdd16}
are detailed in Table~\ref{p:mincut:table:graphs}. All instances are undirected.
If the original graph is directed, we generate an undirected graph by removing
edge directions and then removing duplicate edges. In our experiments, we use
three families of graphs for different subproblems. In
Table~\ref{p:mincut:table:graphs}, we show the number of vertices $n$ and edges $m$ for
each graph, the minimum cut $\lambda$ and the minimum degree $\delta$.
Additionally, we also show the number of vertices in the cactus graph $n^*$ for all
minimum cuts. This number is an indication of how many minimum cuts exist. A
value $n^* = 2$ indicates that there is a single minimum cut that
separates two sides. A larger value indicates that there are multiple minimum
cuts in the graph.

Graph family A consists of problems for finding \emph{some} minimum cut. These
graphs generally have multiple connected components and contain vertices with
very low degree. To create instances with $\lambda > 0$, we use the
largest connected component. As we want to find \emph{some} minimum cut,
instances in which the minimum cut is equal to the minimum degree are trivial to
solve. Thus, we use a $k$-core
decomposition~\cite{seidman1983network,batagelj2003m} to generate versions of
the graphs with a minimum degree of $k$ and use versions where $k > \lambda$,
\ie there exists at least one cut strictly smaller than the minimum degree and
the problem is therefore not trivial to solve. Generally these instances
have very few minimum cuts, and in many cases, there is only a single minimum cut.

The $k$-core of a graph $G = (V, E)$ is the largest subgraph $G' = (V',E')$ with
$V' \subseteq V$ and $E' \subseteq E$, which fulfills the condition that every
vertex in $G'$ has a degree of at least $k$. We perform our experiments on the
largest connected component of $G'$. For every real-world graph we use, we
compute a set of $4$ different $k$-cores, in which the minimum cut is not equal
to the minimum degree.

We generate a diverse set of graphs with different sizes. For the large graphs
\textttA{gsh-2015-host} and \textttA{uk-2007-05}, we use cores with $k$ in 10,
50, 100, and 1000. In the smaller graphs we use cores with $k$ in 10, 30, 50, and
100. \textttA{twitter-2010} and \textttA{com-orkut} had only a few cores for which the
minimum cut is not equal to the minimum degree. Therefore we used those cores.
As \textttA{hollywood-2011} is very dense, we used $k = 20,60,100,200$.

Graph family B consists of problems for finding \emph{all} minimum cuts. Thus,
the problem does not become trivial when the minimum cut is equal to the minimum
degree. We therefore do not compute $k$-cores of the graphs and instead run the
algorithms on the largest connected component of the source graph. However, as most
large real-world networks have cuts of size 1, finding all minimum cuts
becomes essentially the same as finding all bridges, which can be solved in
linear time using depth-first search~\cite{tarjan1972depth}. Usually
there is one huge block that is connected by minimum cuts to a set of
small and medium size blocks. Thus, we use our minimum cut algorithms to generate a more
balanced set of instances. We find all minimum cuts and contract each edge that
does not connect two vertices of the largest block. Thus, the remaining graph
only contains the huge block and is guaranteed to have a minimum cut value
$>\lambda$. We use this method to generate multiple graphs with different
minimum cuts for each instance. These graphs usually have $\lambda = \delta$,
\ie the value of the minimum cut is equal to the minimum degree, and have a large
set of minimum cuts. Thus, finding some minimum cut on these graphs is very
easy, but finding all of them is a significantly harder problem.

Graph family C consists of a set of $36$ dynamic graphs from Network
Repository~\cite{nr-aaai15,nr-sigkdd16}. These graphs consist of a sequence of
edge insertions and deletions. While edges are inserted and deleted, all
vertices are static and remain in the graph for the whole time. Each edge update
has an associated timestamp and a set of updates with the same timestamp is called
a \emph{batch}. Most of the graphs in this dataset have multiple connected
components, \ie their minimum cut $\lambda$ is $0$.

\chapter[Shared-memory Parallel Heuristic Minimum Cut]{VieCut: Shared-memory Parallel Heuristic Minimum Cut}
\label{c:viecut}


In this chapter, we give a \emph{practical} shared-memory parallel
algorithm for the minimum cut problem. Our algorithm is heuristic (i.e., there
are no guarantees on solution quality), randomized, and has a running time of
$\Oh{n+m}$ when run sequentially. The algorithm works in a multilevel fashion:
we repeatedly reduce the input graph size with both heuristic and exact
techniques, and then solve the smaller remaining problem with exact methods. Our
heuristic technique identifies edges that are unlikely to be in a minimum cut
using the label propagation technique introduced by
Raghavan~\etal\cite{raghavan2007near} and contracts them in bulk. We further
combine this technique with exact reduction routines by Padberg and
Rinaldi~\cite{padberg1990efficient}, as discussed in Section~\ref{p:mincut:ss:pr}. We perform extensive experiments comparing
our algorithm with other heuristic algorithms as well as exact algorithms on
real-world and generated instances, which include graphs of up to \numprint{70}
million vertices and \numprint{5} billion edges. Results indicate that our algorithm
finds optimal cuts for almost all instances and also that the empirically
observed error rate is lower than for competing approximation algorithms (i.e., that
come with guarantees on the solution quality). At the same time, even when run
sequentially, our algorithm is significantly faster (up to a factor of $4.85$)
than other state-of-the-art algorithms. To further speedup computations, we also
give a version of our algorithm that performs random edge contractions as
preprocessing. This version achieves a lower running time and has better parallel
scalability at the expense of a higher error rate. 

The content of this chapter is based
on~\cite{DBLP:conf/alenex/HenzingerN0S18}~and~\cite{henzinger2018practical}.

\section{VieCut: A Parallel Heuristic Minimum-Cut Algorithm}
\label{c:vc:s:intro}
In this section we introduce our new approach to the minimum cut problem. Our
algorithm is based on edge contractions: we find densely connected vertices in
the graph and contract those into single vertices.  Due to the way contractions
are defined, we ensure that a minimum cut of the contracted graph corresponds to
a minimum cut of the input graph. Once the graph is contracted, we apply exact
reductions. These two contraction steps are repeated until the graph has a
constant number of vertices. We apply an exact minimum cut algorithm to find the
optimal cut in~the~contracted~graph.

Throughout our algorithm we maintain a variable $\hat\lambda$, which denotes the
current lowest upper bound for the minimum cut. In the beginning, $\hat\lambda$
equals the minimum node degree of $G$. After every contraction, if the minimum
node degree in the contracted graph is smaller than $\hat\lambda$, we set
$\hat\lambda$ to the minimum node degree of the contracted graph. As we only
perform contractions and therefore do not introduce any new cuts we can
guarantee that our algorithm will never output a value that is lower than the
minimum~cut.

This chapter is organized as follows. In Section~\ref{c:vc:s:pr}, we give a
general overview of our algorithm \texttt{VieCut} for the global minimum cut
problem and discuss in detail the parts that form the algorithm. Additionally we
give insight into parallelization and implementation details. In
Section~\ref{c:vc:s:rec}, we discuss a variant which combines \texttt{VieCut} with
random edge contraction to achieve an even lower running time at the expensive
of a higher error rate. We then show experiments and results in
Section~\ref{c:vc:s:experiments} before we conclude in Section~\ref{c:vc:s:conclusions}.

\section{Fast Minimum Cuts}
\label{c:vc:s:pr}

The algorithm of Karger and Stein~\cite{karger1996new} spends a large amount of
time computing graph contractions recursively.  One idea to speed up their
algorithm therefore is to increase the number of contracted edges per level.
However, this strategy is undesirable: it increases the error both in theory and
in practice, as their algorithm selects edges for contraction at random.  We
solve this problem by introducing an aggressive coarsening strategy that
contracts a large number of edges that are unlikely to be in a~minimum~cut.

We first give a high level overview before diving into the details of the
algorithm. Our algorithm starts by using the label propagation
algorithm~\cite{raghavan2007near} to cluster the vertices into densely connected
clusters. We then use a correcting algorithm to find misplaced vertices that
should form a singleton cluster. Finally, we contract the graph and apply the
exact reductions of Padberg and Rinaldi~\cite{padberg1990efficient}, as
discussed in Section~\ref{p:mincut:ss:pr}. We repeat these contraction steps
until the graph has at most a constant number $n_0$ of vertices. When the
contraction step is finished we apply the algorithm of Nagamochi, Ono and
Ibaraki~\cite{nagamochi1994implementing}, as discussed in
Section~\ref{p:mincut:ss:noi}, to find the minimum cut of the contracted graph.
Finally, we transfer the resulting cut into a cut in the original graph.
Overview pseudocode can be found in Algorithm~\ref{alg:generalalgorithm}.

\begin{algorithm}[h!]
\caption{\textttA{VieCut} \label{alg:generalalgorithm}}
\begin{algorithmic}[1]
  \INPUT $G=(V,E,c: V \to \MdN_{>{0}}), n_0 :$ bound for exact algorithm,
  \State $\mathcal{G} \leftarrow G$
  \While{$|V_{\mathcal{G}}| > n_0$} \Comment{compute inexact kernel}
  \State $\mathcal{C} \leftarrow$ computeClustering($\mathcal{G}$)
  \Comment{label propagation clustering}
  \State $\mathcal{C}' \leftarrow$ fixMisplacedVertices($\mathcal{G}$,
  $\mathcal{C}$)
  \State $G_{\mathcal{C}'} \leftarrow$ contractClustering($\mathcal{G}$,
  $\mathcal{C}'$)
  \State $\mathcal{E} \leftarrow$ findContractableEdges($G_{\mathcal{C}'}$)
  \Comment{further apply exact reductions}
  \State $\mathcal{G} \leftarrow$ contractEdges($G_{\mathcal{C}'}$,
  $\mathcal{E}$)
  \EndWhile
  \State $(A,B) \leftarrow $ NagamochiOnoIbaraki($\mathcal{G}$) 
  \Comment{solve minimum cut problem on final kernel}
  \State $(A',B') \leftarrow $ solutionTransfer($A,B$)
  \Comment{transfer solution to input network}
  \State \textbf{return} $(A',B')$
\end{algorithmic}
\end{algorithm}

The \emph{label propagation algorithm} (LPA) was proposed by Raghavan
\etal\cite{raghavan2007near} for graph clustering. It is a fast algorithm that
locally minimizes the number of edges cut. We outline the algorithm briefly.
Initially, each node is in its own cluster/block, \ie the initial block ID of a
node is set to its node ID.  The algorithm then works in rounds. In each round,
the nodes of the graph are traversed in a random order.  When a node $v$ is
visited, it is \emph{moved} to the block that has the strongest connection to
$v$, \ie it is moved to the cluster $C$ that maximizes $c(\{(v, u) \mid u \in
N(v) \cap C \})$. Ties are broken uniformly at random. The block IDs of
round~$i$ are used as initial block IDs of round $i+1$.

In the original
formulation~\cite{raghavan2007near}, the process is repeated until the process
converges and no vertices change their labels in a round. Kothapalli~\etal\cite{kothapalli2013analysis} show that label propagation finds all
clusters in few iterations with high probability, when the graph has a distinct
cluster structure. Hence, we perform at most $\ell$ iterations of the algorithm,
where $\ell$ is a tuning parameter.  One LPA round can be implemented to run in
$\Oh{n+m}$ time. As we only perform $\ell$ iterations, the algorithm runs in
$\Oh{n+m}$ time as long as $\ell$ is constant. In this formulation the
algorithm has no bound on the number of clusters. However, we can modify the
first iteration of the algorithm, so that a vertex $i$ is not allowed to change
its label when another vertex already moved to block $i$. In a connected graph
this guarantees that each cluster has at least two vertices and the contracted
graph has at most $\frac{|V|}{2}$ vertices. The only exceptions are connected
components consisting of only a single vertex (\emph{isolated} vertices with a
degree of $0$) which can not be contracted by the label propagation algorithm.
However, when such a vertex is detected, our minimum cut algorithm terminates
immediately and returns a cut of size $0$. In practice we do not use the
modification, as label propagation usually returns far fewer than
$\frac{|V|}{2}$~clusters.

\begin{figure}[t]
    \centering
    \includegraphics[width=.75\textwidth]{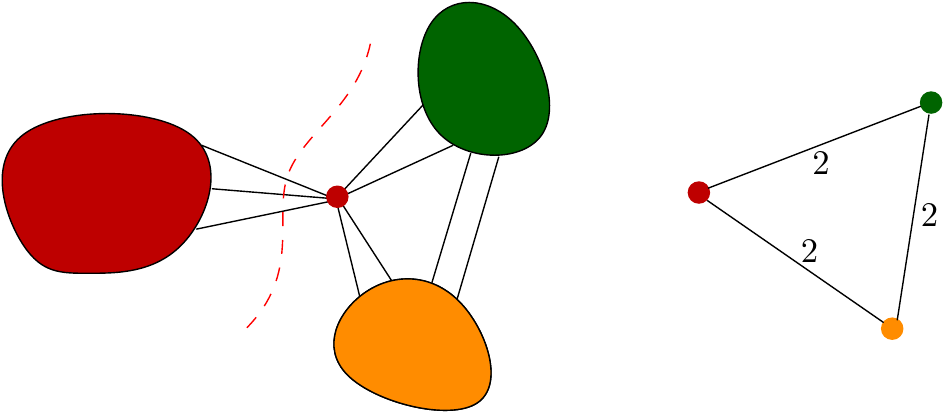}
    \includegraphics[width=.75\textwidth]{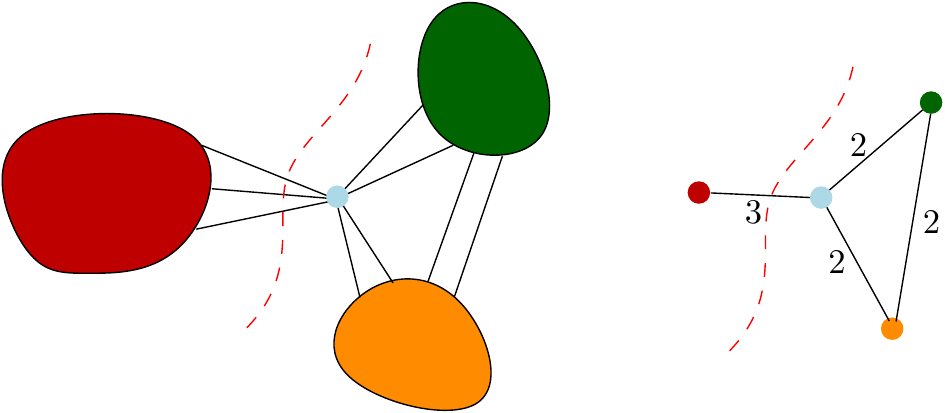}
    \caption{A case in which label propagation misplaces vertices. Top: label propagation assigns the centered vertex correctly to the left (red) cluster. However, this results in a situation in which the contracted graph no longer contains the minimum cut. Setting the centered vertex to be a singleton fixes this problem.}
    \label{fig:misplaced}
\end{figure}

Once we have computed the clustering with label propagation, we search for
single misplaced vertices using a \emph{correcting algorithm}. A misplaced
vertex is a vertex, whose removal from its cluster improves the minimum weighted
degree of the contracted graph. Figure~\ref{fig:misplaced} gives an example in
which the clustering misplaces a vertex. To find misplaced vertices, we sweep
over all vertices and check for each vertex whether it is misplaced. We only
perform this correcting algorithm on small clusters, which have a size of up to
$\log_2(n)$ vertices, as it is likely that large clusters would have more than a
single node misplaced at a time. In general, one can enhance this algorithm by
starting at any node whose removal would lower the cluster degree and greedily
adding neighbors whose removal further lowers the remaining cluster degree.
However, even when performing this greedy search on all clusters, this did not
yield further improvement over the single vertex version on small clusters. This
correcting step never makes a solution worse and on several instances it
improved the value of the final result.

After we computed the final clustering, we contract it to obtain a
coarser graph.  Contracting the clustering works as follows: each block of the
clustering is contracted into a single node. There is an edge between two nodes
$u$ and $v$ in the contracted graph if the two corresponding blocks in the
clustering are adjacent to each other in $G$, \ie block $u$ and block $v$ are
connected by at least one edge.  The weight of an edge $(A,B)$ is set to the sum
of the weight of edges that run between block $A$ and block $B$ of the
clustering.  Our contractions ensure that a minimum cut of the coarse graph
corresponds to a cut of the finer graph with the same value, but not vice versa:
we can not guarantee that a minimum cut of the contracted graph is equal to a
minimum cut of the original graph. It is possible that a single cluster contains
nodes from both sides of the cut. In this case, contracting the cluster
eliminates this minimum cut. If all minimum cuts are eliminated,
$\lambda(G_{\mathcal{C}}) > \lambda(G)$. Thus our newly introduced reduction for
the minimum cut problem is \emph{inexact}. However,~the~following~lemma~holds:

\begin{lemma}
  If there exist a minimum cut of $G$ such that each cluster of the clustering
$\mathcal{C}$ is completely contained in one side of the minimum cut of $G$ and
$|V_{\mathcal{C}}| > 1$, then $\lambda(G) = \lambda(G_{\mathcal{C}})$.
\label{lemma:cluster}
\end{lemma}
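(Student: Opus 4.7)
The plan is to establish the two inequalities $\lambda(G) \leq \lambda(G_{\mathcal{C}})$ and $\lambda(G_{\mathcal{C}}) \leq \lambda(G)$ separately, using the two natural ``lifting'' and ``projecting'' operations between cuts of $G$ and cuts of $G_{\mathcal{C}}$. The hypothesis $|V_{\mathcal{C}}| > 1$ is needed purely to guarantee that $\lambda(G_{\mathcal{C}})$ is well-defined (a graph with one vertex has no non-trivial cut).

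For the direction $\lambda(G) \leq \lambda(G_{\mathcal{C}})$, I would take an arbitrary cut $(S, V_{\mathcal{C}} \setminus S)$ of $G_{\mathcal{C}}$ and lift it to a cut $(A, V \setminus A)$ of $G$ by setting $A \Is \bigcup_{x \in S} \Pi(x)$, where $\Pi(x)$ denotes the set of original vertices contracted into the cluster-vertex $x$. Since every cluster contains at least one vertex, both sides of the lifted cut are non-empty, so it is a valid cut of $G$. By the definition of the contracted edge weights, the weight of every cross-cluster edge in $G_{\mathcal{C}}$ equals the sum of weights of the corresponding inter-cluster edges in $G$, hence the weight of the lifted cut in $G$ equals the weight of the cut in $G_{\mathcal{C}}$. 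Taking the minimum over all such cuts yields $\lambda(G) \leq \lambda(G_{\mathcal{C}})$.

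For the direction $\lambda(G_{\mathcal{C}}) \leq \lambda(G)$, I would use the hypothesized minimum cut $(A, V \setminus A)$ of $G$ in which every cluster lies entirely in one side. This allows me to \emph{project} the cut to $G_{\mathcal{C}}$ by defining $S \Is \{x \in V_{\mathcal{C}} \mid \Pi(x) \subseteq A\}$; well-definedness of this assignment is exactly the hypothesis. Since $A$ and $V \setminus A$ are both non-empty, and every vertex of $G$ belongs to some cluster, both $S$ and $V_{\mathcal{C}} \setminus S$ are non-empty, giving a valid cut of $G_{\mathcal{C}}$. A case analysis on edges of $G$ (inside a cluster, between two clusters on the same side, between clusters on different sides) shows that only the last contribute to either cut weight, and that they contribute identically. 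Thus $\lambda(G_{\mathcal{C}}) \leq c(A) = \lambda(G)$.

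Combining both inequalities gives the claim. The only step requiring any care is the projection direction, where one must invoke the hypothesis to argue that $S$ is well-defined (each cluster goes wholly to one side) and to argue non-triviality of the resulting cut in $G_{\mathcal{C}}$; once this is set up, the weight equality reduces to a straightforward bookkeeping over edges grouped by their cluster endpoints, with intra-cluster edges contributing zero to both sides.
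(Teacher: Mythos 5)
Your proof is correct, and its overall architecture matches the paper's: both establish $\lambda(G_{\mathcal{C}}) \geq \lambda(G)$ from the fact that contraction cannot create new cuts, and then use the hypothesized cluster-respecting minimum cut to get the reverse inequality. The only substantive difference is in the second half. The paper argues via iterated single-edge contraction: it represents contracting a cluster as contracting the edges of a spanning tree of that cluster, observes that none of these edges lies in the respecting minimum cut, and invokes the cited fact that contracting an edge outside some minimum cut preserves $\lambda$. You instead project the minimum cut of $G$ directly to a cut of $G_{\mathcal{C}}$ of equal weight and verify the weight equality by a case analysis on edges. Your route is slightly more self-contained (it needs no external lemma about edge contraction) and makes the non-triviality of the projected cut explicit, which the paper leaves implicit; the paper's route is shorter because it outsources the bookkeeping to a known result. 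Both are sound, and your remark that $|V_{\mathcal{C}}|>1$ serves only to ensure $G_{\mathcal{C}}$ admits a cut at all is consistent with how the paper uses that hypothesis.
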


\begin{proof} As node contraction removes cuts but does not add any new cuts,
$\lambda(G_{\mathcal{C}}) \geq \lambda(G)$ for each contraction with
$|V_{\mathcal{C}}| > 1$. For an edge $e$ in $G$, which is not part of some minimum
cut of $G$, $\lambda(G) = \lambda(G/e)$~\cite{karger1996new}. Contraction of a
cluster $C$ in $G$ can also be represented as the contraction of all edges in
any spanning tree of $C$. If the cluster $C$ is on one side of the minimum cut,
none of the spanning edges are part of the minimum cut. Thus we can contract
each of the edges without affecting the minimum cut of $G$. We can perform this
contraction process on each of the clusters and $\lambda(G_{\mathcal{C}}) =
\lambda(G)$.
\end{proof}

\subsubsection{Exact Reductions by Padberg and Rinaldi}
We use the Padberg-Rinaldi reductions to further shrink the size of the graph. These are exact reductions, which do not modify the size of the minimum cut. Our
algorithm contracts all edges which are marked by the Padberg-Rinaldi
heuristics. In our experiments, we also tried to run the exact reductions first
and cluster contraction last. However, this resulted in a slower algorithm since
not many exact reductions could be applied on the initial unweighted network.
These reductions are described in Section~\ref{p:mincut:ss:pr}. Conditions $1$
and $2$ contract individual heavy edges and conditions $3$ and $4$ use the
shared neighborhood of the incident vertices to certify whether an edge can be
contracted.

We iterate over all edges of $G$ and check conditions $1$ and $2$. Whenever we
encounter an edge $(u,v)$ that satisfies either condition $1$ or $2$ we mark it
as contractible. After finishing the pass, we build the contracted graph. More
precisely, we perform contraction in linear time by deleting all unmarked edges,
contracting connected components and then re-adding the deleted edges as defined
in the contraction process. In practice, we achieve better performance using a
union-find data structure~\cite{galler1964improved}, which results in a running
time of $\Oh{n\alpha(n)+m}$

It is not possible to perform an exhaustive check for conditions $3$ and $4$ in
all triangles in an arbitrary graph $G$ in linear time, as the graph might have
as many as $\Theta(m^{\frac{3}{2}})$ triangles~\cite{schank2005finding}. We
therefore perform linear time passes similar to the implementation of
Chekuri~\etal\cite{Chekuri:1997:ESM:314161.314315}, as discussed in
Section~\ref{p:mincut:ss:pr}.

\subsubsection{Final Step: Exact Minimum Cut Algorithm.} 
To find the minimum cut of the final problem kernel, we use the minimum cut
algorithm of Nagamochi, Ono and Ibaraki, as discussed in
Section~\ref{p:mincut:ss:noi}.

\begin{lemma}
  The algorithm \textttA{VieCut} has a running time complexity of $\Oh{n+m}$.
\end{lemma}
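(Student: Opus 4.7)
The plan is to show each component performed inside one iteration of the outer \textttA{while} loop runs in time linear in the size of the current (contracted) graph $\mathcal{G}_i$, and then to bound the total work over all outer iterations by a geometric series.

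First I would argue step-by-step linearity of a single iteration on a graph with $n_i$ vertices and $m_i$ edges. Label propagation, executed for a constant number $\ell$ of rounds, runs in $\Oh{n_i+m_i}$ per round and hence in $\Oh{n_i+m_i}$ total; the modified first round (in which a vertex cannot change its label once another vertex has moved to its own block) only adds bookkeeping that is easily implemented in linear time. The correcting sweep for misplaced vertices inspects, for every vertex, only clusters of size at most $\log_2 n$, so its cost is again $\Oh{n_i+m_i}$. Contracting the clustering and the Padberg--Rinaldi reductions have already been shown (condition 1--2 by a single edge scan with a union--find in $\Oh{n_i\alpha(n_i)+m_i}$; conditions 3--4 by the scanned-vertex pass of Chekuri~\etal in $\Oh{n_i+m_i}$, see Section~\ref{p:mincut:ss:pr}) to run in time linear in the current graph.

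Next I would bound the number of outer iterations. The modification of the first LP round guarantees that every cluster contains at least two vertices (the only exception being isolated vertices, which cause the algorithm to terminate immediately with cut value $0$). Thus $n_{i+1}\leq n_i/2$, and the \textttA{while} loop terminates after $\Oh{\log(n/n_0)}$ rounds once $n_i\leq n_0$ for the constant threshold $n_0$. The exact algorithm of Nagamochi, Ono and Ibaraki is then applied to a graph of constant size, contributing only $\Oh{1}$.

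Finally I would sum the per-iteration costs. Because $n_i\leq n/2^i$, the vertex contributions form a geometric series $\sum_i n_i = \Oh{n}$. For the edges one uses the same geometric decrease together with the fact that contractions never introduce new edges, so $m_i\leq m$ and, once the cluster sizes are large enough to merge parallel edges aggressively, $m_i$ decays at the same geometric rate as $n_i$; summing gives $\sum_i m_i=\Oh{m}$. Combining both bounds yields a total running time of $\Oh{n+m}$, and the transfer of the final two-way partition back to the input graph costs only $\Oh{n}$ via the stored contraction mapping.

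\textbf{Main obstacle.} The delicate point is the geometric decay of $m_i$: the halving guarantee from LP is stated only for the vertex count, while edge counts could in principle stay close to $m$ across several levels. Making the linear bound fully rigorous therefore requires either arguing that after one LP+PR round the number of surviving cross-cluster edges shrinks by a constant factor, or absorbing an additional polylogarithmic factor into the analysis, and this is the step I would spend the most care on.
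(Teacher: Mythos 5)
Your proposal follows essentially the same route as the paper's own proof: linear work per round of label propagation, correction, contraction and Padberg--Rinaldi reductions, a halving of the vertex count per level from the modified first LP round, a geometric sum over levels, constant work for the final exact solve on a graph of size at most $n_0$, and linear-time solution transfer, with isolated vertices handled by immediate termination with a cut of value~$0$. The obstacle you flag --- that the halving guarantee applies to vertices while $\sum_i m_i = \Oh{m}$ does not follow from it --- is a genuine gap, but it is one the paper's proof shares rather than resolves: it simply asserts that the ``graph size'' shrinks geometrically, so your treatment is, if anything, the more careful of the two.
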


\begin{proof}
  One round of all reduction and contraction steps
  (Algorithm~\ref{alg:generalalgorithm}, lines 2-8) can be performed in
  $\Oh{n+m}$. The label propagation step contracts the graph by at least a
  factor of $2$, which yields geometrically shrinking graph size and thus a
  total running time of $\Oh{n+m}$.  We break this loop when the contracted
  graph has less than some constant $n_0$ number of vertices. The exact minimum
  cut of this graph with constant size can therefore be found in constant time.
  The solution transfer can be performed in linear time by performing the
  coarsening in reverse and pushing the two cut sides from each graph to the
  next~finer~graph.

If the graph is not connected, throughout the algorithm one of the contracted
graphs can contain isolated vertices, which our algorithm does not contract.
However, when we discover an isolated vertex, there exists a cut of size $0$
that separates the connected components. As no cut can be smaller than $0$, this
cut is minimum and our algorithm terminates and reports it.
\end{proof}

\subsection{Parallelization}

We describe how to parallelize \textttA{VieCut}. We parallelize each part of
the algorithm, except the final invocation of the algorithm of Nagamochi, Ono,
and Ibaraki.

\subsubsection{Parallel Label Propagation}
To perform the label update for vertex $v$, we only need to consider vertices in
the neighborhood $N(v)$. Therefore the label propagation algorithm can be
implemented in parallel on shared-memory machines~\cite{staudt2013engineering}
using the \textttA{parallel for} directive from the OpenMP~\cite{dagum1998openmp}
API. We store the cluster affiliation for all vertices in an array of size $n$,
where position $i$ denotes the cluster affiliation of vertex $i$. We explicitly
do not perform label updates in a \textttA{critical} section, as each vertex is
only traversed once and the race conditions are not critical but instead
introduce another source of randomness.

\subsubsection{Parallel Correcting Step}
As the clusters are independent of each other for this correcting step, we
parallelize it on a cluster level, that is, a cluster is checked by a single
thread but each thread can check a different cluster without the need for locks
or mutexes.

\subsubsection{Parallel Graph Contraction}
\label{contract}
After label propagation has partitioned the graph into $c$ clusters, we build
the cluster graph. As the time to build this contracted graph is not negligible,
we parallelize graph contraction as well. One of the $p$ threads performs the
memory allocations to store the contracted graph, while the other $p-1$ threads
prepare the data for this contracted graph. When $c^2 > n$, we parallelize the
graph on a cluster level. To build the contracted vertex for cluster $C$, we
iterate over all outgoing edges $e = (u,v)$ for all vertices $u \in C$. If $v
\in C$ then $e$ is an intra-cluster edge and we discard the edge, otherwise we
add $c(u,v)$ to the edge weight between $C$ and the cluster of vertex $v$. When
$c^2 < n$, we achieve lower running time and better scaling when using a
shared-memory parallel hash table~\cite{akhremtsev2020high,maier2016concurrent}.
We generate the contracted graph $G_C = (V_C,E_C)$, in which each block is
represented by a single vertex - first we assign each block a vertex ID in the
contracted graph in $[0,|V_C|)$. For each edge~$e=(u,v)$, we compute a hash of
the block IDs of~$u$ and $v$ to uniquely identify the edge in $E_C$. We use this
identifier to compute the weights of all edges between blocks. Every thread
iterates over a distinct block of edges and we use the parallel hash table to
sum up the edge weights between vertices in the contracted graph. If the
contracted graph contains two extremely heavy vertices, \ie two vertices that
each encompass at least $20\%$ of the vertices of the original graph, we noticed
slowdown due to the many accesses to the same hash table entry. We therefore
compute the edge weight between those two blocks separately on each processor
and sum up these local values at the end.

\subsubsection{Parallel Padberg-Rinaldi Reductions}
In parallel, we run the Padberg-Rinaldi reductions on the
contracted graph. As these criteria are local and independent, they can be
parallelized trivially. We use a parallel wait-free union-find data
structure~\cite{anderson1991wait} to avoid locking. Reductions $2$ and $3$ use
the weighted vertex degree which changes when edges are contracted. Updating the
vertex degrees before performing the actual bulk contraction would entail
additional locks. These reductions are therefore only performed on edges where
both incident vertices where not yet affected by a contraction. We use a
compare-and-swap mechanism to make sure this holds in parallel.

\subsection{Further Implementation Details}
\label{c:vc:s:shuffle}
The label propagation algorithm by Raghavan et al.~\cite{raghavan2007near}
traverses the graph vertices in random order. Other implementations of the
algorithm~\cite{staudt2013engineering} omit this explicit randomization and rely
on implicit randomization through parallelism, as the vertex processing order in
parallel label propagation is non-deterministic. Our implementation to find the
new label of a vertex $v$ in uses an array, in which we sum up the weights for
all clusters in the neighborhood $N(v)$. Therefore randomizing the vertex
traversal order would destroy any graph locality, leading to many random reads
in the large array, which is very cache inefficient. Thus we trade off
randomness and graph locality by randomly shuffling small blocks of vertex ids
but traversing each of these shuffled blocks successively.

Using a time-forward processing technique~\cite{zeh2002efficient} the label
propagation as well as the contraction algorithm can be implemented in external
memory~\cite{akhremtsev2014semi} using Sort($|E|$) I/Os overall. Hence, if we
only use the label propagation contraction technique in external memory and use
the whole algorithm as soon as the graph fits into internal memory, we directly
obtain an external memory algorithm for the minimum cut problem. We do not
further investigate this variant of the algorithm as our focus is on fast
internal memory algorithms for the problem. 

\section{Random Edge Contraction}
\label{c:vc:s:rec}
We now propose an additional variant of our algorithm, which aims to achieve a
lower running time at the expense of a higher error rate. Similar to the
algorithm of Karger and Stein~\cite{karger1996new}, we shrink the graph by contracting random edges
and then perform the \textttA{VieCut} algorithm on the contracted graph.

In contrast to Karger and Stein's original algorithm~\cite{karger1996new}, our
implementation of random contraction does not perform the edge contractions
independently. Instead, we use a wait-free parallel union-find data
structure~\cite{anderson1991wait} to mark contracted blocks and perform
contractions in bulk, as discussed in the previous section.

In detail, the process works as follows: we draw a random integer  $i \in
[0,\dots,m)$ and use the union-find data structure to check whether the vertices
$u$ and $v$ incident to edge $i$ are in the same block. If they are not, we
unite the blocks containing $u$ and $v$ and decrement the number of blocks. We
repeat this process until the number of blocks is smaller than the number of
vertices multiplied by a given contraction factor $\alpha \in (0,1)$. We then
perform all contractions in a single operation, similar to the contraction in
Section~\ref{c:vc:s:pr}. This implementation of the random contraction algorithm
for the minimum cut problem was first employed by Chekuri
\etal~\cite{Chekuri:1997:ESM:314161.314315}.

For edge-weighted graphs we draw each edge with probability proportional to its weight.
To do this efficiently, we build the prefix sum of all edge weights. This prefix
sum $p_e$ of an edge $e$ is defined as the weight of all previous edges as given
by the edge order in the graph data structure, more formally defined in
Equation~\ref{eq:pre}.
\begin{equation} p_{e} = \sum\limits_{i=0}^{e-1} c(i)
\label{eq:pre}
\end{equation}
In weighted graphs we can then draw edges from the range $i \in
[0,\dots,\sum_{e \in E}c(e))$. If $p_{j} < i \leq p_{j+1}$, we contract edge $j$
similar to the unweighted case. This can be implemented in $\Oh{\log n}$ time
using binary search on the array of the prefix sums.

We also tested other techniques to achieve a speedup by contracting the graph:
in expectation, random edge sampling approximately preserves the minimum cut
with high probability~\cite{hu2013survey}. However, in order to actually achieve
a speedup, we need very low sampling rates and the approximation factor
deteriorates both in theory and practice. Removing high-degree vertices and
their incident edges often disconnects the graph. Greedily re-adding the removed
vertices to the partition with stronger connection does not result in cuts with
low weight. Hence, we omit further investigation of those techniques here.

\section{Experiments}\label{c:vc:s:experiments}

In this section we compare our algorithm \textttA{VieCut} with existing
algorithms for the minimum cut problem on real-world and synthetic graphs. We
compare the sequential variant of our algorithm to efficient implementations of
existing algorithms and show how our algorithm scales on a shared-memory
machine.

\subsection{Experimental Setup and Methodology}

We implemented the algorithms using \CC-17. Our experiments are conducted on two
machines: Machine A, which is used for nearly all experiments, has two Intel
Xeon E5-2643 v4 with 3.4GHz with 6 CPU cores each and 1.5 TB RAM in total. On
this machine we compiled our code using g++-7.1.0 with full optimization
(\textttA{-O3}). Machine B contains 4 Intel Xeon E7-8677 v3 with 2.5GHz with $16$
cores each. It has $1$TB of RAM in total. This machine is used for the parallel
experiments in Section~\ref{exp:rec} with up to $128$ threads. On this machine,
we compiled all code with g++-6.3.0 with full optimization  (\textttA{-O3}). In
general, we perform five repetitions per instance and report the average running
time as well as the cut size.

\subsection{Algorithms}  We compare our algorithm with our implementations
of the algorithm of Nagamochi, Ono and
Ibaraki~(\textttA{NOI})~\cite{nagamochi1994implementing} and the
$(2+\varepsilon)$-approximation algorithm of Matula
(\textttA{Matula})~\cite{matula1993linear}. In addition, we compare against the
preflow-based algorithm of Hao and Orlin~(\textttA{HO})~\cite{hao1992faster} by
using the implementation of Chekuri et
al.~\cite{Chekuri:1997:ESM:314161.314315}. We also performed experiments with
Chekuri et al.'s implementations of \textttA{NOI}, but our implementation is
generally faster. For \textttA{HO}, Chekuri~\etal give variants with and without
Padberg-Rinaldi tests and with an excess detection
heuristic~\cite{Chekuri:1997:ESM:314161.314315}, which contracts nodes with
large preflow excess. We use three variants of the algorithm of Hao and Orlin in
our experiments: \textttA{HO\_A} uses Padberg-Rinaldi tests, \textttA{HO\_B} uses
excess detection and \textttA{HO\_C} uses both. We also use their implementation
of the algorithm of Karger and
Stein~\cite{karger1996new,code,Chekuri:1997:ESM:314161.314315} (\textttA{KS})
without Padberg-Rinaldi tests. The variant of Karger-Stein with Padberg-Rinaldi
tests decomposed most graphs in preprocessing with repeated Padberg-Rinaldi
tests. It therefore performed very similar to \textttA{HO\_A} and \textttA{HO\_C}
and was omitted. We only perform a single iteration of the Karger-Stein
algorithm, as this is already slower than all other algorithms. Note that
performing more iterations yields a smaller error probability, but also makes
the algorithm even slower. The implementation crashes on very large instances
due to overflows in the graph data structure used for edge contractions. We do
not include the algorithm by Stoer and Wagner~\cite{stoer1997simple}, as it is
far slower than \textttA{NOI} and \textttA{HO} in the experiments of Chekuri
\etal\cite{Chekuri:1997:ESM:314161.314315} and Jünger
\etal\cite{junger2000practical} and was also slower in preliminary experiments
we conducted. We also do not include the near-linear algorithm of Henzinger
\etal~\cite{henzinger2017local}, as the other algorithms are quasi linear in
most instances examined and the algorithm of Henzinger \etal has large constant
factors in the running time. We performed, however, preliminary experiments with
the core of the algorithm, which indicate that the algorithm is slower in
practice. We also performed preliminary experiments with an ILP formulation
using Gurobi 8.0.0. On an RHG graph with $n=2^{15}$ and an average density of
$2^5$ that was solved exactly in $0.04$ seconds using \textttA{HO\_A}, the ILP
was solved in \numprint{3500} seconds. We therefore did not further investigate
using ILP formulations to solve the minimum cut problem. Finally, we note that
the MPI-parallel implementation of \textttA{KS} by
Gianinazzi~\etal~\cite{gianinazzi2018communication} finds the minimum cut of
RMAT graphs with $n=$\numprint{16000} and an average degree of \numprint{4000}
in $5$ seconds using \numprint{1536} cores~\cite{gianinazzi2018communication}.
This is significantly slower than our \textttA{VieCut} algorithm, which finds
the minimum cut on a similar-sized RMAT graph~\cite{wsvr} in $0.2$ seconds using
just $24$ threads. Given this stark difference in running time, we exclude their
algorithm from our experiments.

\subsection{Instances}
We perform experiments on clustered Erd\H{o}s-Rényi graphs that are generated
using the generator from Chekuri~\etal\cite{Chekuri:1997:ESM:314161.314315},
which are commonly used in the
literature~\cite{nagamochi1994implementing,junger2000practical,Chekuri:1997:ESM:314161.314315,padberg1990efficient}.
We also perform experiments on random hyperbolic
graphs~\cite{krioukov2010hyperbolic, von2015generating} and on large undirected
real-world graphs taken from the 10th DIMACS Implementation
Challenge~\cite{bader2013graph} and from the Laboratory for Web
Algorithmics~\cite{BRSLLP,BoVWFI}. As these graphs contain vertices with low
degree (and therefore trivial cuts), we use the $k$-core
decomposition~\cite{batagelj2003m}, which gives the largest subgraph, in which
each vertex has a degree of at least~$k$, to generate input graphs. We use the
largest connected components of these core graphs to generate graphs in which
the minimum cut is not trivial. For every real-world graph, we use $k$-cores for
four different values of $k$. In Section~\ref{rwgraphs} we show the instances in
further detail and in Table~\ref{p:mincut:table:graphs} (Graph Family A) we give sizes and cut
values for each instance used.

The graphs used in our experiments have up to $70$ million vertices
(\textttA{uk-2007-05}, $k=10$) and up to $5$ billion edges (Clustered
Erd\H{o}s-Rényi, $n=100$K, $d=100\%$). To the best of our knowledge, these
graphs are the largest instances reported in literature to be used for
experiments on global minimum cuts.

\subsubsection{Clustered Erd\H{o}s-Rényi Graphs}

Many prior experimental studies of minimum cut algorithms used a family of
clustered Erd\H{o}s-Rényi graphs with $m =
O(n^2)$~\cite{nagamochi1994implementing,junger2000practical,Chekuri:1997:ESM:314161.314315,padberg1990efficient}.
This family of graphs is specified by the following parameters: number of
vertices $n = |V|$, $d$ the graph density as a percentage where $m = |E| =
\frac{n\cdot(n-1)}{2} \cdot \frac{d}{100}$ and the number of clusters $k$. For
each edge $(u, v)$, the integral edge weight $c(u,v)$ is generated independently
and uniformly in the interval $[1,100]$. When the vertices $u$ and $v$ are in
the same cluster, the edge weight is multiplied by $n$, resulting in edge
weights in the interval $[n, 100n]$. Therefore the minimum cut can be found
between two clusters with high probability. We performed three experiments on
this family of graphs. In each of these experiments we varied one of the graph
parameters and fixed the other two parameters. These experiments are similar to
older
experiments~\cite{nagamochi1994implementing,junger2000practical,Chekuri:1997:ESM:314161.314315,padberg1990efficient}
but scaled to larger graphs to account for improvements in machine hardware. We
use the generator \emph{noigen} of Andrew Goldberg~\cite{code} to generate the
clustered Erd\H{o}s-Rényi graphs for these experiments. This generator was also
used in the study conducted by Chekuri
\etal\cite{Chekuri:1997:ESM:314161.314315}. As our code uses the
METIS~\cite{karypis1998metis} graph format, we use a script to translate the
graph format. All experiments exclude I/O times.

\subsubsection{Random Hyperbolic Graphs (RHG) \cite{krioukov2010hyperbolic}}
\label{rhggraphs}

Random hyperbolic graphs replicate many features of real-world
networks~\cite{chakrabarti2006graph}: the degree distribution follows a power
law, they often exhibit a community structure and have a small diameter. In
denser hyperbolic graphs, the minimum cut is often equal to the minimum degree,
which results in a trivial minimum cut. In order to prevent trivial minimum
cuts, we use a power law exponent of $5$. We use the generator of von Looz
\etal\cite{von2015generating}, which is a part of
NetworKit~\cite{staudt2014networkit}, to generate unweighted random hyperbolic
graphs with $2^{20}$ to $2^{25}$ vertices and an average vertex degree of $2^5$
to $2^8$. These graphs generally have very few small cuts and in most instances
there is only one unique minimum cut. Removal of the minimum cut partitions the
set of nodes into two sets of similar size.

\subsection{Configuring the Algorithm}

\begin{table*}[t]
    \begin{tabular}{r||c|c|c|c|c|c|c}
  &\textttA{VCut1} & \textttA{VCut2} &  \textttA{VCut3} &
  \textttA{VCut5} & \textttA{VCut10} & \textttA{VCut25} \\\hline\hline
      \# of non optimal cuts & 29 & 14 & 15 & 19 & 19 & 18 \\ \hline
      average dist. to opt. & 16.2\% & 2.44\% & 2.46\% & 3.80\% & 3.37\% &
      3.14\%\\
     \end{tabular}
     \caption{Error rate for configurations of \textttA{VieCut} in RHG graphs
     (out of 300 instances). The number in configuration name indicates the number
     of iterations in the label propagation step.  \label{fig:rhg_it}}
\end{table*}
    
We performed experiments to tune the number of label propagation iterations and
to find an appropriate amount of randomness for our algorithm. We conducted
these experiments with different configurations on generated hyperbolic
graphs~(see Section~\ref{rhggraphs}) with $2^{15}$ to $2^{19}$ vertices with an
average degree of $2^5$ to $2^8$ and compared error rate and running time. The
instances used here are different to the ones used in later sections.
  
Table~\ref{fig:rhg_it} shows the number of non-optimal cuts returned by
\textttA{VieCut} with different numbers of label propagation iterations
indicated by the integer in the name. Each implementation traverses the graph in
blocks of $256$ randomly shuffled elements as described in
Section~\ref{c:vc:s:shuffle}. The variant \textttA{VieCut25} performs up to 25
iterations or until the label propagation converges so that only up to
$\frac{1}{10000}$ of all nodes change their cluster. On average the variant
performed $20.4$ iterations. The results for all variants with $2$ to $25$
iterations are very similar with $14$ to $19$ non-optimal results and $2.44\%$
and $3.80\%$ average distance to the optimum. As the largest part of the total
running time is in the label propagation step, running the algorithm with a
lower amount of iterations is obviously faster. Therefore we use $2$ iterations
of label propagation in all following experiments.
  
\begin{figure}[t]
  \centering
  \includegraphics[width=\textwidth]{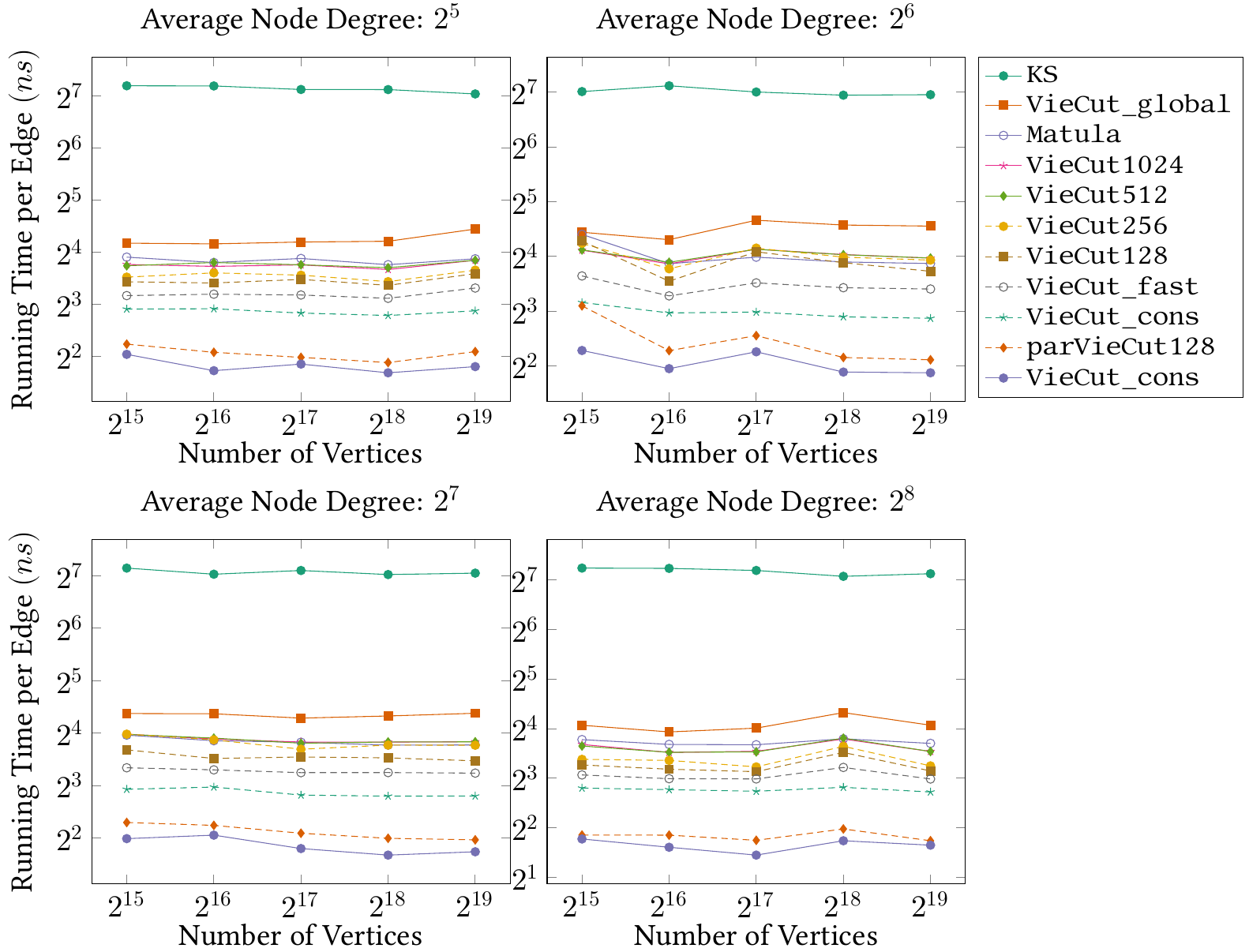}  
  
  \caption{Total running time in nanoseconds per edge in small RHG graphs for different configurations of \textttA{VieCut}}
  \label{fig:rhg}
\end{figure}

To compare the effect of graph traversal strategies, we compared different
configurations of our algorithm. \textttA{VieCut\_cons} does not randomize the
traversal order, \ie it traverses vertices consecutively by ID,
\textttA{VieCut\_global} performs global shuffling, \textttA{VieCut\_fast} swaps
each vertex with a random vertex with a index distance up to 20. The
configurations \textttA{VieCut128}, \textttA{VieCut256}, \textttA{VieCut512},
\textttA{VieCut1024} randomly shuffle blocks of $128$, $256$, $512$, or $1024$
vertices and introduce randomness without losing too much data locality. We also
include the configurations \textttA{parVieCut\_cons} and \textttA{parVieCut128},
which are shared-memory parallel implementation with 12 threads. As a
comparison, we also include the approximation algorithm of Matula and a single
run of the randomized algorithm of Karger and Stein.
    
\label{exp:shuffle}

Figure~\ref{fig:rhg} shows the total running time for different configurations
of \textttA{VieCut}. From the sequential algorithms, \textttA{VieCut\_cons}
has the lowest running time for all algorithms. The algorithm, however, returns
non-optimal cuts in more than $\frac{1}{3}$ of all instances, with an average
distance to the minimum cut of $~44\%$ over all graphs. The best results were
obtained by \textttA{VieCut128}, which has an average distance of $0.83\%$ and
only $10$ non-optimal results out of $300$ instances. The results are very good
compared to \textttA{Matula}, which has $57$ non-optimal results in these $300$
instances and an average distance of $5.57\%$. \textttA{VieCut128} is $20\%$
faster on most graphs than \textttA{Matula}, regardless of graph size or
density.  In the following we use the configuration \textttA{VieCut128} with
$2$ iterations, there named \textttA{VieCut}. On these small graphs, the
parallel versions have a speedup factor of $2$ to $3.5$ compared to their
sequential version. \textttA{parVieCut128} has $17$ non-optimal results and an
average distance of $4.91\%$ while \textttA{parVieCut\_cons} has 29 non-optimal
results and $20\%$ average distance to the minimum cut. Therefore we use
\textttA{parVieCut128} for all parallel experiments (named \textttA{parVieCut}).
We set the bound $n_0$ to \numprint{10000} and did not encounter a single
instance with more than a single bulk contraction step.

\begin{figure}[t!]
  \centering
  \includegraphics[width=\textwidth]{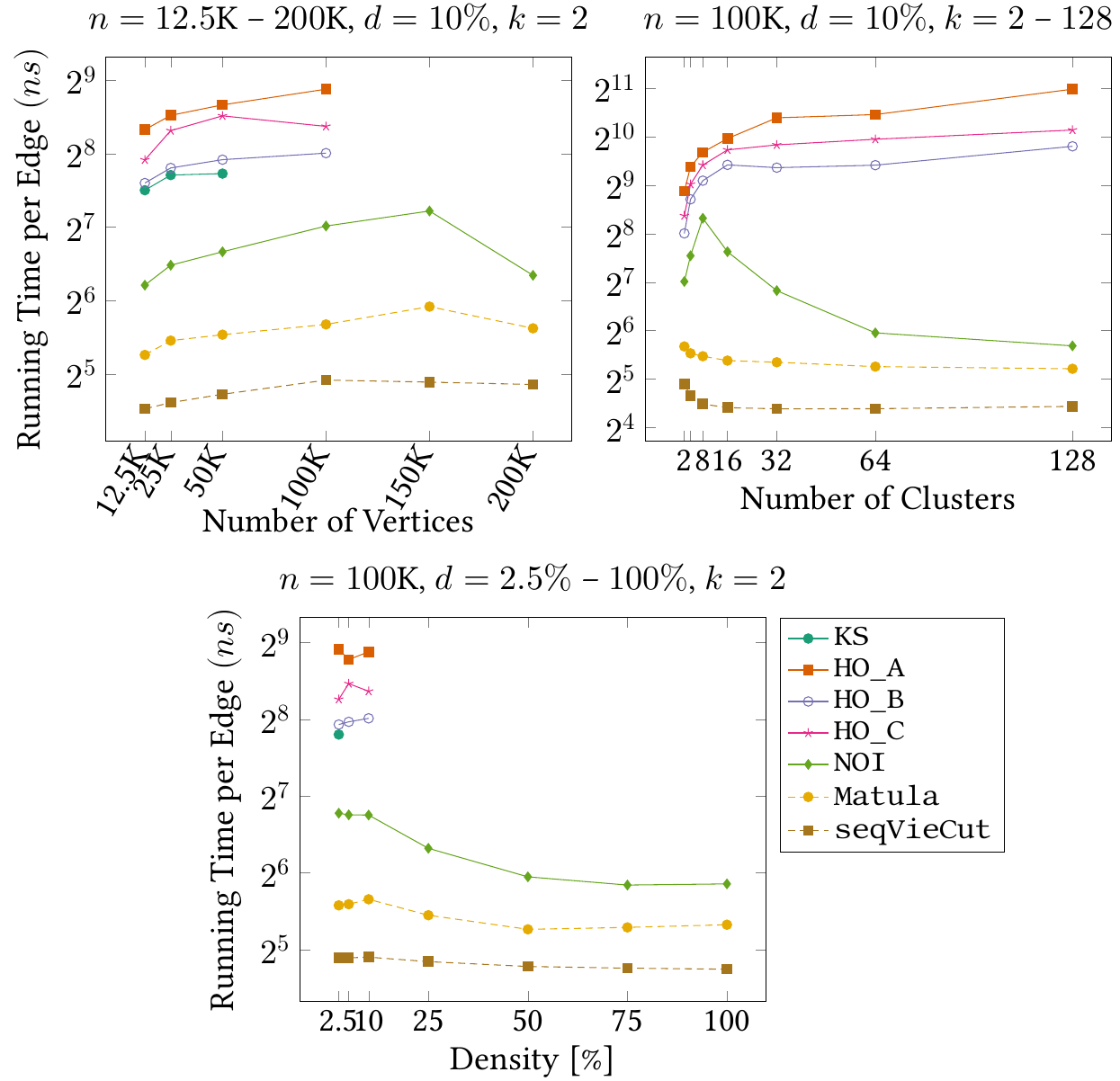}

  \caption{Total running time in nanoseconds per edge in clustered
    Erd\H{o}s-Rényi graphs\label{fig:test3}}
\end{figure}

\subsection{Experimental Results}

\subsubsection{Clustered Erd\H{o}s-Rényi Graphs}

Clustered Erd\H{o}s-R\'enyi graphs have distinct small cuts between the clusters
and do not have any other small cuts. We perform three experiments varying one
parameter of the graph class and use default parameters for the other two
parameters. Our default parameters are $n=\numprint{100000}$, $d=10\%$ and
$k=2$. The code of Chekuri~\etal\cite{Chekuri:1997:ESM:314161.314315} uses 32
bit integers to store vertices and edges. We could therefore not perform the
experiments with $m \geq 2^{31}$ with~\textttA{HO}. Figure~\ref{fig:test3} shows
the results for these experiments. First of all, on $20\%$ of the instances
\textttA{KS} returns non-optimal results. No other algorithm returned any
non-optimal minimum cuts on any graph of this dataset. Moreover,
\textttA{seqVieCut} is the fastest algorithm on all of these instances, followed
by \textttA{Matula}, which is $40\%$ to $100\%$ slower on these instances.

Our algorithm \textttA{seqVieCut} is faster on graphs with a lower number of
vertices, as the array containing cluster affiliations -- which has one entry
per vertex and is accessed for each edge -- fits into cache. In graphs with
$k=2,4,8$, the final number of clusters in the label propagation algorithm is
equal to $k$, as label propagation correctly identifies the clusters. In the
graph contraction step, we iterate over all edges and check whether the incident
vertices are in different clusters. For this branch, the compiler assumes that
they are indeed in different cluster. However, in these graphs, the chance for
any two adjacent nodes being in the same cluster is $\frac{1}{k}$, which is far
from zero. This results in a large amount of branch misses (for $n=$
\numprint{100000}, $d=10\%$, $k=2$: average $14\%$ branch misses, in total $1.5$
billion missed branches). Thus the performance is better with higher values of
$k$. The fastest exact algorithm is \textttA{NOI}. This matches the experimental
results obtained by Chekuri \etal\cite{Chekuri:1997:ESM:314161.314315} on graphs
generated with the same instance generator.

\begin{figure}[t]
    \centering
 \includegraphics[width=\textwidth]{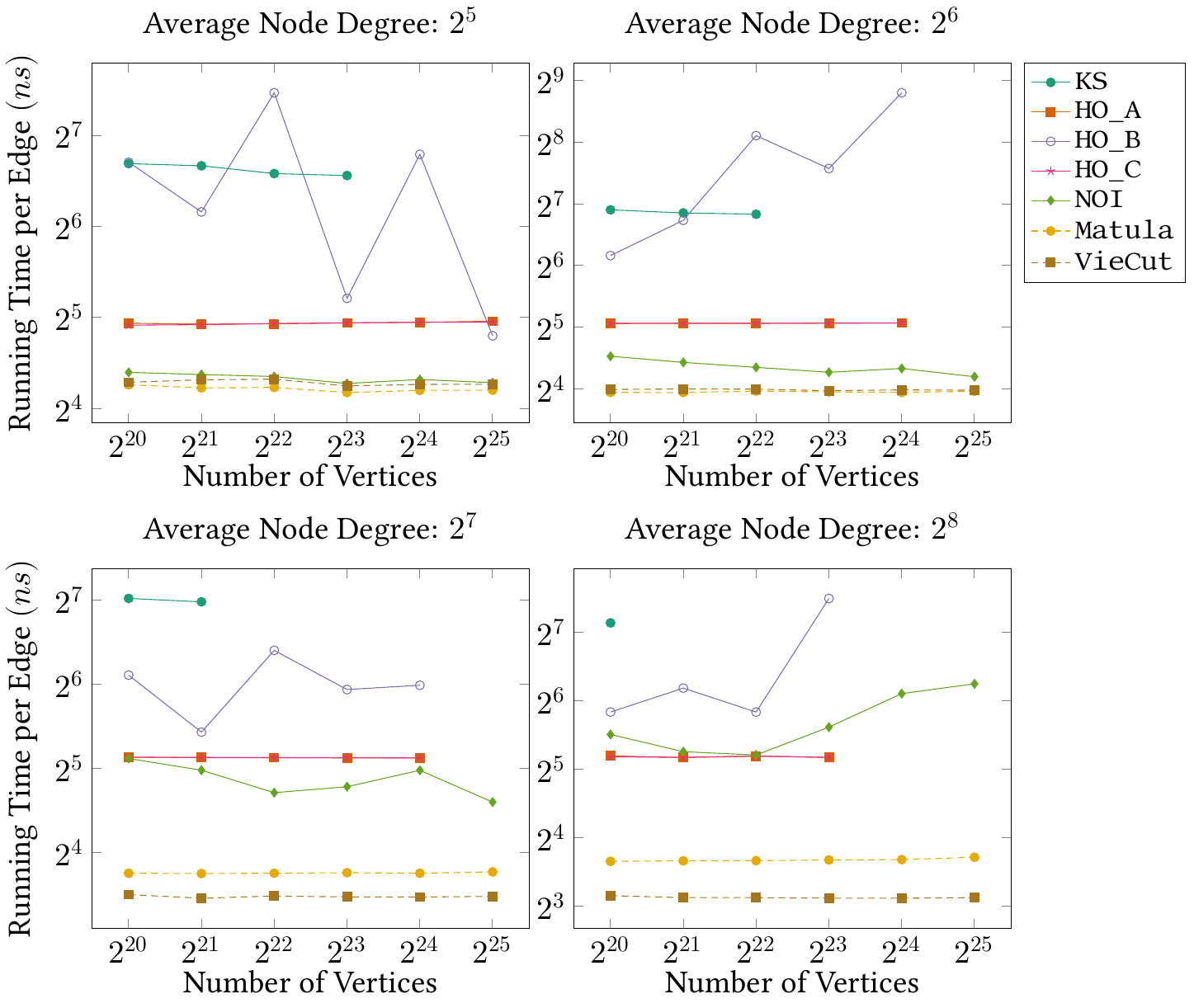}

\caption{Total running time in nanoseconds per edge in RHG graphs
\label{fig:rhgtests}}
\end{figure}


\begin{figure}[t]
  \centering
  \includegraphics[width=.6\textwidth]{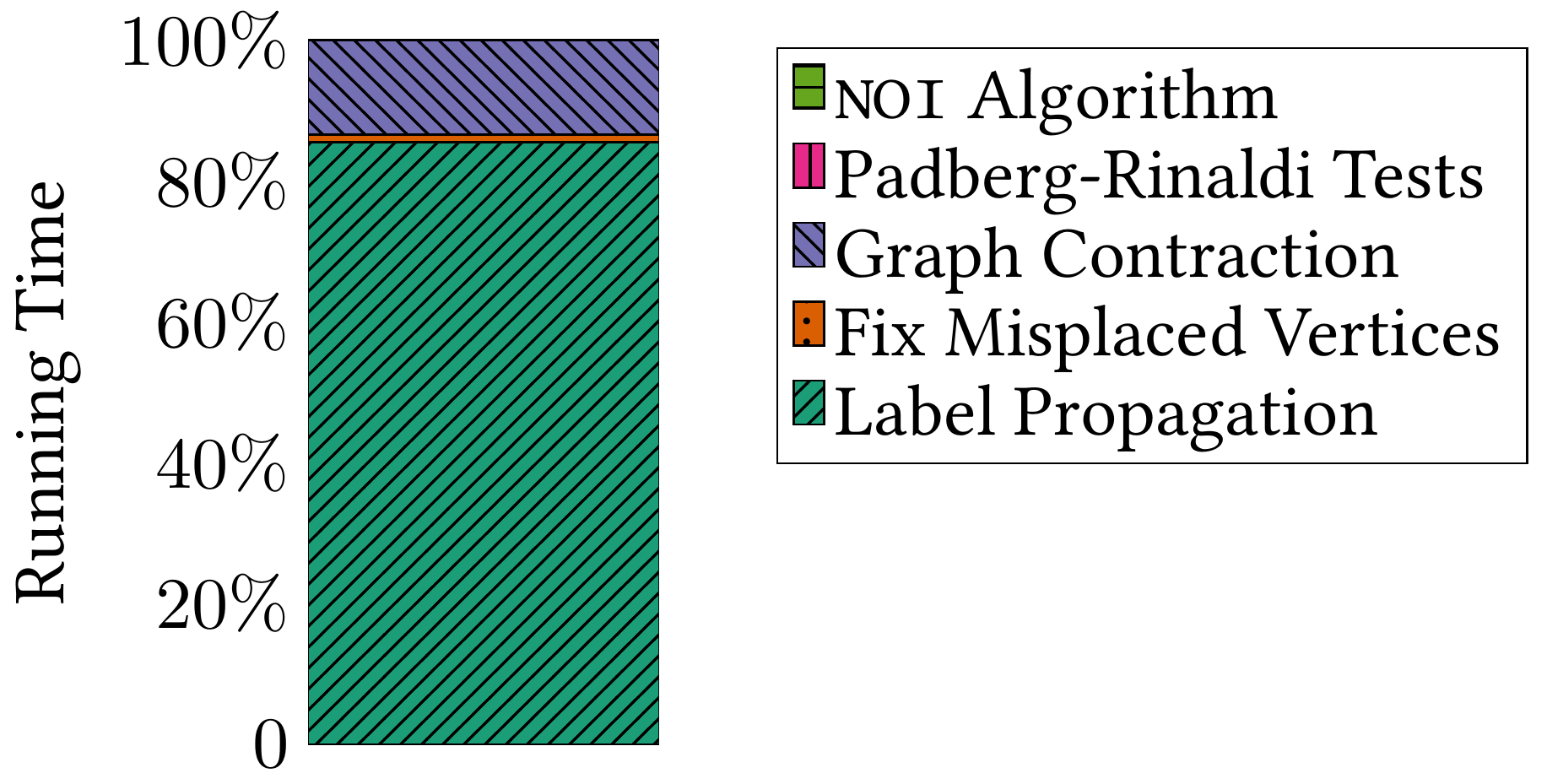}
  \caption{Running Time Breakdown for RHG Graphs with $n=2^{25}$ and
  $m=2^{32}$\label{fig:breakdown}}
  \end{figure}

\subsubsection{Random Hyperbolic Graphs}
\begin{figure}[t]
\centering
\includegraphics[width=.73\textwidth]{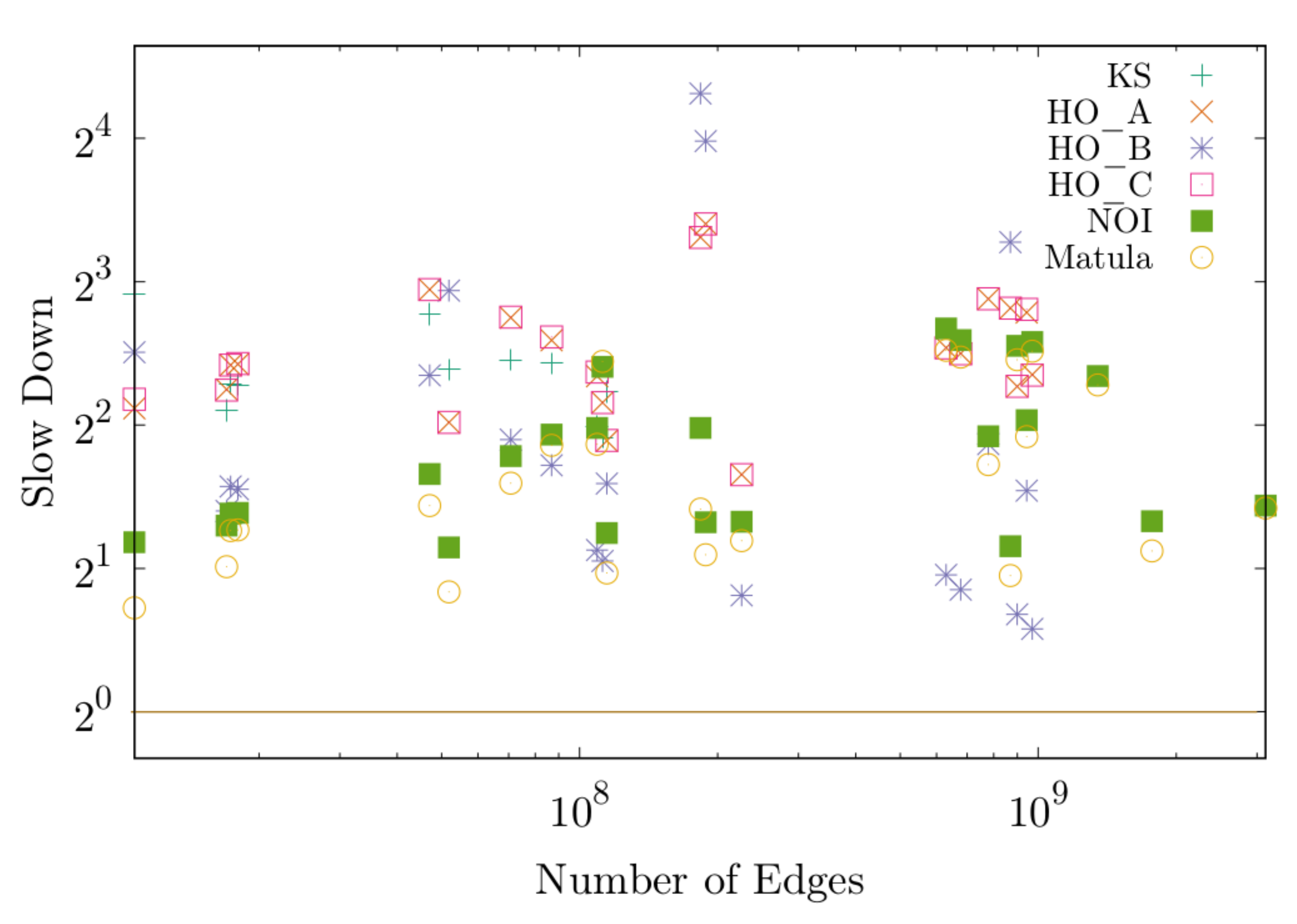}
\includegraphics[width=.73\textwidth]{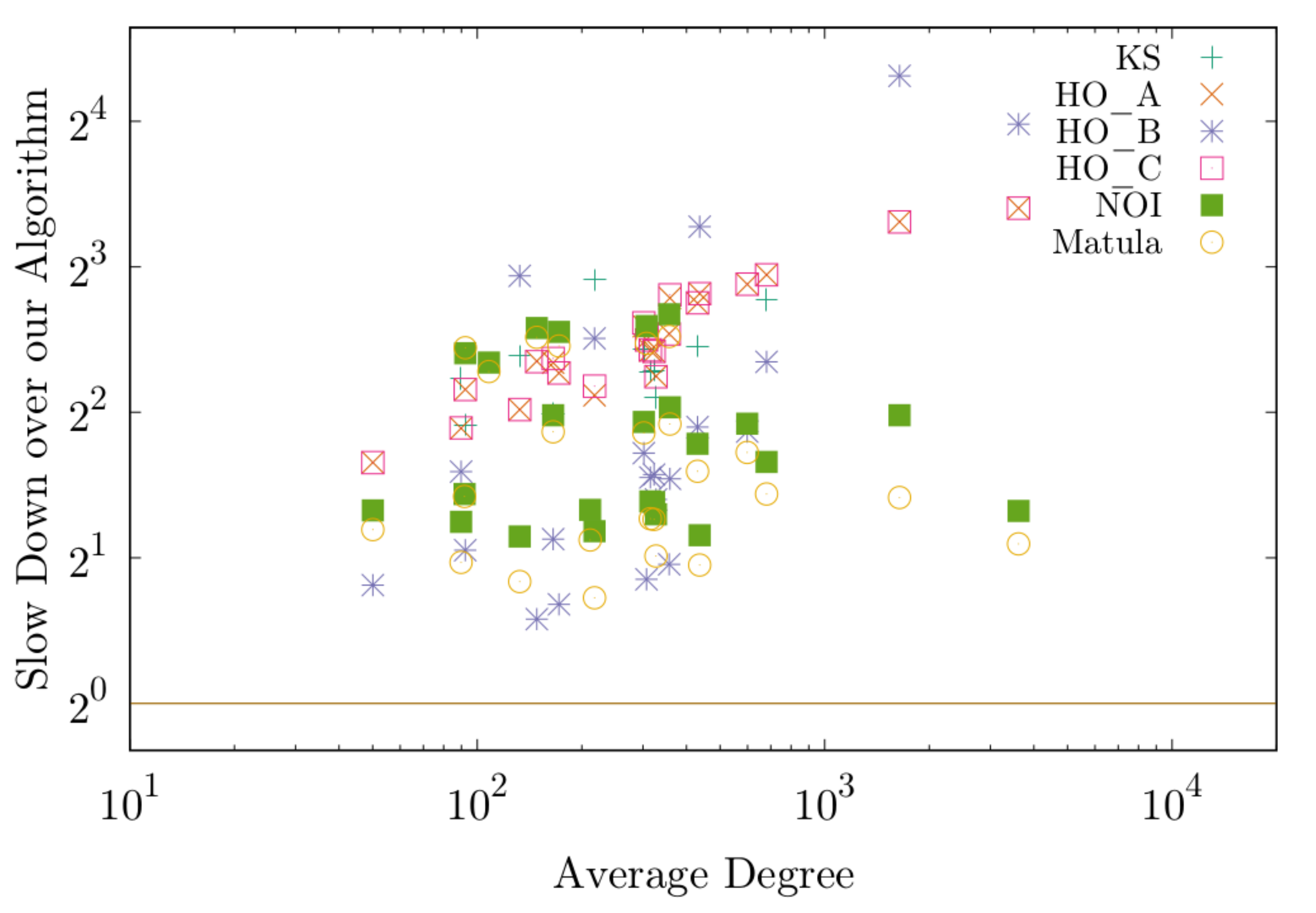}
\caption{Slowdowns of competitors to \textttA{VieCut} in large real-world
graphs. We display slowdowns based on the absolute number of edges (top), and
by the average vertex degree (bottom) in the graph\label{fig:real}}
\end{figure}

We also performed experiments on random hyperbolic graphs with $n=2^{20} -
2^{25}$ and an average degree of $2^5 - 2^8$. We generated $3$ graphs for each
of the $24$ possible combinations of $n$ and average degree yielding a total of
$72$ RHG graphs. Note that these graphs are hard instances for the inexact
algorithms, as they contain few -- usually only one -- small cuts and both sides
of the cut are large. From a total of $360$ runs, \textttA{seqVieCut} does not
return the correct minimum cut in $1\%$ of runs and \textttA{Matula} does not
return the correct minimum cut in $31\%$ of runs. \textttA{KS}, which crashes on
large instances, returns non-optimal cuts in $52\%$ of the runs where
it~ran~to~completion.

Figure~\ref{fig:rhgtests} shows the results for these experiments. On nearly all
of these graphs, \textttA{NOI} is faster than \textttA{HO}. On sparse graphs with
an average degree of $2^5$, \textttA{seqVieCut}, \textttA{Matula} and
\textttA{NOI} nearly have equal running time. On denser graphs with an average
degree of $2^8$, \textttA{seqVieCut} is $40\%$ faster than \textttA{Matula} and
$4$ to $10$ times faster than \textttA{NOI}. \textttA{HO\_A} and \textttA{HO\_C}
use preprocessing with the Padberg-Rinaldi heuristics. Multiple iterations of
this preprocessing contract the RHG graph into two nodes. The running time of
those algorithms is $50\%$ higher on sparse graphs and $4$ times higher on dense
graphs compared to \textttA{seqVieCut}. Figure~\ref{fig:breakdown} shows a time
breakdown for \textttA{seqVieCut} on large RHG graphs with $n=2^{25}$. Around
$85\%$ of the running time is in the label propagation step and the rest is
mostly spent in graph contraction. The correcting step has low running time on
most graphs, as it is not performed on large clusters.

\subsubsection{Real-World Graphs}

The third set of graphs we use in our experiments are $k$-cores of large
real-world social and web graphs. On these graphs, no non-optimal minimum cuts
were returned by any algorithm except for \textttA{KS}, which gave $36\%$
non-optimal results. However, as most of these graph instances have multiple
minimum cuts, even exact algorithms usually output different cuts on multiple
runs. Figure~\ref{fig:real} gives slowdown plots to the fastest algorithm
(\textttA{seqVieCut} in each case) for the real-world graphs. On these graphs,
\textttA{seqVieCut} is the fastest algorithm, far faster than the other
algorithms. \textttA{Matula} is not much faster than \textttA{NOI}, as most of
the running time is in the first iteration of their CAPFOREST algorithm, which
is similar for both algorithms. On the largest real-world graphs,
\textttA{seqVieCut} is approximately $3$ times faster than the next fastest
algorithm \textttA{Matula}. We also see that \textttA{seqVieCut},
\textttA{Matula} and \textttA{NOI} all perform better on denser graphs. For
\textttA{Matula} and \textttA{NOI}, this can most likely be explained by the
smaller vertex priority queue. For \textttA{seqVieCut}, this is mainly due to
better cache locality. As \textttA{HO} does not benefit from denser graphs, it
has high slow down on dense graphs.

The highest speedup in our experiments is in the $10$-core of
\textttA{gsh-2015-host}, where \textttA{seqVieCut} is faster than the next
fastest algorithm (\textttA{Matula}) by a factor of $4.85$. The lowest speedup is
in the $25$-core of \textttA{twitter-2010}, where \textttA{seqVieCut} is $50\%$
faster than the next fastest algorithm (\textttA{HO\_B}). The average speedup
factor of \textttA{seqVieCut} to the next fastest algorithm is $2.37$.
\textttA{NOI} and \textttA{Matula} perform badly on the cores of the graph
\textttA{twitter-2010}. This graph has a very low diameter (average distance on
the original graph is $4.46$), and as a consequence the priority queue used in these algorithms is
filled far quicker than in graphs with higher diameter. Therefore the priority
queue operations become slow and the total running time is very high.

\begin{figure}[t]
  \centering
  \includegraphics[width=\textwidth]{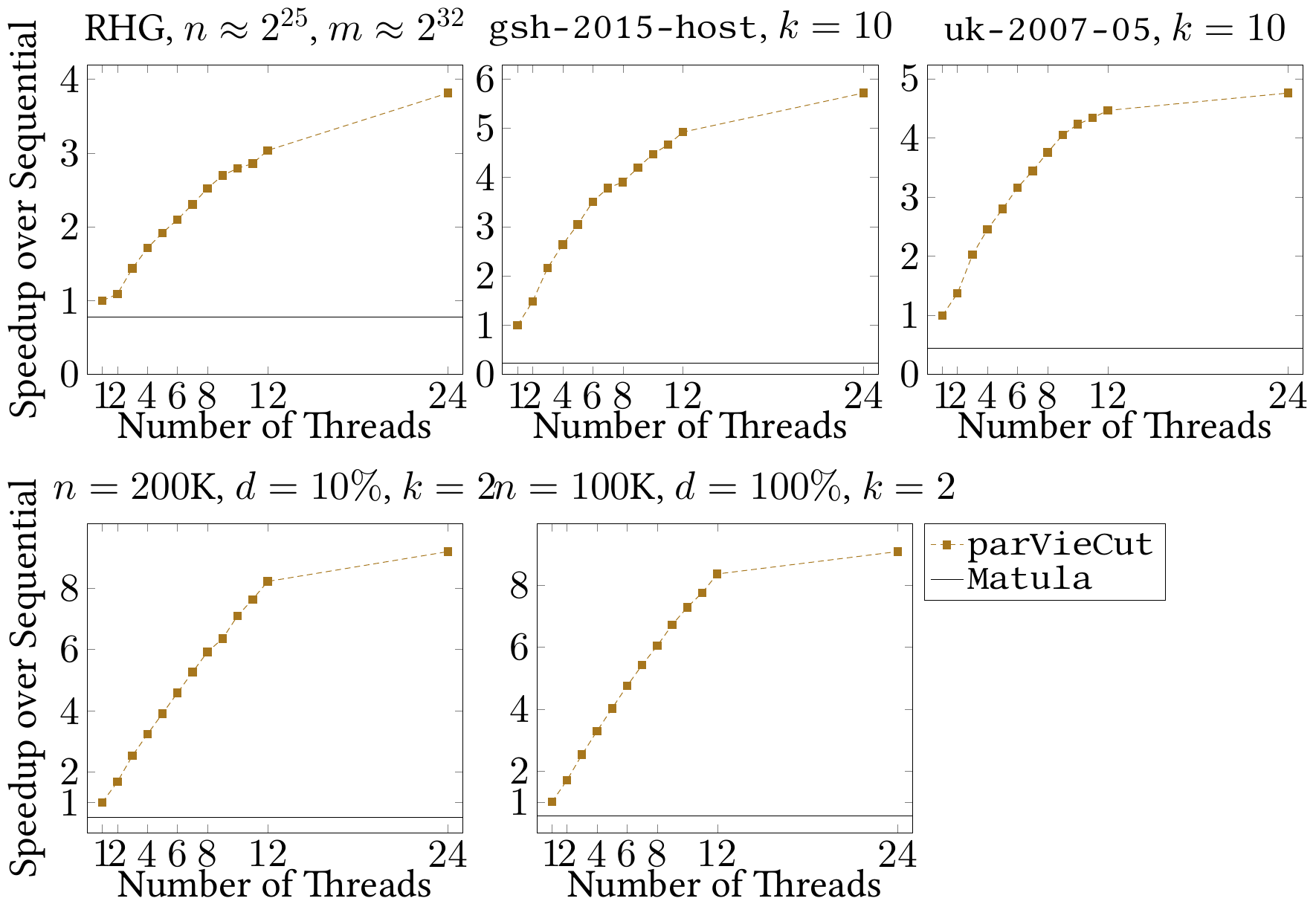}  
  \caption{Speedup on large graphs over \textttA{VieCut} using 1
thread.\label{scaling}}
\end{figure}

\begin{figure}[t]
  \centering
  \includegraphics[width=.8\textwidth]{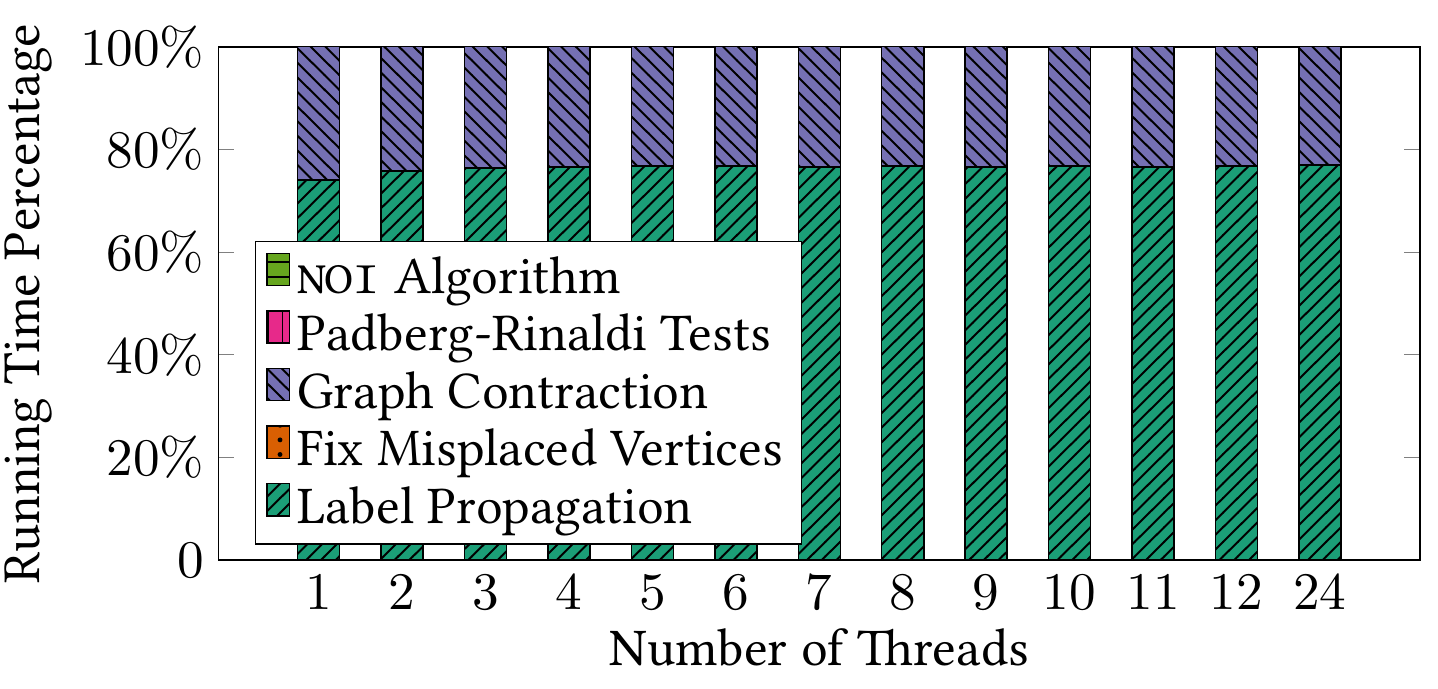}
  \caption{Parallel Running Time Breakdown\label{fig:scaledown}}
  \end{figure}

To summarize, both in generated and real-world graphs, even in sequential runs
\textttA{seqVieCut} is up to a factor of $6$ faster than the state of the art,
while achieving a high solution quality even for hard instances such as the
hyperbolic graphs. The performance of \textttA{seqVieCut} is especially good on
the real-world graphs, presumably as these graphs have high locality.

\subsubsection{Shared-Memory Parallelism}

Figure~\ref{scaling} shows the speedup of \textttA{parVieCut} compared to the
sequential variant and to the next fastest algorithm, which is \textttA{Matula}
in all of the large graph examined. We examine the largest graphs from each of
the three graph classes and perform parallel runs using $1,2,3, \ldots, 12$
threads. We also perform experiments with $24$ threads, as the machine has $12$
cores and supports multi-threading. The harmonic mean of the speedup of
\textttA{parVieCut} on large graphs with $12$ threads is $5.01$. ($24$ threads:
$5.5$) and all runs computed the exact minimum cut. Compared to the next fastest
sequential algorithm \textttA{Matula}, this is an average harmonic speedup factor
of $9.5$ ($24$ threads: $11.1$). \textttA{parVieCut} scales especially well on
the clustered Erd\H{o}s-Rényi graphs, presumably as these dense graphs contain
many high-degree vertices and have a rather low number of vertices.
Figure~\ref{fig:scaledown} shows average running time breakdowns averaged over
all graphs. For this figure, the correcting algorithm is turned off for the two
Erd\H{o}s-Rényi graphs. With one thread, label propagation uses $74\%$ of the
total running time and with $24$ threads, $77\%$ of the total running~time. Thus
the different parts of the algorithm parallelize equally well.

\begin{figure}[t]
  \includegraphics[width=\textwidth]{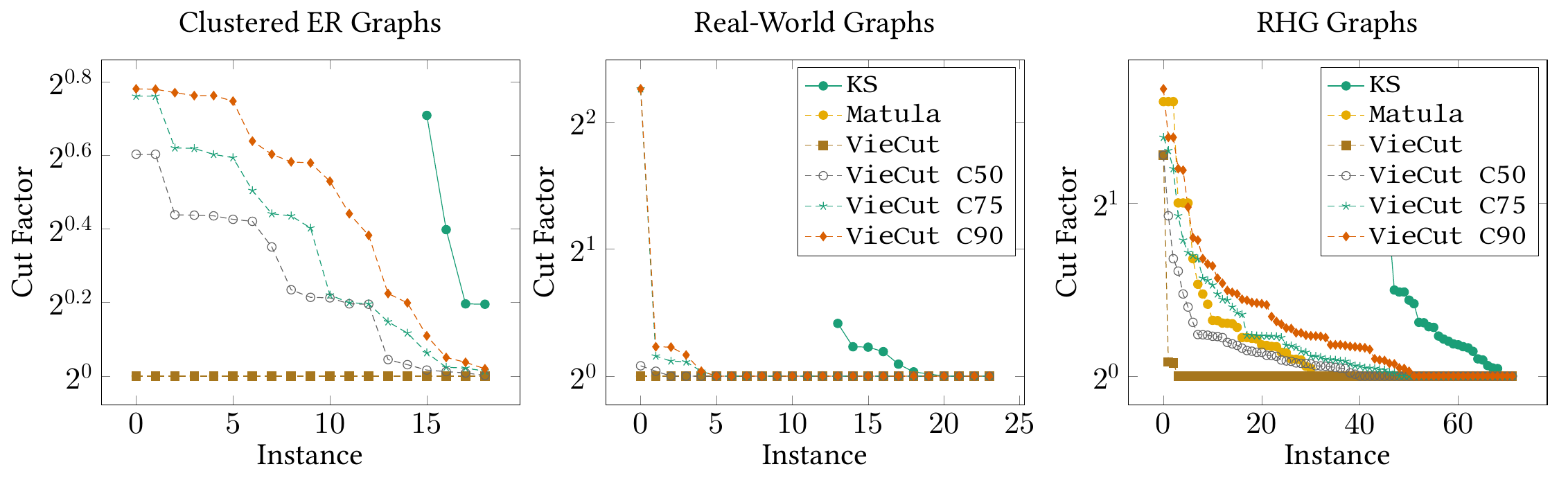}
  \caption{Average factor of result to minimum cut for inexact algorithms\label{fig:error_er}}
\end{figure}

\subsection{Random Edge Contraction}
\label{exp:rec}

We now evaluate the variant of our algorithm that uses random edge contractions
similar to the algorithm of Karger and Stein before running \textttA{VieCut} on
the contracted graph. The edge contractions promise faster results but increase
the error rate of the algorithm. This section shows experiments which detail
error rate, error severity and running times for the heuristic and approximation
algorithms. We repeat the experiments of the previous section, but now with only
inexact algorithms. In addition to (the one iteration-only version of)
\textttA{KS}, \textttA{Matula} and \textttA{VieCut}, we add \textttA{VieCut
C50}, \textttA{VieCut C75}, \textttA{VieCut C90}, which contract $50$, $75$
and $90\%$ of all vertices~before~running~\textttA{VieCut}.

\begin{figure}[t]
  \centering
  \includegraphics[width=\textwidth]{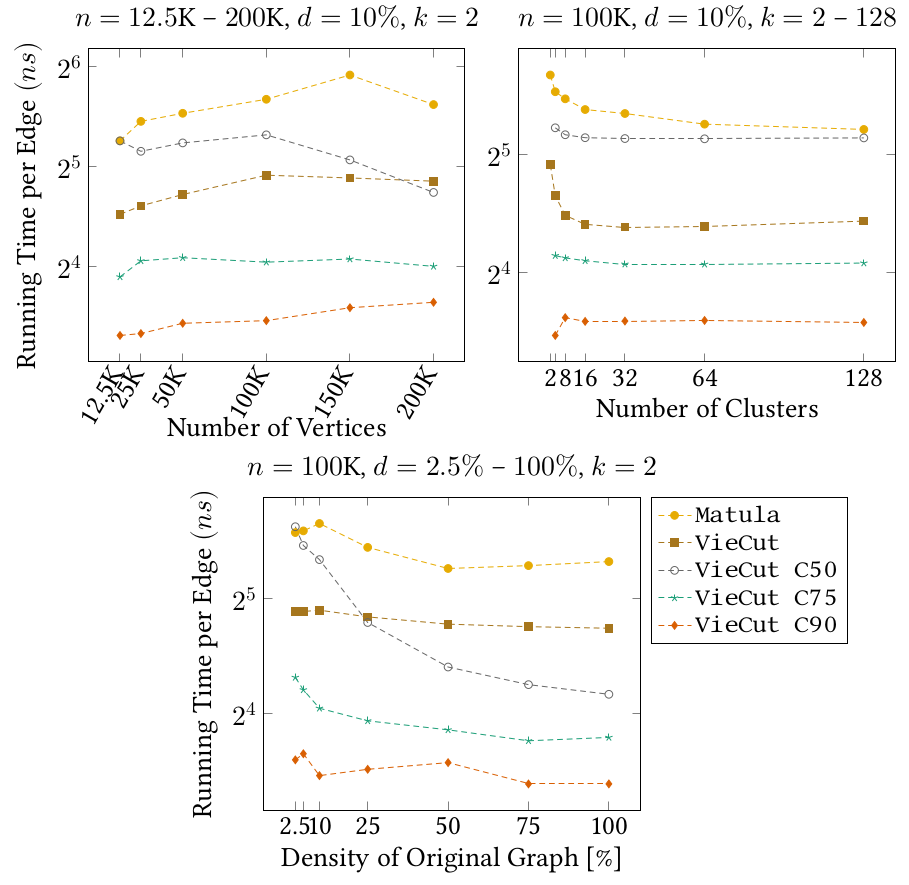}
  \caption{Total running time in nanoseconds per edge in clustered
  Erd\H{o}s-Rényi graphs\label{fig:test3_speed_ks}}
\end{figure}  

\subsubsection{Clustered Erd\H{o}s-Rényi Graphs}

Figure~\ref{fig:test3_speed_ks} shows the running time for the random edge
contraction algorithm variants compared to \textttA{VieCut} on dense clustered
Erd\H{o}s-Rényi graphs. We can see that \textttA{VieCut C75} and
\textttA{VieCut C90} are always faster than \textttA{VieCut}, with
\textttA{VieCut C90} being faster by a factor of $2.3$ to $2.7$ than
\textttA{VieCut} and \textttA{VieCut C75} being faster by a factor of $1.25$
to $1.8$.

Figure~\ref{fig:error_er}~(left) shows the average distance to the optimal cut
for all algorithms which do not guarantee optimality, both as the difference and
the factor of the returned cut to the optimal. In these highly regular graphs,
we find the optimal cut if none of the low-weight edges between the clusters is
contracted. Otherwise we find a cut where one side is only a single vertex. On
average, this cut is around twice the value of the minimum cut. The algorithms
all have an average cut factor of up to $1.7$ on all graphs, depending on how
many vertices we contract. \textttA{KS} has a similar error rate on the graphs
where it finishes. Both \textttA{VieCut} and \textttA{Matula} have no errors on
these graphs.

\begin{figure}[t]
  \centering
\includegraphics[width=.68\textwidth]{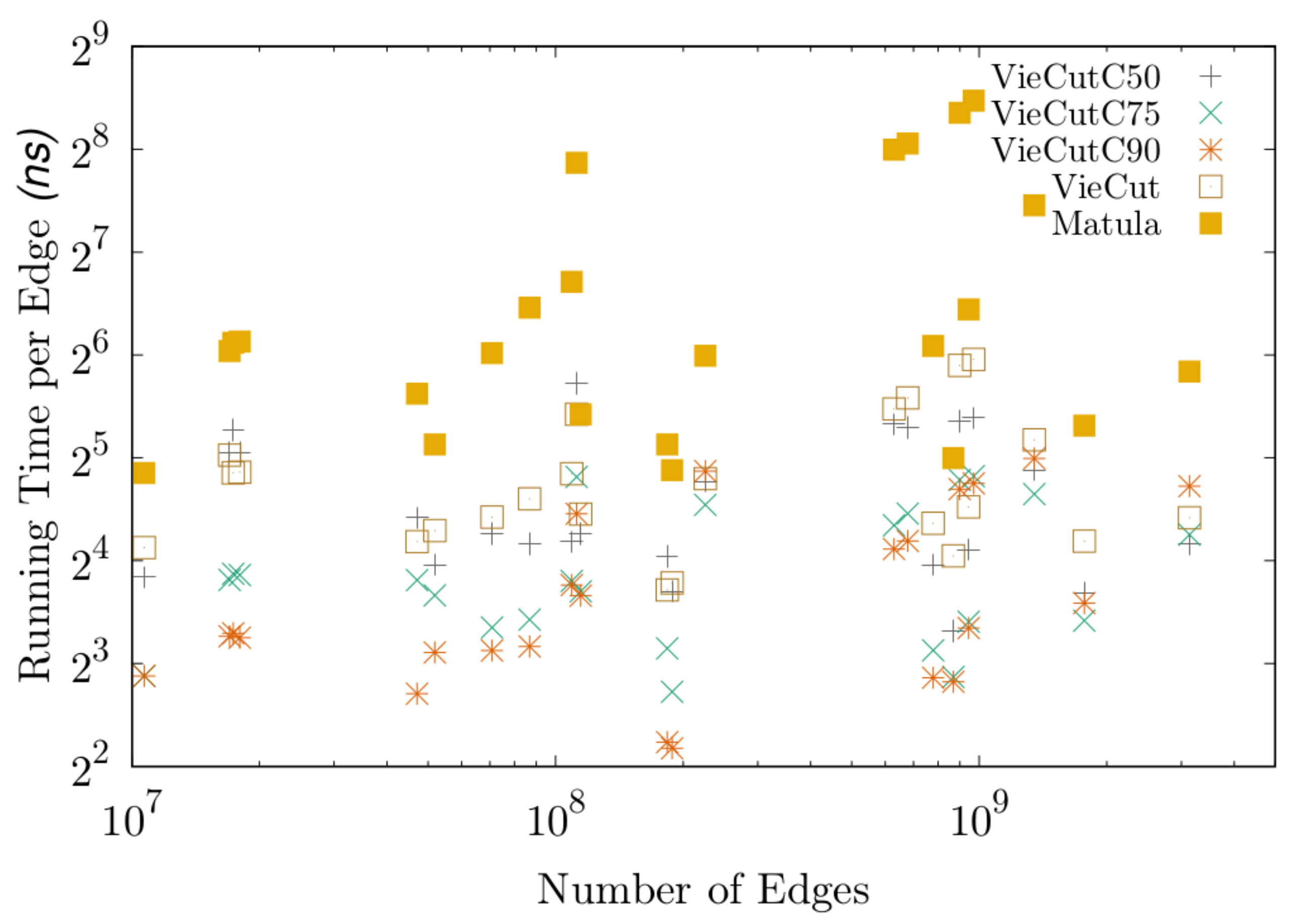}
\includegraphics[width=.68\textwidth]{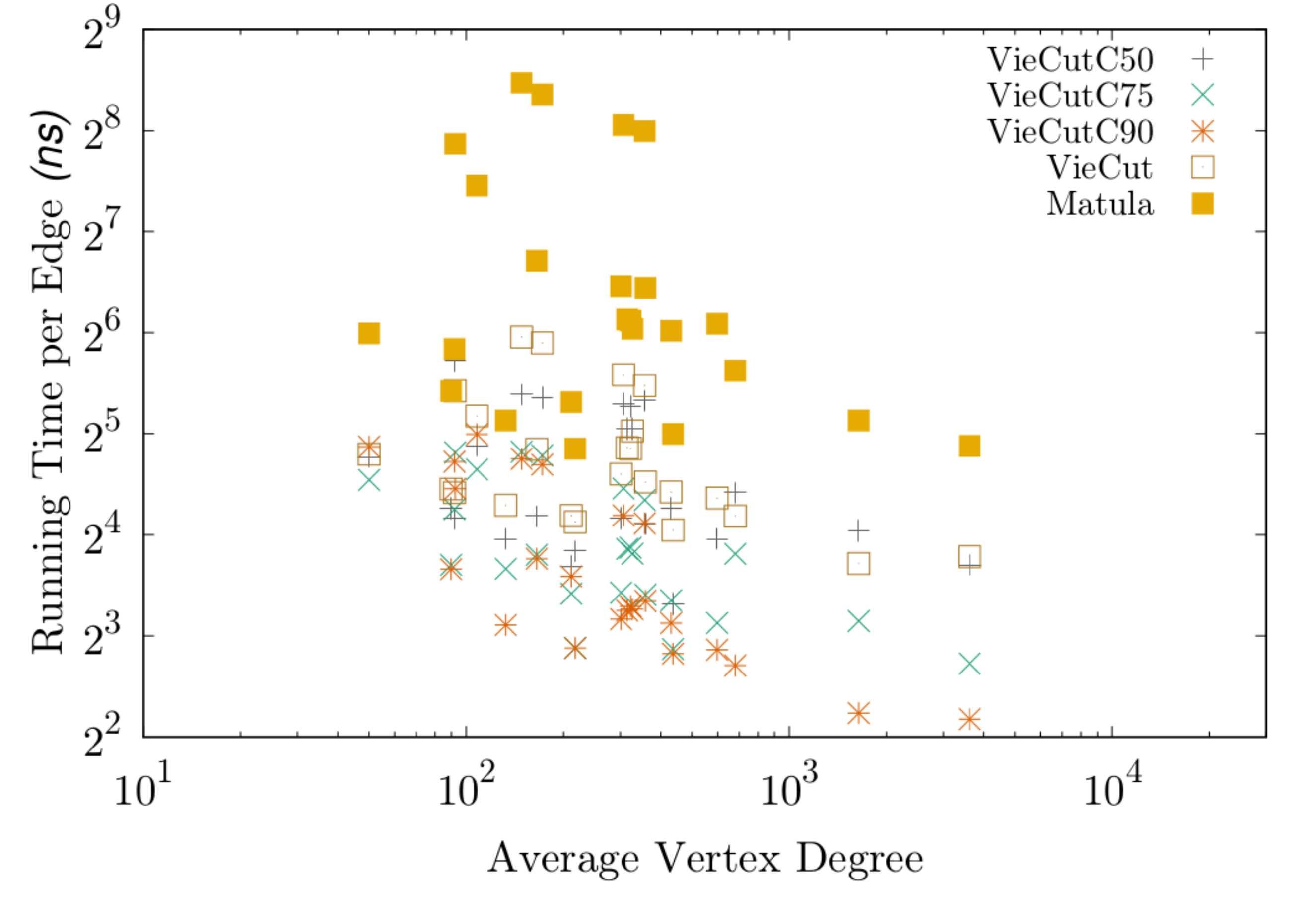}
\caption{Running time per edge in nanoseconds on real-world graphs. We display
running times based on the absolute number of edges (top), and by the average
vertex degree (bottom) in the graph\label{fig:real_speed_ks}}
\end{figure}

\subsubsection{Real-World Graphs}

Figure~\ref{fig:real_speed_ks} shows the average running time for the random
edge contraction variants of \textttA{VieCut} on real-world graphs. \textttA{VieCut C90} is faster than \textttA{VieCut} by a factor of up to
$3.40$. The lowest speedup factor is $0.80$ (a slowdown of $25\%$).
\textttA{VieCut C75} has speedup factors between $1.12$ to $2.35$ compared to
\textttA{VieCut}.

Figure~\ref{fig:error_er}~(middle) shows the average error rates. On these
graphs, the average ratio of cut size to the optimal cut size is very low for
the random contraction algorithms. The outlier for \textttA{VieCut C75} and
\textttA{VieCut C90} is a single run of the graph \textttA{twitter-2010} with
$k=60$, where one of the edges in the unique small cut (of value $3$) is
contracted. The next smallest cut in the graph is a trivial cut with a cut value
of $60$, which is found. The algorithm finds optimal results in the other four
iterations of this graph. On all other graphs, the cut factor is below $1.18$.
\textttA{KS} has an average cut rate of up to $1.33$. \textttA{VieCut} and
\textttA{Matula} have no errors on these graphs.

\begin{figure}[t]
  \centering
  \includegraphics[width=\textwidth]{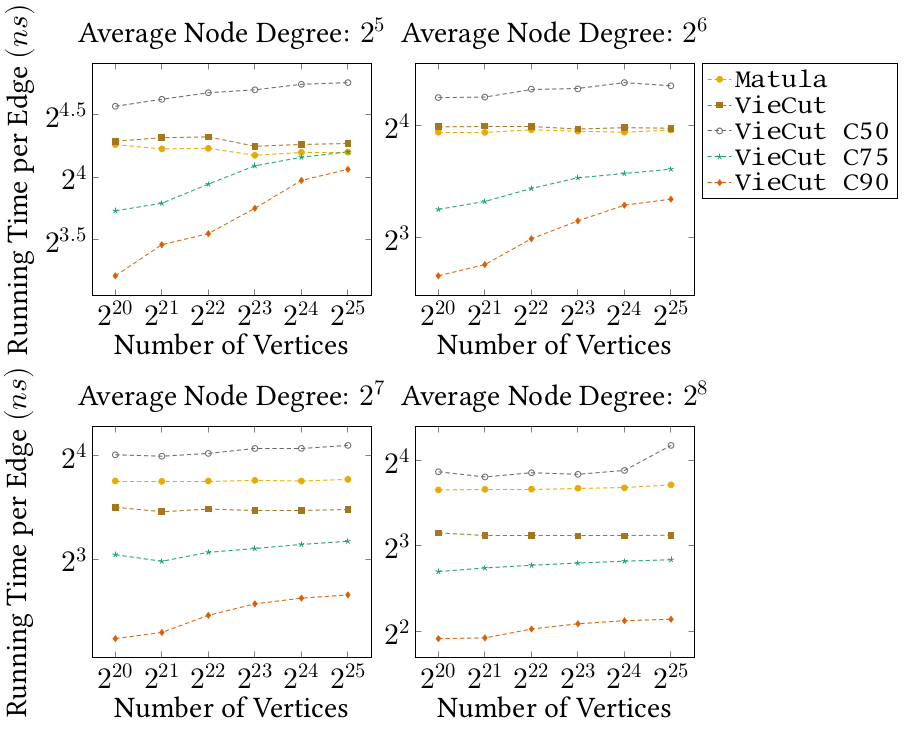}
  \caption{Total running time in nanoseconds per edge in RHG graphs
\label{fig:rhgspeed_ks}} 
\end{figure}

\subsubsection{Random Hyperbolic Graphs}

Figure~\ref{fig:rhgspeed_ks} shows the running time of the algorithms on random
hyperbolic graphs. \textttA{VieCut C75} has a speedup of $1.05$ to $1.67$
compared to \textttA{VieCut} and \textttA{VieCut C90} has a speedup of $1.15$
to $2.54$.

Figure~\ref{fig:error_er}~(right) shows the error rate for RHG graphs. The error
rate of \textttA{VieCut} is far lower than all other non-exact algorithms.
\textttA{Matula} has a non-optimality factor that is between \textttA{VieCut
C50} and \textttA{VieCut C75} but a smaller distance to the optimal solution
than both of them. However, \textttA{VieCut C75} is much faster than
\textttA{Matula}, which is consistently slower than \textttA{VieCut}. Note that
the 3 graphs in which \textttA{Matula} has a cut $3$ times as large as optimal
are graphs with a minimum cut of $1$, where \textttA{Matula} consistently returns
cuts of value $3$. Our implementation of \textttA{Matula} contracts all edges in
the spanning forest with index $\lfloor \frac{\hat\lambda}{2} \rfloor$.

\subsubsection{Shared-Memory Parallelism}

\begin{figure}[t]
  \centering
  \includegraphics[width=\textwidth]{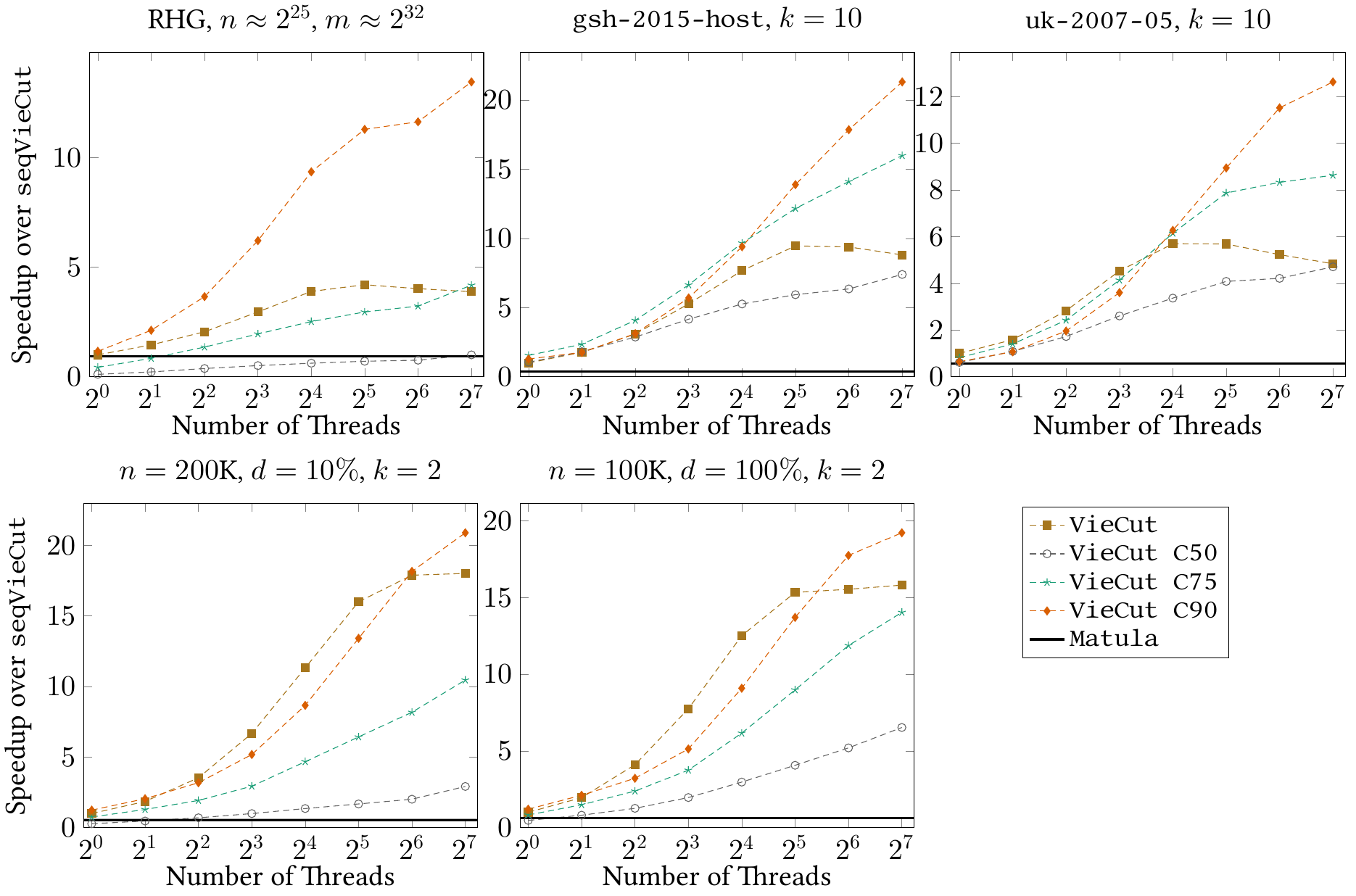}
  \caption{Speedup on large graphs over \textttA{VieCut} using 1
thread.\label{par_cont}}
\end{figure}

Figure~\ref{par_cont} shows the speedup of the random contraction variants in
comparison to \textttA{VieCut}. The RHG graph has a single smallest cut with
value $73$, followed by trivial cuts with a degree of $139$ each.
\textttA{VieCut} and \textttA{Matula} return the correct minimum cut each run,
\textttA{VieCut C50} $34$ out of $40$ times, \textttA{VieCut C75} $28$ times,
\textttA{VieCut C90} $24$ times and $139$ otherwise. Only \textttA{VieCut C90}
scales better than \textttA{VieCut} and has a speedup of up to $13.4$ compared
to sequential \textttA{VieCut}. Due to the large number of clusters, where each
cluster has many incident edges, the contraction step in \textttA{VieCut C75}
and \textttA{VieCut C50} takes a long time.

The random contraction variants scale better in the large real-world graphs,
where all algorithms return the minimum cut in all runs. On graph
\textttA{gsh-2015-host}, \textttA{VieCut C90} has a speedup of over $21$, while
\textttA{VieCut C75} has a speedup of $16$ and \textttA{VieCut} has a speedup
of $9.5$ with $32$ threads. On \textttA{uk-2007-05}, \textttA{VieCut C90} has a
speedup of over $12$, \textttA{VieCut C75} of over $8.5$ and \textttA{VieCut}
a speedup of up to $5.7$.

On the clustered Erd\H{o}s-Rényi graphs, the random edge contraction creates one
or few very large blocks of vertices. Again, \textttA{VieCut} and
\textttA{Matula} always return the correct minimum cut. Out of $80$ runs on
clustered Erd\H{o}s-Rényi graphs, \textttA{VieCut C50} returns the minimum cut
in $46$ cases, \textttA{VieCut C75} in $8$ cases and \textttA{VieCut C90} in
$12$ cases. On these very dense and unstructured graphs, random contraction has
a high error rate and does not significantly speed up \textttA{VieCut}.
\textttA{VieCut} however has a speedup of up to $18$ on these graphs.
Figure~\ref{fig:harmonic} shows the average speedup of the algorithms compared
to their performance with one thread on machine B.

\begin{figure}[t]
  \centering
  \includegraphics[width=.5\textwidth]{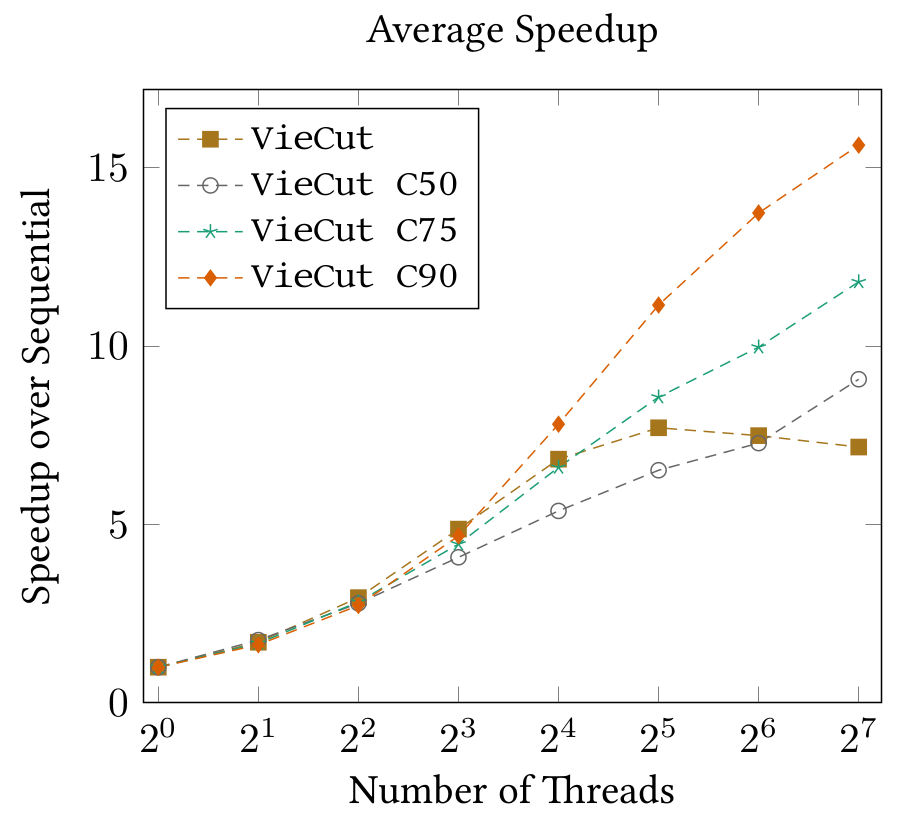}
  \caption{\label{fig:harmonic} Average (harmonic) speedup relative to algorithm
performance using $1$ thread.}
\end{figure}

In summary, the random contraction variants can improve the running time of
\textttA{VieCut} even further, especially when we have many threads. However,
the random contraction increases the error rate of the algorithm by a large
margin and should therefore only be used if running time is more important than
absolute solution quality.

\section{Conclusion}\label{c:vc:s:conclusions}

We presented the linear-time heuristic algorithm \textttA{VieCut} for the
minimum cut problem. \textttA{VieCut} is based on the label propagation
algorithm~\cite{raghavan2007near} and the Padberg-Rinaldi
heuristics~\cite{padberg1990efficient}. Both for real-world graphs and a varied
family of generated graphs, \textttA{VieCut} is significantly faster than the
state of the art. The algorithm has far higher solution quality than other heuristic
algorithms while also being faster. Additionally, we propose a variant of our
algorithm to further speed up computations at the expense of higher error rates.
Important future work includes checking whether using different clustering
techniques affect the observed error probability. However, these clustering
algorithms generally have higher running time.

\chapter{Exact Global Minimum Cut}
\label{c:exmc}


In the previous chapter, we introduced a heuristic shared-memory parallel
algorithm for the global minimum cut problem called \textttA{VieCut}. In this
chapter, we combine techniques from that algorithm, the algorithm of
Padberg and Rinaldi~\cite{padberg1990efficient} and the algorithm of Nagamochi, Ono
and Ibaraki~\cite{nagamochi1992computing,nagamochi1994implementing}, as
introduced in Section~\ref{p:mincut:s:algs}, to engineer an exact shared-memory
parallel algorithm for the minimum cut problem. Our algorithm achieves
improvements in running time over existing exact algorithms by a multitude of techniques.
First, we use our fast and parallel \emph{inexact} minimum cut algorithm
\textttA{VieCut} to obtain a better bound for
the problem. Afterwards, we use reductions that depend on this bound to reduce
the size of the graph much faster than previously possible. We use improved data
structures to further lower the running time of our algorithm. Additionally, we
parallelize the contraction routines of
Nagamochi~\etal\cite{nagamochi1992computing,nagamochi1994implementing}. Overall,
we arrive at a system that outperforms the state-of-the-art by a factor of up to
$2.5$ sequentially, and when run in shared-memory parallel, by a factor of up to
$12.9$ using $12$ cores.

The content of this chapter is based on~\cite{DBLP:conf/ipps/HenzingerN019}.

In the following sections we detail our \emph{exact shared-memory parallel algorithm} for
the minimum cut problem that is based on the algorithms of
Nagamochi et al., as
described in Section~\ref{p:mincut:ss:noi} and the \textttA{VieCut} algorithm
described in Chapter~\ref{c:viecut} of this thesis. We aim to modify the algorithm of
Nagamochi~\etal in order to find exact minimum
cuts faster and in parallel. 

We start this chapter with optimizations to the sequential algorithm of
Nagamochi et al. First we show how to save work by first performing the inexact \textttA{VieCut} algorithm to lower the minimum cut upper bound
$\hat \lambda$. As shown in Chapter~\ref{c:viecut}, \textttA{VieCut} often already
finds a cut of value $\lambda$.
We then give different implementations of the priority queue
$\mathcal{Q}$ and detail the effects of the choice of queue on the algorithm. We
show that the algorithm remains correct, even if we limit the priorities in the
queue to $\hat\lambda$, meaning that elements in the queue having a key larger
than that will not be updated. This significantly lowers the number of priority
queue operations necessary. Then we adapt the algorithm so that we are able to
detect contractible edges in parallel efficiently. In
Section~\ref{c:exmc:s:sys}, we put everything together and present a full system
description. We then give experimental setup and results of our work in
Section~\ref{c:exmc:s:exp} before we briefly conclude this chapter in
Section~\ref{c:exmc:s:con}.

\section{Sequential Optimizations}
\label{c:exmc:seqopt}

\subsection{Lowering the Upper Bound $\hat\lambda$}

The upper bound $\hat\lambda$ for the minimum cut is an important parameter for
contraction based minimum cut algorithms. For example, the algorithm of
Nagamochi~\etal\cite{nagamochi1994implementing} computes a lower bound for the
connectivity of the two incident vertices of each edge and contracts all edges
whose incident vertices have a connectivity of at least $\hat\lambda$. Thus, it
is possible to contract more edges if we manage to lower $\hat\lambda$
beforehand.

A trivial upper bound $\hat\lambda$ for the minimum cut is the minimum vertex
degree, as it represents the trivial cut which separates the minimum degree
vertex from all other vertices. We run \textttA{VieCut} to lower $\hat\lambda$
in order to allow us to find more edges to contract. Although \textttA{VieCut}
is an \emph{inexact algorithm}, in most cases it already finds the minimum
cut~\cite{henzinger2018practical} of the graph. As there are by definition no
cuts smaller than the minimum cut, the result of \textttA{VieCut} is guaranteed
to be at least as large as the minimum cut $\lambda$. We set $\hat\lambda$ to
the result of \textttA{VieCut} when running the CAPFOREST routine and can
therefore guarantee a correct result.

A similar idea is employed by the linear time $(2+\epsilon)$-approximation
algorithm of Matula~\cite{matula1993linear}, which initializes the algorithm of
Nagamochi~\etal\cite{nagamochi1994implementing} with $\hat\lambda = (\frac{1}{2}
- \epsilon) \cdot $min degree. The algorithm of Matula does not guarantee
optimality, as this value can be smaller than the minimum cut.

\subsection{Bounded Priority Queues}
\label{c:exmc:sec:lem}

Whenever we visit a vertex in the CAPFOREST algorithm, we update the priority of
all of its neighbors in $\mathcal{Q}$ by adding the respective edge weight. Thus
we perform a total of $|E|$ priority queue increase-weight operations in one
call of the CAPFOREST algorithm. In practice, many vertices reach priority
values much higher than $\hat\lambda$ and perform many priority increases until
they reach their final value. We limit the values in the priority queue
by $\hat\lambda$, \ie we do not update priorities that are already
$\hat\lambda$. Lemma~\ref{c:exmc:lem:pq} shows that this does not affect
correctness of the algorithm.

Let $\tilde q_G(e)$ be the value $q(e)$ assigned to $e$ in the modified
algorithm on graph $G$ and let $\tilde r_G(x)$ be the $r$-value of a node $x$ in
the modified algorithm on $G$.

\begin{lemma} \label{c:exmc:lem:pq} Limiting the values in the priority queue
  $\mathcal{Q}$ used in the CAPFOREST routine to a maximum of $\hat\lambda$ does
  not interfere with the correctness of the algorithm. For every edge $e=(v,w)$
  with $\tilde q_G(e) \geq \hat\lambda$, it holds that $\lambda(G,e) \geq
  \hat\lambda$. Therefore the edge can be contracted.
\end{lemma}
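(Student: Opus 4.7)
My plan is to reduce the statement to the original correctness theorem of Nagamochi~\etal for CAPFOREST, which guarantees $q(e) \leq \lambda(G, e)$ whenever CAPFOREST is run on $G$ with a valid maximum-adjacency scan order. The subtlety is that capping may change the scan order: vertices whose uncapped $r$-values differ can tie at the capped value $\hat\lambda$, so the modified algorithm may scan a vertex whose true $r$ is not the maximum among the unscanned ones.

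I would first dispose of the easy case. As long as the uncapped maximum $r$ among unscanned vertices stays strictly below $\hat\lambda$, the cap is never binding, and the modified run is indistinguishable from an uncapped CAPFOREST execution. Hence for any edge $e$ finalized with $\tilde q_G(e) < \hat\lambda$, the original theorem applies unchanged and gives $\lambda(G, e) \geq \tilde q_G(e)$; in particular $\lambda(G, e) \geq \hat\lambda$ is vacuous here and only the second case needs work.

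For an edge $e = (v, w)$ that is finalized with $\tilde q_G(e) = \hat\lambda$ after capping has already triggered, I would construct a valid maximum-adjacency order $\tau$ of $V$ such that, in the uncapped CAPFOREST execution on $G$ using scan order $\tau$, the value $q_\tau(e)$ is still at least $\hat\lambda$. The construction rearranges the modified scan order $\sigma$ greedily: whenever $\sigma$ picks a vertex $u$ whose uncapped $r$ is not the current maximum (which, by the capping rule, can only happen when $u$'s uncapped $r$ is already at least $\hat\lambda$), I would defer $u$ and instead scan the true maximum. The invariant is that when $v$ is scanned in $\tau$, the set of its scanned predecessors contains all predecessors that contributed to $\tilde r(w)$ reaching $\hat\lambda$ in $\sigma$, so that the uncapped $r(w)$ at that point is at least $\hat\lambda$. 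Applying the original theorem to this legitimate execution then yields $\lambda(G, e) \geq q_\tau(e) \geq \hat\lambda$.

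The main obstacle is verifying the invariant: one must check that the rearrangement produces a scan order that respects the maximum-adjacency rule throughout, and still has $v$ scanned after a sufficient prefix of predecessors of $w$. The key observation I rely on is that deferring a vertex whose uncapped $r$ already exceeds $\hat\lambda$ cannot decrease the $r$-values of the remaining unscanned vertices, so the deferred scans can always be completed consistently with the maximum-adjacency rule, and the scanned-predecessor set at $v$'s step in $\tau$ is only larger than in $\sigma$, keeping $r(w) \geq \hat\lambda$ at the critical moment.
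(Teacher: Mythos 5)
Your diagnosis of where the difficulty lies is correct (the cap can reorder the scan, and Nagamochi \etal's guarantee $q(e)\leq\lambda(G,e)$ is only available for a genuine maximum-adjacency order), but your repair takes a fundamentally different route from the paper's and, as written, does not close. The paper keeps the capped scan order fixed and instead constructs a modified \emph{graph} $G'$ by lowering edge weights exactly when an $r$-value would exceed $\hat\lambda$, so that the capped run on $G$ coincides step-for-step with an uncapped CAPFOREST run on $G'$; since weights only decrease, $\tilde q_G(e)=q_{G'}(e)\leq\lambda(G',e)\leq\lambda(G,e)$ follows with no reasoning about orders at all. You instead keep $G$ fixed and try to repair the \emph{order} into a legitimate maximum-adjacency order $\tau$ on $G$, which forces you to control the cascade of order changes directly.

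The gap is the invariant in your rearrangement. When the capped run $\sigma$ pops a vertex $u$ that is not the true maximum and you scan the true maximum $u'$ instead, the increments now go to $u'$'s neighbors rather than $u$'s, so the identity of the maximum at the next step changes and the divergence compounds; there is no monotonicity statement of the form ``the scanned set at $v$'s step in $\tau$ contains the scanned set at $v$'s step in $\sigma$.'' Your key observation, that deferring a vertex ``cannot decrease the $r$-values of the remaining unscanned vertices,'' is backwards: postponing a scan postpones precisely the increments that vertex would have contributed to its still-unscanned neighbors. Even if the scanned set at the critical moment were only enlarged, the enlargement can swallow $w$ itself, in which case $e=(v,w)$ is scanned from $w$'s side and $q_\tau(e)$ equals the $r$-value of $v$ at that earlier time, which need not be $\geq\hat\lambda$. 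To make your argument work you would have to certify simultaneously that every prefix of $\tau$ respects the maximum-adjacency rule, that $v$ still precedes $w$, and that enough of $w$'s neighborhood precedes $v$; the greedy deferral controls none of these. It is also not evident that a maximum-adjacency order with $q_\tau(e)\geq\hat\lambda$ exists for an arbitrary such edge, since $\lambda(G,e)\geq\hat\lambda$ (the fact to be proved) does not imply that some such order witnesses it at $e$. The reweighting construction in the paper sidesteps all of these obstacles.
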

\begin{proof}
  As we limit the priority queue $\mathcal{Q}$ to a maximum value of
  $\hat\lambda$, we cannot guarantee that we always pop the element with
  highest value $r(v)$ if there are multiple elements that have values $r(v)
  \geq \hat\lambda$ in $\mathcal{Q}$. However, we know that the vertex $x$ that
  is popped from $\mathcal{Q}$ is either maximal or has $r(x) \geq \hat\lambda$.

  We prove Lemma~\ref{c:exmc:lem:pq} by creating a graph $G'=(V,E,c')$ by
  lowering edge weights (possibly to $0$, effectively removing the edge) while
  running the algorithm, so that CAPFOREST on $G'$ visits vertices in the same
  order (assuming equal tie breaking) and assigns the same $q$ values as the
  modified algorithm~on~$G$.

  We first describe the construction of $G'$. We initialize the weight of all
  edges in graph $G'$ with the weight of the respective edge in $G$ and run
  CAPFOREST on $G'$. Whenever we check an edge $e = (x,y)$ and update a value
  $r_{G'}(y)$ , we check whether we would set $r_{G'}(y) > \hat\lambda$. If this
  is the case, \ie when $r_{G'}(y) + c(e) > \hat\lambda$, we set $c'(e)$ in $G'$
  to $c(e) - (r_{G'}(y) - \hat\lambda)$, which is lower by exactly the value by
  which $r_{G}(y)$ is larger than $\hat\lambda$, and non-negative. Thus,
  $r_{G'}(y) = \hat\lambda$. As we scan every edge exactly once in a run of
  CAPFOREST, the weights of edges already scanned remain constant afterwards.
  This completes the construction of $G'$

  Note that during the construction of $G'$ edge weights were only decreased and
  never increased. Thus it holds that $\lambda(G',x,y) \leq \lambda(G,x,y)$ for
  any pair of nodes $(x,y)$. If we ran the unmodified CAPFOREST algorithm on
  $G'$ each edge would be assigned a value $q_{G'}(e)$ with $q_{G'}(e) \leq
  \lambda(G',e)$. Thus for every edge $e$ it holds that $q_{G'}(e) \leq
  \lambda(G',e) \leq \lambda(G,e)$.

  Below we will show that $\tilde q_G(e) = q_{G'}(e)$ for all edges $e$. It then
  follows that for all edges $e$ it holds that $\tilde q_G(e) \leq
  \lambda(G,e)$. This implies that if $\tilde q_G(e) \geq \hat\lambda$ then
  $\lambda(G,e) \geq \hat\lambda$, which is what we needed to show.

  It remains to show that for all edges $e$ $\tilde q_G(e) = q_{G'}(e)$. To show
  this claim we will show the following stronger claim. For any $i$ with $1 \leq
  i \leq m$ after the $(i-1)$th and before the $i$th scan of an edge the
  modified algorithm on $g$ and the original algorithm on $G'$ with the same tie
  breaking have visited all nodes and scanned all edges up to now in the same
  order and for all edges $e$ it holds that $\tilde q_G(e) = q_{G'}(e)$ (we
  assume that before scanning an edge $e$, $q(e) = 0$) and for all nodes $x$ it
  holds that $\tilde r_G(x) = r_{G'}(x)$. We show this claim by the induction on
  $i$.

  For $i=1$ observe that before the first edge scan $\tilde q_G(e) = q_{G'}(e) =
  0$ for all edges $e$ and the same node is picked as first node due to
  identical tie breaking and the fact that $G=G'$ at that point. Now for $i > 1$
  assume that the claim holds for $i-1$ and consider the scan of the $(i-1)^\text{th}$
  edge. If for the $(i-1)^\text{th}$ edge scan a new node needs to be chosen from the
  priority queue by one of the algorithms then note that both algorithms will
  have to choose a node and they pick the same node $y$ as $\tilde r_G(x) =
  r_{G'}(x)$ for all nodes $x$. Then both algorithms scan the same incident edge
  of $y$ as in both algorithms the set of unscanned neighbors of $y$ is
  identical. If neither algorithm has to pick a new node then both have scanned
  the same edges of the same current node $y$ and due to identical tie breaking
  will pick the same next edge to scan. Let this edge be $(y,w)$. By induction
  $\tilde r_G(w) = r_{G'}(w)$ at this time. As $(y,w)$ is unscanned $c'(y,w) =
  c(y,w)$ which implies that $\tilde r_G(w) + c(y,w) = r_{G'}(w)+c'(y,w)$. If
  $\tilde r_G(w) + c(y,w) \leq \hat\lambda$ then the modified algorithm on $G$
  and the original algorithm on $G'$ will set the $r$ value of $w$ to the same
  value, namely $\tilde r_G(w) + c(y,w)$. If $\tilde r_G(w) + c(y,w) >
  \hat\lambda$, then $\tilde r_G(w)$ is set to $\hat\lambda$ and $c'(y,w)$ is
  set to $c(y,w)-(r_{G'}(w)-\hat\lambda)$, which leads to $r_{G'}(w)$ being set
  to $\hat\lambda$. Thus $\tilde r_G(w)=r_{G'}(w)$ and by induction $\tilde
  r_G(x) = r_{G'}(x)$ for all $x$. Additionally the modified algorithm on $G$
  sets $\tilde q_G(y,w) = \tilde r_G(w)$ and the original algorithm on $G'$ sets
  $q_{G'}(y,w)=r_{G'}(w)$. It follows that $\tilde q_G(y,w)=q_{G'}(y,w)$ and,
  thus, by induction $\tilde q_G(e) = q_{G'}(e)$ for all $e$. This completes the
  proof of the claim.
\end{proof}

Lemma~\ref{c:exmc:lem:pq} allows us to considerably lower the number of priority queue
operations, as we do not need to update priorities that are bigger than
$\hat\lambda$. This optimization has even more benefit in combination with
running \textttA{VieCut} to lower the upper bound $\hat\lambda$, as we further
lower the number of priority queue operations.

\subsection{Priority Queue Implementations}
\label{c:exmc:ssec:pq}

Nagamochi~\etal\cite{nagamochi1994implementing} use an addressable priority
queue $\mathcal{Q}$ in their algorithm to find contractible edges. In this
section we now address variants for the implementation of the priority queue. As
the algorithm often has many elements with maximum priority in practice, the
implementation of this priority queue can have major impact on the order of
vertex visits and thus also on the edges that will be marked contractible.

\subsubsection{Bucket Priority Queue}
As our algorithm limits the values in the priority queue to a maximum of $\hat
\lambda$, we observe integer priorities in the range of $[0,\hat\lambda]$.
Hence, we can use a bucket queue that is implemented as an array with
$\hat\lambda$ buckets. In addition, the data structure keeps the id of the
highest non-empty bucket, also known as the \emph{top bucket}, and stores the
position of each vertex in the priority queue. Priority updates can be
implemented by deleting an element from its bucket and pushing it to the bucket
with the updated priority. This allows constant time access for all operations
except for deletions of the maximum priority element, which have to check all
buckets between the prior top bucket and the new top bucket, possibly up to
$\hat\lambda$ checks. We give two possible implementations to implement the
buckets so that they can store all elements with a given priority.

The first implementation, \textttA{BStack} uses a dynamic array
(\textttA{std::vector}) as the container for all elements in a bucket. When we
add a new element to the array, we push it to the back of the array.
\texttt{$\mathcal{Q}$.pop\_max()} returns the last element of the top bucket.
Thus, our algorithm will always visit the element next whose priority was just
increased. It thus does not fully explore all vertices in a region and instead
behaves more similar to a depth-first search.

The other implementation, \textttA{BQueue} uses a double ended queue
(\textttA{std::deque}) as the container instead. A new element is pushed to the
back of the queue and \texttt{$\mathcal{Q}$.pop\_max()} returns the \emph{first}
element of the top bucket. This results in a variant of our algorithm, which
behaves more similar to a breadth-first search in that it first explores the
vertices that have been discovered earlier, \ie are closer to the source vertex
in the graph.

\subsubsection{Bottom-Up Binary Heap}
A binary heap~\cite{williams1964heapsort} is a binary tree (implemented as an
array, where element $i$ has its children in index $2i$ and $2i+1$) which
fulfills the heap property, \ie each element has priority that is not lower than
either of its children. Thus the element with highest priority is the root of
the tree. The tree can be made addressable by using an array of indices, in
which we save the position of each vertex. We use a binary heap using the
bottom-up heuristics~\cite{wegener1993bottom}, in which we sift down holes that
were created by the deletion of the top priority vertex. Priority changes are
implemented by sifting the addressed element up or down in the tree. Operations
have a running time of up to $\mathcal{O}(\log n)$ to sift an element up or down
to fix the heap property.

In \texttt{$\mathcal{Q}$.pop\_max()}, the \textttA{Heap} priority queue does not
favor either old or new elements in the priority queue and therefore this
implementation can be seen as a middle ground between the two bucket priority
queues.

\begin{algorithm}[t!]
  \begin{algorithmic}[1]
    \INPUT $G = (V,E,c) \leftarrow$ undirected graph $\hat\lambda \leftarrow$
    upper bound for minimum cut, $\mathcal{T} \leftarrow$ shared array of vertex
    visits 
    \OUTPUT $\mathcal{U} \leftarrow$ union-find data structure to mark
    contractible edges 
    \State Label all vertices $v \in V$ ``unvisited'',
    blacklist $\mathcal{B}$ empty
    \State $\forall v \in V: r(v) \leftarrow 0$
    \State $\forall e \in E: q(e) \leftarrow 0$
    \State $\mathcal{Q} \leftarrow$ empty priority queue
    \State Insert random vertex into $\mathcal{Q}$
    \While{$\mathcal{Q}$ not empty}
    \State $x \leftarrow \mathcal{Q}$.pop\_max()
    \Comment{Choose unvisited vertex with highest priority}
    \State Mark $x$ ``visited''
    \If{$\mathcal{T}(x) = \text{True}$} \Comment{Every vertex is visited only once}
    \State $\mathcal{B}(x) \leftarrow \text{True}$
    \Else
    \State $\mathcal{T}(x) \leftarrow \text{True}$
    \State $\alpha \leftarrow \alpha + c(x) - 2 r(x)$
    \State $\hat\lambda \leftarrow min(\hat\lambda, \alpha)$
    \For{$e = (x,y) \leftarrow$ unscanned edge, where $y \not\in \mathcal{B}$}
    \If{$r(y) < \hat\lambda \leq r(y) + c(e)$}
    \State $\mathcal{U}$.union(x,y) \Comment{Mark edge $e$ to contract}
    \EndIf
    \State $r(y) \leftarrow r(y) + c(e)$
    \State $q(e) \leftarrow r(y)$
    \State $\mathcal{Q}(y) \leftarrow min(r(y), \hat\lambda)$
    \EndFor
    \EndIf
    \EndWhile

  \end{algorithmic}
  \caption{\label{c:exmc:algo:parnoi} Parallel CAPFOREST}
\end{algorithm}

\section{Parallel CAPFOREST}
\label{c:exmc:ssec:pmcap}

We modify the algorithm in order to quickly find contractible edges using
shared-memory parallelism. The pseudocode can be found in
Algorithm~\ref{c:exmc:algo:parnoi}. The proofs in this section show that the
modifications do not violate the correctness of the algorithm. Detailed proofs
for the original CAPFOREST algorithm and the modifications of Nagamochi~\etal
for weighted graphs can be~found~in~\cite{nagamochi1994implementing}.

The idea of the our algorithm is as follows: We aim to find contractible edges
using shared-memory parallelism. Every processor selects a random vertex and
runs Algorithm~\ref{c:exmc:algo:parnoi}, which is a modified version of
CAPFOREST~\cite{nagamochi1992computing,nagamochi1994implementing} where the
priority values are limited to $\hat\lambda$, the current upper bound of the
size of the minimum cut. We want to find contractible edges without requiring
that every process looks at the whole graph. To achieve this, every vertex will
only be visited by one process. Compared to limiting the number of vertices each
process visits this has the advantage that we also scan the vertices in sparse
regions of the graph which might otherwise not be scanned by any process.

Figure~\ref{c:exmc:fig:parmc} shows an example run of Algorithm~\ref{c:exmc:algo:parnoi}
with~$p=5$. Every process randomly chooses a start vertex and performs
Algorithm~\ref{c:exmc:algo:parnoi} on it to ``grow a region'' of scanned
vertices. As we want to employ shared-memory parallelism to speed up the
algorithm, we share an array $\mathcal{T}$ between all processes to denote
whether a vertex has already been visited. If a vertex $v$ has already been
visited by a process, it will not be visited by any other processes.
Additionally, every process keeps a local blacklist $\mathcal{B}$ for vertices
that the process attempted to visit but that were already visited by another
process before and were thus ignored by this process. Note that $\mathcal{B}$ is
not shared between processes. We need this blacklist to ensure correctness, as a
process may only contract edges that are not adjacent to a vertex previously
blacklisted by that process (proof in Lemma~\ref{c:exmc:lem:parts}). For every
vertex~$v$ we keep a value $r(v)$, which denotes the total weight of edges
connecting $v$ to already scanned vertices. Over the course of a run of the
algorithm, every edge~$e~=~(v,w)$ is given a value $q(e)$ (equal to $r(w)$ right
after scanning $e$) which is a lower bound for the smallest cut
$\lambda(G,v,w)$.  We mark an edge $e$ as contractible (more accurately, we
union the incident vertices in the shared concurrent union-find data
structure~\cite{anderson1991wait}), if $q(e) \geq \hat\lambda$. Note that this
does not modify the graph, it just remembers which nodes to collapse. The actual
node collapsing happens in a postprocessing step. Nagamochi and Ibaraki showed
\cite{nagamochi1994implementing} that contracting only the edges that fulfill
the condition in line $16$ is equivalent.

\begin{figure}[t!]
  \centering
  \includegraphics[width=.6\textwidth]{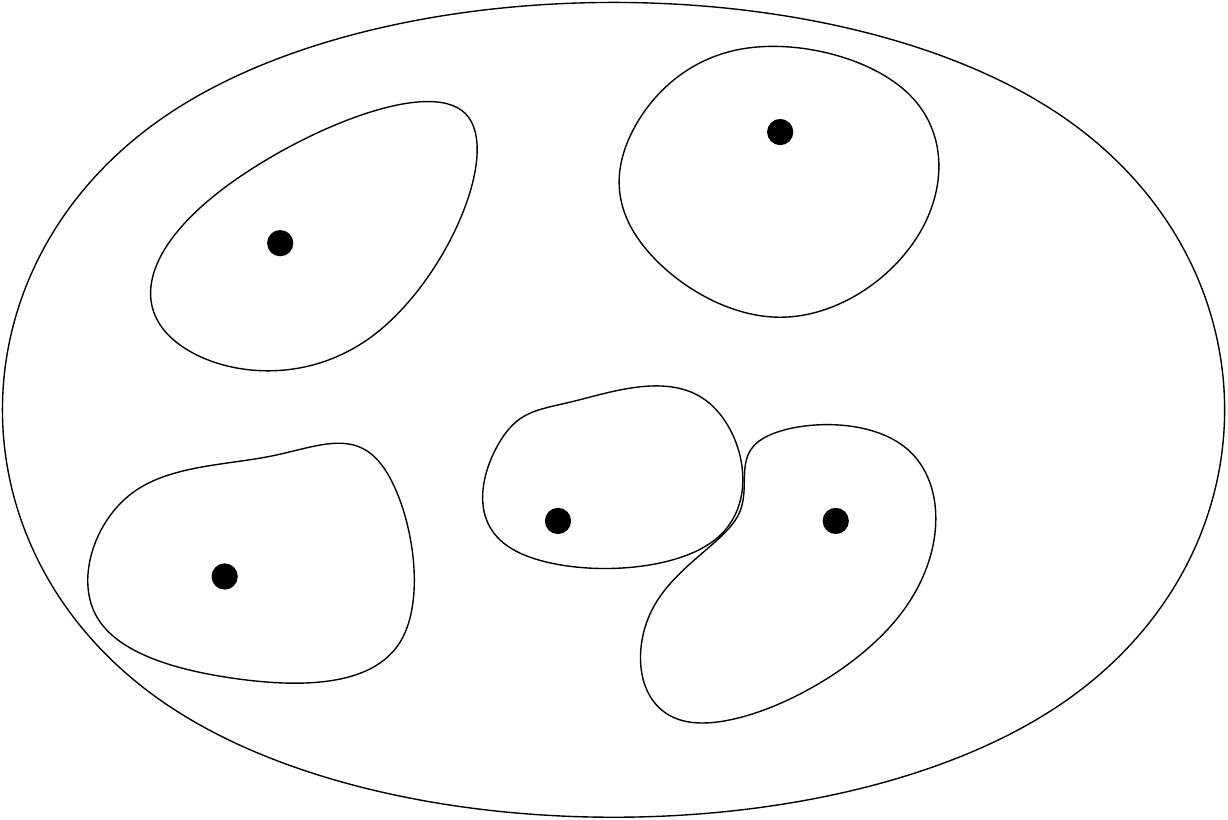}
  \caption{\label{c:exmc:fig:parmc} Example run of Algorithm~\ref{c:exmc:algo:parnoi}. Every
  process starts at a random vertex and scans the region around the start vertex.
  These regions do not overlap.}
\end{figure}

As the set of disconnected edges is different depending on the start vertex, we
looked into visiting every vertex by a number of processes up to a given
parameter to find more contractible edges. However, this did generally result in
higher total running times and thus we only visit every vertex once.

After all processes are finished, every vertex was visited exactly once (or
possibly zero times, if the graph is disconnected). On average, every process
has visited roughly $\frac{n}{p}$ vertices and all processes finish at the same
time. We do not perform any form of locking of the elements of $\mathcal{T}$, as
this would come with a running time penalty for every write and the only
possible race condition with concurrent writes is that a vertex is visited more
often, which does not affect correctness of the algorithm.

However, as we terminate early and no process visits every vertex, we cannot
guarantee that the algorithm actually finds a contractible edge.
However, in practice, this only happens if the graph is already very small
($<50$ vertices in all of our experiments). We can then run the (sequential)
CAPFOREST routine to find at least one edge which can be contracted. In line
$13$ and $14$ of Algorithm~\ref{c:exmc:algo:parnoi} we compute the value of the
cut between the scanned and unscanned vertices and update $\hat\lambda$ if this
cut is smaller than it. This optimization to the CAPFOREST algorithm was first
given by Nagamochi~\etal\cite{nagamochi1994implementing}.

In practice, many vertices reach values of $r(y)$ that are much higher than
$\hat\lambda$ and therefore need to update their priority in $\mathcal{Q}$
often. As previously detailed, we limit the values in the priority queue by
$\hat\lambda$ and do not update priorities that are already greater or equal
to $\hat\lambda$. This allows us to considerably lower the number of priority
queue operations per vertex.

\begin{theorem}
  Algorithm~\ref{c:exmc:algo:parnoi} is correct.
\end{theorem}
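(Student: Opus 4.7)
The plan is to reduce the correctness of Algorithm~\ref{c:exmc:algo:parnoi} to the correctness of the sequential bounded-priority-queue CAPFOREST, that is, to Lemma~\ref{c:exmc:lem:pq}. The key observation is that each process $p$, viewed in isolation, is effectively executing the bounded-queue CAPFOREST on a subgraph $G_p$ of $G$ determined by which vertices end up on its blacklist $\mathcal{B}$.

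First I would fix a single process $p$ and look at the total order of events observed by that process. Let $V_p \subseteq V$ be the set of vertices that $p$ actually visits (those it pops from its queue while $\mathcal{T}(x)$ still reads false), and let $G_p = G[V_p]$. By induction over $p$'s sequence of pops I would show that its local state (the values $r(\cdot)$ on $V_p$, the $q$-values on edges of $G_p$, and the contents of $\mathcal{Q}$) is identical to the state that sequential bounded-queue CAPFOREST on $G_p$ would have with the same start vertex and tie-breaking. The decisive line is the edge-scan in line~15: $p$ only processes edges $(x,y)$ with $y \notin \mathcal{B}$, so edges leaving $V_p$ are never examined and therefore never contribute to any $r$ or $q$. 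Thus $p$'s execution is blind to exactly the edges that are absent from $G_p$.

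Next I would invoke Lemma~\ref{c:exmc:lem:pq} on the sequential execution on $G_p$ to conclude that for every edge $e$ that $p$ marks contractible one has $q(e) \geq \hat\lambda$ and hence $\lambda(G_p, e) \geq \hat\lambda$. To lift this to $G$, observe that restricting any cut $(S, V \setminus S)$ of $G$ to the vertex set $V_p$ yields a cut of $G_p$ of no greater weight, so $\lambda(G_p, e) \leq \lambda(G, e)$, which gives $\lambda(G, e) \geq \hat\lambda$; the union in line~17 is therefore safe. For the update $\hat\lambda \leftarrow \min(\hat\lambda, \alpha)$ in lines~13--14, I would verify separately by a short inductive argument that $\alpha$ equals the weight, in $G$, of the cut separating the prefix of $V_p$ visited so far from its complement, so that $\alpha$ is always the weight of a genuine cut of $G$ and hence a valid upper bound on $\lambda(G)$.

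The main obstacle I expect is a clean treatment of the concurrency. Because the array $\mathcal{T}$ is read and written without locking, two processes may both observe a vertex as unvisited and both visit it, so the sets $V_p$ across processes need not be disjoint and no global serialization of $\mathcal{T}$-writes is guaranteed. I would handle this by arguing purely from $p$'s local trace: whatever sequence of values $p$ reads from $\mathcal{T}$ in its lifetime fixes its $\mathcal{B}$ and hence its $V_p$, and the inductive equivalence with the sequential execution on $G_p$ does not reference any other process. The only shared side effect is the call to $\mathcal{U}$.union$(x,y)$, and by the argument above $p$ only issues it once it has established $\lambda(G,(x,y)) \geq \hat\lambda$, so the wait-free union--find merely records facts that are already true in $G$. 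The case in which no process marks any edge contractible is explicitly handled by the sequential CAPFOREST fallback discussed in Section~\ref{c:exmc:ssec:pmcap} and does not affect correctness.
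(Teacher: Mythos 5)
Your overall strategy---reducing each process's run to a sequential bounded-queue CAPFOREST on a smaller graph, invoking Lemma~\ref{c:exmc:lem:pq}, and lifting back to $G$ by monotonicity of connectivity under vertex and edge removal---is the same decomposition the paper uses, and your treatment of the concurrency (arguing purely from each process's local trace, plus commutativity of the union operations) also matches the paper's. However, there is a genuine gap in your choice of surrogate graph. You take $G_p = G[V_p]$ with $V_p$ the set of vertices $p$ actually visits, and you assert that ``edges leaving $V_p$ are never examined'' because line~15 skips neighbors in $\mathcal{B}$. This is false: the test $y\notin\mathcal{B}$ is evaluated at scan time, and a neighbor $y$ that has already been claimed by another process but has not yet been popped by $p$ is not yet on $p$'s blacklist. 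Process $p$ will therefore scan the edge $(x,y)$, assign it a $q$-value, and possibly call $\mathcal{U}.\mathrm{union}(x,y)$, even though $y$ is blacklisted later and hence $y\notin V_p$, so $(x,y)$ is not an edge of $G_p$. These are exactly the edges for which the blacklisting argument is needed, and your reduction to Lemma~\ref{c:exmc:lem:pq} on $G_p$ says nothing about them.

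The fix is local but it is the crux of the paper's argument: for each scanned edge $e$ one must work with the graph $\bar G_e = G[V\setminus\mathcal{B}_e]$, where $\mathcal{B}_e$ is the blacklist \emph{at the moment $e$ is scanned}. Every scanned edge is an edge of $\bar G_e$ (both endpoints lie outside $\mathcal{B}_e$ by the loop condition), the portion of the computation that determines $q(e)$ coincides with a run of bounded-queue CAPFOREST on $\bar G_e$ (blacklisted vertices are abandoned upon popping without touching any $r$- or $q$-value of other vertices), and $q(e)$ is set once and never modified, so blacklisting that happens after the scan is irrelevant. One then gets $q(e)\le\lambda(\bar G_e,x,y)\le\lambda(G,x,y)$ exactly as in your monotonicity step. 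Your separate argument that $\alpha$ is the weight of a genuine cut of $G$ is sound, and the remainder of the proposal goes through once the surrogate graph is repaired in this way.
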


As Algorithm~\ref{c:exmc:algo:parnoi} is a modified variant of
CAPFOREST~\cite{nagamochi1992computing,nagamochi1994implementing}, we use the
correctness of their algorithm and show that our modifications cannot result in
incorrect results. In order to show this we need the following lemmas:

  \begin{lemma}
    The following modifications to the CAPFOREST algorithms do not result in
    incorrect results.
  \begin{enumerate}
    \item Multiple instances of Algorithm~\ref{c:exmc:algo:parnoi} can be run in
    parallel with all instances sharing a parallel union-find data structure.
    \item Early termination does not affect correctness
    \item For every edge $e = (v,w)$, where neither $v$ nor $w$ are blacklisted,
    $q(e)$ is a lower bound for the connectivity $\lambda(G, v, w)$, even if the
    set of blacklisted vertices $\mathcal{B}$ is not empty.
    \item When limiting the priority of a vertex in $\mathcal{Q}$ to
    $\hat\lambda$, it still holds that the vertices incident to an edge $e =
    (x,y)$ with $q(e) \geq \hat\lambda$ have connectivity $\lambda(G,x,y) \geq
    \hat\lambda$.
  \end{enumerate}
      \label{c:exmc:lem:parts}
\end{lemma}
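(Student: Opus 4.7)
The plan is to dispatch the four claims in order of difficulty, treating (4) as essentially already done, (1) and (2) as consequences of a shared ``marked-edge'' invariant, and (3) as the substantive core.

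For claim (4), I would simply invoke Lemma~\ref{c:exmc:lem:pq}: the proof given there constructs a single-thread witness graph $G'$ obtained by decreasing edge weights on the fly so that no $r$-value ever exceeds $\hat{\lambda}$, and shows that running unmodified CAPFOREST on $G'$ produces the same trace and the same $q$-values as the capped algorithm on $G$. Since edge-weight decreases can only decrease connectivities, $q(e)\leq\lambda(G',v,w)\leq\lambda(G,v,w)$, which is exactly (4). The argument is purely local to a single thread and therefore survives the parallelization unchanged.

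For claims (1) and (2) I would isolate a single correctness invariant: whenever a thread calls \texttt{union} on an edge $e=(v,w)$, the local state of that thread satisfies $q(e)\geq\hat{\lambda}$, and by (3), applied to that thread's own blacklist, this implies $\lambda(G,v,w)\geq\hat{\lambda}$. Because the union-find data structure is wait-free, concurrent unions are linearizable, and the transitive closure of marked edges therefore consists of vertex pairs whose pairwise connectivity in $G$ is at least $\hat{\lambda}$ — so contracting the whole set is safe regardless of how many threads contributed to it (yielding (1)) and regardless of whether threads halted before scanning all their reachable vertices (yielding (2)). In the degenerate case in which no edge is marked, the algorithm falls back to a sequential CAPFOREST run, which is correct by~\cite{nagamochi1994implementing}.

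The main obstacle is claim (3). I would mirror the witness-graph construction of Lemma~\ref{c:exmc:lem:pq}: fix a thread with blacklist $\mathcal{B}$ and define $G'=(V,E',c')$ by deleting every edge incident to a vertex of $\mathcal{B}$. Blacklisted vertices are explicitly skipped in line~15 of Algorithm~\ref{c:exmc:algo:parnoi} and never contribute to any $r$-update, so by induction on scan events the thread's execution trace on $G$ coincides with the trace of unmodified CAPFOREST on $G'$, producing identical $r$- and $q$-values on non-blacklisted vertices and the edges between them. By correctness of CAPFOREST on $G'$, $q(e)\leq\lambda(G',v,w)$ for every such edge $e=(v,w)$; and since $G'$ is obtained from $G$ only by deleting edges, $\lambda(G',v,w)\leq\lambda(G,v,w)$, establishing $q(e)\leq\lambda(G,v,w)$. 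The delicate point I expect to have to check carefully is compositionality: combining (3)'s deletion construction with (4)'s reweighting construction must still yield a faithful single-thread CAPFOREST execution. A simultaneous induction on scan events — identical in structure to the one in Lemma~\ref{c:exmc:lem:pq} but with additional bookkeeping for skipped blacklisted neighbors — should close the argument cleanly.
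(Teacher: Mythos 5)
Your proposal is correct and follows essentially the same route as the paper: claim (4) by invoking Lemma~\ref{c:exmc:lem:pq}, claim (3) via a witness graph obtained by deleting all edges incident to blacklisted vertices (the paper uses the blacklist $\mathcal{B}_e$ at the time $e$ is scanned rather than the final blacklist, but notes $q(e)$ is fixed at scan time, so the two versions coincide), claims (1) and (2) from the facts that marking edges never mutates $G$ and that union-find operations commute, and the final composition of (3) with (4) via a single jointly modified graph $\bar G'$, exactly as you anticipate. The only cosmetic difference is that you package (1) and (2) as one linearizability-plus-invariant argument where the paper treats them as two short separate items.
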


\begin{proof}
  A run of the CAPFOREST algorithm finds a non-empty set of edges that can be
  contracted without contracting a cut with value less than
  $\hat\lambda$~\cite{nagamochi1992computing}. We show that none of our
  modifications can result in incorrect results:

  \begin{enumerate}
  \item The CAPFOREST routine can be started from an arbitrary vertex and finds
  a set of edges that can be contracted without affecting the minimum cut
  $\lambda$. This is true for any vertex $v \in V$. As we do not change the
  underlying graph but just mark contractible edges, the correctness is
  obviously upheld when running the algorithm multiple times starting at
  different vertices. This is also true when running the different iterations in
  parallel, as long as the underlying graph is not changed.

    Marking the edge $e=(u,v)$ as contractible is equivalent to performing a
    \textttA{Union} of vertices $u$ and $v$. The \textttA{Union} operation in a
    union-find data structure is commutative and therefore the order of unions
    is irrelevant for the final result. Thus performing the iterations
    successively has the same result as performing them in parallel.

  \item Over the course of the algorithm we set a value $q(e)$ for each edge $e$
  and we maintain a value $\hat\lambda$ that never increases. We contract edges
  that have value~$q(e) \geq \hat\lambda$ at the time when $q(e)$ is set. For
  every edge, this value is set exactly once. If we terminate the algorithm
  prior to setting $q(e)$ for all edges, the set of contracted edges is a subset
  of the set of edges that would be contracted in a full run and all contracted
  edges $e$ fulfill $q(e) \geq \hat\lambda$ at termination. Thus, no edge
  contraction contracts a cut that is smaller than $\hat\lambda$.

    \item Let $e=(v,w)$ be an edge and let $\mathcal{B}_e$ be the set of nodes
    blacklisted at the time when $e$ is scanned. We show that for an edge $e =
    (v, w)$, $q(e) \leq \lambda(\bar G, v, w)$, where $\bar G = (\bar V, \bar
    E)$ with vertices $\bar V = V \backslash \mathcal{B}_e$ and edges $\bar E=
    \{e=(u,v) \in E: u \not\in \mathcal{B}_e $ and $ v \not\in \mathcal{B}_e \}$
    is the graph $G$ with all blacklisted vertices and their incident edges
    removed. As the removal of vertices and edges can not increase edge
    connectivities $q_{\bar G}(e)\leq\lambda (\bar G, v, w) \leq \lambda(G,v,w)$
    and $e$ is contractible.

    Whenever we visit a vertex $b$, we decide whether we blacklist the vertex.
    If we blacklist the vertex $b$, we immediately leave the vertex and do not
    change any values $r(v)$ or $q(e)$ for any other vertex or edge. As vertex
    $b$ is marked as blacklisted, we will not visit the vertex again and the
    edges incident to $b$ only~affect~$r(b)$.
    
    As edges incident to any of the vertices in $\mathcal{B}_e$ do not
    affect~$q(e)$, the value of $q(e)$ in the algorithm with the blacklisted in
    $G$ is equal to the value of $q(e)$ in $\bar G$, which does not contain the
    blacklisted vertices in $\mathcal{B}_e$ and their incident edges. On $\bar
    G$ this is equivalent to a run of CAPFOREST without blacklisted vertices and
    due to the correctness of CAPFOREST~\cite{nagamochi1994implementing} we know
    that for every edge $e \in \bar E: q_{\bar G}(e) \leq \lambda(\bar G, v, w)
    \leq \lambda(G,v,w)$.

    Note that in $\bar G$ we only exclude the vertices that are in
    $\mathcal{B}_e$. It is possible that a node $y$ that was unvisited when $e$
    was scanned might get blacklisted later, however, this does not affect the
    value of $q(e)$ as the value $q(e)$ is set when an edge is scanned and never
    modified afterwards.

\item Proof in Lemma~\ref{c:exmc:lem:pq}.

\end{enumerate}

  We can combine the sub-proofs (3) and (4) by creating the graph $\bar G'$, in
  which we remove all edges incident to blacklisted vertices and decrease edge
  weights to make sure no $q(e)$ is strictly larger than $\hat\lambda$. As we
  only lowered edge weights and removed edges, for every edge between two not
  blacklisted vertices $e = (u,v)$, $q_G(e) \leq \lambda(\bar G',x,y) \leq
  \lambda(G,x,y)$ or $q_G(e) > \hat\lambda$ and thus we only contract
  contractible edges. As none of our modifications can result in the contraction
  of edges that should not be contracted, Algorithm~\ref{c:exmc:algo:parnoi} is
  correct.
\end{proof}

\section{Putting Things Together}
\label{c:exmc:s:sys}

\begin{algorithm}[h!]
  \begin{algorithmic}[1]
    \INPUT $G = (V,E,c)$
    \State $\hat\lambda \leftarrow $ \textttA{VieCut}($G$)
    \State $G_C \leftarrow G$
    \While{$G_C$ has more than $2$ vertices}
    \State $\hat\lambda \leftarrow$ Parallel CAPFOREST($G_C,\hat\lambda$)
    \If{no edges marked contractible}
    \State $\hat\lambda \leftarrow$ CAPFOREST($G_C,\hat\lambda$)
    \EndIf
    \State $G_C, \hat\lambda \leftarrow$ Parallel Graph Contract($G_C$)
    \EndWhile
    \State \Return $\hat\lambda$
  \end{algorithmic}
  \caption{\label{c:exmc:algo:parmc} Parallel Minimum Cut}
\end{algorithm}

Algorithm~\ref{c:exmc:algo:parmc} shows the overall structure of the algorithm.
We first run \textttA{VieCut} to find a good upper bound $\hat\lambda$ for the
minimum cut. Afterwards, we run Algorithm~\ref{c:exmc:algo:parnoi} to find
contractible edges. In the unlikely case that none were found, we run
CAPFOREST~\cite{nagamochi1994implementing} sequentially to find at least one
contractible edge. We create a new contracted graph using parallel graph
contraction with the hash-based shared-memory parallel contraction technique
outlined in the previous chapter. This process is repeated until the graph has
only two vertices left. Whenever we encounter a collapsed vertex with a degree
of lower than $\hat\lambda$, we update the upper bound. We return the smallest
cut we encounter in this process.

If we also want to output the minimum cut, for each collapsed vertex $v_C$ in
$G_C$ we store which vertices of $G$ are included in $v_C$. When we update
$\hat\lambda$, we store which vertices are contained in the minimum cut. This
allows us to see which vertices are on one side of the cut.

\section{Experiments and Results}
\label{c:exmc:s:exp}

\subsection{Experimental Setup and Methodology}

We implemented the algorithms using \CC-17 and compiled all codes using
g++-7.1.0 with full optimization (\textttA{-O3}). Our experiments are conducted
on a machine with two Intel Xeon E5-2643 v4 with 3.4GHz with 6 CPU cores each
and hyper-threading enabled, and 1.5 TB RAM in total. We perform five
repetitions per instance and report average running~time.

\label{c:mc:ss:perf_plots}
\emph{Performance plots}
relate the fastest running time to the running time of each other algorithm on a
per-instance basis. For each algorithm, these ratios are sorted in increasing
order. The plots show the ratio $t_\text{best}/t_\text{algorithm}$ on the
y-axis. 
A point close to zero indicates that the running time of the algorithm was
considerably worse than the fastest algorithm on the same instance. A value of
one therefore indicates that the corresponding algorithm was one of the fastest
algorithms to compute the solution. Thus an algorithm is considered to
outperform another algorithm if its corresponding ratio values are above those
of the other algorithm. In order to include instances that were too big for an
algorithm, \ie some implementations are limited to 32bit integers, we set the
corresponding ratio below \emph{zero}.

\subsection{Algorithms} There have been multiple experimental studies that
compare exact algorithms for the minimum cut
problem~\cite{Chekuri:1997:ESM:314161.314315,henzinger2018practical,junger2000practical}.
All of these studies report that the algorithm of Nagamochi~\etal and the
algorithm of Hao and Orlin outperform other algorithms, such as the algorithms
of Karger and Stein~\cite{karger1996new} or the algorithm of Stoer and
Wagner~\cite{stoer1997simple}, often by multiple orders of magnitude. Among
others, we compare ourselfs against two available implementations of the
sequential algorithm of
Nagamochi~\etal\cite{nagamochi1992computing,nagamochi1994implementing}.
We use our own implementation of the algorithm of 
Nagamochi~\etal\cite{nagamochi1992computing,nagamochi1994implementing}, written
in \CC~(\textttA{NOI-HNSS}) which was implemented as part of \textttA{VieCut}
(Chapter~\ref{c:viecut}) and uses a binary heap. We use this algorithm with
small optimizations in the priority queue as a base of our implementation.
Chekuri~\etal\cite{Chekuri:1997:ESM:314161.314315} give an implementation of the
flow-based algorithm of Hao and Orlin using all optimizations given in the paper
(variant \textttA{ho} in~\cite{Chekuri:1997:ESM:314161.314315}), implemented in
C, in our experiments denoted as \textttA{HO-CGKLS}. They also give an
implementation of the algorithm of
Nagamochi~\etal\cite{nagamochi1992computing,nagamochi1994implementing}, denoted
as \textttA{NOI-CGKLS}, which uses a heap as its priority queue data structure
(variant \textttA{ni-nopr} in~\cite{Chekuri:1997:ESM:314161.314315}). As their
implementations use signed integers as edge ids, we include their algorithms
only for graphs that have fewer than~$2^{31}$~edges. Most of our discussions
focus on comparisons to the \textttA{NOI-HNSS} implementation as this outperforms
the implementations by~Chekuri~\etal.

Gianinazzi~\etal\cite{gianinazzi2018communication} give a MPI implementation of
the algorithm of Karger and Stein~\cite{karger1996new}. We performed preliminary
experiments on small graphs which can be solved by \textttA{NOI-HNSS},
\textttA{NOI-CGKLS} and \textttA{HO-CGKLS} in less than $3$ seconds. On these
graphs, their implementation using $24$ processes took more than $5$ minutes,
which matches other
studies~\cite{Chekuri:1997:ESM:314161.314315,junger2000practical,henzinger2018practical}
that report bad real-world performance of (other implementations of) the
algorithm of Karger and Stein. Gianinazzi~\etal report a running time of $5$
seconds for RMAT graphs with $n=16000$ and an average degree of $4000$, using
\emph{$1536$ cores}. As \textttA{NOI-HNSS} can find the minimum cut on RMAT
graphs~\cite{wsvr} of equal size in less than $2$ seconds using \emph{a single
core}, we do not include the implementation
in~\cite{gianinazzi2018communication} in our experiments.

As our algorithm solves the minimum cut problem exactly, we do not include the
$(2+\epsilon)$-approximation algorithm of Matula~\cite{matula1993linear} and our
inexact algorithm \textttA{VieCut} in the experiments.

\subsection{Instances}

We use a set of graph instances that was also used for experiments in
Chapter~\ref{c:viecut}. The set of instances contains
$k$-cores~\cite{batagelj2003m} of large undirected real-world graphs taken from
the 10th DIMACS Implementation Challenge~\cite{bader2013graph} as
well as the Laboratory for Web Algorithmics~\cite{BRSLLP,BoVWFI}. Additionally
it contains large random hyperbolic graphs~\cite{krioukov2010hyperbolic,
von2015generating} with $n=2^{20} - 2^{25}$ and $m=2^{24} - 2^{32}$. A detailed
description of the graph instances is given in Section~\ref{rwgraphs} (Graph
family A). These graphs are unweighted, however contracted graphs that are
created in the course of the algorithm have edge weights.

\begin{figure}[t]
  \includegraphics[width=\textwidth]{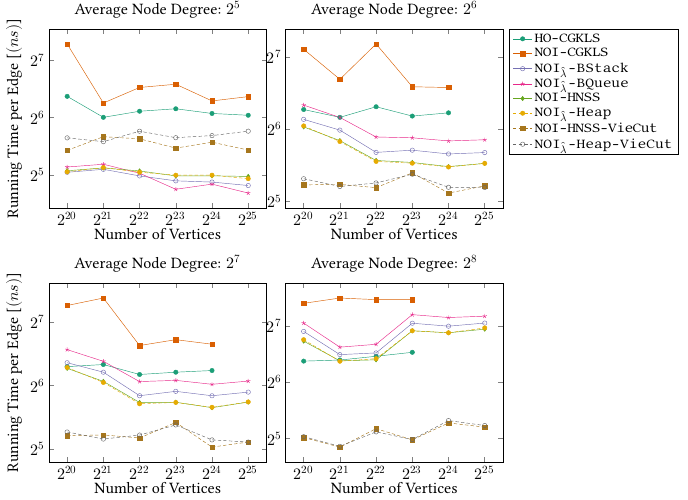}
\caption{Total running time in nanoseconds per edge in random hyperbolic graphs.
\label{c:exmc:fig:rhg}}
\end{figure}

\subsection{Sequential Experiments}
\label{c:exmc:exp:seq}

We limit the values in the priority queue $\mathcal{Q}$ to $\hat\lambda$, in
order to significantly lower the number of priority queue operations needed to
run the contraction routine. In this experiment, we want to examine the effects
of different priority queue implementations and limiting priority queue values
have on sequential minimum cut computations. We also include variants which run
\textttA{VieCut} first to lower $\hat\lambda$.

We start with sequential experiments using the implementation of
  \textttA{NOI-HNSS}. We use two variants: \textttA{NOI}$_{\hat\lambda}$ limits
  values in the priority queue to $\hat\lambda$ while \textttA{NOI-HNSS} allows
  arbitrarily large values in $\mathcal{Q}$. For \texttt{NOI}$_{\hat\lambda}$,
  we test the three priority queue implementations, \textttA{BQueue},
  \textttA{Heap} and \textttA{BStack}. As the priority queue for
  \textttA{NOI-HNSS} has priorities of up to the maximum degree of the graph and
  the contracted graphs can have very large degrees, the bucket priority queues
  are not suitable for \textttA{NOI-HNSS}. Therefore we only use the
  implementation of \textttA{NOI-HNSS}~\cite{henzinger2018practical}.
  
  The variants \textttA{NOI-HNSS-VieCut} and
  \texttt{NOI$_{\hat\lambda}$-Heap-VieCut} first run the shared-memory parallel
  algorithm \textttA{VieCut} using all $24$ threads to lower $\hat\lambda$
  before running the respective sequential algorithm. We report the total
  running time, \eg the sum of \textttA{VieCut} and \textttA{NOI}.
  
  \begin{figure}[t]
    \includegraphics[width=.49\textwidth]{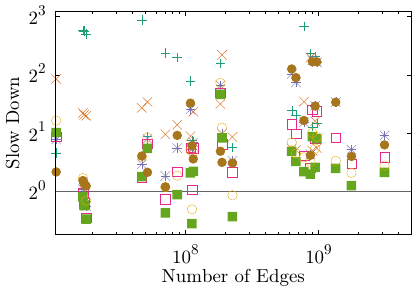}
    \raisebox{1.8mm}{
    \includegraphics[width=.49\textwidth]{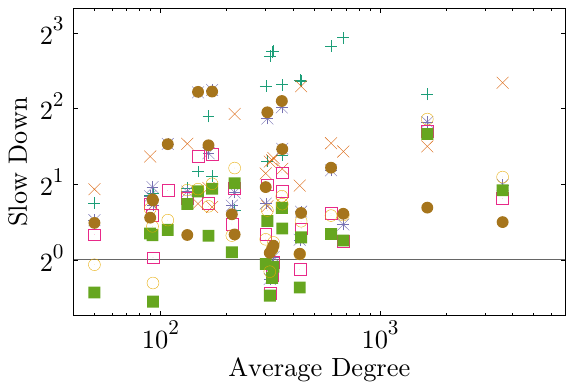}
    }
    \caption{\label{c:exmc:fig:edges} Total running time in real-world graphs, normalized
    by the running time of \texttt{NOI$_{\hat\lambda}$-Heap-VieCut}. (Legend
    shared with Figure~\ref{c:exmc:fig:perf} below)}
  \end{figure}

  \begin{figure}[t]
    \centering
    \includegraphics[width=.7\textwidth]{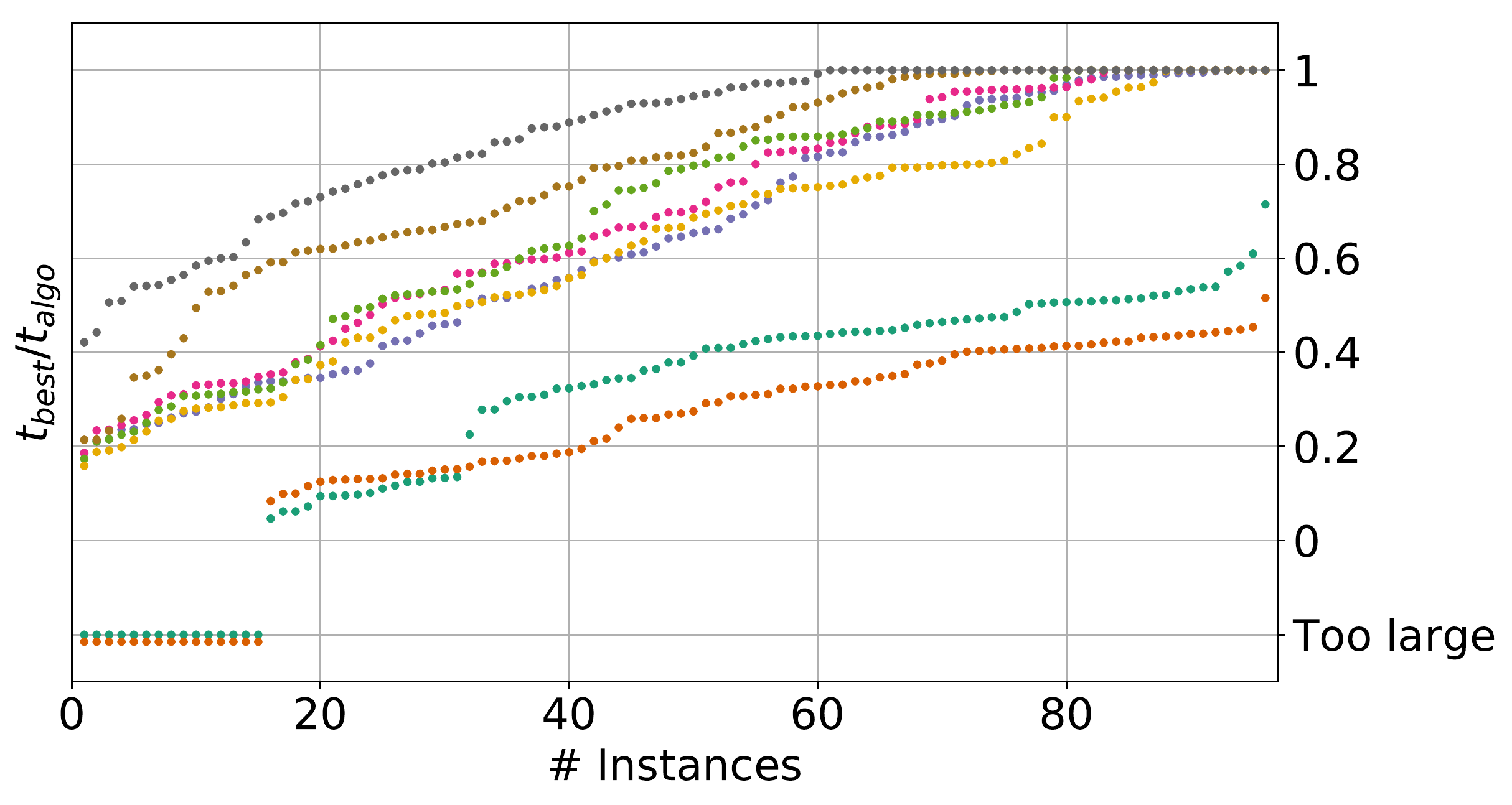}
    \raisebox{2.2cm} {
      \includegraphics[width=.25\textwidth]{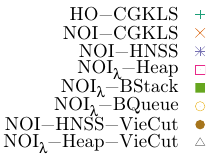}
    }
    \caption{\label{c:exmc:fig:perf} Performance plot for all graphs.}
  \end{figure}

\subsubsection{Priority Queue Implementations}

Figure~\ref{c:exmc:fig:rhg} shows the results for hyperbolic graphs and
Figure~\ref{c:exmc:fig:edges} shows the results for real-world graphs, normalized by
the running time of \texttt{NOI$_{\hat\lambda}$-Heap-VieCut}.
Figure~\ref{c:exmc:fig:perf} gives performance plots for all graphs from both
graph families. We can see that in
nearly all sequential runs, \texttt{NOI$_{\hat\lambda}$-BStack} is $5-10\%$
faster than  \texttt{NOI$_{\hat\lambda}$-BQueue}. This can be explained as this
priority queue uses \textttA{std::vector} instead of \textttA{std::deque} as its
underlying data structure and thus has lower access times to add and remove
elements. As all vertices are visited by the only thread, the scan order does
not greatly influence how many edges~are~contracted.

In the random hyperbolic graphs, nearly no vertices in \textttA{NOI-HNSS} reach
priorities in $\mathcal{Q}$ that are much larger than $\hat\lambda$. Usually,
fewer than $5\%$ of edges do not incur an update in $\mathcal{Q}$. Thus,
\textttA{NOI-HNSS} and \texttt{NOI$_{\hat\lambda}$-Heap} have practically the
same running time.  \texttt{NOI$_{\hat\lambda}$-BStack} is usually $5\%$ slower.

As the real-world graphs are social network and web graphs, they contain
vertices with very high degrees. In these vertices, \textttA{NOI-HNSS} often
reaches priority values of much higher than $\hat\lambda$ and
\texttt{NOI$_{\hat\lambda}$} can actually save priority queue operations. Thus,
\texttt{NOI$_{\hat\lambda}$-Heap} is up to $1.83$ times faster than
\textttA{NOI-HNSS} with an average (geometric) speedup factor of $1.35$. Also, in
contrast to the random hyperbolic graphs, \texttt{NOI$_{\hat\lambda}$-BStack} is
faster than \textttA{NOI-HNSS} on real-world graphs. Due to the low diameter of
web and social graphs, the number of vertices in $\mathcal{Q}$ is very large.
This favors the \textttA{BStack} priority queue, as it has constant~access~times.
The average geometric speedup of \texttt{NOI$_{\hat\lambda}$-BStack} compared to
\texttt{NOI$_{\hat\lambda}$-Heap} is $1.22$.

\subsubsection{Reduction of $\hat\lambda$ by VieCut}

\begin{figure}[t]
  \centering
  \includegraphics[width=.8\textwidth]{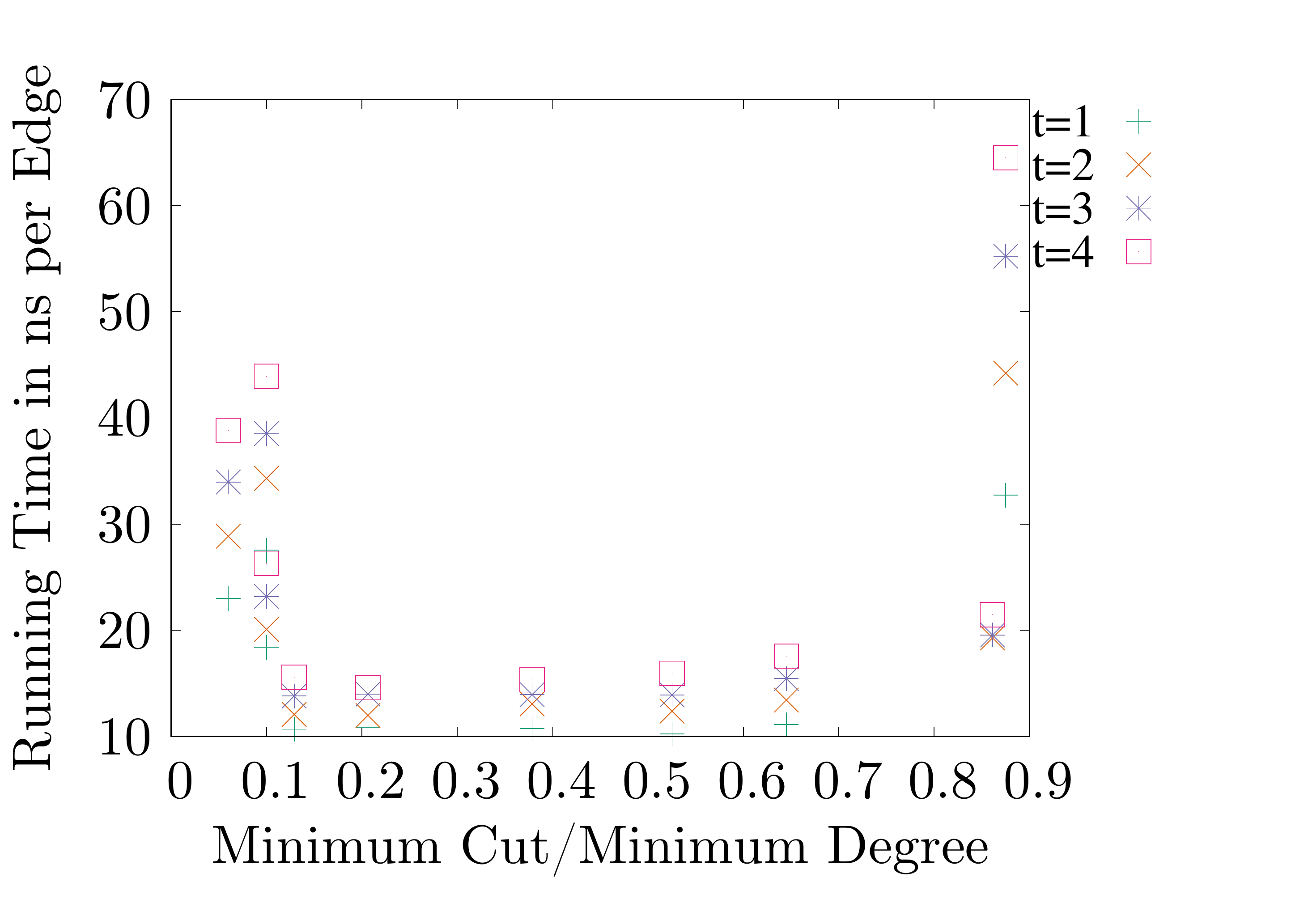}
  \caption{\label{c:exmc:fig:paramt} Results for different values for $t$}
\end{figure}

In this experiment we aim to reduce $\hat\lambda$ by running \textttA{VieCut}
before \textttA{NOI}. While the other algorithms are slower for denser random
hyperbolic graphs, both algorithms \textttA{NOI-HNSS-VieCut} and
\texttt{NOI$_{\hat\lambda}$-Heap-VieCut} are faster in these graphs with higher
density. This happens as the variants without \textttA{VieCut} find fewer
contractible edges and therefore need more rounds of CAPFOREST. The highest
speedup compared to \texttt{NOI$_{\hat\lambda}$-Heap} is reached in random
hyperbolic graphs with $n=2^{23}$ and an average density of $2^8$, where
\texttt{NOI$_{\hat\lambda}$-Heap-VieCut} has a speedup of factor $4$.

\texttt{NOI$_{\hat\lambda}$-Heap-VieCut} is fastest on most real-world graphs,
however when the minimum degree is very close to the minimum cut $\lambda$,
running \textttA{VieCut} can not significantly lower $\hat\lambda$. Thus, the
extra work to run \textttA{VieCut} takes longer than the time saved by lowering
the upper bound $\hat\lambda$. The average geometric speedup factor of
\texttt{NOI$_{\hat\lambda}$-Heap-VieCut} on all graphs compared to the variant
without \textttA{VieCut} is $1.34$.

In the performance plots in Figure~\ref{c:exmc:fig:perf} we can see that
\texttt{NOI$_{\hat\lambda}$-Heap-VieCut} is fastest or close to the fastest
algorithm in all but the very sparse graphs, in which the algorithm of
Nagamochi~\etal\cite{nagamochi1994implementing} is already very
fast~\cite{henzinger2018practical} and therefore using \textttA{VieCut} cannot
sufficiently lower $\hat\lambda$ and thus the running time of the algorithm.
\textttA{NOI-CGKLS} and \textttA{HO-CGKLS} are outperformed on all graphs.

\begin{figure}[t!]
  
  \includegraphics[width=.9\textwidth]{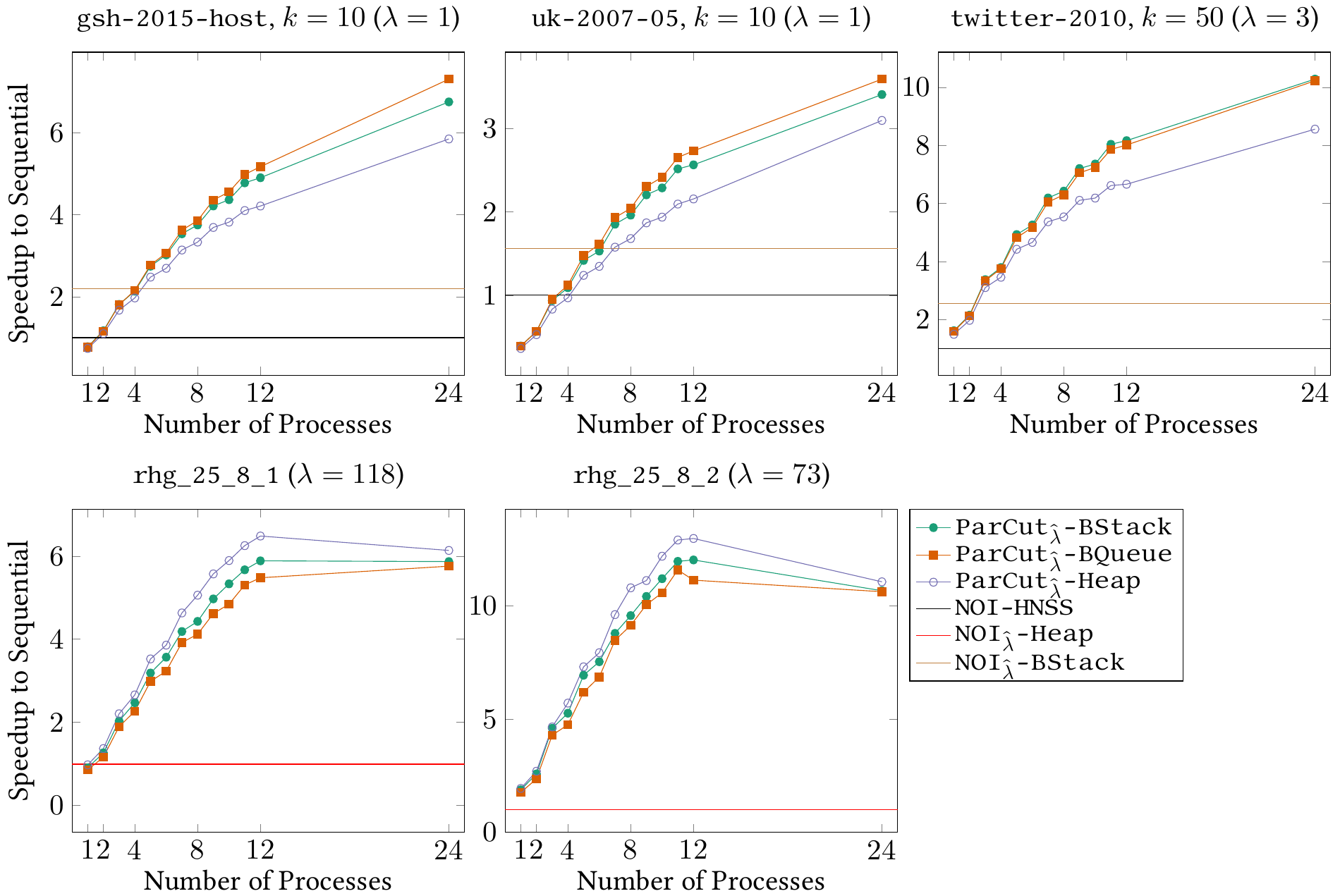}    
  \caption{Scaling plots for large graphs - Scalability}
  \label{c:exmc:fig:scale}
  \end{figure}

\subsection{Shared-memory parallelism}

\subsubsection{Overlapping Scan Regions}

We examine whether it is useful to overlap the regions scanned by each
thread. For this purpose we introduce a parameter $t$ which indicates how
many threads can scan each processor. A value of $t = 1$ executes
algorithm~\ref{c:exmc:algo:parnoi}, any larger value replaces $\mathcal{T}(x)$
with a counter indicating how many threads already scanned vertex $x$. If
$\mathcal{T}(x) \geq t$, no further threads may scan it.

Figure~\ref{c:exmc:fig:paramt} shows results for
\texttt{ParCut$_{\hat\lambda}$-BQueue} with $24$ processes for values of $t$
from $1$ to $4$ on a set of $10$ large graphs ($4$ real-world graphs, $6$ RHG
graphs). In general, lower values of $t$ have lower running times for
Algorithm~\ref{c:exmc:algo:parnoi}, however the amounts of contracted edges can be
lower, especially when many vertices have degree not too much higher than
$\hat\lambda$, as those can only be contracted depending on the order of vertex
scans. On $9$ out of the $10$ graphs, $t=1$ has the best performance, just in
the graph \textttA{rhg\_25\_8\_1} with minimum degree $137$ and $\lambda=118$,
$t=2$ and $t=3$ are slightly faster. Thus we set parameter $t$ to $1$ and do not
use overlapping scan regions.

\begin{figure}[t!]
 \includegraphics[width=.9\textwidth]{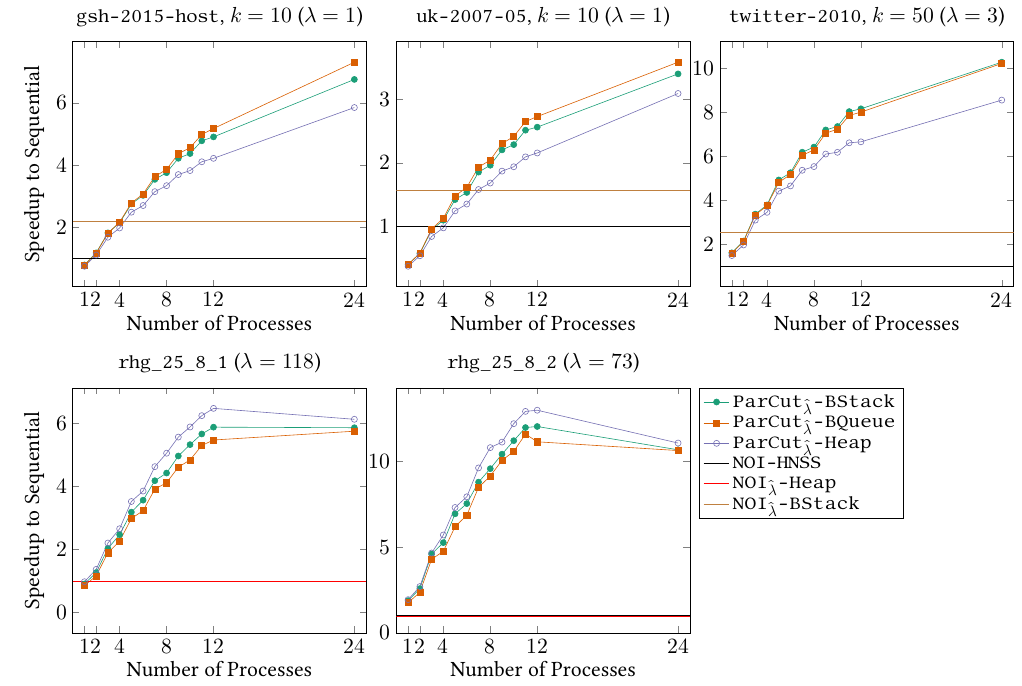}
  \caption{\label{c:exmc:fig:scale2}Scaling plots for large graphs - Speedup
  compared to \textttA{NOI-HNSS} and fastest sequential algorithm (first $3$
  graphs: \texttt{NOI$_{\hat\lambda}$-BStack}¸ last $2$ graphs:
  \texttt{NOI$_{\hat\lambda}$-Heap}).}
\end{figure}

We run experiments on $5$ of the largest graphs in the data sets using up to
$24$ threads on $12$ cores. First, we compare the performance of
Algorithm~\ref{c:exmc:algo:parmc} using different priority queues:
\texttt{ParCut$_{\hat\lambda}$-Heap}, \texttt{ParCut$_{\hat\lambda}$-BStack} and
\texttt{ParCut$_{\hat\lambda}$-BQueue} all limit the priorities to
$\hat\lambda$, the result of \textttA{VieCut}. In these experiments,
\textttA{VieCut} takes up between $19-83\%$ of the total running time with an
average of $51\%$. Figure~\ref{c:exmc:fig:scale} shows how well the algorithms
scale with increased number of processors. Figure~\ref{c:exmc:fig:scale2} shows
the speedup compared to the fastest sequential algorithm of
Section~\ref{c:exmc:exp:seq}. On all graphs,
\texttt{ParCut$_{\hat\lambda}$-BQueue} has the highest speedup when using $24$
threads. On real-world graphs, \texttt{ParCut$_{\hat\lambda}$-BQueue} also has
the lowest total running time. In the large random hyperbolic graphs, in which
the priority queue is usually only filled with up to $1000$ elements, the worse
constants of the double-ended queue cause the variant to be slightly slower than
\texttt{ParCut$_{\hat\lambda}$-Heap} also even when running with $24$ threads.
In the two large real-world graphs that have a minimum degree of $10$, the
sequential algorithm \texttt{NOI$_{\hat\lambda}$-BStack} contracts most edges in
a single run of CAPFOREST - due to the low minimum degree, the priority queue
operations per vertex are also very low. Thus, \texttt{ParCut$_{\hat\lambda}$}
using only a single thread has a significantly higher running time, as it runs
\textttA{VieCut} first and performs graph contraction using a concurrent hash
table, as described in Section~\ref{contract}, which is slower than
sequential graph contraction when using just one thread. In graphs with higher
minimum degree, \textttA{NOI} needs to perform multiple runs of CAPFOREST. By
lowering $\hat\lambda$ using \textttA{VieCut} we can contract significantly more
edges and achieve a speedup factor of up to $12.9$ compared to the fastest
sequential algorithm \texttt{NOI$_{\hat\lambda}$-Heap}. On
\textttA{twitter-2010}, $k=50$, \texttt{ParCut$_{\hat\lambda}$-BQueue} has a
speedup of $10.3$ to \textttA{NOI-HNSS}, $16.8$ to \textttA{NOI-CGKLS} and a
speedup of $25.5$ to \textttA{HO-CGKLS}. The other graphs have more than
$2^{31}$ edges and are thus too large for \textttA{NOI-CGKLS} and
\textttA{HO-CGKLS}.

\section{Conclusion}
\label{c:exmc:s:con}

We presented a shared-memory parallel exact algorithm for the minimum cut
problem. Our algorithm is based on the algorithms of
Nagamochi~\etal\cite{nagamochi1992computing,nagamochi1994implementing} and our
work described in Chapter~\ref{c:viecut}. We use different data structures
and optimizations to decrease the running time of the algorithm of
Nagamochi~\etal by a factor of up to $2.5$. Using additional shared-memory
parallelism we further increase the speedup factor to up to $12.9$. Future work
includes checking whether our sequential optimizations and parallel
implementation can be applied to the $(2+\epsilon)$-approximation algorithm of
Matula~\cite{matula1993linear}.

\chapter{Finding All Minimum Cuts}
\label{c:allmc}
We present a practically efficient algorithm that finds all global minimum cuts
in huge undirected graphs. Our algorithm uses a multitude of kernelization rules
to reduce the graph to a small equivalent instance and then finds all minimum
cuts using an optimized version of the algorithm of Nagamochi, Nakao and
Ibaraki~\cite{nagamochi2000fast}. Some of these techniques are adapted from
techniques for the global minimum cut
problem~\cite{padberg1990efficient,nagamochi1994implementing} which we
discussed in the previous chapters of this dissertation. Using these and
newly developed reductions we are able to decrease the running time by up to
multiple orders of magnitude compared to the algorithm of
Nagamochi~\etal\cite{nagamochi2000fast} and are thus able to find all minimum
cuts on graphs with up to billions of edges in a few minutes. Based on the
cactus representation of all minimum cuts, we are able to find the most balanced
minimum cut in time linear to the size of the cactus. As our techniques are able
to find the most balanced minimum cut of graphs with
billions of edges in minutes, this allows the use of minimum cuts as a
subroutine in sophisticated data mining and graph analysis.

The content of this chapter is based on~\cite{henzinger2020finding}.

\section{Algorithm Description}

Our algorithm combines a variety of techniques and algorithms in order to find
all minimum cuts in a graph. The algorithm is based on the contractions of edges
which cannot be part of \emph{any} minimum cut. Thus, we first show that an edge $e$ that
is not part of any minimum cut in graph $G$ can be contracted. In contrast to
the previous chapters, we now aim to maintain \emph{all} minimum cuts.

\begin{lemma}~\cite{karger1996new} \label{lem:contractible} If an edge $e=(u,v)$
is not part of any minimum cut in graph $G$, all minimum cuts of $G$ remain in
the resulting graph $G/e$. 
\end{lemma}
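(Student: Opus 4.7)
The plan is to prove the lemma by exhibiting a weight-preserving correspondence between cuts of $G/e$ and those cuts of $G$ that do not separate $u$ from $v$, and then showing that every minimum cut of $G$ falls into that category by hypothesis.

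First, I would set up the correspondence precisely. Let $w$ denote the vertex of $G/e$ obtained by merging $u$ and $v$. Given any cut $(A', V(G/e) \setminus A')$ of $G/e$, define $A \subseteq V(G)$ by $A \Is A' \setminus \{w\}$ if $w \notin A'$, and $A \Is (A' \setminus \{w\}) \cup \{u,v\}$ if $w \in A'$. Conversely, every cut $(A, V \setminus A)$ of $G$ with $\{u,v\} \subseteq A$ or $\{u,v\} \subseteq V \setminus A$ corresponds back to a cut of $G/e$ in the obvious way. I would verify, directly from the contraction rule (which deletes $e$ and replaces each pair of parallel edges $(u,x),(v,x)$ by a single edge of combined weight), that the cut weight is preserved under this correspondence: edges between $A \cap \{u,v\}^c$ and its complement contribute identically, and the edges incident to $u$ or $v$ that cross the cut in $G$ are exactly those that cross in $G/e$ after merging.

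Next, I would use the hypothesis. Since $e$ lies in no minimum cut of $G$, every minimum cut $(A, V \setminus A)$ of $G$ satisfies $\{u,v\} \subseteq A$ or $\{u,v\} \subseteq V \setminus A$, and therefore corresponds via the bijection above to a cut of $G/e$ of the same weight $\lambda(G)$. In particular, $\lambda(G/e) \leq \lambda(G)$. For the reverse inequality, observe that any cut of $G/e$ lifts to a cut of $G$ of equal weight (in which $u$ and $v$ lie on the same side), so $\lambda(G/e) \geq \lambda(G)$. Combining gives $\lambda(G/e) = \lambda(G)$, and hence every minimum cut of $G$ maps to a minimum cut of $G/e$, i.e.\ all minimum cuts of $G$ remain in $G/e$.

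There is no real obstacle here; the only point that requires care is bookkeeping around parallel edges produced by the contraction, which must be handled via weight summation so that the weight-preservation of the correspondence is literally an equality rather than an inequality. The argument is essentially the standard observation underlying Karger's contraction algorithm, and I would keep the proof short by stating the bijection, checking weight preservation in one sentence, and then reading off both directions of the minimum-cut equality.
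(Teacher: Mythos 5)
Your proof is correct and follows essentially the same route as the paper's: since $e$ lies in no minimum cut, every minimum cut leaves $u$ and $v$ on the same side and therefore survives the contraction. You are somewhat more careful than the paper in spelling out the weight-preserving correspondence and the reverse inequality $\lambda(G/e) \geq \lambda(G)$ (which the paper leaves implicit), but this is a refinement of the same argument rather than a different approach.
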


\begin{proof}
  Let $(A,B)$ be an arbitrary minimum cut of $G$. For an edge $e = (u,v)$, which
  is not part of any minimum cut, we know that $e \not\in E[A]$, so either $u$
  and $v$ are both in vertex set $A$ or both in vertex set $B$. This is still
  the case in $G/e$. Thus, the edge $e$ can be contracted even if we aim to find
  every minimum cut of $G$.
\end{proof}

Lemma~\ref{lem:contractible} is very useful to reduce the size of the graph with
the usage of techniques to identify such edges. We now give a short overview
of our algorithm and then explain the techniques in more detail. First, we use
our shared-memory parallel heuristic minimum cut algorithm
\textttA{VieCut} (as described in Chapter~\ref{c:viecut}) in order to find an upper bound
$\hat\lambda$ for the minimum cut which is likely to be the correct value. Having a tight bound for the minimum cut
allows the contraction of many edges, as multiple reduction techniques depend on
the value of the minimum cut. We adapt contraction techniques originally
developed by
Nagamochi~\etal\cite{nagamochi1992computing,nagamochi1994implementing} and
Padberg~\etal\cite{padberg1991branch} (see Section~\ref{p:mincut:s:algs}) to the problem of \emph{finding all minimum
cuts}. Section~\ref{ss:contraction} details these contraction routines. On the
resulting graph we find all minimum cuts using an optimized variant of the
algorithm of Nagamochi, Nakao and Ibaraki~\cite{nagamochi2000fast} and return
the cactus graph which represents them all. A short description of the algorithm
and an explanation of our engineering effort are given in
Section~\ref{ss:cactus}. Afterwards, in Section~\ref{ss:combine} we show how we
combine the parts into a fast algorithm to find all minimum cuts of large
networks. 

\subsection{Edge Contraction}
\label{ss:contraction}

As shown in Lemma~\ref{lem:contractible}, edges that are not part of any minimum
cut can be safely contracted. We build a set of techniques that aim to find
contractible edges and run these in alternating order until neither of them
finds any more contractible edges. We now give a short introduction to these.

For efficiency, we perform contractions in bulk. If our algorithm finds an edge
that can be contracted, we merge the incident vertices in a thread-safe
union-find data structure~\cite{anderson1991wait}. After each run of a
contraction technique that finds contractible edges, we create the contracted
graph using a shared-memory parallel hash table~\cite{maier2016concurrent}. In
this contracted graph, each set of vertices of the original graph is merged into
a single node. The contraction of this vertex set is equivalent to contracting a
spanning tree of the set. After contraction we check whether a vertex in the
contracted graph has degree $< \hat\lambda$. If it does, we found a cut of
smaller value and update $\hat\lambda$ to this value.

\subsubsection{Connectivity-based Contraction}
\label{sss:connectivity}

The connectivity of an edge $e=(s,t)$ is the weight of the minimum cut that
separates $s$ and $t$, \ie the \emph{minimum s-t-cut}. For an edge that has
connectivity $> \hat\lambda$, we thus know that there is no cut separating $s$
and $t$ (\ie no cut that contains $e$) that has value $\leq \hat\lambda$. Thus,
we know that there cannot be a minimum cut that contains $e$, as $\hat\lambda$
is by definition at least as large as $\lambda$. However, solving the minimum
s-t-cut problem takes significant time, so computing the connectivity of each
edge does not scale to large networks. Hence, as part of their algorithm for the
global minimum cut problem,
Nagamochi~\etal\cite{nagamochi1992computing,nagamochi1994implementing} give a
subroutine that computes a lower bound $q(e)$ for the connectivity of every edge
$e$ of $G$ in a total running time of $\Oh{m + n\log{n}}$. Both the algorithm of
Nagamochi~\etal and the CAPFOREST subroutine that computes the connectivity
lower bounds $q(e)$ are outlined in Section~\ref{p:mincut:ss:noi}. Each of the
edges whose connectivity lower bound is already larger than $\hat\lambda$ can be
contracted as it cannot be part of any minimum cut. 

In Chapter~\ref{c:exmc} we give a fast shared-memory parallel variant of their
algorithm. As that algorithm aims to find a single minimum cut,
it also contracts edges that have connectivity equal to $\hat\lambda$, as the
only relevant cuts are ones better than the best cut known
previously. As we want to find all minimum cuts, we can only contract edges
whose connectivity is strictly larger than $\hat\lambda$. Nagamochi~\etal could
prove that at least one edge has value $\hat\lambda$ in their routine and can
thus be contracted. We do not have such a guarantee when trying to find edges
that have connectivity $> \hat\lambda$. Consider for example an unweighted tree,
whose minimum cut has a value of $1$ and each edge has connectivity $1$ as well.

\begin{figure*}[t!]
  \centering
  \includegraphics[width=\textwidth]{img/allmincut/local_reductions.pdf}
  \caption{Local reduction rules: (1) \prOne, (2) \prTwo, (3) \prThree, (4) \prFour.}
  \label{fig:all_local}
\end{figure*}

 \subsubsection{Local Contraction Criteria}
 \label{sss:local}
 
 Padberg and Rinaldi~\cite{padberg1991branch} give a set of \emph{local
 reduction} routines which determine whether an edge can be contracted without
 affecting the minimum cut. We describe these reductions in Section~\ref{p:mincut:ss:pr}. Their reduction routines were shown to be very
 useful in order to find a minimum cut fast in
 practice~\cite{Chekuri:1997:ESM:314161.314315,junger2000practical,henzinger2018practical}
 and are also used in our \textttA{VieCut} algorithm in Chapter~\ref{c:viecut}.
 We adapt the routines originally developed for the minimum cut problem so that
 they hold for the problem of for finding all minimum cuts. Thus, we have to
 make sure that we do not contract cuts of value $\hat\lambda$, as they might be
 minimal and additionally make sure that we do not contract edges incident to
 vertices that could have a \emph{trivial minimum cut}, \ie a minimum cut, where
 one side contains only a single vertex. Figure~\ref{fig:all_local} shows
 examples and Lemma~\ref{lem:local_crit} gives a formal definition of these reduction rules.

 \begin{lemma} \label{lem:local_crit} For an edge $e = (u,v) \in E$, $e$ is not
    part of any minimum cut, if $e$ fulfills at least one of the following
    criteria. Thus,  all minimum cuts of $G$ are still present in $G/e$ and $e$
    can be contracted.
    \begin{enumerate}
      \item \prOne: $c(e) > \hat\lambda$
      \item \prTwo:
      \begin{itemize}
        \item $c(v) < 2 c(e)$ and $c(v) > \hat\lambda$, or
        \item $c(u) < 2 c(e)$ and $c(u) > \hat\lambda$
      \end{itemize}      
      \item \prThree:\\
        $\exists w \in V$ with 
      \begin{itemize}
        \item $c(v) < 2 \{c(v,w) + c(e)\}$ and $c(v) > \hat\lambda$, and
        \item $c(u) < 2 \{c(u,w) + c(e)\}$ and $c(u) > \hat\lambda$
      \end{itemize}
      \item \prFour: \\$c(e) + \sum_{w \in V} min\{c(v,w),c(u,w)\} >
      \hat\lambda$
    \end{enumerate}
 \end{lemma}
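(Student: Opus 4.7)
The plan is to handle each of the four criteria separately by a contradiction argument: assume $e=(u,v)$ lies in some minimum cut $(A, V\setminus A)$ with $u \in A$, $v \in V\setminus A$, and derive either a strictly smaller cut (contradicting minimality) or show that $c(A) > \hat\lambda \geq \lambda$ (contradicting that $(A, V\setminus A)$ is a minimum cut). The contractibility then follows from Lemma~\ref{lem:contractible}.

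For \prOne, the argument is immediate: any cut containing $e$ has weight at least $c(e) > \hat\lambda \geq \lambda$, so $e$ cannot lie in a minimum cut. For \prTwo, I would use the standard uncrossing argument: moving $v$ from its side to $A$ yields a new cut of weight $c(A) + c(v) - 2c(\{v\}, A)$. Since $e$ crosses the cut, $c(\{v\}, A) \geq c(e)$, and together with $c(v) < 2c(e)$ this shows the new cut has weight strictly less than $c(A)$. The subtlety is that this ``new cut'' may be degenerate if $V\setminus A = \{v\}$, \ie the original cut is the trivial cut separating $v$; here the additional assumption $c(v) > \hat\lambda$ closes the gap, since the trivial cut at $v$ then already has weight exceeding $\hat\lambda \geq \lambda$, contradiction. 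The analogous argument handles the symmetric condition on~$u$.

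For \prThree I would case-split on the location of the auxiliary vertex $w$. If $w \in A$, then both $e$ and $(v,w)$ cross the cut, so $c(\{v\}, A) \geq c(e) + c(v,w)$; combined with $c(v) < 2\{c(v,w) + c(e)\}$ the same uncrossing argument as in \prTwo applied to $v$ produces a strictly lighter cut, with the trivial-cut edge case ruled out by $c(v) > \hat\lambda$. Symmetrically, if $w \in V\setminus A$, I would move $u$ instead and use the condition on $c(u)$. The boundary cases $w \in \{u,v\}$ collapse to \prTwo and cause no trouble. For \prFour I would observe that for every $w \in V$, if both $(u,w)$ and $(v,w)$ are present in $G$, then exactly one of them crosses $(A, V\setminus A)$ (since $u$ and $v$ lie on opposite sides), so its weight is at least $\min\{c(u,w),c(v,w)\}$. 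Summing over all $w \in V$ and adding $c(e)$ (noting the terms $w=u$ and $w=v$ contribute $0$ since there are no self-loops), we obtain $c(A) \geq c(e) + \sum_{w \in V}\min\{c(u,w),c(v,w)\} > \hat\lambda \geq \lambda$, again a contradiction.

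The main obstacle will be tracking the strict versus non-strict inequalities carefully, in particular ensuring that the additional hypotheses $c(v) > \hat\lambda$ (resp.\ $c(u) > \hat\lambda$) in \prTwo and \prThree genuinely rule out the only degenerate case where the uncrossing move would produce a trivial or empty ``cut''. This is exactly the point where our criteria differ from the original Padberg--Rinaldi rules, which are designed to preserve a single minimum cut rather than all of them; when $c(v) = \hat\lambda$, the trivial cut at $v$ may itself be a minimum cut that we must not contract, and the strict inequality is what protects it.
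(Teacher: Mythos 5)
Your proposal is correct and follows essentially the same route as the paper's proof: \prOne{} by a direct weight bound, \prTwo{} and \prThree{} by the vertex-moving (uncrossing) argument with the hypotheses $c(u),c(v) > \hat\lambda$ ruling out the trivial-cut degenerate case, and \prFour{} by summing the lighter of the two edges to each shared neighbor as a lower bound on any cut separating $u$ and $v$. Your version is somewhat more explicit than the paper's (the cut-weight formula $c(A)+c(v)-2c(\{v\},A)$ and the case split on which side $w$ lies), but the underlying argument is identical, including the closing observation about why the strict inequalities distinguish these rules from the original Padberg--Rinaldi conditions.
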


\begin{proof}
  \begin{enumerate}
    \item If $c(e) > \hat\lambda$, every cut that contains $e$ has capacity $>
    \hat\lambda$. Thus it can not be a minimal cut.
    \item Without loss of generality let $v$ be the vertex in question. The
    condition $c(v) < 2 c(e)$ means that $e$ is heavier than all other edges
    incident to $v$ combined. Thus, for any non-trivial cut that contains $e$,
    we can find a lighter cut by replacing $e$ with all other incident edges to
    $v$, \ie moving $v$ to the other side of the cut. As this is not true for
    the trivial minimum cut $(v,V \backslash v)$, we cannot contract an edge
    incident to a vertex that has weight $\leq \hat\lambda$.
    \item  This condition is similar to (2). Let there be a triangle ${u,v,w}$
    in the graph in which it holds for both $u$ and $v$ that the two incident
    triangle edges are heavier than the sum of all other incident edges. Then,
    every cut that separates $u$ and $v$ can be improved by moving $u$ and $v$
    into the same side. As the cut could have vertex $w$ on either side, both
    vertices need to fulfill this condition. To make sure that we do not
    contract any trivial minimum cut, we check that both $v$ and $u$ have
    weighted vertex degree $> \hat\lambda$ and thus can not represent a trivial
    minimum cut. 
    \item In this condition we check the whole shared neighborhood of vertices
    $u$ and $v$. Every cut that separates $u$ and $v$ must contain $e$ and for
    each shared neighbor $w$ at least one of the edges connecting them to $w$.
    Thus, we sum over the lighter edge connecting them to the shared neighbors
    and have a lower bound of the minimum cut that separates $u$ and $v$. If
    this is heavier than $\hat\lambda$, we know that no minimum cut separates
    $u$ and $v$.
  \end{enumerate}
  \end{proof}

The conditions \prOne{} and \prTwo{} can both be checked for the whole graph in
a single run in linear time. While we can check condition \prThree{} when
summing up the lighter incident edges for condition \prFour{}, exhaustively
checking all triangles incurs a strictly worse than linear runtime, as a graph
can have up to $\Theta(m^{3/2})$ triangles~\cite{schank2005finding}. Thus, we
only perform linear-time runs as developed by
Chekuri~\etal\cite{Chekuri:1997:ESM:314161.314315} by marking the neighborhood
of $u$ and $v$ while we check the conditions and do not perform the test on
marked vertices.

\subsubsection{Vertices with One Neighbor}
\label{sss:degreeone}

Over the run of the algorithm, we occasionally encounter vertices that have only
a single neighbor. Let $v$ be this vertex with one neighbor and $e = (v,w)$ be
the only incident edge. As we update $\hat\lambda$ to the minimum degree
whenever we perform a bulk edge contraction, $c(e) \geq \hat\lambda$: for an
edge whose weight is $> \hat\lambda$, condition \prOne{} will contract it. For
an edge whose weight is $\hat\lambda$, the edge represents a trivial minimum cut
iff $\hat\lambda = \lambda$. This is the only minimum cut that contains $e$, as
every non-trivial cut containing $e$ has higher weight. Thus, we can contract
$e$ for now and remember that it was contracted. If $\hat\lambda$ is decreased,
we can forget about these vertices as the cuts are not minimal. When we are
finished, we can re-insert all contracted vertices that have a trivial minimum
cut. We perform this reinsertion in a bottom-up fashion (\ie in reverse order to
how they were contracted), as the neighbor $w$ could be contracted in a later
contraction. 

\subsection{Finding All Minimum Cuts}
\label{ss:cactus}

We apply the reductions in the previous section exhaustively until they are not
able to find a significant number of edges to contract. On the remaining graph
we aim to find the cactus representation of all minimum cuts. Our algorithm for
this purpose is based on the algorithm of Nagamochi, Nakao and
Ibaraki~\cite{nagamochi2000fast}. While there is a multitude of algorithms for
the problem of finding all minimum cuts, to the best of our knowledge there are
no implementations accessible to the public and there is no practical
experimentation on finding all minimum cuts. We base our algorithm on the
algorithm of Nagamochi, Nakao and Ibaraki~\cite{nagamochi2000fast}, as their
algorithm allows us to run the reduction routines previously detailed in between
recursion steps.

We give a quick sketch of their algorithm, for further details we refer the reader to
\cite{nagamochi2000fast}. To find all minimum cuts in graph $G$, the algorithm
chooses an edge $e=(s,t)$ in $G$ and uses a maximum flow $f$ to find the minimum
s-t-cut $\lambda(s,t)$. If $\lambda(s,t) > \lambda$ there is no minimum cut that
separates $s$ and $t$ and thus $e$ can be contracted. If $\lambda(s,t) =
\lambda$, the edge is part of at least one minimum cut. They show that the
strongly connected components $(V_1,\dots,V_k)$ of the residual graph $G_f$
represent all minimum cuts that contain $e$ (and potentially some more). For
each connected component $V_i$, they build a graph $C_i$, in which all other
connected components are contracted into a single vertex. We recurse on these
component subgraphs and afterwards combine the minimum cut cactus graphs of the
recursive calls to a cactus representation for $G$. The combination of the
cactus graphs begins by building a cactus graph $C$ representing the set of
strongly connected components, in which each $V_i$ is represented by a single
vertex $v_i$. Each cactus $C_i$ is then merged with $C$ by replacing $v_i$ with
$C_i$. Inside this algorithm we re-run the contraction routines of
Section~\ref{ss:contraction}. As they incur some computational cost and the
graph does not change too much over different recursion steps, we only run the
contraction routines every $10$ recursion levels.

As the contraction routines in Section~\ref{ss:contraction} usually mark a large
amount of edges that can be contracted in bulk, we represent the graph in the
compressed sparse row format~\cite{tewarson1973sparse}. This allows for fast and
memory-efficient accesses to vertices and edges, however, we need to completely
rebuild the graph in each bulk contraction and also keep vertex information
about the whole graph hierarchy to be able to see which vertices in the original
graph are encompassed in a vertex in a coarser vertex and to be able to
re-introduce the cactus edges that were removed. While this is efficient for the
bulk contractions performed in the previous section, in this section we often
perform single-edge contractions or contract a small block of vertices. For fast
running times these operations should not incur a complete rebuild of the graph
data structure. We therefore use a mutable adjacency list data structure where
each vertex is represented by a dynamic array of edges to neighboring vertices.
Each edge stores its weight, target and the ID of its reverse edge (as we look
at undirected graphs). This allows us to contract edges and small blocks in time
corresponding to the sum of vertex degrees. For each vertex in the original
graph, we store information which vertex currently encompasses it and every
vertex keeps a list of currently encompassed vertices of the original graph. All
vertex and edge information is updated during each edge contraction. The same
graph data structure is also used for the \emph{multiterminal cut problem} in
Part~\ref{p:mtc} of this work.

\subsubsection{Edge Selection}

The recursive algorithm of Nagamochi, Nakao and Ibaraki~\cite{nagamochi2000fast}
selects an arbitrary edge for the maximum flow problem in each recursion step.
If this edge has connectivity equal to the minimum cut, we create a recursive
subproblem for each connected component of the residual graph. In order to
reduce the graph size - and thus the amount of work necessary - quickly, we aim
to select edges in which the largest connected component of the residual graph
is as small as possible. The edge selection strategy \textttA{Heavy} searches for
the highest degree vertex $v$ and chooses the edge from $v$ to its highest
degree neighbor. The strategy \textttA{WeightedHeavy} does the same, but uses the
vertices whose weighted degree is highest. The idea is that an edge between
high-degree vertices is most likely 'central' to the graph and thus manages to
separate sizable chunks from the graph. The edge selection strategy
\textttA{Central} aims to find a central edge more directly: we aim to find two
vertices $u$ and $v$ with a high distance and take the central edge in their
shortest paths. We find those vertices by performing a breadth-first search from
a random vertex $w$, afterwards performing a breadth-first search from the
vertex encountered last. We then take the central edge in the shortest path (as
defined from the second breadth-first search) from the two vertices encountered
last in the two breadth-first searches. The edge selection strategy
\textttA{Random} picks a random edge.

\subsubsection{Degree-two Reductions}
\label{sss:degreetwo}

Over the course of this recursive contraction-based algorithm, we often
encounter vertices with just two neighbors. Let $v$ be the vertex in question,
which is connected to $u_0$ by edge $e_0$ and to $u_1$ by edge $e_1$. We look at
four cases, each looking at whether the weight of $e_0$ being equal to the
weight of $e_1$ and $c(v)$ being equal to $\lambda$, both conditions that can be
checked in constant time. In three out of four cases, we are able to contract an
incident edge.

$c(e_0) \neq c(e_1)$ and $c(v) > \lambda$: Without loss of generality let $e_0$
be the heavier edge. As $c(v) > \lambda$, the trivial cut $(\{v\}, V \backslash
\{v\})$ is not a minimum cut. As by definition no cut in $G$ is smaller than
$\lambda$, $\lambda(u_0, u_1) \geq \lambda$. Thus, excluding the path through
$v$, they have a connectivity of $\geq \lambda - c(e_1)$ and any cut containing
$e_0$ has weight $\geq \lambda - c(e_1) + c(e_0) > \lambda$ and can thus not be
minimal. We therefore know that $e_0$ is not part of any minimum cuts and can be
contracted according to Lemma~\ref{lem:contractible}.

$c(e_0) \neq c(e_1)$ and $c(v) = \lambda$: Without loss of generality let $e_0$
be the heavier edge. Analogously to the previous case we can show that no
nontrivial cut contains $e_0$. In this case, where $c(v) = \lambda$, the trivial
cut $(\{v\}, V \backslash \{v\})$ is minimal and therefore should be represented
in the cactus graph. For all other minimum cuts that contain $e_1$, we know that
$v$ and $u_0$ will be in the same block (as $c(e_0) > c(e_1)$). Thus, $v$ will
be represented in the cactus as a leaf incident to $u_0$. We contract $e_0$
calling the resulting vertex $u^*$ and store which vertices of the original
graph are represented by $v$. Then we recurse. On return from the recursion we
check which cactus vertex now encompasses $u^*$ and add an edge from this vertex
to a newly added vertex representing all vertices encompassed by $v$.

$c(e_0) = c(e_1)$ and $c(v) > \lambda$: in this case we are not able to contract
any edges without further connectivity information.

$c(e_0) = c(e_1)$ and $c(v) = \lambda$: as $c(v) = \lambda$, the trivial cut
$(\{v\}, V \backslash \{v\})$ is minimal. If there are other minimum cuts that
contain either $e_0$ or $e_1$ (e.g. that separate $u_0$ and $u_1$), we know that
by replacing $e_0$ with $e_1$ (or vice-versa) the cut remains minimal. Such a
minimum cut exists iff $\lambda(u_0, u_1) = \lambda$. We contract $e_0$ and
remember this decision. As $e_1$ is still in the graph (merged with $(u_0,
u_1)$), we are able to find each cut that separates $u_0$ and $u_1$. If none
exists, $\lambda(u_0, u_1) > \lambda$ and $u_0$ and $u_1$ will be contracted
later in the algorithm. When leaving the recursion, we can thus re-introduce
vertex $v$ as a leaf connected to the vertex encompassing $u_0$ and $u_1$. If
$u_0$ and $u_1$ are in different vertices after leaving the recursion, there is
at least one nontrivial cut that contains $e_1$. We thus re-introduce $v$ as a
cycle vertex connected to $u_0$ and $u_1$, each with weight $\frac{\lambda}{2}$,
and subtract $\frac{\lambda}{2}$ from $c(u_0, u_1)$.

In three out of the four cases presented here, we are able to contract an edge
incident to a degree-two vertex. We can check these conditions in total time
$\Oh{n}$ for the whole graph. Over the course of the algorithm, we perform edge
contractions and thus routinely encounter vertices whose neighborhood has been
contracted and thus have a degree of two. Thus, these reductions are able to
reduce the size of the graph significantly even if the initial graph is rather
dense and does not have a lot of low degree vertices.

\subsection{Putting it All Together}
\label{ss:combine}

\begin{algorithm}[t]
  \caption{Algorithm to find all minimum cuts}\label{alg:overview}
  \begin{algorithmic}[1]
  \Procedure{FindAllMincuts}{$G = (V,E)$}
  \State $\hat\lambda \gets \textttA{VieCut}(G)$ \cite{henzinger2018practical}
  \While{not converged}
  \State $(G,D_1, \hat\lambda) \gets \text{contract degree-one vertices}(G,
  \hat\lambda)$
  \State $(G, \hat\lambda) \gets \text{connectivity-based contraction}(G, \hat\lambda)$ 
  \State $(G, \hat\lambda) \gets \text{local  contraction}(G, \hat\lambda)$
  \EndWhile
  \State $\lambda \gets \text{FindMinimumCutValue}(G)$
  \State $C \gets \text{RecursiveAllMincuts}(G,\lambda)$
  (\cite{nagamochi2000fast})
  \State $C \gets \text{reinsert vertices}(C, D_1)$
  \State \Return $(C, \lambda)$
  \EndProcedure
  \end{algorithmic}
  \end{algorithm}

Algorithm~\ref{alg:overview} gives an overview over our algorithm to find all
minimum cuts. Over the course of the algorithm we keep an upper bound
$\hat\lambda$ for the minimum cut, initially set to the result of the inexact
variant of the \textttA{VieCut} minimum cut
algorithm~\cite{henzinger2018practical} (Chapter~\ref{c:viecut}). While the
\textttA{VieCut} algorithm also offers an exact
version~\cite{DBLP:conf/ipps/HenzingerN019} (Chapter~\ref{c:exmc}), we use the
inexact version, as it is considerably faster and gives a low upper bound for
the minimum cut, usually equal to the minimum cut. As described in
Section~\ref{ss:contraction}, we use this bound to contract degree-one vertices,
high-connectivity edges and edges whose local neighborhood guarantees that they
are not part of any minimum cut. We repeat this process until it is converged,
as an edge contraction can cause other edges in the neighborhood to also become
safely contractible. As this process often incurs a long tail of single edge
contractions, we stop if the number of vertices was decreased by less than $1\%$
over a run of all contraction routines. 

We then use the minimum cut algorithm of Nagamochi, Ono and
Ibaraki~\cite{nagamochi1992computing,nagamochi1994implementing} on the remaining
graph, as the following steps need the correct minimum cut. To find all minimum
cuts in the contracted graph, we call our optimized version of the algorithm of
Nagamochi~\etal\cite{nagamochi2000fast}, as sketched in Section~\ref{ss:cactus},
and afterwards re-insert all minimum cut edges that were previously deleted.
Before each recursive call of the algorithm of
Nagamochi~\etal\cite{nagamochi2000fast}, we contract edges incident to
degree-one and eligible degree-two vertices. Every $10$ recursion levels we
additionally check for connectivity-based edge contractions and local
contractions. 

\subsection{Shared-Memory Parallelism}
Algorithm~\ref{alg:overview} employs shared-memory parallelism in every step.
When we run the algorithm in parallel, we use the parallel variant of
\textttA{VieCut}~\cite{henzinger2018practical}. Local contraction and marking of degree
one vertices are parallelized using OpenMP~\cite{dagum1998openmp}. For the first
round of connectivity-based contraction, we use the parallel connectivity
certificate used in the shared-memory parallel minimum cut algorithm detailed in
Chapter~\ref{c:exmc}~\cite{DBLP:conf/ipps/HenzingerN019}. This connectivity certificate is
essentially a parallel version of the connectivity certificate of
Nagamochi~\etal\cite{nagamochi1992computing,nagamochi1994implementing}, in which
the processors divide the work of computing the connectivity bounds for all
edges of the graph. In subsequent iterations every processor runs an independent
run of the connectivity certificate of Nagamochi~\etal on the whole graph
starting from different random vertices in the graph. As the connectivity bounds
given by the algorithm heavily depend on the starting vertex, this allows us to
find significantly more contractible edges per round than running the
connectivity certificate only once. 

We use our exact shared-memory parallel minimum cut algorithm to find the exact
minimum cut of the graph. The algorithm of
Nagamochi~\etal\cite{nagamochi2000fast} is not shared-memory parallel, however
we usually manage to contract the graph to a size proportional to the minimum
cut cactus before calling them. Unfortunately it is not beneficial to perform
the recursive calls embarrassingly parallel, as in almost all cases one of the
connected components of the residual graph contains the vast majority of
vertices and thus also has the overwhelming majority of work. 

\section{Applications} \label{s:balanced}

We can use the minimum cut cactus $C_G$ to find a minimum cut fulfilling certain
balance criteria, such as a most balanced minimum cut, e.g. a minimum cut $(A,V
\backslash A)$ that maximizes min$(|A|,|V \backslash A|)$. Note that this is not
equal to the most balanced $s$-$t$-cut problem, which is NP
hard~\cite{bonsma2007most}. Following that we show how to modify the algorithm
to find the optimal minimum cut for other optimization functions.

One can find a most balanced minimum cut trivially in time $\Oh{(n^*)^3}$, as
one can enumerate all $\Oh{(n^*)^2}$ minimum cuts~\cite{karger2000minimum} and
add up the number of vertices of the original graph $G$ on either side. We now
show how to find a most balanced minimum cut of a graph $G$ in $\Oh{n^* + m^*}$
time, given the minimum cut cactus graph $C_G$. 

For every cut $(A, V\backslash A)$, we define the balance $b(A)$ (or
$b(V\backslash A))$ of the cut as the number of vertices of the original graph
encompassed in the lighter side of the cut. Recall that for any node $v \in
V_G$, $c(v)$ is the number of vertices of $G$ represented by $v$. For a leaf $v
\in V_G$, we set its weight $w(v) = c(v)$ and set the balance $b(v)$ to be the
minimum of $w(v)$ and $n - w(v)$. We root $C_G$ in an arbitrary vertex and
depending on that root define $w(v)$ as the sum of vertex weights in the
subcactus rooted in $v$; and $b(v)$ accordingly. For a cycle $C = \{c_1, \dots,
c_i\}$, we define $b(c_j,\dots,c_{k \mod i})$ with $0 \geq j \geq k$ analogously
as the balance of the minimum cut splitting the cycle so that the sub-cacti
rooted in $c_j, \dots, c_{k \mod i}$ are on one side of the cut and the rest are
on the other side (see blue line in Figure~\ref{fig:mostbalanced} for an
example).

\begin{figure}[t!]\centering
  \includegraphics[width=.7\textwidth]{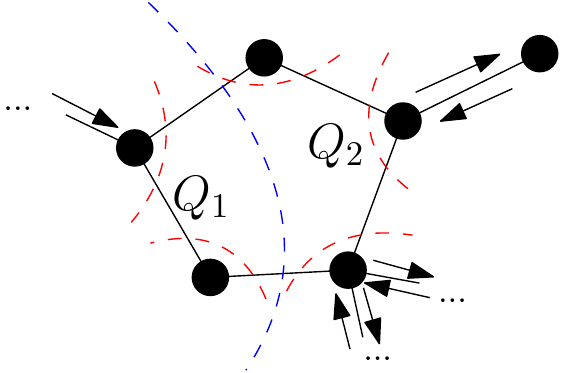}
  \caption{Cycle check in balanced cut algorithm\label{fig:mostbalanced}}
\end{figure}

Let $T_G$ be the tree representation of $C_G$ where each cycle in $C_G$ is
contracted into a single vertex. We perform a depth-first search on $T_G$ rooted
on an arbitrary vertex and check the balance of every cut in $T_G$ when
backtracking. 

As $C_G$ is not necessarily a tree, we might encounter cycles and we explain
next how to extend the depth first search to handle such cycles. Let
$\mathcal{C} = \{c_0,\dots,c_{i-1}\}$ be a cycle and $c_0$ be the vertex
encountered first by the DFS. Due to the cactus graph structure of $C_G$, the
depth-first search backtracks from a vertex $v_{cy}$ in $T_G$ that represents
$\mathcal{C}$ only after all subtrees rooted in $\mathcal{C}$ are explored.
Thus, we know the weight of all subtrees rooted in vertices $c_1,\dots,c_{i-1}$
when backtracking. The weight of $c_0$ is equal to $n$ minus the sum of these
sub-cactus weights. 

Examining all cuts in the cycle would take $i^3$ time, but as we only want to
find the most balanced cut, we can check only a subset of them, as shown in
Algorithm~\ref{alg:cycle}. $Q_1$ and $Q_2$ are \emph{queues}, thus elements are
ordered and the following operations are supported: \emph{queue} adds an element
to the back of the queue, called the \emph{tail} of the queue, \emph{dequeue}
removes the element at the front of the queue, called the \emph{head} of the
queue. We implicitly use the fact that queues can only be appended to, thus an
element $q$ was added to the queue after all elements that are closer to the
head of the queue and before all elements that are closer to its tail.

\begin{algorithm}
  \caption{Algorithm to find most balanced cut in cycle
  $\{c_0,\dots,c_{i-1}\}$}\label{alg:cycle}
  \begin{algorithmic}[1]
  \Procedure{BalanceInCycle}{$G = (V,E), C =\{c_1,\dots,c_i\}$} \State $b_{OPT}
  \gets 0$ \State $Q_1 =$ Queue($\{\}$) \State $Q_2 =$
  Queue($\{c_0,c_1,\dots,c_{i-1}\}$) \While{$c_0$ not $Q_1$.head() for second
  time} \State $b_{OPT} \gets$ checkBalance($Q_1, Q_2$) \If{$w(Q_1) > w(Q_2)$}
  \State $Q_2$.queue($Q_1$.dequeue()) \Else \State $Q_1$.queue($Q_2$.dequeue())
  \EndIf \EndWhile \Return $b_{OPT}$ \EndProcedure
  \end{algorithmic}
  \end{algorithm}

The weight of a queue $w(Q)$ is denoted as the weight of its contents. For queue
$Q = \{c_{j \bmod i},\dots,c_{k \bmod i}\}$ with $0 \leq j \leq k$, we use the
notation $w_{j \bmod i,k \bmod i}$ to denote the weight of $Q$ and
$\overline{w_{j \bmod i, k \bmod i}}$ as the weight of the queue that contains
all cycle vertices not in $Q$. 

In every step of the algorithm, the cut represented by the current state of the
queues consists of the two edges connecting the queue heads to the tails of the
respective other queue. Initially $Q_1$ is empty and $Q_2$ contains all
elements, in order from $c_0$ to $c_{i-1}$. In every step of the algorithm, we
dequeue one element and queue it in the other queue. Thus, at every step each
cycle vertex is in exactly one queue. When we check the balance of a cut, we
compute the weight of each queue at the current point in time; and update
$b_{OPT}$, the best balance found so far, if $(Q_1, Q_2)$ is more balanced. As
we only move one cycle vertex in each step, we can check the balance of an
adjacent cut in constant time by adding and subtracting the weight of the moved
vertex to the weights of each set.

\begin{lemma}
  Algorithm~\ref{alg:cycle} terminates after $O(i)$ steps.
\end{lemma}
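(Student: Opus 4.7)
The plan is to track the state of the algorithm through two ``pointers'' on the cycle edges and show that each makes only a bounded number of advances. After iteration~1, define $a$ to be the cycle edge between the tail of $Q_2$ and the head of $Q_1$, and $b$ to be the edge between the tail of $Q_1$ and the head of $Q_2$. A straightforward induction on the iteration count establishes the invariant that $Q_1$ is always a contiguous arc of the cycle with $a$ and $b$ at its two endpoints and $Q_2$ is the complementary arc; under this invariant, each loop iteration advances exactly one of $a$ or $b$ forward by exactly one edge in the cyclic direction. Indeed, if $w(Q_1) > w(Q_2)$ the head of $Q_1$ is moved to $Q_2$, which steps $a$ past that vertex, and otherwise the head of $Q_2$ is moved to $Q_1$, which steps $b$ past it.

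Consequently, the total iteration count equals $t_a + t_b$, where $t_a$ and $t_b$ denote the number of advances of $a$ and $b$. The first (and easy) step is to bound $t_a$. After iteration~1 we have $Q_1 = \{c_0\}$ and therefore $a = (c_{i-1}, c_0)$, which is the first moment at which $c_0$ is the head of $Q_1$. The loop exits precisely the \emph{second} time $c_0$ is the head of $Q_1$, which coincides with the first time $a$ returns to the edge $(c_{i-1}, c_0)$. Because $a$ only moves forward and by one edge at a time in the cyclic order, this requires a full cyclic traversal, giving $t_a = i$ and implying that $a$ passes each cycle vertex exactly once during the run.

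The main step, and what I expect to be the crux, is bounding $t_b$. The key observation is that every cycle vertex lies in exactly one of $Q_1, Q_2$ at all times, vertex $c_j$ enters $Q_1$ as the new tail precisely when $b$ advances past $c_j$, and $c_j$ leaves $Q_1$ from the head precisely when $a$ advances past $c_j$. Hence, once $b$ has passed $c_j$, the vertex $c_j$ must first be removed from $Q_1$ by an advance of $a$ before $b$ can pass $c_j$ a second time. Combined with the earlier bound $t_a = i$, which says that $a$ passes each vertex at most once, this gives that $b$ passes any fixed vertex at most twice. Summing over the $i$ cycle vertices yields $t_b \leq 2i$, so the total number of loop iterations is at most $t_a + t_b \leq 3i = O(i)$, as claimed.
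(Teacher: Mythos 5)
Your proof is correct and arrives at the same $3i$ bound, but the bookkeeping is genuinely different from the paper's. The paper argues element-by-element: it tracks $c_0$, observes that it is moved exactly three times (into $Q_1$, out to $Q_2$, and back into $Q_1$) before the termination condition fires, and then invokes the FIFO discipline --- no element can ``overtake'' $c_0$ in the queues --- to conclude that every vertex is moved at most three times, hence at most $3i$ iterations. You instead track the two boundary edges of the arc held by $Q_1$ as monotone, forward-moving pointers: $a$ makes exactly one full tour of the cycle (forced by the termination condition), and $b$ passes each vertex at most twice because consecutive $b$-passes of a vertex must be separated by an $a$-pass. The resulting per-vertex counts are identical --- each vertex is dequeued from $Q_1$ at most once and enqueued into $Q_1$ at most twice --- so this is the same count under different packaging; what your version buys is that the paper's informal no-overtaking claim is replaced by an explicit invariant (the cyclic order of elements across the two queues is preserved, so $Q_1$ is always a contiguous arc), making the argument more self-contained, at the cost of being longer than the paper's three-moves-of-$c_0$ observation. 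One small point worth a sentence in a polished write-up: $Q_1$ can momentarily become empty after the first iteration (if its single remaining element is heavier than all of $Q_2$), so the pointers $a$ and $b$ degenerate for one step; the advance accounting still goes through, but the contiguous-arc invariant should be stated to cover the empty arc.
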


\begin{proof}
  In each step of Algorithm~\ref{alg:cycle}, one queue head is moved to the
  other queue. The algorithm terminates when $c_0$ is the head of $Q_1$ for the
  second time. In the first step, $c_0$ is moved to $Q_1$, as the empty queue
  $Q_1$ is the lighter one. The algorithm terminates after $c_0$ then performs a
  full round through both queues and is the head of $Q_1$ again. At termination,
  $c_0$ was thus moved a total of three times, twice from $Q_2$ to $Q_1$ and
  once the other way. As no element can 'overtake' $c_0$ in the queues, every
  vertex will be moved at most three times. Thus, we enter the loop at most $3i$
  times, each time only using a constant~amount~of~time.
\end{proof}

In Algorithm~\ref{alg:cycle}, we only check the balance of a subset of cuts
represented by edges in the cycle $C$. Lemma~\ref{lem:mostbal} shows that none
of the disregarded cuts can have balance better than $b_{OPT}$ and we thus find
the most balanced minimum cut. We call a cut disregarded if its balance was
never checked (Line $6$), and considered otherwise. In order to prove
correctness of Algorithm~\ref{alg:cycle}, we first show the following Lemma:

\begin{lemma}\label{lem:head} Each vertex in the cycle is dequeued from $Q_1$ at
least once in the algorithm.
\end{lemma}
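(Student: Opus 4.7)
The plan is to track the contents of the two queues via a structural invariant that exposes the cyclic nature of the algorithm. At any moment in the loop I will claim that there exist indices $j, k$ with $j - 1 \leq k \leq j + i - 1$ such that
$Q_1 = [c_{j \bmod i}, c_{(j+1) \bmod i}, \ldots, c_{k \bmod i}]$ (empty when $k = j - 1$) and $Q_2$ holds the remaining cycle vertices in cyclic order $[c_{(k+1) \bmod i}, \ldots, c_{(j-1) \bmod i}]$. This invariant is established by induction on the iteration count: it holds at initialization with $j = 0$, $k = -1$, $Q_1 = \emptyset$, $Q_2 = [c_0, \ldots, c_{i-1}]$; and each branch of the loop body advances exactly one pointer by one, since moving the head of $Q_2$ to the tail of $Q_1$ increments $k$ while moving the head of $Q_1$ to the tail of $Q_2$ increments $j$. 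FIFO semantics of the queues guarantee that the order within each arc is preserved.

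Given the invariant, the lemma reduces to bookkeeping on $j$. The pointer $j$ is monotonically non-decreasing, and it increases by exactly one whenever $c_{j \bmod i}$ is dequeued from $Q_1$. The head of $Q_1$ is $c_{j \bmod i}$, so $c_0$ being at the head of $Q_1$ is equivalent to $j \equiv 0 \pmod{i}$. The first such occurrence is at iteration $1$, where $j = 0$. The loop only terminates when $c_0$ is at the head of $Q_1$ for a second time, i.e.\ when $j$ reaches a positive multiple of $i$. Thus $j$ must have been incremented at least $i$ times before termination, and each increment dequeues a distinct vertex of the cycle from $Q_1$. Since the incremented values $j = 0, 1, \ldots, i-1$ correspond modulo $i$ to $c_0, c_1, \ldots, c_{i-1}$, every cycle vertex is dequeued from $Q_1$ at least once.

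The main obstacle will be making the structural invariant airtight around a few boundary situations: the very first iteration (in which $Q_1$ is empty and the pointer $j$ is not really ``inside'' $Q_1$ yet), the moment $c_0$ is first dequeued from $Q_1$ (when $j$ jumps from $0$ to $1$, so the head role is transferred to $c_1$), and the tie-breaking rule $w(Q_1) = w(Q_2)$, which by the code dequeues from $Q_2$ and must still be covered by the inductive step. None of these change the conclusion, but I would spell each of them out explicitly so that the inductive step remains clean and no off-by-one error creeps into the counting of how many times $j$ is incremented.
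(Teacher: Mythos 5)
Your proof is correct, and it reaches the conclusion by a different route than the paper. The paper's proof is a short event-ordering argument that never describes the queue contents explicitly: it observes that for $c_0$ to head $Q_1$ a second time it must be dequeued from $Q_2$ a second time, that FIFO order forces every other vertex to leave $Q_2$ before that happens, and that every element enqueued into $Q_1$ ahead of $c_0$'s return must be dequeued from $Q_1$ before $c_0$ can reach its head. You instead establish a structural invariant --- at every step $Q_1$ and $Q_2$ each hold a contiguous cyclic arc, parametrized by two pointers $j$ and $k$ that each branch of the loop advances by one --- and then reduce the lemma to the observation that termination forces $j$ to climb from $0$ to $i$, with each increment dequeuing a distinct $c_{j \bmod i}$ from $Q_1$. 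Your version costs more setup (and, as you note, needs care at the boundaries where $Q_1$ is empty and at the tie-breaking branch), but it buys more: the invariant immediately yields the $O(i)$ termination bound of the preceding lemma, and it makes explicit the ``$Q_1$ is the arc $\{c_{l-1 \bmod i},\dots,c_{k \bmod i}\}$'' picture that the paper only introduces informally later, in the proof of Lemma~\ref{lem:mostbal}. The paper's argument is the more economical proof of this one statement; yours is the better foundation if one wants to verify the whole cycle-scanning analysis in one pass.
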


\begin{proof}
  The algorithm terminates when $c_0$ is the head of $Q_1$ for the second time.
  For this, it needs to be moved from $Q_2$ to $Q_1$ twice. As we queue elements
  to the back of a queue, all vertices are dequeued from $Q_2$ before $c_0$ is
  dequeued from it for the second time. In order for $c_0$ to become the head of
  $Q_1$ again, all elements that were added beforehand need to be dequeued from
  $Q_1$.
\end{proof}

\begin{lemma}\label{lem:mostbal} Algorithm~\ref{alg:cycle} finds the most
  balanced minimum cut represented by cycle~$C$.
\end{lemma}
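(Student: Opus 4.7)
The plan is to show that Algorithm~\ref{alg:cycle} implicitly enumerates a sweep over every ``boundary position'' along the cycle, and that for each such position the considered cut is at least as balanced as any competing cut with that boundary. First I would establish the structural invariant that throughout the algorithm both $Q_1$ and $Q_2$ always contain a contiguous arc of the cycle, and together they cover all $i$ cycle vertices. This follows by induction on the number of iterations: the queue move in the loop takes the head of one queue and appends it to the tail of the other, so the arc represented by each queue shifts by one position but stays contiguous. Since cuts in a cycle in a cactus graph correspond precisely to selections of two cycle edges, every such cut partitions the cycle vertices into two contiguous arcs $(A,B)$, which is exactly the type of configuration that the state $(Q_1, Q_2)$ represents.

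Next I would use Lemma~\ref{lem:head}: every cycle vertex $c_p$ is the head of $Q_1$ at some iteration. Consider an arbitrary minimum cut with sides corresponding to arcs $(A,B)$ in the cycle, and let $c_p$ be the ``leading endpoint'' of $A$ (the vertex that appears first when traversing $A$ starting from $c_0$). Let $t$ be an iteration at which $c_p$ is the head of $Q_1$. At this iteration, $Q_1$ is some contiguous arc starting at $c_p$, and we want to argue that its balance is no worse than $b(A)$.

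The hard part will be this comparison step, i.e.\ showing that the arc held by $Q_1$ at time $t$ is ``at least as balanced'' as the reference arc $A$. The key tool is the greedy invariant of the algorithm: at every iteration the head of the \emph{heavier} queue is transferred to the lighter queue, so the sequence of weight differences $|w(Q_1)-w(Q_2)|$ is locally driven toward zero. I would formalize this via a monotonicity/exchange argument: for a fixed head $c_p$ of $Q_1$, as the tail of $Q_1$ advances around the cycle the weight $w(Q_1)$ changes monotonically. Hence among all contiguous arcs of the cycle starting at $c_p$, there is a unique ``crossover'' configuration where adding one more vertex would flip which queue is heavier; one can show by case analysis on the transitions of the algorithm that the state reached when $c_p$ is the head of $Q_1$ is either this crossover configuration or one of its two neighbors, and that among these three the most balanced is checked. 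Comparing with the reference cut $(A,B)$, any other tail choice for the arc starting at $c_p$ yields a larger value of $|w(Q_1)-w(Q_2)|$ and therefore smaller balance. Combined with the fact that Lemma~\ref{lem:head} lets us range $c_p$ over every cycle vertex, this proves that the optimum $b_\mathrm{OPT}$ returned by the algorithm matches the balance of the most balanced cycle cut.

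Finally I would observe that Algorithm~\ref{alg:cycle} only inspects balance values that come from actual cuts of the cactus (by the invariant and by construction of the balance check in line~6), so $b_\mathrm{OPT}$ is never an overestimate. Combined with the lower bound argument above, this yields equality and establishes the lemma.
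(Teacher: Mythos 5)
Your overall strategy --- the contiguous-arc invariant, Lemma~\ref{lem:head} to let the head range over all cycle vertices, and monotonicity of $w(Q_1)$ in the tail position to locate the most balanced arc at the weight crossover --- is the same sweep argument the paper uses. However, there is a genuine gap in your key comparison step, namely the claim that when $c_p$ is the head of $Q_1$ the algorithm reaches the crossover configuration or a neighbor ``and that among these three the most balanced is checked.'' Consider the case where, at the moment $c_p$ first becomes the head of $Q_1$ (immediately after $c_{p-1}$ was dequeued from the then-heavier $Q_1$), the remaining arc $Q_1=\{c_p,\dots,c_k\}$ is \emph{already} heavier than its complement. Then the very next move dequeues $c_p$, so the only cut ever checked with $c_p$ as head is the one with tail $c_k$; the neighboring configuration with tail $c_{k-1}$, which may well be the more balanced of the two (it is the last arc on the light side of the crossover), is never checked while $c_p$ is the head. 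Since your plan compares each reference cut only against checked cuts sharing the same leading endpoint $c_p$, it cannot rule out that this missed configuration is the unique most balanced cut.

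The paper closes exactly this case by stepping outside the fixed head: it goes back to the time $t'$ at which $c_k$ was enqueued into $Q_1$. At that moment $Q_1$ contained $\{c_{p-1},\dots,c_{k-1}\}$ (possibly more vertices at its head end) and was not heavier than its complement --- otherwise $c_k$ would not have been added --- and its balance was checked just before the move. Since $\{c_p,\dots,c_{k-1}\}$ is a sub-arc of that checked arc and both lie on the light side, the cut checked at time $t'$ is at least as balanced as the missed cut, and arcs with head $c_p$ and even shorter tails are dominated a fortiori. To repair your proof you need this cross-head comparison (or an equivalent argument); the within-head monotonicity alone does not suffice.
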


\begin{proof}
  We now prove for each $c_l \in \mathcal{C}$ that all disregarded cuts
  containing the cycle edge separating $c_l$ from $c_{(l-1) \bmod i}$ are not
  more balanced than the most balanced cut found so far. As no disregarded cut
  can be more balanced than the most balanced cut considered in the algorithm,
  the output of the algorithm is the most balanced minimum cut; or one of them
  if multiple cuts of equal balance exist.

  \begin{figure}[H] \centering
    \includegraphics[width=.7\textwidth]{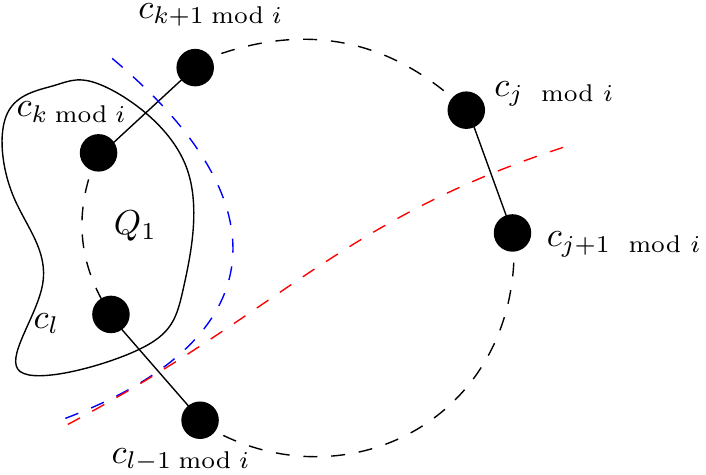}
    \caption{State of $Q_1$ at time $t_l$ (cut in blue). Cut in red denotes cut
    considered at time $t^*$\label{fig:cycle_slice}}.
    \vspace*{-.5cm}
  \end{figure}

  Let $t_l$ be the time that $c_l$ becomes the head of $Q_1$ for the first time.
  Figure~\ref{fig:cycle_slice} shows the state of $Q_1$ at that point in time.
  Let $c_{k \bmod i}$ be the tail of $Q_1$ at time $t_l$ for some integer $k$.
  Right before $t_l$, $c_{l-1 \bmod i}$ was head of the heavier queue $Q_1$ and
  thus dequeued, \ie $Q_1 = \{c_{l-1 \bmod i},\dots,c_{k \bmod i}\}$ has weight
  $w_{l-1 \bmod i, k \bmod i} \geq \overline{w_{l-1 \bmod i, k \bmod i}}$ and
  $c_l$ is now head of $Q_1$.

  From this point $t_l$ the algorithm considers cuts that separate $c_l$ from
  $c_{l-1 \bmod i}$. While $Q_1$ is not heavier than $Q_2$, we add more elements
  to the tail of $Q_1$ (and check the respective cuts) until $Q_1$ is the
  heavier queue. Let $t^*$ be the time when this happens and $c_{j \bmod i}$
  with $j \geq k$ be the tail of $Q_1$ at this point. Note that at time $t^*$,
  $c_l$ is about to be dequeued from $Q_1$. The red cut in
  Figure~\ref{fig:cycle_slice} shows the cut at time $t^*$, where $w_{c_l,c_{j
  \bmod i}} > \overline{w_{c_l,c_{j \bmod i}}}$.
  
  We now prove that all cuts in which $c_l$ is the head of $Q_1$ and its tail is
  not between $c_{k \bmod i}$ and $c_{j \bmod i}$ cannot be more balanced than
  the most balanced cut considered so far.
  
  For all cuts where $c_l$ is head of $Q_1$ and $Q_1$ also contains $c_{j+1
  \bmod i}$, $Q_1$ is heavier than $w_{l,j \bmod i}$, as it contains all
  elements in $c_l,\dots,c_{j \bmod i}$ plus at least one more. As $w_{l,j \bmod
  i} > \overline{w_{l,j \bmod i}}$, \ie $Q_1$ is already heavier when $c_{j
  \bmod i}$ is its tail, all of these cuts are less balanced than
  $(\{c_l,\dots,c_{j \bmod i}\}, \mathcal{C} \backslash \{c_l,\dots,c_{j \bmod
  i}\})$.

  For the cuts in which $c_{k \bmod i}$ is in $Q_2$, \ie $Q_1$ is lighter than
  at time $t_l$, we need to distinguish two cases, depending on whether $w_{l,k
  \bmod i}$ is larger than $\overline{w_{l,k \bmod i}}$ or not.

  If $w_{l,k \bmod i} \leq \overline{w_{l,k \bmod i}}$, all cuts in which $c_l$
  is the head of $Q_1$ and $c_{k \bmod i}$ is in $Q_2$ are less balanced than
  $(\{c_l,\dots,c_{k \bmod i}\}, \mathcal{C} \backslash \{c_l,\dots,c_{k \bmod
  i}\})$, as $Q_1$ is lighter than it is at $t_l$, where it was already not the
  heavier queue.

  If $w_{l,k \bmod i} > \overline{w_{l,k \bmod i}}$, there might be cuts in
  which $c_l$ is the head of $Q_1$ that are more balanced than
  $(\{c_l,\dots,c_{k \bmod i}\}, \mathcal{C} \backslash \{c_l,\dots,c_{k \bmod
  i}\})$ in which $Q_1$ is lighter than at time $t_l$. Thus, consider time $t'$
  when $c_{k \bmod i}$ was added to $Q_1$. Such a time must exist, since $Q_1$
  is initially empty. As $c_{k \bmod i}$ is already the tail of $Q_1$ at time
  $t_l$, $t' < t_l$. At that time $Q_1$ contained $c_{l-1 \bmod i}, \dots,
  c_{k-1 \bmod i}$ and potentially more vertices. 
  
  Still, $w_{l-1 \bmod i, k-1 \bmod i} \leq \overline{w_{l-1 \bmod i, k-1 \bmod
  i}}$, as otherwise $c_{k \bmod i}$ would not have been added to $Q_1$.
  Obviously $w_{l-1 \bmod i, k-1 \bmod i} > w_{l, k-1 \bmod i}$, as $Q_1$ is
  even lighter when $c_{l-1 \bmod i}$ is dequeued. As $w_{l-1 \bmod i, k-1 \bmod
  i}$ is already not heavier than its complement, $(\{c_l,\dots,c_{k-1 \bmod
  i}\}, \mathcal{C} \backslash \{c_l,\dots,c_{k-1 \bmod i}\})$ is more
  imbalanced than the cut examined just before time $t'$. Thus, all cuts where
  $c_l$ is the head of $Q_1$ and $c_{k-1 \bmod i}$ is in $Q_2$ are even more
  imbalanced, as $Q_1$ is even lighter.

  Coming back to the outline shown in Figure~\ref{fig:cycle_slice}, we showed
  that for all cuts in which $c_l$ is head of $Q_1$ and $Q_1$ is lighter than at
  time $t_l$ (left of blue cut) and all cuts where $Q_1$ is heavier than at time
  $t^*$ (below red cut) can be safely disregarded, as a more balanced cut than
  any of them was considered at some point between $t'$ and $t^*$. The algorithm
  considers next all cuts with $c_l$ as head of $Q_1$ and the tail of $Q_1$
  between $c_{k \bmod i}$ and $c_{j \bmod i}$. Thus, the algorithm will return a
  cut that is at least as balanced as the most balanced cut that separates $c_l$
  and $c_{l-1 \bmod i}$. This is true for every cycle vertex $v_l \in
  \mathcal{C}$, which concludes the proof.
\end{proof}

This allows us to perform the depth-first search and find the most balanced
minimum cut in $C_G$ in time $\Oh{n^* + m^*}$. This algorithm can be adapted to
find the minimum cut of any other optimization function of a cut that only
depends on the (weight of the) edges on the cut and the (weight of the) vertices
on either side of the cut. In order to retain the linear running time of the
algorithm, the function needs to be evaluable in constant time on a neighboring
cut. For example, we can find the minimum cut of lowest conductance. The
conductance of a cut $(S, V \backslash S)$ is defined as
$\frac{\lambda(S,(V\backslash S))}{min(a(S), a(V\backslash S))}$, where $a(S)$ is
the sum of degrees for all vertices in set $S$. Note that this is not the
minimum conductance cut problem, which is NP-hard~\cite{andersen2008algorithm},
as we only look at the minimum cuts. To find the minimum cut of lowest
conductance, we set the weight of a vertex $v_{C_G} \in C_G$ to the sum of
vertex degrees encompassed in $v_{C_G}$. Otherwise the algorithm remains the
same.

\section{Experiments and Results} \label{s:experiments}

We now perform an experimental evaluation of the proposed algorithms. This is
done in the following order: first analyze the impact of algorithmic components
on our minimum cut algorithm in a non-parallel setting, i.e.~we compare
different variants for edge selection and see the impact of the various
optimizations detailed in this work. Afterwards, we report parallel speedup on a
variety of large graphs.

\subsubsection{Experimental Setup and Methodology}

We implemented the algorithms using \CC-17 and compiled all code using g++
version 8.3.0 with full optimization (\textttA{-O3}). Our experiments are
conducted on a machine with two Intel Xeon Gold 6130 processors with 2.1GHz with
16 CPU cores each and $256$ GB RAM in total. We perform five repetitions per
instance and report average running time. In this section we first describe our
experimental methodology. Afterwards, we evaluate different algorithmic choices
in our algorithm and then we compare our algorithm to the state of the art. When
we report a mean result we give the geometric mean as problems differ
significantly in cut size and time.

\subsubsection{Instances}

We use a variety of graphs from the 10th DIMACS Implementation
challenge~\cite{bader2013graph} and the SuiteSparse Matrix
Collection~\cite{davis2011university}. These are social graphs, web graphs,
co-purchase matrices, cooperation networks and some generated instances. If a
network has multiple connected components, we run on the largest. The list of
graphs can be found in Section~\ref{rwgraphs}, where graph family (2A) shows a
set of smaller instances and graph family (2B) shows a set of larger and harder
to solve instances.

\subsection{Edge Selection}

Figure~\ref{fig:edgeselect} shows the results for graph family (2A). We compute the cactus graph representing all minimum cuts
using the edge selection variants \textttA{Random}, \textttA{Central},
\textttA{Heavy} and \textttA{HeavyWeighted}, as detailed in
Section~\ref{ss:cactus}. As we want a majority of the running time in the algorithm of Nagamochi~\etal\cite{nagamochi2000fast}, where we
actually select edges, we run a variant of our algorithm that only contracts
edges using connectivity-based contraction and then runs the algorithm of
Nagamochi~\etal\cite{nagamochi2000fast}.

\begin{figure}[t!]\centering
  \includegraphics[width=\textwidth]{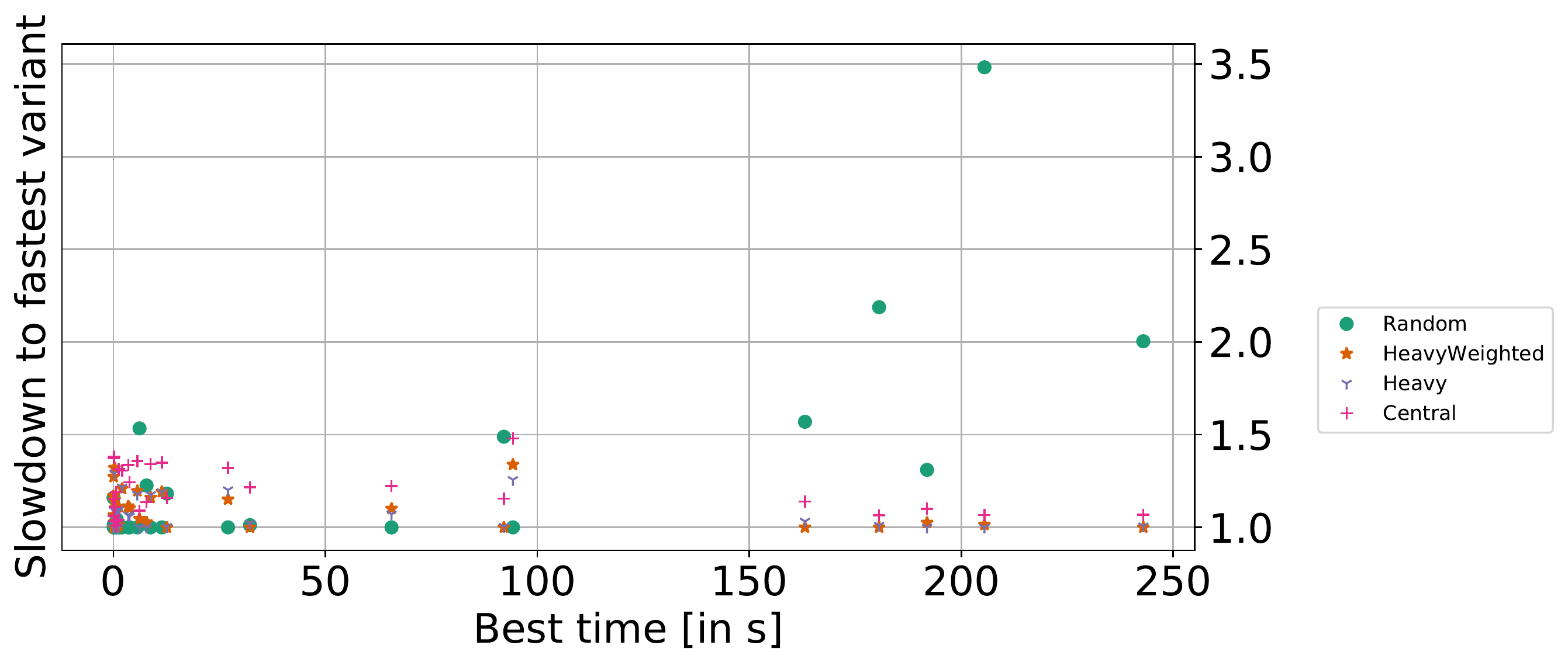}
  \caption{Effect of edge selection strategies.\label{fig:edgeselect}}
\end{figure}%

We can see that in the graphs which cannot be contracted quickly,
\textttA{Random} is significantly slower than all other variants. On
\textttA{cnr-2000}, \textttA{Random} takes over $700$ seconds in average, whereas
all other variants finish in approximately $200$ seconds. This happens
independently of the random seed used, there is no large deviation in the
running time on any of the graphs. On almost all graphs, the variants
\textttA{Heavy} and \textttA{HeavyWeighted} are within $3\%$ of each other, which
is not surprising, as the variants are almost identical. While it optimizes for
'edge centrality' very directly, \textttA{Central} has two iterations of
breadth-first search in each edge selection and thus a sizable overhead. For
this reason it is usually $5-15\%$ slower than \textttA{Heavy} and is not the
fastest algorithm on any graph. On graphs with large $n^*$, all three variants
manage to shrink the graph significantly faster than \textttA{Random}.

On graphs with a low value of $n^*$, we can see that \textttA{Random} is slightly
faster than the other variants. There is no significant difference in the
shrinking of the graph, as almost all selected edges have connectivity larger
than $\lambda$ and thus only trigger a single edge contraction anyway. Thus, not
spending the extra work of finding a `good' edge results in a slightly lower
running time. In the following we will use variant \textttA{Heavy}, which is the
only variant that is never more than $30\%$ slower than the fastest variant on
any graph.

\begin{figure}[t]
    \includegraphics[width=\textwidth]{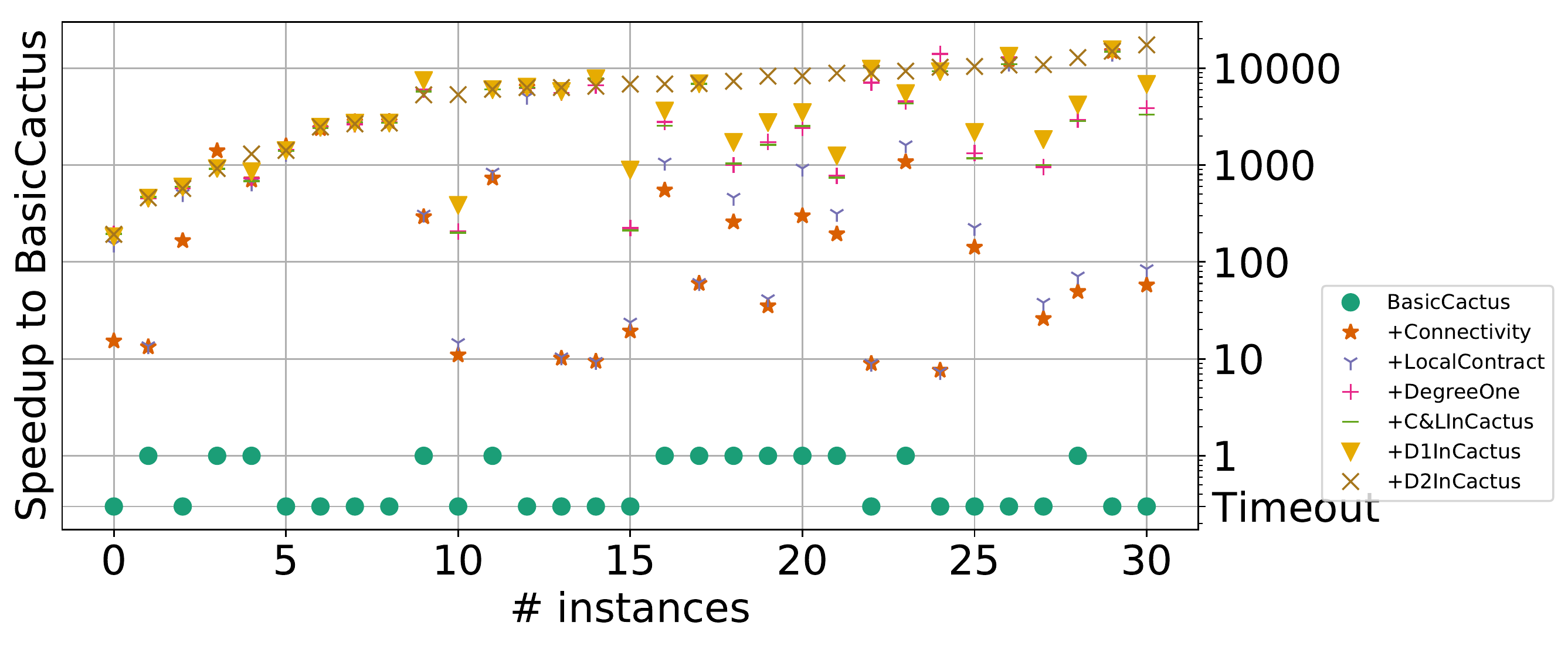}
    \caption{Speedup to \textttA{BasicC} on small graphs
  (Table~\ref{rwgraphs}, graph family 2A)\label{fig:smallopt}}
\end{figure}

\begin{figure}[t]
    \includegraphics[width=\textwidth]{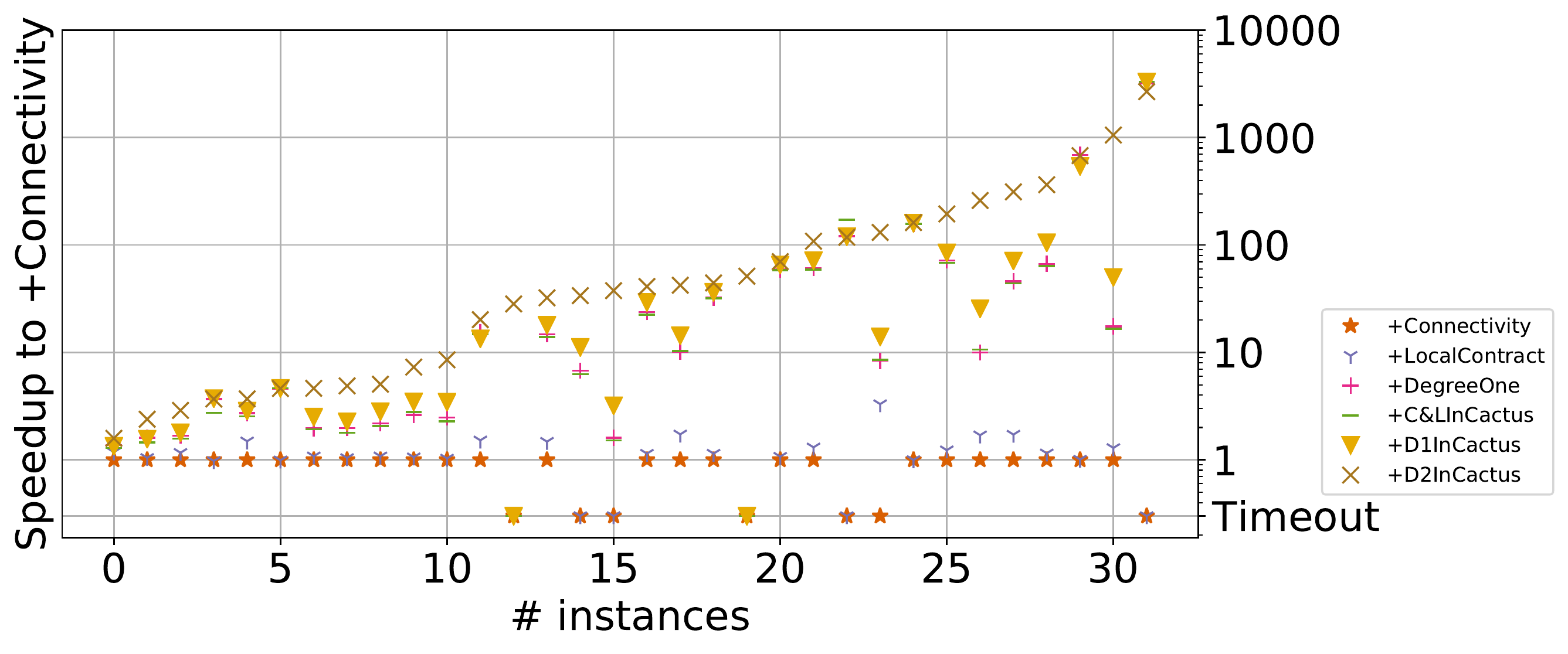}
    \caption{Speedup to \textttA{+Conn} on large graphs
  (Table~\ref{rwgraphs}, graph family 2B)\label{fig:largeopt}}
\end{figure}

\subsection{Optimization}

We now examine the effect of the different optimizations. For this purpose, we
benchmarks different variants on a variety of graphs. We hereby compare the
following variants that build on one another: as a baseline, \optZero{} runs the
algorithm of Nagamochi, Nakao and Ibaraki~\cite{nagamochi2000fast} using
\textttA{Heavy} edge selection on the input graph. \optOne{} additionally runs
\textttA{VieCut}~\cite{henzinger2018practical} to find an upper bound for the minimum cut
and uses this to contract high-connectivity edges as described in
Section~\ref{sss:connectivity}. In addition to this, \optTwo{} also contracts
edges whose neighborhood guarantees that they are not part of any minimum cut,
as described in Section~\ref{sss:local} and Lemma~\ref{lem:local_crit}.
\optThree{} runs also the last remaining contraction routine from
Algorithm~\ref{alg:overview}, contraction and re-insertion of degree-one
vertices as described in Section~\ref{sss:degreeone}. \optFour{} additionally
runs high-connectivity and local contraction in every tenth recursion step.
\optFive{} additionally contracts and re-inserts degree-one vertices in every
recursion step. \optSix{} also runs the degree-two contraction as described in
Section~\ref{sss:degreetwo}. We compare these variants on the graph families
(2A) and (2B) of Table~\ref{rwgraphs}. We use a timeout of $30$ minutes for each
problem. If the baseline algorithm does not finish in the allotted time, we
report speedup to the timeout, so a lower bound for the actual speedup.

Figure~\ref{fig:smallopt} shows the speedup of all variants to the baseline
\optZero{} on all small graphs. We can see that
already just adding \optOne{} gives a speedup of more than an order of magnitude
for each of the graphs in the dataset. Most of the other optimizations manage to
improve the running time of at least some instances by a large margin.
Especially \optThree{}, which is the first contraction for edges that are in a
minimum cut, has speedups of multiple orders of magnitude in some instances.
This is the case as minimum cut edges that are incident to a degree-one vertex
previously incur a flow problem on the whole graph each. However, it is very
easy to see that the edge will be part of exactly one minimum cut, thus we can
contract and re-insert it in constant time. Especially in graphs whose minimum
cut is $1$, all edges can be quickly contracted, as they will either be incident
to a degree-one vertex or be quickly certified to have a connectivity value of
$>1$.

While rerunning \textttA{Connectivity} and \textttA{LocalContract} inside of the
recursive algorithm of Nagamochi~\etal\cite{nagamochi2000fast} does usually not
yield a large speedup, many graphs develop degree-one vertices by having their
whole neighborhood contracted. Thus, \optFive{} has a significant speedup for
most graphs in which $n^*$ is sufficiently large. \optSix{} has an even larger
speedup on these graphs, even when the minimum cut is significantly higher than
$2$, as there are often cascading effects where the contraction of an edge
incident to a degree-two vertex often lowers the degree of neighboring vertices
to two.

Figure~\ref{fig:largeopt} shows the speedup of all variants on large graphs. As
\optZero{} is not able to solve any of these instances in $30$ minutes, we use
\optOne{} as a baseline. The results are similar to Figure~\ref{fig:smallopt},
but we can see even clearer how useful the contraction of degree-two vertices is
in finding all minimum cuts: \optSix{} often has a speedup of more than an order
of magnitude to all other variants and is the only variant that never times out.

\begin{table}[t!]
  \centering
  \caption{Huge social and web graphs. $n^*$ denotes number of vertices in
  cactus graph, max $n$ and max $m$ denote size of smaller block in most
  balanced cut\label{tab:huge}} \resizebox{\textwidth}{!}{
  \begin{tabular}{|r|r|r|r|r|r|r|r|r|r|}
    \hline
    Name & $n$ & $m$ & $n^*$ & $\lambda$ & max. $n$ & max. $m$ & seq. t & par.
    t\\ \hline\hline
    friendster & $65.6$M & $1.81$B & $13.99$M & \numprint{1} & \numprint{897} &
    \numprint{1793} & $1266.35$s & $138.34$s \\\hline
    twitter7 & $41.7$M & $1.20$B & $1.93$M & \numprint{1} & \numprint{47} &
    \numprint{1893} & $524.86$s & $72.51$s \\\hline
    uk-2007-05 & $104.3$M & $3.29$B & $9.66$M & \numprint{1} & \numprint{49984}
    & $13.8$M & $229.18$s & $40.16$s\\\hline
  \end{tabular}
  }
\end{table}

\subsection{Shared-memory Parallelism}

Table~\ref{tab:huge} shows the average running times of our algorithm both
sequential and with $16$ threads on huge social and web graphs. Each of these
graphs has more than a billion of edges and more than a million vertices in the
cactus graph depicting all minimum cuts. On these graphs we have a parallel
speedup factor of $5.7$x to $9.1$x using $16$ threads. On all of these graphs, a
large part of the running time is spent in the first iteration of the
kernelization routines, which already manages to contract most dense blocks in
the graph. Thus, all subsequent operations can be performed on significantly
smaller problems and are therefore much faster.  

\section{Conclusion}

We engineered an algorithm to find all minimum cuts in large undirected graphs.
Our algorithm combines multiple kernelization routines with an engineered
version of the algorithm of Nagamochi, Nakao and
Ibaraki~\cite{nagamochi2000fast} to find all minimum cuts of the reduced graph.
Our experiments show that our algorithm can find all minimum cuts of huge social
networks with up to billions of edges and millions of minimum cuts in a few
minutes on shared memory. We found that especially the contraction of
high-connectivity edges and efficient handling of low-degree vertices can give
huge speedups. Additionally we give a linear time algorithm to find the most
balanced minimum cut given the cactus graph representation of all minimum cuts.
Future work includes finding all near-minimum cuts.

\chapter{Dynamic Minimum Cut}
\label{c:dynmc}


In this chapter, we give the first implementation of a \emph{fully-dynamic
algorithm} for the \emph{minimum cut problem} in a weighted graph. Our algorithm
maintains an exact global minimum cut under edge insertions and deletions. For
edge insertions, we use the approach of
Henzinger~\cite{henzinger1995approximating} and
Goranci~\etal\cite{goranci2018incremental}, who maintain a compact data
structure of all minimum cuts in a graph and invalidate only the minimum cuts
that are affected by an edge insertion. We use the algorithm presented in
Chapter~\ref{c:allmc} to compute all minimum cuts in a graph. For edge
deletions, we use the push-relabel algorithm of Goldberg and
Tarjan~\cite{goldberg1988new} to certify whether the previous minimum cut is
still a minimum cut. As we only need to certify whether an edge deletion changes
the value of the minimum cut, we can perform optimizations that significantly
improve the speed of the push-relabel algorithm for our application. In
particular, we develop a fast initial labeling scheme and terminate early when
the connecitivity value is certified. 

An important observation for dynamic minimum cut algorithms is that graphs often
have a large set of global minimum cuts. We can see this in the experimental
section of Chapter~\ref{c:allmc}, where we aim to find all minimum cuts in huge graphs. Thus,
dynamic minimum cut algorithms can avoid costly recomputation by storing a
compact data structure representing all minimum
cuts~\cite{henzinger1995approximating,goranci2018incremental} and only
invalidate changed cuts in edge insertion. The data structure we use is a
\emph{cactus graph}, \ie a graph in which every vertex is part of at most one
cycle. A minimum cut in the cactus graph is represented by either a tree edge or
two edges of the same cycle. For a graph with multiple connected components, \ie
a graph whose minimum cut value $\lambda = 0$, the cactus graph $\mathcal{C}$
has an empty edge set and one vertex corresponding to each connected component.

The content of this chapter is based on~\cite{henzinger2021practical}.

The rest of this chapter is organized as follows. We start by explaining the
incremental minimum cut algorithm in Section~\ref{c:dynmc:s:insert}, followed by
a description of the decremental minimum cut algorithm in
Section~\ref{c:dynmc:s:delete}. In Section~\ref{c:dynmc:s:combine}, we show how
to combine the routines into a fully dynamic minimum cut algorithm. In
Section~\ref{c:dynmc:s:experiments}, we perform an experimental evaluation of the
algorithms detailed in this chapter.

\section{Incremental Minimum Cut}
\label{c:dynmc:s:insert}

For incremental minimum cuts, our algorithm is closely related to the exact
incremental dynamic algorithms of Henzinger~\cite{henzinger1995approximating}
and Goranci~\etal\cite{goranci2018incremental}. Upon initialization of the
algorithm with graph $G$, we run the algorithm detailed in Chapter~\ref{c:allmc}
on $G$ to find the weight of the minimum cut $\lambda$ and the cactus graph
$\mathcal{C}$ representing all minimum cuts in $G$. Each minimum cut in
$\mathcal{C}$ corresponds to a minimum cut in $G$ and each minimum cut in $G$
corresponds to one or more minimum cuts in
$\mathcal{C}$~\cite{henzinger1995approximating}.

The insertion of an edge $e=(u,v)$ with positive weight $c(e) > 0$ increases the
weight of all cuts in which $u$ and $v$ are in different partitions, \ie in
different vertices of the cactus graph $\mathcal{C}$. The weight of cuts in
which $u$ and $v$ are in the same partition remains unchanged. As edge weights
are non-negative, no cut weight can be decreased by inserting additional edges.

If $\Pi(u) = \Pi(v)$, \ie both vertices are mapped to the same vertex in
$\mathcal{C}$, there is no minimum cut that separates $u$ and $v$ and all
minimum cuts remain intact. If $\Pi(u) \neq \Pi(v)$, \ie the vertices are mapped
to different vertices in $\mathcal{C}$, we need to invalidate the affected
minimum cuts by contracting the corresponding edges in $\mathcal{C}$.

\subsection{Path Contraction}

Dinitz~\cite{dinitz1993maintaining} shows that for a connected graph with
$\lambda > 0$ the minimum cuts that are affected by the insertion of $(u,v)$
correspond to the minimum cuts on the path between $\Pi(u)$ and $\Pi(v)$. We
find the path using alternating breadth-first searches from $\Pi(u)$ and
$\Pi(v)$. For this path-finding algorithm, imagine the cactus graph
$\mathcal{C}$ as a tree graph in which each cycle is contracted into a single
vertex. On this tree, there is a unique path from $\Pi(u)$ to $\Pi(v)$.

\begin{figure}[t!]
       \includegraphics[width=\textwidth]{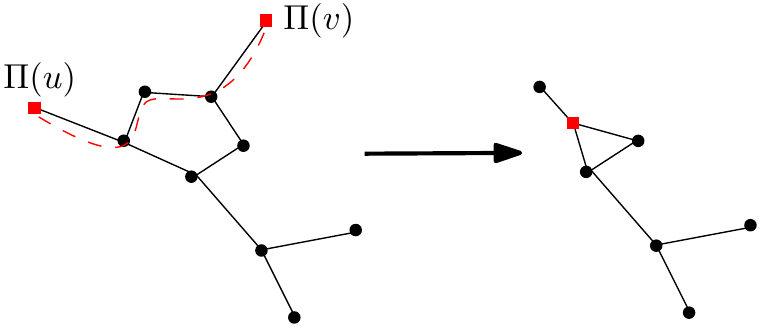}
       \caption{\label{fig:contract} Insertion of edge $e=(u,v)$ - contraction of path in $\mathcal{C}$, squeezing of cycle}
\end{figure}  

For every cycle in $\mathcal{C}$ that contains at least two vertices of the path
between $\Pi(u)$ and $\Pi(v)$, the cycle is ``squeezed'' by contracting the
first and last path vertex in the cycle, thus creating up to two new cycles.
Figure~\ref{fig:contract} shows an example in which a cycle is squeezed. In
Figure~\ref{fig:contract}, the cycle is squeezed by contracting the bottom left
and top right vertices. This creates a new cycle of size $3$ and a ``cycle'' of
size $2$, which is simply a new tree edge in the cactus graph $\mathcal{C}$. For
details and correctness proofs we refer the reader to the work of
Dinitz~\cite{dinitz1993maintaining}. The intuition is that due to the insertion
of the new edge, all cactus vertices in the path from $\Pi(u)$ and $\Pi(v)$ are now
connected with a value $> \lambda$, as their previous connection was $\lambda$
and the newly introduced edge increased it. For any cycle in the path, this also
includes the first and last cycle vertices $x$ and $y$ in the path, as these two vertices now
have a higher connectivity $\lambda(x,y)$. The minimum cuts that are represented
by edges in this cycle that have $x$ and $y$ on the same side are unaffected, as
all vertices in the path from $\Pi(u)$ and $\Pi(v)$ are on the same side of this
cut. As this is not true for cuts that separate $x$ and $y$, we merge $x$ and
$y$ (as well as the rest of the path from $\Pi(u)$ to $\Pi(v)$), which
``squeezes'' the cycle and creates up to two new cycles.

If the graph has multiple connected components, \ie the graph has a minimum cut
value $\lambda = 0$, $\mathcal{C}$ is a graph with no edges where each connected
component is mapped to a vertex. The insertion of an edge between different
connected components $\Pi(u)$ and $\Pi(v)$ merges the two vertices representing
the connected components, as they are now connected.

If $\mathcal{C}$ has at least two non-empty vertices after the edge insertion,
there is at least one minimum cut of value $\lambda$ remaining in the graph, as
all minimum cuts that were affected by the insertion of edge $e$ were just
removed from the cactus graph $\mathcal{C}$. As an edge insertion cannot
decrease any connectivities, $\lambda$ remains the value of the minimum cut. If
$\mathcal{C}$ only has a single non-empty vertex, we need to recompute the
cactus graph $\mathcal{C}$ using the algorithm detailed in Chapter~\ref{c:allmc}.

Checking the set affiliation $\Pi$ of $u$ and $v$ can be done in constant time.
If $\Pi(u) = \Pi(v)$ and the cactus graph does not need to be updated, no
additional work needs to be done. If $\Pi(u) \neq \Pi(v)$, we perform
breadth-first search on $\mathcal{C}$ with $n^* \coloneqq |V(\mathcal{C})|$ and
$m^* \coloneqq |E(\mathcal{C})|$ which has a asymptotic running time of
$\Oh{n^* + m^*} = \Oh{n^*}$, contract the path from $\Pi(u)$ to $\Pi(v)$ in
$\Oh{n^*}$ and then update the set affiliation of all contracted vertices. This
update has a worst-case running time of $\Oh{n}$, however, contracting all
vertices of the path from $\Pi(u)$ to $\Pi(v)$ into the cactus graph vertex that
already corresponds to the most vertices of $G$, we often only need to update
the affiliation of a few vertices. Both the initial computation and a full
recomputation of the minimum cut cactus have a worst-case running time of
$\Oh{nm + n^2 \log{n} + n^*m\log{n}}$.

\section{Decremental Minimum Cut}
\label{c:dynmc:s:delete}

The deletion of an edge $e = (u,v)$ with positive weight $c(e) > 0$ decreases
the weight of all cuts in which $u$ and $v$ are in different partitions. This
might lead to a decrease of the minimum cut value $\lambda$ and thus the
invalidation of the minimum cuts in the existing minimum cut cactus
$\mathcal{C}$. The value of the minimum cut $\lambda(G,u,v)$ that separates
vertices $u$ and $v$ is equal to the maximum flow between them and can be found
by a variety of
algorithms~\cite{dinic1970algorithm,ford1956maximal,goldberg1988new}. In order
to check whether $\lambda$ is decreased by this edge deletion, we need to check
whether $\lambda(G-e,u,v) < \lambda(G)$. For this purpose, we use the
push-relabel algorithm of Goldberg and Tarjan~\cite{goldberg1988new} which aims
to push flow from $u$ to $v$ until there is no possible path remaining. We first
give a short description of the push-relabel algorithm and then show the
adaptions we performed to improve its performance in our application.

\subsection{Push-relabel algorithm}

In this work we use and adapt the push-relabel algorithm of Goldberg and
Tarjan~\cite{goldberg1988new} for the minimum $s$-$t$-cut problem. The algorithm
aims to push as much flow as possible from the \emph{source} vertex $s$ to the
\emph{sink} vertex $t$ and returns the value of the maximum flow between $s$ and
$t$, which is equal to the value of the minimum cut separating
them~\cite{dantzig2003max}. We now give a brief description of the algorithm,
for more details we refer the reader to the original
work~\cite{goldberg1988new}.

Let $G = (V,E,c)$ be a directed edge-weighted graph. An undirected edge
$e=(u,v)$ is hereby interpreted as two symmetric directed edges $(u,v)$ and
$(v,u)$ with $c(e) = c(u,v) = c(v,u)$. In the push-relabel algorithm, each
vertex $v \in V$ has a \emph{distance} or \emph{height label} $d(x)$, initially
$d(x) = 0$ for every vertex except $d(s) = n$. The algorithm handles a
\emph{preflow}, a function $f$ so that for each edge $e$, $0 \geq f(e) \geq
c(e)$ and for each $v \in V \backslash s$, $\sum_{(v,x) \in E} f((v,x)) \leq
\sum_{(y,v)\in E} f((y,v))$ there is at least as much ingoing as outgoing flow.
The difference in ingoing and outgoing flow in a vertex is called the
\emph{excess flow} of this vertex.

First, the algorithm pushes flow from $s$ to all neighboring vertices,
afterwards vertices push their excess flow to neighbors with a lower distance
$d$. If a vertex $v$ has positive excess but no neighbors with a lower distance,
the \emph{relabel} function increases the distance of $v$ until at least one
outgoing preflow $f$ can be increased. At termination, the push-relabel
algorithm reaches a \emph{flow}, where each edge $e$ has $0 \leq f(e) \leq c(e)$
units of flow and the excess of each vertex except $s$ and $t$ is $0$. The value
of the minimum cut $\lambda(s,t)$ separating $s$ and $t$ is equal to the excess
flow on $t$. Inherent to the push-relabel algorithm is the \emph{residual graph}
$G_f = (V,E_f)$ for a given preflow $f$, where $E_f$ contains all edges $e =
(u,v) \in E$ with $f(e) < c(e)$, \ie edges that have capacity to handle
additional flow, and a reverse-edge for every edge where $0 < f(e)$.

\subsection{Early Termination}

We terminate the algorithm as soon as $\lambda(G)$ units of flow reached $v$. If
$\lambda(G)$ units of flow from $u$ reached $v$, we know that $\lambda(G-e,u,v)
\geq \lambda(G)$, \ie the connectivity of $u$ and $v$ on $G-e$ is at least as
large as the minimum cut on $G$, the minimum cut value $\lambda$ remains
unchanged. Note that iff $\lambda(G-e,u,v) = \lambda(G)$, the deletion of $e$
introduces one or more new minimum cuts. We do not introduce these new cuts to
$\mathcal{C}$. The trade-off hereby is that we are able to terminate the
push-relabel algorithm earlier and do not need to perform potentially expensive
operations to update the cactus, but do not necessarily keep all cuts and have
to recompute the cactus earlier. As most real-world graphs have a large number
of minimum cuts, there are far more edge deletions than recomputations of
$\mathcal{C}$.

Each edge deletion calls the push-relabel algorithm using the lowest-label selection
rule with a worst-case running time of $\Oh{n^2m}$~\cite{goldberg1988new}. The
lowest-label selection rule picks the active vertices whose distance label is
lowest, \ie a vertex that is close to the sink $v$. Using
highest-level selection would improve the worst-case running time to
$\Oh{n^2\sqrt{m}}$, but we aim to push as much flow as possible to the sink
early to be able to terminate the algorithm early as soon as $\lambda$ units of
flow reach the sink. Using lowest-level selection prioritizes the vertices close
to the sink and thus increases the amount of flow which reaches the sink at a
given point in time. Preliminary experiments show faster running times using the
lowest-level selection rule.

\subsection{Decremental Rebuild of Cactus Graph}

If the push-relabel algorithm finishes with a value of $< \lambda(G)$, we update
the minimum cut value $\lambda(G-e)$ to $\lambda(G-e,u,v)$. As the minimum cut
value changed by the deletion of $e$ and this deletion only affects cuts which
contain $e$, we know that all minimum cuts of the updated graph $G-e$ separate
$u$ and $v$. We use this information to significantly speed up the cactus
construction. Instead of running the full algorithm from Chapter~\ref{c:allmc},
we run only the subroutine which is used to compute the $(u,v)$-cactus, \ie the
cactus graph which contains all cuts that separate $u$ and $v$, as we know that
all minimum cuts of $G-e$ separate $u$ and $v$. This routine, developed by
Nagamochi and Kameda~\cite{nagamochi1996constructing}, finds a $u$-$v$-cactus a
running time of $\Oh{n+m}$.

Note that the routine of Nagamochi and Kameda~\cite{nagamochi1996constructing}
only guarantees to find all minimum $u$-$v$-cuts if an edge $e = (u,v)$ with
$c(e) > 0$ exists (\cite[Lemma 3.4]{nagamochi1996constructing}). As this edge
was just deleted in $G-e$ and therefore does not exist, it is possible that
\emph{crossing} $u$-$v$-cuts $(X,\overline X)$ and $(Y,\overline Y)$ with $u \in
X$ and $u \in Y$ exist. Two cuts are \emph{crossing}, if both $(\overline X \cap
Y)$ and $(Y \cap \overline X)$ are not empty. As we only find one cut in a pair
of crossing cuts, the $u$-$v$-cactus is not necessarily maximal. However, the
operation is significantly faster than recomputing the complete minimum cut
cactus in which almost all edges are not part of any minimum cut. While it is
not guaranteed that the decremental rebuild algorithm finds all minimum cuts in
$G-e$, every cut of size $\lambda(G-e,u,v)$ that is found is a minimum cut.
As we build the minimum cut cactus out of minimum cuts, it is a valid (but
potentially incomplete) minimum cut cactus and the algorithm is correct.

\subsection{Local Relabeling}

Many efficient implementations of the push-relabel algorithm use the global relabeling heuristic~\cite{cherkassky1997implementing} in order to direct flow towards the sink more efficiently. The push-relabel algorithm maintains a distance label $d$ for each vertex to indicate the distance from that vertex to the sink using only edges that can receive additional flow. The global relabeling heuristic hereby periodically performs backward breadth-first search to compute distance labels on all vertices.

This heuristic can also be used to set the initial distance labels in the flow
network for a flow problem with source $u$ and sink $v$. This has a running time
of $\Oh{n + m}$ but helps lead the flow towards the sink. As our algorithm
terminates the push-relabel algorithm early, we try to avoid the $\Oh{m}$
running time while still giving the flow some guidance. Thus, we perform
\emph{local relabeling} with a \emph{relabeling depth} of $\gamma$ for $\gamma
\in [0, n)$, where we set $d(v) = 0$, $d(u) = n$ and then perform a backward
breadth-first search around the sink $v$, in which we set $d(x)$ to the length
of the shortest path between $x$ and $v$ (at this point, there is no flow in the
network, so every edge in $G$ is admissible). Instead of setting the distance of
every vertex, we only explore the neighborhoods of vertices $x$ with $d(x) <
\gamma$, thus we only set the distance-to-sink for vertices with $d(x) \leq
\gamma$. For every vertex $y$ with a higher distance, we set $d(y) = (\gamma +
1)$. This results in a running time for setting the distance labels of $\Oh{n}$ 
plus the time needed to perform the bounded-depth breadth-first search.

This process creates a ``funnel'' around the sink to lead flow towards it,
without incurring a running time overhead of $\Theta(m)$ (if $\gamma$ is set
sufficiently low). Note that this is useful because the push-relabel algorithm
is terminated early in many cases and thus initializing the distance labels
faster can give a large speedup. We give experimental
results for different relabeling depths $\gamma$ for local relabeling in our
application in Section~\ref{c:dynmc:ss:exp-relabel}.

\subsubsection{Correctness}

Goldberg and Tarjan show that each push and relabel operation in the
push-relabel algorithm preserve a \emph{valid labeling}~\cite{goldberg1988new}.
A valid labeling is a labeling $d$, where in a given preflow $f$ and
corresponding residual graph $G_f$, for each edge $e = (u,v) \in E_f$, $d(u)
\leq d(v) + 1$. We therefore need to show that the labeling $d$ that is given by
the initial local relabeling is a valid labeling.

\begin{lemma} \label{lem:valid-flow}
    Let $G=(V,E,c)$ be a flow-graph with source $s$ and sink $t$ and let $d$ be
    the vertex labeling given by the local relabeling algorithm. The vertex
    labeling $d$ is a valid labeling.
\end{lemma}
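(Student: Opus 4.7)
The plan is to verify the defining inequality of a valid labeling, namely that $d(u) \le d(v) + 1$ for every edge $(u,v)$ of the residual graph $G_f$, directly from the construction of $d$. Since the lemma concerns the labeling immediately after local relabeling (i.e.\ before any push or relabel operation has been executed), the preflow $f$ is the zero preflow, and up to the standard initial saturation of the edges leaving the source, the residual graph coincides with $G$ with each undirected edge replaced by its two directed copies. Thus it suffices to check the inequality for every such directed copy, and to handle the outgoing edges of the source separately by appeal to the standard push-relabel convention that those edges are saturated at initialization (so only reverse edges $(x,s)$ survive, for which $d(x) \le d(s)+1 = n+1$ holds trivially).

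For the remaining edges I would do a small case analysis based on how the endpoints were labeled by the local relabeling procedure. Call a vertex \emph{explored} if it was assigned a label by the bounded backward BFS from $t$ (so $d(x) \le \gamma$ and $d(x)$ equals the $G$-distance from $x$ to $t$), and \emph{unexplored} otherwise (so $d(x) = \gamma + 1$). Consider an edge $(x,y)$ of $G$:
\begin{enumerate}
    \item If both $x$ and $y$ are explored, then $d(x)$ and $d(y)$ are genuine BFS distances to $t$, so the triangle inequality for shortest paths yields $d(x) \le d(y) + 1$.
    \item If both are unexplored, $d(x) = d(y) = \gamma + 1$ and the inequality is immediate.
    \item If $x$ is explored and $y$ is unexplored, $d(x) \le \gamma < \gamma + 2 = d(y) + 1$.
    \item If $x$ is unexplored and $y$ is explored, we have $d(x) = \gamma + 1$; we need $d(y) \ge \gamma$. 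This is the one case that requires the structural property of the bounded BFS.
\end{enumerate}
The key observation for case~(4) is the way the BFS is truncated: the algorithm only expands the neighborhood of a vertex $z$ with $d(z) < \gamma$. Consequently, if an explored vertex $y$ had $d(y) < \gamma$, then during the BFS its in-neighborhood in $G$ would have been scanned and every neighbor (including $x$) would have been assigned a label at most $d(y) + 1 \le \gamma$, contradicting $x$ being unexplored. Hence $d(y) = \gamma$, which gives $d(x) = \gamma + 1 = d(y) + 1$ as required.

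The main obstacle is exactly this boundary case, since the correctness of the truncated BFS relies on the convention that expansion is triggered by $d(z) < \gamma$ rather than $d(z) \le \gamma$; I would state this invariant explicitly before the case analysis to make the argument transparent. With those cases dispatched and the source handled by the standard initial saturation, every residual edge satisfies $d(u) \le d(v) + 1$, so $d$ is a valid labeling in the sense of~\cite{goldberg1988new}, establishing Lemma~\ref{lem:valid-flow}.
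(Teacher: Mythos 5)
Your proof is correct and follows essentially the same route as the paper's: both verify the validity inequality directly from the BFS construction (your case~(4) is exactly the paper's contradiction argument that a vertex with label below $\gamma$ has its whole neighborhood labeled) and both dispose of the source edges via the standard initial saturation. Your explicit four-way case split is just a more granular unpacking of the same idea.
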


\begin{proof}
       The vertex labeling $d$ is generated using breadth-first search. Thus,
       for every edge $e = (u,v)$ where $u \neq s$ and $v \neq s$, $|d(u) -
       d(v)| \leq 1$. We prove this by contradiction. W.l.o.g. assume that $d(u)
       - d(v) > 1$. As $u \neq s$ and $s$ is the only vertex with $d(s) >
       \gamma$, $d(u) \leq \gamma + 1$ and $d(v) < \gamma$. Thus, at some point
       of the breadth-first search, we set the distance labels of all neighbors
       of $v$ that do not yet have a distance label to $d(v) + 1$. As edge $e =
       (u,v)$ exists, $u$ and $v$ are neighbors and the labeling sets $d(u) =
       d(v) + 1$. This contradicts $d(u) - d(v) > 1$.

       This shows that the labeling is valid for every edge not incident to the
       source $s$, as distance labels of incident non-source vertices differ by
       at most $1$. The only edges we need to check are edges incident to $s$.
       In the initialization of the push-relabel algorithm, all outgoing edges
       of the source $s$ are fully saturated with flow and are thus no outgoing
       edge of $s$ is in $E_f$. For ingoing edges $e = (v,s)$, we know that $0
       \leq d(v) \leq \gamma + 1 = n$ and thus know that $d(v) \leq d(s)$. Thus
       $e$ respects the validity of labeling $d$.
\end{proof}

Lemma~\ref{lem:valid-flow} shows that local relabeling gives a valid labeling;
which is upheld by the operations in the push-relabel
algorithm~\cite{goldberg1988new}. Thus, correctness of the modified algorithm
follows from the correctness proof of Goldberg and Tarjan.

Resetting the vertex data structures can be performed in $\Oh{n}$, however there
are $m$ edges whose current flow needs to be reset to $0$. Using early
termination we hope to solve some problems very fast in practice, as we can
sometimes terminate early without exploring large parts of the graph. Thus,
resetting of the edge flows in $\Oh{m}$ is a significant problem and is avoided
using implicit resetting as described in the following paragraph.

Each flow problem that is solved over the course of the dynamic minimum cut
algorithm is given a unique ID, starting at an arbitrary integer and incrementing
from there.  In addition to the current flow on an edge, we also store the ID of
the last problem which accessed the flow on this edge. When the flow of an edge
is read or updated in a flow problem, we check whether the ID of the last access
equals the ID of the current problem. If they are equal, we simply return or
update the flow value, as the edge has already been accessed in this flow
problem and does not need to be reset. Otherwise, we need to reset the edge flow
to $0$ and set the problem ID to the ID of the current problem and then perform
the operation on the updated edge. Thus, we implicitly reset the edge flow on
first access in the current problem. As we increment the flow problem ID after
every flow problem, no two flow problems share the same ID.

Using this implicit reset of the edge flows saves $\Oh{m}$ overhead but
introduces a constant amount of work on each access and
update of the edge flow. It is therefore useful in practice if the problem
terminates with significantly fewer than $m$ flow updates due to early
termination. It does not affect the worst-case running time of the algorithm, as
we only perform a constant amount of work on each edge update. The running time
of the initialization of the implementation is improved from $\Oh{n+m}$ to
$\Oh{n}$, as we do not explicitly reset the flow on each edge.

\section{Fully Dynamic Minimum Cut}
\label{c:dynmc:s:combine}

Based on the incremental and decremental algorithm described in the preceding
sections, we now describe our fully dynamic algorithm. As the operations in the
previous section each output the
minimum cut $\lambda(G)$ and a corresponding cut cactus $\mathcal{C}$ that
stores a set of minimum cuts for $G$, the algorithm gives correct results on all
operations. However, there are update sequences in which every insertion or
deletion changes the minimum cut value and, thus, triggers a recomputation
of the minimum cut cactus $\mathcal{C}$. One such example is the repeated
deletion and reinsertion of an edge that belongs to a minimum cut. In the
following paragraphs we describe a technique that is used to mitigate such
worst-case instances. Nevertheless, it is still possible to construct update
sequences in which the minimum cut cactus $\mathcal{C}$ needs to be recomputed
every $\Oh{1}$ edge updates and thus the worst-case asymptotic running
time per update is equal to the running time of the static algorithm. 

\subsection{Cactus Cache}
\label{c:dynmc:ss:cache}

Computing the minimum cut cactus $\mathcal{C}$ is expensive if there is a large
set of minimum cuts and the cactus is therefore large. Thus, it is beneficial to
reduce the amount of recomputations to speed up the process. On some fully
dynamic workloads, the minimum cut often jumps between values $\lambda_1$ and
$\lambda_2$ with $\lambda_1 > \lambda_2$, where the minimum cut cactus for cut
value $\lambda_1$ is large and thus expensive to recompute whenever the cut
value changes.

A simple example workload is a large unweighted cycle, which has a minimum cut
of $2$. If we delete any edge, the minimum cut value changes to $1$, as the
incident vertices have a degree of $1$. By reinserting the just-deleted edge,
the minimum cut changes to a value of $2$ again and the minimum cut cactus is
equal to the cactus prior to the edge deletion. Thus we can save a
significant amount of work by caching and reusing the previous cactus graph when
the minimum cut is increased to $2$ again.

\subsubsection{Reuse Cactus Graph from Cactus Cache}

Whenever the deletion of an edge $e$ from graph $G$ decreases the minimum cut
value from $\lambda_1$ to $\lambda_2$, we cache the previous cactus
$\mathcal{C}$. After this point, we also remember all edge insertions, as these
can invalidate minimum cuts in $\mathcal{C}$. If at a later point the minimum
cut is again increased from $\lambda_2$ to $\lambda_1$ and the number of edge
insertions divided by the number of vertices in $\mathcal{C}$ is smaller than a
parameter $\delta$, we recreate the cactus graph from the cactus cache instead of
recomputing it. The default value for $\delta$ is $2$. The algorithm does not
store the intermediate edge deletion, as there can only lower connectivities and
by computing the minimum cut value we know that there is no cut of value $<
\lambda_1$ and thus all cuts of value $\lambda_1$ are global minimum cuts.

For each edge insertion since caching the cactus we perform the edge insertion operation
from Section~\ref{c:dynmc:s:insert} to eliminate all cuts that are invalidated
by the edge insertion. All cuts that remain in $\mathcal{C}$ are still minimum
cuts. If there are only a small amount of edge insertions since the cactus was
cached, this is significantly faster than recomputing the cactus from scratch.
As we do not remember edge deletions, the cactus might not contain all minimum
cuts and thus require slightly earlier recomputation.

\section{Experiments and Results} \label{c:dynmc:s:experiments}

We now perform an experimental evaluation of the proposed algorithms.
This is done in the following order. We use the static and dynamic graph
instances detailed in Section~\ref{rwgraphs} and Table~\ref{p:mincut:table:graphs}.
In Section~\ref{c:dynmc:ss:exp-relabel}, we analyze the impact of local relabeling on the
static preflow-push algorithm to determine with value of the relabeling depth to
use in the experiments on dynamic graphs. Then, in
Sections~\ref{c:dynmc:ss:exp-dynamic}~and~\ref{c:dynmc:ss:exp-static}, we evaluate our dynamic
algorithms on a wide variety of instances. In
Section~\ref{c:dynmc:ss:exp-worstcase}, we generate a set of worst-case problems
and use these to evaluate the
performance of our algorithm on instances that were specifically created to be difficult.

\subsubsection{Experimental Setup and Methodology}

We implemented the algorithms using \CC-17 and compiled all code using g++
version 8.3.0 with full optimization (\texttt{-O3}). Our experiments are
conducted on a machine with two Intel Xeon Gold 6130 processors with 2.1GHz with
16 CPU cores each and $256$ GB RAM in total. In this section, we first describe
our experimental methodology. Afterwards, we evaluate different algorithmic
choices in our algorithm and then we compare our algorithm to the state of the
art. When we report a mean result we give the geometric mean as problems differ
significantly in cut size and time.

\begin{figure}[t!]
    \centering
    \includegraphics[width=\textwidth]{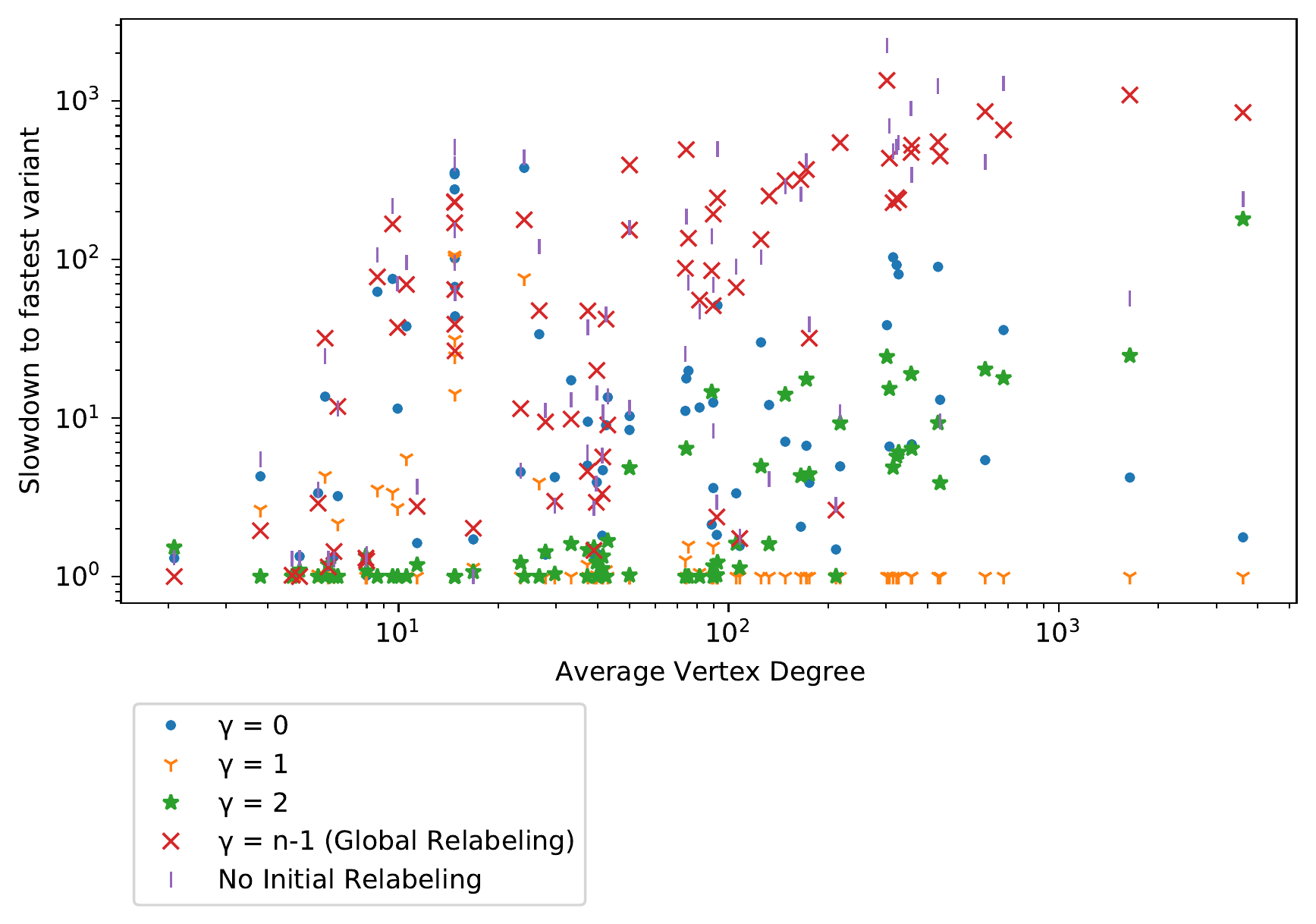}
    \caption{\label{fig:localrelabel}Effect of local relabeling depth on running time of delete operations.}
\end{figure}

\subsection{Local Relabeling}
\label{c:dynmc:ss:exp-relabel}

In order to examine the effects of local relabeling with different values of
relabeling depth $\gamma$, we run experiments using all static graph instances
(Graph Family A and Graph Family B) from Table~\ref{p:mincut:table:graphs}, in which we
delete $1000$ random edges in random order. We report the total time spent
executing delete operations. We compare a total of $5$ variants, one that does
not run initial relabeling, three variants with relabeling depth $\gamma =
0,1,2$ and one variant which performs global relabeling in the initialization
process, \ie local relabeling with depth $\gamma = (n-1)$. Local relabeling with
$\gamma = 0$ is very similar to no relabeling, however the distance value of
non-sink vertices are set to $(\gamma + 1) = 1$ and not to $0$. 

In Figure~\ref{fig:localrelabel}, we report the slowdown to the fastest variant
for all static graph instances from Table~\ref{p:mincut:table:graphs}. The x-axis shows
the average vertex degree for the instances. On most instances, the fastest
variant is local relabeling with $\gamma = 1$. Depending on the graph instance,
this variant spends $25-90\%$ of the deletion time in the initialization
(including initial relabeling). An increase in labeling depth increases the
initialization running time, but decreases the subsequent algorithm running
time. Thus we aim to find a labeling depth value that maintains some balance
between initial labeling and the subsequent algorithm execution. On some
instances, it is outperformed by local relabeling with $\gamma = 2$, which is
slower by a factor of $3-10$x on most instances, with $90-99\%$ of the total
running time spent in the initialization of the algorithm. We can see that in
instances with a higher average degree, local relabeling with $\gamma = 1$
performs better. This is an expected result, as the larger local relabeling is
more expensive in higher-average-degree graphs, as the $2$-neighborhood of a
vertex is much larger. Local relabeling with $\gamma = 2$ spends $90-99\%$ of
the total running time in initialization and initial relabeling. The same effect
is even more pronounced for the variant which performs global relabeling in
initialization. On vertices with a low average degree, we can perform global
relabeling in reasonable time, which makes the variant competitive with the
local relabeling variants. However, in high average degree instances, the
excessive running time of a global relabeling step causes the variant to have
slowdowns of up to $1000$x compared to the fastest variant. On all instances,
the vast majority of running time is spent in initialization including initial
global relabeling.

One graph family where local relabeling with $\gamma = 1$ performs badly are the
graph instances based on \texttt{auto}~\cite{karypis1998fast}, a 3D finite
element mesh graph. These graphs are rather sparse (average degree $~15$) and
planar. On these graphs, the value of the minimum cut divided by the average
degree is very large, as they do not contain any vertices of degree $1,2,3$.
Thus, the variants which perform only minor local relabeling do not guide the
flow enough and therefore the push-relabel algorithm takes a long time. On most other instances
in our test set, local relabeling with $\gamma = 1$ is enough to guide at least
$\lambda$ flow to the sink quickly.

Local relabeling with a relabeling depth $\gamma = 0$ (\ie we set the distance
of the sink to $0$, the source to $n$ and all other vertices to $1$) has a slowdown factor of
$10-100$x with only $1-10\%$ of the running time spent in the initialization.
The slowdown factor is generally increasing for larger values of the minimum cut
$\lambda$ and average degree, which indicates that ``the lack of guidance
towards the sink'' causes the algorithm to send flow to regions of the graph
that are far away from the source. For graphs with large minimum cut value
$\lambda$, the algorithm does not terminate early and needs to perform
a significant amount of push and relabel steps. In variants that perform more
relabeling at initialization, the flow is guided towards the sink by the
distance labels and the termination trigger is reached faster. The variant which
does not include any relabeling in the initialization phase has similar issues
with an even larger slowdown factor of $10-2000$x, as even flow that is already
incident to the sink does not necessarily flow straight to the sink.

On most instances, local relabeling with depth $\gamma=1$ performed best, as it
helps guide the flow towards the sink with additional work (compared to no
relabeling) only equal to the degree of the sink. While performing more
relabeling can increase this guidance even further, it comes with a trade-off in
additional time spent in the initialization. Note that this is not a general
observation for the push-relabel algorithm and can only be applied to our
application, in which the push-relabel algorithm is terminated early as soon as
$\lambda$ units of flow reach the sink vertex. Based on these experiments, we
use local relabeling with $\gamma = 1$ for edge deletions in all following
experiments.

\subsection{Dynamic Graphs}
\label{c:dynmc:ss:exp-dynamic}

\begin{figure}[t!]
       \centering
       \includegraphics[width=\textwidth]{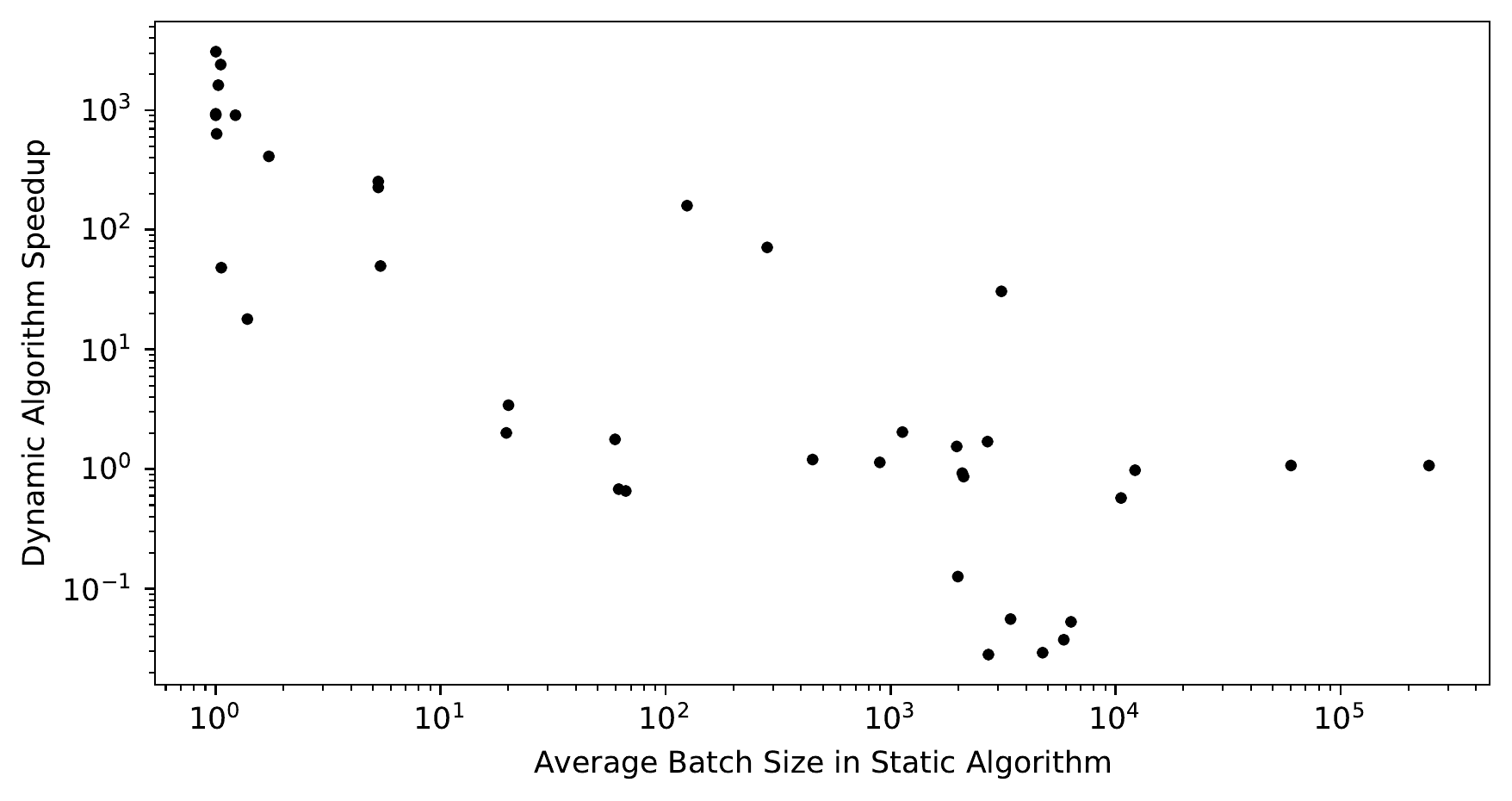}
       \caption{\label{fig:dyng}Speedup of Dynamic Algorithm.}
\end{figure}

Figure~\ref{fig:dyng} shows experimental results on the dynamic graph instances
from Graph Family C in  Table~\ref{p:mincut:table:graphs}. These graph instances are
mostly incremental with some being fully dynamic and most instances have
multiple connected components, \ie a minimum cut value $\lambda = 0$, even after
all insertions. On these incremental graphs with multiple connected components, our algorithm
behaves similar to a simple union-find based connected components algorithm that
for edge insertion checks whether the incident vertices already belong to the same
connected component and merges their connected components if they are different.

In this section we compare our dynamic minimum cut algorithm to
the static algorithm of Nagamochi~\etal\cite{nagamochi1994implementing}, which
has been shown to be one of the fastest sequential algorithms for the minimum
cut problem~\cite{Chekuri:1997:ESM:314161.314315,henzinger2018practical}. The
static algorithm performs the updates batch-wise, \ie the static algorithm is
not called inbetween multiple edge updates with equal timestamp. In
Figure~\ref{fig:dyng}, we show the dynamic speedup in comparison to the average
batch size. As expected, there is a large speedup factor of up to $1000$x for
graphs with small batch sizes; and the speedup decreases for increasing batch
sizes. The family of instances in which the dynamic algorithm is outperformed by
the static algorithm is the \texttt{insecta-ant-colony} graph
family~\cite{mersch2013tracking}. These graphs have a very high minimum cut
value and fewer batches than changes in the minimum cut value. Therefore, the
dynamic algorithm which updates on every edge insertion needs to recompute the
minimum cut cactus more often than the static algorithm is run and, thus, takes a longer time.

As these dynamic instances do not have sufficient diversity, we also perform experiments on static graphs in graph family B in which a subset of edges is inserted or removed dynamically. We report on this experiment in the following section.

\subsection{Random Insertions and Deletions from Static Graphs}
\label{c:dynmc:ss:exp-static}

\begin{figure}[t!]
       \centering
       \includegraphics[width=\textwidth]{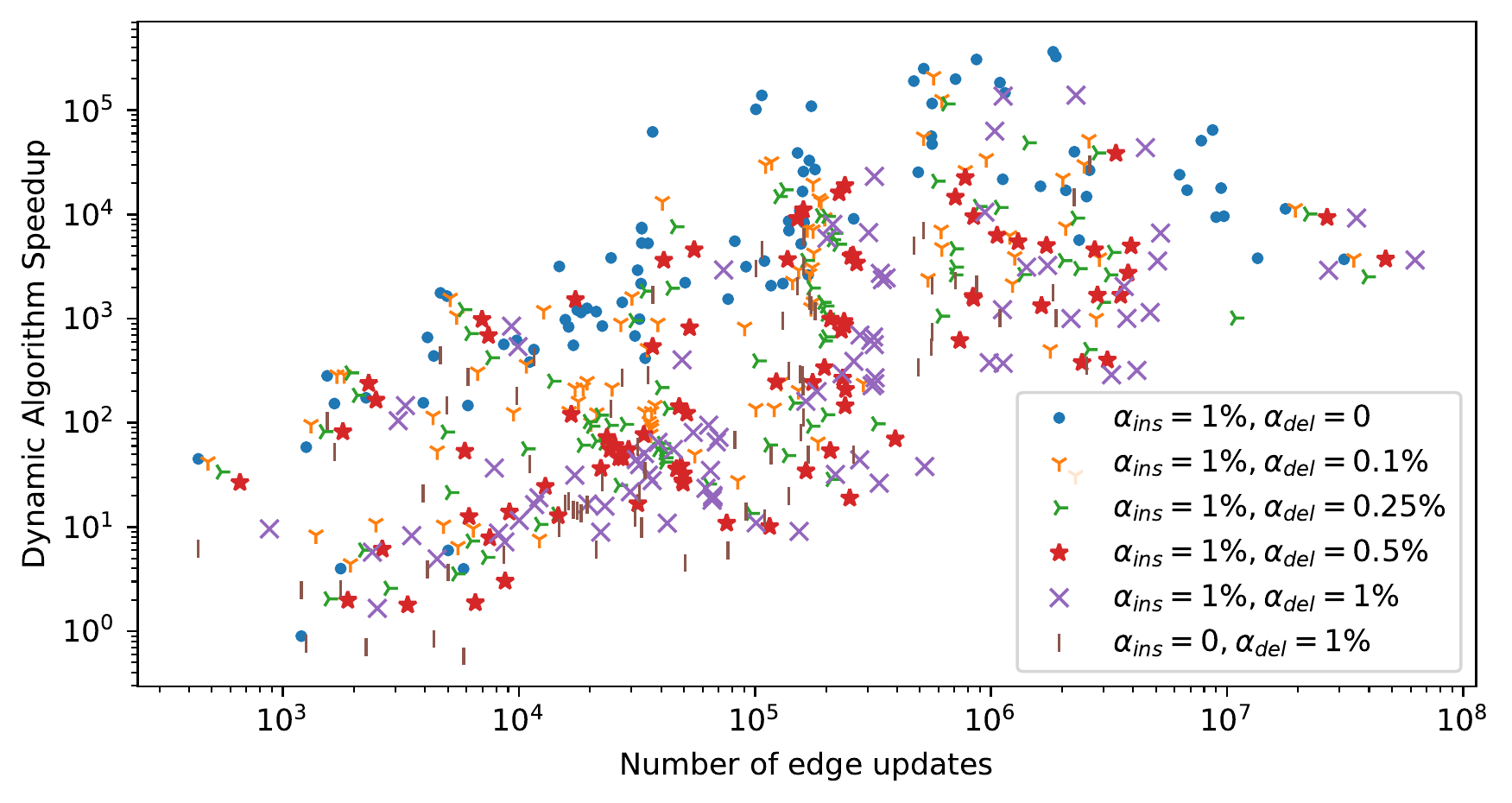}
       \caption{\label{fig:static}Speedup of Dynamic Algorithm on Random Insertions and Deletions from Static Graphs.}
\end{figure}

Figure~\ref{fig:static} shows results for dynamic edge insertions and deletions
from all graphs in Graph Family A and B from Table~\ref{p:mincut:table:graphs}. These
graphs are static, we create a dynamic problem from graph $G=(V,E,c)$ as
follows: let $\alpha_{ins} \in (0,1)$ and $\alpha_{del} \in (0,1)$ with
$\alpha_{ins} + \alpha_{del} < 1$ be the edge insertion and deletion rate. We
randomly select edge lists $E_{ins}$ and $E_{del}$ with $|E_{ins}| =
\alpha_{ins} \cdot |E|$, $|E_{del}| = \alpha_{del} \cdot |E|$ and $E_{ins} \cap
E_{del} = \emptyset$. For every vertex $v \in V$, we make sure that at least one
edge incident to $v$ is neither in $E_{ins}$ nor in $E_{del}$, so that the
minimum degree of $(V,E \backslash (E_{ins} \cap E_{del}), c)$ is strictly
greater than $0$ at any point in the update sequence.

We initialize the graph as $(V,E \backslash E_{ins}, c)$ and create a sequence
of edge updates $E_u$ by concatenating $E_{ins}$ and $E_{del}$ and randomly
shuffling the combined list. Then we perform edge updates one after another and
compute the minimum cut - either statically using our efficient implementation
of the algorithm of
Nagamochi~\etal\cite{nagamochi1994implementing} or by performing an update in
the dynamic algorithm - after every update. Note that all of these algorithms
are sequential. We report the total running time of
either variant and give the speedup of the dynamic algorithm over the static
algorithm as a function of the number of edge updates performed. For each graph
we create problems with $\alpha_{ins} = 1\%$ and $\alpha_{del} \in
\{0,0.1\%,0.25\%,0.5\%,1\%\}$; and additionally 
a decremental problem with $\alpha_{ins} = 0$ and $\alpha_{del} = 1\%$. We set
the timeout for the static algorithm to $1$ hour, if the algorithm does not
finish before timeout, we approximate the total running time of the static
algorithm by performing $100$ or $1000$ updates in batch.

Dynamic edge insertions are generally much faster than edge deletions, as most
real-world graphs have large sets that are not separated by any global minimum
cut. When inserting an edge where both incident vertices are in the same set in
$\mathcal{C}$, the edge insertion only requires two array accesses; if they are
in different sets, it requires a breadth-first search on the relatively small
cactus graph $\mathcal{C}$ and only if there are no minimum cuts remaining, an
edge insertion requires a recomputation. In contrast to that, every edge
deletion requires solving of a flow problem and therefore takes significantly
more time in average. Therefore, the average speedup is larger on problems with
a higher rate of edge insertions.

Generally, the speedup of the dynamic algorithm increases with larger problems
and more edge updates. For larger graphs with $\geq 10^6$ edge updates, the
average speedup is more than four orders of magnitude for instances with
$\alpha_{del} = 0$ and still more than two orders of magnitude for large
instances when $\alpha_{del} = \alpha_{ins} = 1\%$. Note that in this
experiment, the number of edge updates is a function of the number of edges,
thus instances with more updates directly correspond to graphs with more edges.

For decremental instances with $\alpha_{ins} = 0$, the speedup is generally
lower, but still reaches multiple orders of magnitude in larger instances.

\subsubsection{Most Balanced Minimum Cut}

In Section~\ref{s:balanced} we show that given the cactus graph
$\mathcal{C}$ we can compute the most balanced minimum cut, \ie the minimum cut
which has the highest number of vertices in the smaller partition, in $\Oh{n^*}$
time. In our algorithm for the dynamic minimum cut problem we also compute a
cactus graph of minimum cuts, however this cactus graph does not necessarily
contain all minimum cuts in $G$, as we do not introduce new minimum cuts added
by edge deletions. 

We use the algorithm given in Chapter~\ref{c:allmc} to find the
most balanced minimum cut for all instances of Graph Family B every $1000$ edge
updates and compare it to the most balanced minimum cut found by our algorithm.
In instances that are not just decremental, in $97.3\%$ of all cases where there
is a nontrivial minimum cut (\ie smaller side contains multiple vertices), both
algorithms give the same result, \ie our algorithm can almost always output the most balanced
minimum cut. In the instances that are purely decremental, \ie $|E_{ins}| = 0$, we
only find the most balanced minimum cut in $25.4\%$ of cases where there is a
non-trivial minimum cut. This is the case because an increase of the minimum cut
prompts a full recomputation of a cactus graph that represents all (potentially
many) minimum cuts, thus also the most balanced minimum cut. Only if this cut in
particular is affected by an edge update, the dynamic algorithm ``loses'' it. In
the purely decremental case, the minimum cut value only decreases. Thus, the
dynamic algorithm only knows one or a few minimum cuts. All cuts that reach the
same value $\lambda$ in later edge deletions are not in $\mathcal{C}$, as we do
not add cuts of the same value to it. As these decremental instances do not
have any edge insertions that can increase the value of these cuts, there is
eventually a large set of minimum cuts of which the algorithm only knows a few.
If maintaining a balanced minimum cut is a requirement, this can easily be
achieved by occasionally recomputing the entire cactus graph $\mathcal{C}$ from scratch.

\subsection{Worst-case Instances}
\label{c:dynmc:ss:exp-worstcase}

\begin{figure}[t!]
       \centering
       \includegraphics[width=\textwidth]{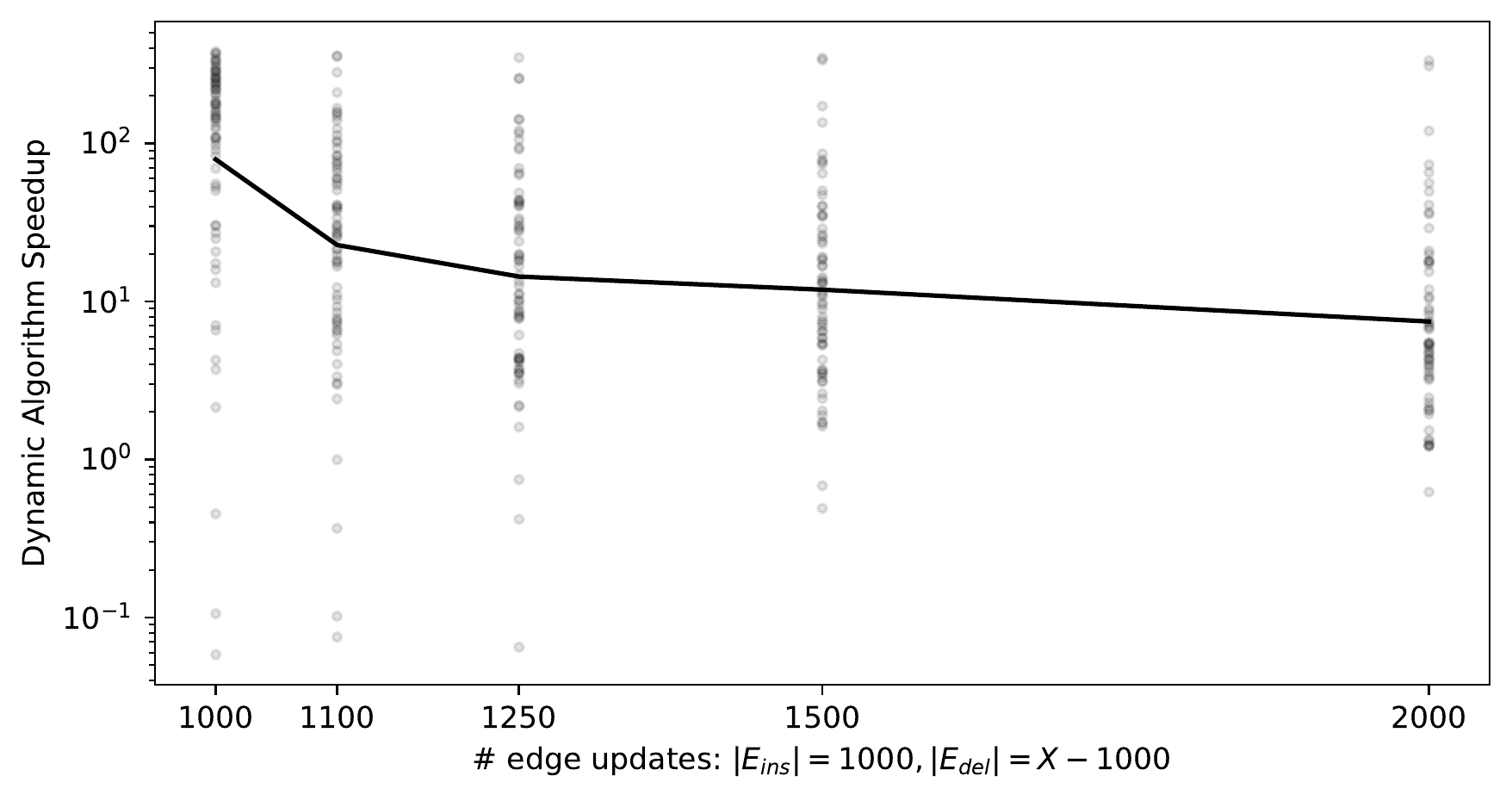}
       \caption{\label{fig:worstcase}Speedup of Dynamic Algorithm on Worst-case Insertions and Deletions from Static Graphs.}
\end{figure}

On random edge insertions, there is a high chance that the vertices incident to
the newly inserted edge were not separated by a minimum cut and therefore
require no update of the cactus graph $\mathcal{C}$. In this experiment we aim
to generate instances that aim to maximize the work performed by the dynamic
algorithm. We initialize the graph as $G=(V,E,c)$ and add random unit-weight
edges $e = (u,v)$ where $\Pi(u) \neq \Pi(v)$ for every newly added edge. Then we
randomly select $|E_{ins}| = 1000$ edges to add so that for each such edge $(u,v)$, $\Pi(u)
\neq \Pi(v)$ before inserting $(u,v)$, and select a subset $E_{del}
\subseteq E_{ins}$ to delete. For each graph we create $5$ problems, with
$|E_{del}| \in \{0,100,250,500,1000\}$. We randomly shuffle the edge updates
while making sure that an edge deletion is only performed after the respective
edge has been added to the graph, but still interspersing edge insertions and
deletions to create true worst-case instances for the dynamic algorithm, as each
edge deletion or insertion affects one or multiple minimum cuts in the graph.

Figure~\ref{fig:worstcase} shows the results of this experiment. Each low-alpha
dot shows the speedup of the dynamic algorithm on a single problem, the black
line gives the geometric mean speedup. As indicated in previous experiments, we
can see that the average speedup decreases when the ratio of deletions is
increased. However, even on these worst-case instances, the mean speedup factor
is still $7.46$x for $|E_{ins}| = |E_{del}| = 1000$ up to $79.2$x for the purely
incremental instances on instances where both algorithms finished before timeout
at one hour. Similar to previous experiments, the speedup factor increases with
the graph size.

On these problem instances we can see interesting effects. Especially in
instances with $|E_{del}| = 500$ we can see many instances where the minimum cut
fluctuates between two different values in more than half of all edge updates.
As the larger of the values usually has a large cactus graph $\mathcal{C}$, this
would result in expensive recomputation on almost every update. However, using
the cactus caching technique detailed in Section~\ref{c:dynmc:ss:cache} we can
save this overhead and simply reuse the almost unchanged previous cactus graph.
In some cases, this reduces the number of calls to the static all-minimum-cut
algorithm by more than a factor of $10$.

We also find some instances where the static graph has few minimum cuts, but
there is a large set of cuts slightly larger than lambda. One such example are planar graphs
derived from Delaunay triangulation~\cite{lee1980two} that have a few vertices
of minimal degree near the edges of the triangulated object, but a large number
of vertices with a slightly larger degree. If we now add edges to increase the
degree of the minimum-degree vertices, the resulting graph has a huge number of
minimum cuts and computing all minimum cuts is significantly more expensive than
computing just a single minimum cut. In these instances the dynamic algorithm is
actually slower than rerunning the static algorithm on every edge update. The
dynamic algorithm is slower than the static algorithm in $3.9\%$ of the worst-case instances.

\section{Conclusion}

In this chapter, we presented the first implementation of a fully-dynamic
algorithm that maintains the minimum cut of a graph under both edge insertions
and deletions. Our algorithm combines ideas from the theoretical foundation with
efficient and fine-tuned implementations to give an algorithm that outperforms
static approaches by up to five orders of magnitude on large graphs. In our
experiments, we show the performance of our algorithm on a wide variety of graph
instances.

Future work includes maintaining all global minimum cuts also under edge
deletions and employing shared-memory or distributed parallelism to further
increase the performance of our algorithm.

\part{The Balanced Graph Partitioning Problem}
\label{p:gp}

\chapter{ILP-based Local Search for Graph Partitioning}

Computing high-quality balanced graph partitions is a challenging problem with
numerous applications.  In this chapter, we present a novel meta-heuristic for
the balanced graph partitioning problem.  Our approach is based on integer
linear programs that solve the partitioning problem to optimality.  However,
since those programs typically do not scale to large inputs, we adapt them to
heuristically improve a given partition.  We do so by defining a much smaller
model that allows us to use symmetry breaking and other techniques that make the
approach scalable. For example, in Walshaw’s well-known benchmark tables, we are
able to improve roughly half of all entries when the number of blocks is high.
Additionally, we include our techniques in a memetic framework and develop a
crossover operation based on the proposed techniques.

The content of this chapter is based
on~\cite{henzinger2020ilp}~and~\cite{henzinger2018ilp}.

\section{Introduction}
\emph{Balanced graph partitioning} is an important problem in computer science
and engineering with an abundant amount of application domains, such as VLSI
circuit design, data mining and distributed systems~\cite{schulz2018graph}. It
is well known that this problem is NP-complete~\cite{BuiJ92} and that no
approximation algorithm with a constant ratio factor exists for general graphs
unless P=NP~\cite{BuiJ92}. Still, there is a large amount of literature on
methods (with worst-case exponential time) that solve the graph partitioning
problem to optimality. This includes methods dedicated to the bipartitioning
case
\cite{armbruster2007branch,Armbruster2008,delling2012exact,delling2012better,feldmann2011n,felner2005,karisch2000solving,HagerPZ13,gp:lp,sellmann2003multicommodity}
and some methods that solve the general graph partitioning
problem~\cite{ferreira1998node,sensen2001lower}. Most of these methods rely on
the branch-and-bound framework \cite{land1960automatic}. However, these methods
can typically solve only very small problems as their running time grows
exponentially, or if they can solve large bipartitioning instances using a
moderate amount of time \cite{delling2012exact,delling2012better}, the running
time highly depends on the bisection width of the graph. Methods that solve the
general graph partitioning problem \cite{ferreira1998node,sensen2001lower} have
huge running times for graphs with up to a few hundred vertices. Thus in
practice mostly heuristic algorithms are used.

Typically the graph partitioning problem asks for a partition of a graph into
$k$ blocks of about equal size such that there are few edges between them. Here,
we focus on the case when the bounds on the size are very strict, including the
case of \emph{perfect balance} when the maximal block size has to equal the
average block size.

Our focus here is on solution quality, \ie minimize the number of edges
that run between blocks. During the past two decades there have been numerous
researchers trying to improve the best graph partitions in Walshaw's well-known
partitioning benchmark~\cite{soper2004combined,wswebsite}. Overall there have
been more than forty different approaches that participated in this benchmark.
Indeed, high solution quality is of major importance in applications such as
VLSI Design \cite{alpert1995rdn,alpert1999spectral} where even minor
improvements in the objective can have a large impact on the production costs
and quality of a chip. High-quality solutions are also favorable in applications
where the graph needs to be partitioned only once and then the partition is used
over and over again, implying that the running time of the graph partitioning
algorithms is of a minor
concern~\cite{DellingGPW11,heuvelinecoop,klsv-dtdch-10,Lau04,wagner2005pgs,ls-csarr-12}.
Thirdly, high-quality solutions are even important in areas in which the running
time overhead is paramount \cite{soper2004combined}, such as finite  element
computations \cite{schloegel2000gph} or the direct solution of sparse linear
systems \cite{george1973nested}. Here, high-quality graph partitions can be
useful for benchmarking purposes, \ie measuring how much more running time can
be saved by higher quality solutions.

In order to compute high-quality solutions, state-of-the-art local search
algorithms exchange vertices between blocks of the partition trying to decrease
the cut size while also maintaining balance. This highly restricts the set of
possible improvements. Sanders and Schulz introduced new techniques that relax the
balance constraint for vertex movements but globally maintain balance by
combining multiple local searches~\cite{kabapeE}.  
This was done by reducing this combination problem to finding negative cycles in
a graph. Here, we extend the neighborhood of the combination problem by
employing integer linear programming. This enables us to find even more complex
combinations and hence to further improve solutions. More precisely, our
approach is based on integer linear programs that solve the partitioning problem
to optimality. However, these programs typically do not scale to large inputs,
in particular because the graph partitioning problem has a very large amount of
symmetry -- given a partition of the graph, each permutation of the block IDs
gives a solution having the same objective and balance. Hence, we adapt the
integer linear program to improve a given input partition. We do so by defining
a much smaller graph, called \emph{model}, and solve the graph partitioning
problem on the model to optimality by the integer linear program. More
specifically, we select vertices close to the cut of the given input partition
for potential movement and contract all remaining vertices of a block into a
single vertex. A feasible partition of this model corresponds to a partition of
the input graph having the same balance and objective. Moreover, this model
enables us to use symmetry breaking, which allows us to scale to much larger
inputs. To make the approach even faster, we combine it with initial bounds on
the objective provided by the input partition, as well as providing the input
partition to the integer linear program solver. Overall, we arrive at a system
that is able to improve more than half of all entries in Walshaw's benchmark
when the number of blocks is high. We include our integer linear program-based
operation into the memetic graph partitioner KaBaPE~\cite{kabapeE}.
Additionally, we develop a crossover operation which is also based on our linear
program. This crossover operation contracts blocks of vertices, which all
partitions place in the same block. The extended memetic algorithm computes
graph partitions from scratch and manages to improve $17\%$ of the entries in
Walshaw's benchmark on the instances with $8,16,32$ or $64$ partitions and and a
maximum allowed imbalance of $3\%$ or $5\%$. In roughly half of all problems
considered, KaBaPE+ILP either reproduces or improves the previous best solution.

In Section~\ref{c:gp:s:preliminaries} we first introduce basic concepts. After
presenting some related work in Section~\ref{c:gp:s:related} we outline the integer
linear program as well as our novel local search algorithm in
Section~\ref{c:gp:s:algorithm}. Here, we start by explaining the technique we
use to find combinations of simple vertex movements. We then explain our
strategies to improve the running time of the solver and vertex selection strategies. In Section~\ref{c:gp:s:evo} we detail how the algorithm can be
used in the context of memetic graph partitioning. A summary of extensive
experiments done to evaluate the performance of our algorithms is presented in
Section~\ref{c:gp:s:experiments}. We conclude in Section~\ref{c:gp:s:conclusion}.

\section{Preliminaries}
\label{c:gp:s:preliminaries}

Let $G=(V=\{0,\ldots, n-1\},E)$ be an undirected graph.  We consider positive,
real-valued edge and vertex weight functions $c$ resp. $\omega$ and extend them
to sets, i.e., $c(E') \Is \sum_{x\in E'}c(x)$ and $\omega(V')\Is \sum_{x\in
V'} \omega(x)$. We use the same terminology to describe graphs as in
Part~\ref{p:mincut} of this dissertation. A vertex is a
\emph{boundary vertex} if it is incident to at least one vertex in a different
block.  We are looking for disjoint \emph{blocks} of vertices $V_1$,\ldots,$V_k$
that partition $V$; i.e., $V_1\cup\cdots\cup V_k=V$. The \emph{balancing
constraint} demands that each block has weight $\omega(V_i)\leq
(1+\epsilon)\lceil\frac{\omega(V)}{k}\rceil=:L_{\max}$ for some imbalance parameter
$\epsilon$. We call a block~$V_i$ \emph{overloaded} if its weight exceeds
$L_{\max}$. The objective of the problem is to minimize the total \emph{cut}
$c(E\cap\bigcup_{i<j}V_i\times V_j)$ subject to the balancing constraint.

\section{Related Work}
\label{c:gp:s:related}
There has been a \emph{huge} amount of research on graph partitioning and we
refer the reader to the surveys given
in~\cite{GPOverviewBook,SPPGPOverviewPaper,schloegel2000gph,Walshaw07} for
most of the material. Here, we focus on issues closely related to our main
contributions. All general-purpose methods that are able to obtain good
partitions for large real-world graphs are based on the multi-level
principle. Well-known software packages based on this approach include
Jostle~\cite{Walshaw07},
KaHIP~\cite{kaffpa},~Metis~\cite{karypis1998fast}~and~Scotch~\cite{pellegrini1996scotch}.

Walshaw's well-known benchmark archive for the balanced graph partitioning
problem has been established in 2001~\cite{soper2004combined,wswebsite}. Overall
it contains 816 instances (34 graphs, 4 values of imbalance, and 6 values of
$k$). In this benchmark, the running time of the participating algorithms is not
measured or reported. Submitted partitions will be validated and added to the
archive if they improve on a particular result. This can either be an
improvement in the number of cut edges or, if they match the current best cut
size, an improvement in the weight of the largest block. Most entries in the
benchmark have as of Jan. $2021$ been obtained by
Galinier~\etal\cite{galinier2011efficient} (more precisely an implementation of
that approach by Schneider), Hein and Seitzer~\cite{hein2011beyond},
the Karlsruhe High-Quality Graph Partitioning (KaHIP) framework~\cite{kabapeE}
and the local search techniques described in this work.
More precisely, Galinier \etal\cite{galinier2011efficient} use a memetic
algorithm that is combined with tabu search to compute solutions and Hein and
Seitzer~\cite{hein2011beyond} solve the graph partitioning problem by providing
tight relaxations of a semi-definite program into a continuous problem.

Bisseling~\etal\cite{bisseling2006partitioning} use integer linear programming
to solve the graph partitioning problem in directed graphs. In contrast to our
work, they aim to minimize the number of vertices that have incoming edges from
a different block. Miyauchi~\etal\cite{miyauchi2015redundant} use integer linear
programming to solve the graph partitioning problem on fully connected
edge-weighted graphs.

The Karlsruhe High-Quality Graph Partitioning (\emph{KaHIP}) framework
implements many different algorithms, for example flow-based methods and local
searches, as well as several coarse-grained parallel and sequential
meta-heuristics. KaBaPE~\cite{kabapeE} is a coarse-grained parallel memetic
algorithm, \ie each processor has its own population (set of partitions) and a
copy of the graph. After initially creating the local population, each processor
performs multi-level combine and mutation operations on the local population.
This is combined with a meta-heuristic that combines local searches that
individually violate the balance constraint into a more global feasible
improvement. For more details, we refer the reader to \cite{kabapeE}.

\section{Local Search based on Integer Linear Programming}
\label{c:gp:s:algorithm}
We now explain our algorithm that combines integer linear programming and local
search. We start by explaining the integer linear program that can solve the
graph partitioning problem to optimality. However, out-of-the-box this program
does not scale to large inputs, in particular because the graph partitioning
problem has a very large amount of symmetry. Thus, we reduce the size of the
graph by first computing a partition using an existing heuristic and based on it
collapsing parts of the graph. Roughly speaking, we compute a small graph,
called \emph{model}, in which we only keep a small number of selected vertices
for potential movement and perform graph contractions on the remaining ones. A
partition of the model corresponds to a partition of the input network having
the same objective and balance. The computed model is then solved to optimality
using the integer linear program. As we will see this process enables us to use
symmetry breaking in the linear program, which in turn drastically speeds up
computation times.

\subsection{Integer Linear Program for the Graph Partitioning Problem}
\label{c:gp:ss:ilp}

We now introduce a generalization of an integer linear program formulation for
balanced bipartitioning~\cite{brillout2009multi} to the general graph
partitioning problem. First, we introduce binary decision variables for all
edges and vertices of the graph. More precisely, for each edge $e=\{u,v\}\in E$,
we introduce the variable $e_{uv} \in \{0,1\}$ which is one if $e$ is a cut edge
and zero otherwise. Moreover, for each $v\in V$ and block $k$, we introduce the
variable $x_{v,k} \in \{0,1\}$ which is one if $v$ is in block $k$ and zero
otherwise. Hence, we have a total of $|E| + k|V|$ variables. We use the
following constraints to ensure that the result is a valid $k$-partition:
\begin{ceqn}
  \begin{align}
  \label{eq:ilp-con-1a}
 \forall \{u,v\} \in E, \forall k&: e_{uv} \geq x_{u,k} - x_{v,k}\\
  \label{eq:ilp-con-1b}
 \forall \{u,v\} \in E, \forall k&: e_{uv} \geq x_{v,k} - x_{u,k}\\
  \label{eq:ilp-con-2}
  \forall k&: \sum_{v \in V} x_{v,k} \omega(v) \leq L_{\text{max}}\\
  \label{eq:ilp-con-3}
    \forall v \in V&: \sum_k x_{v,k} = 1
  \end{align}
\end{ceqn}

The first two constraints ensure that $e_{uv}$ is set to one if the vertices $u$
and $v$ are in different blocks. For an edge $\{u,v\}\in E$ and a block $k$, the
right-hand side in this equation is one if one of the vertices $u$ and $v$ is in
block $k$ and the other one is not. If both vertices are in the same block then
the right-hand side is zero for all values of $k$. Hence, the variable can
either be zero or one in this case. However, since the variable participates in
the objective function and the problem is a minimization problem, it will be
zero in an optimum solution.

The third constraint ensures that the balance constraint is satisfied for each
partition. And finally, the last constraint ensures that each vertex is assigned
to exactly one block. To sum up, our program has $2k|E|+k+|V|$ constraints and
$k \cdot (6 |E| + 2 |V|)$ non-zeros. Since we want to minimize the weight of cut
edges, the objective function of our program is~written~as:

\begin{ceqn}
\begin{equation}
  \label{eq:ilp-sum}
  \min \sum_{\{u,v\} \in E} e_{uv} \cdot c(\{u,v\})
\end{equation}
\end{ceqn}

\subsection{Local Search}
The graph partitioning problem has a large amount of symmetry -- each
permutation of the block IDs gives a solution with equal objective and balance.
Hence, the integer linear program described above will scan many branches that
contain essentially the same solutions so that the program does not scale to
large instances. Moreover, it is not immediately clear how to improve the
scalability of the program by using symmetry breaking or other techniques. For
the closely related problem of vertex partitioning,
Bisseling~\etal\cite{bisseling2006partitioning} report that using symmetry
breaking is highly important in order to get optimal solutions in reasonable
time.

Our goal in this section is to develop a local search algorithm using the
integer linear program above. Given a partition as input to be improved, our
\emph{main idea} is to contract vertices ``that are far away'' from the cut of
the partition. In other words, we want to keep vertices close to the cut and
contract all remaining vertices into one vertex for each block of the input
partition. This ensures that a partition of the contracted graph yields a
partition of the input graph with the same objective and balance. Hence, we
apply the integer linear program to the model and solve the partitioning problem
on it to optimality. Note, however, that due to the performed contractions this
does not imply an optimal solution on~the~input~graph.

We now outline the details of the algorithm. Our local algorithm has two inputs,
a graph~$G$ and a partition $V_1, \ldots, V_k$ of its vertices. For now assume
that we have a set of vertices $\mathcal{K} \subset V$ which we want to keep in
the coarse model, \ie a set of vertices which we do not want to contract. We
outline in Section~\ref{c:gp:ss:find} which strategies we have to select the vertices
$\mathcal{K}$. For the purpose of contraction we define $k$ sets $\mathcal{V}_i
:= V_i \backslash \mathcal{K}$. We obtain our coarse model by contracting each of
these vertex sets. The contraction of a vertex set $\mathcal{V}_i$ works by
iteratively contracting all pairs of vertices in that set until only one node is
left.
After all contractions have been performed the coarse model contains
$k+|\mathcal{K}|$ vertices, and potentially much fewer edges than the input
graph. Figure~\ref{c:gp:fig:abstractexample} gives an abstract example of our model.

There are two things that are important to see: first, due to the way we perform
contraction, the given partition of the input network yields a partition of our
coarse model that has the same objective and balance simply by putting $\mu_i$
into block $i$ and keeping the block of the input for the vertices in
$\mathcal{K}$. Moreover, if we compute a new partition of our coarse model, we
can build a partition in the original graph with the same properties by putting
the vertices~$\mathcal{V}_i$ into the block of their coarse representative
$\mu_i$ together with the vertices of $\mathcal{K}$ that are in this block.
Hence, we can solve the integer linear program on the coarse model to compute
a~partition~for~the~input~graph. After the solver terminates, \ie found an
optimum solution of our mode or has reached a predefined time limit
$\mathcal{T}$, we transfer the best solution to the original graph. Note that
the latter is possible since an integer linear program solver typically computes
intermediate solutions that may not be optimal.

\begin{figure}[t]
\centering
\includegraphics[width=.49\textwidth]{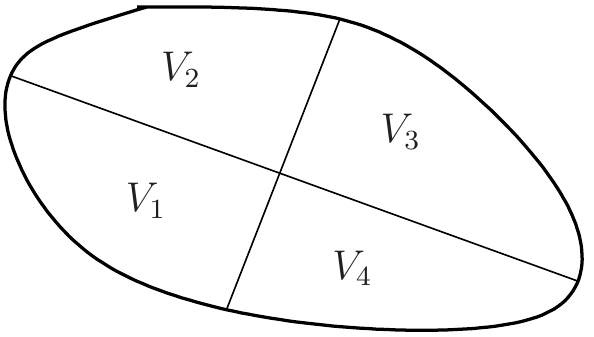}
\includegraphics[width=.49\textwidth]{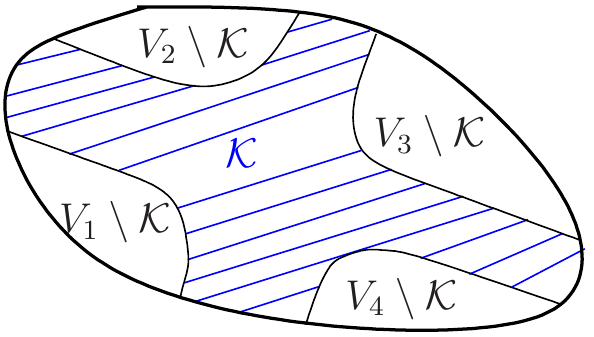}
\includegraphics[width=.49\textwidth]{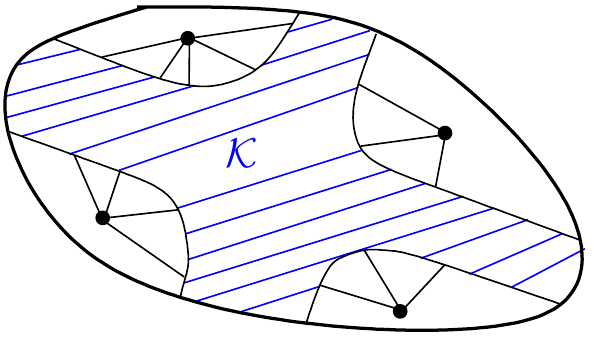}
\caption{Top left: a graph that is partitioned into four blocks, top right: the
set $\mathcal{K}$ close to the boundary that will stay in the model, bottom:
the model in which the sets $V_i \backslash \mathcal{K}$ have been~contracted.}
\label{c:gp:fig:abstractexample}
\end{figure}

\subsection{Optimizations}
\label{c:gp:ss:opti}
Independent of the vertices $\mathcal{K}$ that are selected to be kept in the
coarse model, the approach above allows us to define optimizations to solve our
integer linear program faster. We apply four strategies: (i) symmetry breaking,
(ii) providing a start solution to the solver, (iii) add the objective of the
input as a constraint as well as (iv) using the parallel solving facilities of
the underlying solver. We outline the first three strategies in greater detail:

\subsubsection{Symmetry Breaking}
If the set $\mathcal{K}$ is small, then the solver will find a solution much
faster. Typically, our algorithms selects the vertices $\mathcal{K}$ such that
$\omega(\mu_i)+\omega(\mu_j) > L_{\text{max}}$. In other words, no two contracted vertices
can be clustered in one block. We can use this to break symmetry in our integer
linear programming by adding constraints that fix the block of $\mu_i$ to block
$i$, \ie we set $x_{\mu_i,i} = 1$ and $x_{\mu_i,j} =0$ for  $i\neq j$. Moreover,
for those vertices we can remove the constraint which ensures that the vertex is
assigned to a single unique block---since we assigned those vertices to a block
using the new~additional~constraints. Note that we perform symmetry breaking
even if it is mathematically possible that multiple $\mu_i$ could be in the same
block.

\subsubsection{Providing a Start Solution to the Solver}
The integer linear program performs a significant amount of work in branches
which correspond to solutions that are worse than the input partition. Only very
few - if any - solutions  are better than the given partition. However, we
already know a fairly good partition (the given partition from the input) and
give this partition to the solver by setting according initial values for all
variables. This ensures that the integer linear program solver can omit many
branches and hence speeds up the time needed to solve the integer linear
program.

\subsubsection{Solution Quality as a Constraint}

Since we are only interested in improved partitions, we can add an additional
constraint that disallows solutions which have a worse objective than the input
partition. Indeed, the objective function of the linear program is linear, and
hence the additional constraint is also linear. Depending on the objective
value, this reduces the number of branches that the linear program solver needs
to look at. However, note that this comes at the cost of an additional
constraint that needs to be evaluated. Also note that if we provide a start
solution to the solver, the solver already knows a solution of said quality.
Thus, the solver is then able to prune worse solutions by itself.

\subsubsection{Row Generation}
Equation~\ref{eq:ilp-con-2} ensures that the balancing constraints in the graph
partitioning problem are adhered to. However, checking these constraints comes
with a computational cost. The idea of row generation is to initially omit these
constraints and lazily introduce balance constraints when a given solution
violates them. For each solution found by the ILP solver, we check whether any
block is heavier than $L_{\text{max}}$. If none is, the solution is valid. For
each block $V_k$ heavier than $L_{\text{max}}$ we introduce a new constraint
which makes sure that a subset of $V_k$ with a total weight of $>
L_{\text{max}}$ is not in block $k$ and thus reject the solution, as it violates
the new constraint.

In preliminary experiments this yields mixed results for $k=\{2,4\}$, but slowed
down the ILP for $k \geq 8$, as most solutions without balancing constraints are
too heavy in multiple blocks and thus the row generation introduces a large
amount of balancing constraints over the course of the solving process. We
therefore do not employ row generation in our experiments.

\subsection{Vertex Selection Strategies}
\label{c:gp:ss:find}
The algorithm above works for different vertex sets $\mathcal{K}$ that should be
kept in the coarse model. There is an obvious trade-off: on the one hand, the
set $\mathcal{K}$ should not be too large, otherwise the coarse model would be
large and hence the linear programming solver needs a large amount of time to
find a solution. On the other hand, the set should also not be too small, since
this restricts the amount of possible vertex movements, and hence the approach
is unlikely to find an improved solution. We now explain different strategies to
select the vertex set $\mathcal{K}$. In any case, while we add vertices to the
set $\mathcal{K}$, we compute the number of non-zeros in the corresponding ILP.
We stop to add vertices when the number of non-zeros in the corresponding ILP is
larger than a parameter $\mathcal{N}$.

\subsubsection{Vertices Close to Input Cut}
The intuition of the first strategy, \textttA{Boundary}, is that changes or
improvements of the partition will occur reasonable close to the input
partition. In this simple strategy our algorithm tries to use all \emph{boundary
vertices} as the set $\mathcal{K}$. In order to adhere to the constraint on the
number of non-zeros in the ILP, we add the vertices of the boundary uniformly at
random and stop if the number of non-zeros $\mathcal{N}$ is reached. If the
algorithm managed to add all boundary vertices whilst not exceeding the
specified number of non-zeros, we do the following extension: we perform a
breadth-first search that is initialized with a random permutation of the
boundary vertices. All additional vertices that are reached by the BFS are added
to $\mathcal{K}$. As soon as the number of non-zeros $\mathcal{N}$ is reached,
the algorithm~stops. 

\subsubsection{Start at Promising Vertices}
Especially for high values of $k$ the boundary contains many vertices. The
\textttA{Boundary} strategy quickly adds a lot of random vertices while ignoring
vertices that have high gain. Note that even in good partitions it is
possible that vertices with positive gain exist but cannot be moved due to the
balance constraint.

Hence, our second strategy, \textttA{Gain}$_\rho$, tries to fix this issue by
starting a breadth-first search initialized with only high gain vertices. More
precisely, we initialize the BFS with each vertex having gain $\geq \rho$
where~$\rho$~is~a~tuning~parameter. Our last strategy,
\textttA{TopVertices}$_\delta$, starts by sorting the boundary vertices by their
gain. We break ties uniformly at random. Vertices are then traversed in
decreasing order (highest gain vertices first) and for each start vertex $v$ our
algorithm adds all vertices with distance $\leq \delta$ to the model. The
algorithm stops as soon as the number of non-zeros exceeds $\mathcal{N}$.

Early gain-based local search heuristics for the $\epsilon$-balanced graph
partitioning problem searched for pairwise swaps with positive
gain~\cite{fiduccia1982lth,Kernighan70}. More recent algorithms generalized this
idea to also search for cycles or paths with positive total gain~\cite{kabapeE}.
An important advantage of our new approach is that we solve the combination
problem to optimality, \ie our algorithm finds the best combination of vertex
movements of the vertices in $\mathcal{K}$ with respect to the input partition of the
original graph. Therefore we can also find more complex optimizations that
cannot be reduced to positive gain cycles and paths. 

\section{Integer Linear Programming based Crossover}
\label{c:gp:s:evo}

A memetic algorithm is a population-based metaheuristic algorithm for an
optimization problem. The general outline of a memetic algorithm is such that we
first create a population of solutions and then use crossover and mutation
operations to generate new individuals out of existing ones. Generally, a
mutation operation has a single input partition and a cross operation has
multiple input partitions. If those new individuals are sufficiently fit, they
evict the lowest fitness individual from the population.
\emph{KaBaPE}~\cite{kabapeE} is a distributed parallel memetic algorithm for the
graph partitioning problem that provides multiple cross and mutation operations.
Based on the optimization techniques in this work, we now describe new mutation
and cross operations. These operations are added to the existing portfolio of
operations of KaBaPE.

More precisely, the ILP-based local search algorithm described in
Section~\ref{c:gp:s:algorithm} can be used as a mutation operation directly. For
this, we take an individual from the population and run ILP-based local search
on the individual. If this results in an improved cut value, the new individual
is added to the population.

\subsection{ILP on Overlap Graph}

\begin{figure}[!t]
  \centering
  \includegraphics[width=.6\textwidth]{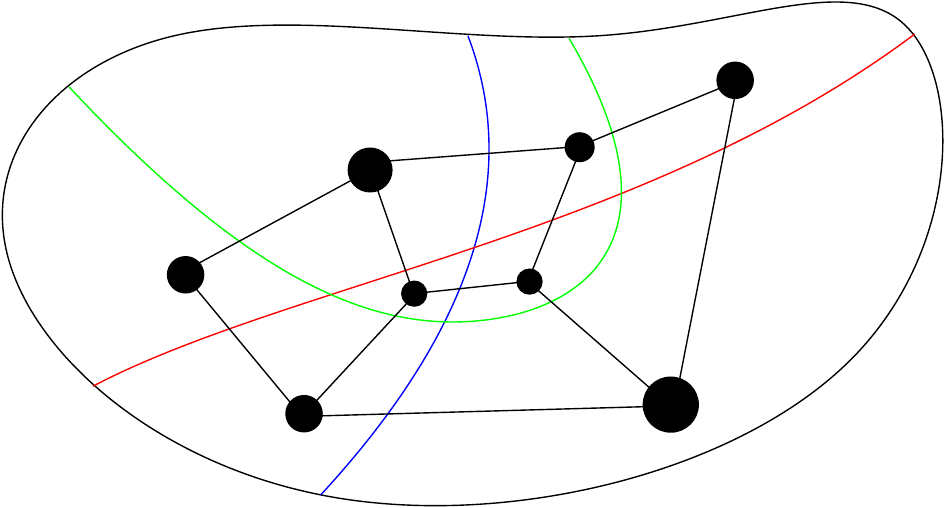}
  \caption{Example overlap graph for a graph with $3$ partitions and $2$ blocks
  each. \label{c:gp:fig:overlap}}
\end{figure}

Our new cross operation builds and solves an integer linear program from
multiple individuals. For this operation, we take $l$ individuals and build an
overlap graph $G_O = (V_O,E_O)$ out of $G$ by contracting regions that are in
the same block in every partition. An example overlap graph can be found in
Figure~\ref{c:gp:fig:overlap}. 
In the literature this concept is also called overlap clustering
\cite{DBLP:journals/jmlr/StrehlG02}.

The weight of a vertex $v_O \in V_O$ is equal to the weight sum of all vertices
that are contracted into $v_O$. For vertices $u_O$ and $v_O$ in $V_O$,
$e_O=(u_O,v_O)$ exists if there is an edge from any vertex in $u_O$ to any
vertex in $v_O$. If there are multiple edges, the edge weight $c(e_O)$ is
equal to the sum of their weights. The fundamental idea behind that contraction
is such that if a region of vertices that is in the same block in all
partitions, most good partitions will have them in the same block. It is
therefore more valuable to model regions in which the partitions 'disagree' on
the vertex placement to make the ILP tractable.

In order to break symmetries, we select a subset of vertices $I$ where no two
vertices in $I$ are in the same block for any individual used to create the
overlap graph $G_O$. We choose the first vertex $v \in V_O$ in $I$ at random and
add vertices that share no block with any vertex in $I$ in any individual until
we can't find such a vertex in $V_O$ any more. In this way we break many
symmetries and only disallow solutions that aim to place vertices in the same
block that were placed in different blocks in every individual used to create
$G_O$.

We model the overlap graph as an ILP and initialize the block affiliations in
the ILP according to the partition that has the lowest cut value. When multiple
partitions have the same cut value, we choose any of them at random. For each
vertex $v \in V_O$, the block affiliation of $v$ is set to the block ID of the
vertices merged into $v$. Thus, we already have a solution that has value equal
to the best partition used for the overlap. If the ILP finds a better solution,
we insert the individual into the population.

As there is a very high variability in running times of the ILP operations, we
do not give a fixed ratio of ILP operation calls. Instead, we limit the total
running time fraction used in the ILP operations, so that they never take up
more than a third of the total running time. 

Each process in KaBaPE+ILP keeps two timers, one for each ILP operation, to
count the total running time used for all calls of the operation. If the sum of
them is at least $33\%$ of the total time used for the algorithm, we will not
choose them. If no time has been spent in the ILP operations yet, we choose one
of them with a probability of $75\%$. In between, we perform linear interpolation,
\ie the probability of performing an ILP operation is $(0.33 - \alpha) \cdot
0.75$, where $\alpha$ is the fraction of the total runtime due to ILP
operations. Thus, in graphs where the ILP operations are very fast, we use them
often. However, if the ILP operations are slow compared to other operations they
will not use up the majority of the running time. We also use linear
interpolation over the total running times to determine fairly which ILP
operation is used. As the solution quality improves more rapidly in the start of
the memetic algorithm, we gradually increase the time limit given to the ILP
solver in an ILP operation. The time limit is equal to $10\%$ of the current
total running time. These parameters were obtained from preliminary experiments,
however, in general, the algorithm is not very susceptible to those parameters
within reasonable limits.

We denote the extended memetic algorithm as KaBaPE+ILP.

\subsection{Post-processing}

We also employ a similar strategy to find the overlap graph for all high-quality
partitions. After the memetic algorithm is terminated, we collect all unique
partitions. We then build the overlap graph $G_{\mathcal{O}}$ on the best
$\kappa$ partitions, where $\kappa$ is a tuning parameter. In this graph,
vertices are merged if every high-quality partition in the population places
them in the same block. Thus, if the diversity of the population is large
enough, it is highly likely that the vertices will be placed in the same block
in any good partition. We run the memetic algorithm KaBaPE+ILP again, this time
on $G_{\mathcal{O}}$. As $G_{\mathcal{O}}$ has significantly fewer vertices and
edges than $G$, all operations perform faster and convergence is faster.
However, this also limits the solution space, as partitions that place merged
vertices into different blocks can not be found on $G_{\mathcal{O}}$. Thus, we
might be converging to a local optimum.

\section{Experiments}\label{c:gp:s:experiments}

\subsection{Experimental Setup and Methodology}

We implemented the algorithms described in the previous sections using
\CC\textttA{-17} and compiled all codes with full
optimization enabled (\textttA{-O3}). We use Gurobi as an ILP solver and use its
shared-memory parallel version. The experiments in
Sections~\ref{c:gp:ss:exp_opti},~\ref{c:gp:s:node_selection}~and~\ref{c:gp:ss:walshaw}
were conducted on a machine with two Haswell Xeon E5-2697 v3 processors, using
\textttA{g++-7.2.0} and Gurobi 7.5.2. The
machine has 28 cores at 2.6GHz as well as 64GB of main memory and runs the SUSE
Linux Enterprise Server (SLES) operating system. Unless otherwise mentioned, our
approach uses the shared-memory parallel variant of Gurobi using all 28 cores.
The experiments in Section~\ref{c:gp:ss:exp_evo} use \textttA{g++-8.3.0} and
Gurobi 8.1.1 and were conducted on a machine with two Intel Xeon E5-2643 v4 with
3.4GHz with 6 CPU cores each and 1.5 TB RAM in total. As the memetic algorithm
in this section has multiple parallel threads that perform cross and mutation
operations independent from each other, KaBaPE+ILP uses the sequential variant
of Gurobi. In general, we perform five repetitions per instance and report the
average running time as well as cut. Unless otherwise mentioned, we use a time
limit for the integer linear program. When the time limit is passed, the integer
linear program solver outputs the best solution that has currently been
discovered.
This solution does not have to be optimal. Note that we do not perform
experiments with Metis~\cite{karypis1998fast} and
Scotch~\cite{pellegrini1996scotch}, since previous papers, \eg
\cite{kaffpa,kabapeE}, have already shown that solution quality obtained is much
worse than results achieved in the Walshaw benchmark. When averaging over
multiple instances, we use the geometric mean in order to give every instance
the same influence on the \textit{final score}. We use performance plots to
compare the performance of different algorithm configurations on a per-instance
basis. For an explanation of these performance plots, we refer the reader to
Section~\ref{c:mc:ss:perf_plots}.

\begin{table}[t]
  \centering
  \caption{Basic properties of the benchmark instances. }
  \small
  \begin{tabular}{| l | r | r || l | r | r |}
    \hline
    Graph & $n$& $m$  & Graph & $n$ & $m$ \\
    \hline \hline
    \multicolumn{3}{|c||}{Walshaw Graphs (Set B)} &  \multicolumn{3}{c|}{Walshaw
    Graphs (Set B)}\\
    \hline

    add20       & \numprint{2395} & \numprint{7462} &wing        &
    \numprint{62032} & $\approx121$K \\
    data        & \numprint{2851} & \numprint{15093} &brack2      &
    \numprint{62631} & $\approx366$K\\
    3elt        & \numprint{4720} & \numprint{13722} &finan512    &
    \numprint{74752} & $\approx261$K\\
    uk          & \numprint{4824} & \numprint{6837} &fe\_tooth   &
    \numprint{78136} & $\approx452$K\\
    add32       & \numprint{4960} & \numprint{9462} &fe\_rotor   &
    \numprint{99617} & $\approx662$K \\
    bcsstk33    & \numprint{8738} & $\approx291$K & \numprint{598}a        &
    \numprint{110971} & $\approx741$K \\
    whitaker3   & \numprint{9800} & \numprint{28989} &fe\_ocean   &
    \numprint{143437} & $\approx409$K\\
    crack       & \numprint{10240}& \numprint{30380} & \numprint{144}         &
    \numprint{144649} & $\approx1.1$M\\
    wing\_nodal & \numprint{10937}& \numprint{75488} &wave        &
    \numprint{156317} & $\approx1.1$M\\
    fe\_4elt2   & \numprint{11143}& \numprint{32818} &m14b        &
    \numprint{214765} & $\approx1.7$M\\
    vibrobox    & \numprint{12328}& $\approx165$K &auto        &
    \numprint{448695} & $\approx3.3$M \\
    \cline{4-6}

    bcsstk29    & \numprint{13992}& $\approx302$K &\multicolumn{3}{c|}{}\\
    4elt        & \numprint{15606}& \numprint{45878} &
    \multicolumn{3}{c|}{Parameter Tuning (Set A)}\\
    \cline{4-6}
    fe\_sphere  & \numprint{16386}& \numprint{49152} & delaunay\_n15 &
    \numprint{32768} & \numprint{98274}\\
    cti         & \numprint{16840}& \numprint{48232} & rgg\_15 &
    \numprint{32768} & $\approx160$K\\
    memplus     & \numprint{17758}& \numprint{54196} & \numprint{2}cubes\_sphere
    & \numprint{101492} & $\approx772$K\\
    cs4         & \numprint{22499}& \numprint{43858} & cfd2            &
    \numprint{123440} & $\approx1.5$M \\
    bcsstk30    & \numprint{28924}& $\approx 1.0$M & boneS01         &
    \numprint{127224} & $\approx3.3$M\\
    bcsstk31    & \numprint{35588}& $\approx572$K & Dubcova3        &
    \numprint{146689} & $\approx1.7$M\\
    fe\_pwt     & \numprint{36519}& $\approx144$K & G2\_circuit   &
    \numprint{150102} & $\approx288$K \\
    bcsstk32    & \numprint{44609}& $\approx985$K & thermal2  &
    \numprint{1227087} & $\approx3.7$M\\
    fe\_body    & \numprint{45087}& $\approx163$K & as365     &
    \numprint{3799275} & $\approx11.4$M\\
    t60k        & \numprint{60005}& \numprint{89440} & adaptive  &
    \numprint{6815744} & $\approx13.6$M\\

  \hline
  \end{tabular}
  \label{c:gp:tab:test_instances_walshaw}
\end{table}

\paragraph{Instances.}
We perform experiments on two sets of instances. Set $A$ is used to determine
the performance of the integer linear programming optimizations and to tune the
algorithm. We obtained these instances from the Florida Sparse Matrix collection
\cite{davis2011university} and the~10th~DIMACS Implementation
Challenge~\cite{bader2013graph} to test our algorithm. Set $B$
are all graphs from Chris Walshaw's graph partitioning benchmark
archive~\cite{soper2004combined,wswebsite}. This archive is a collection of
instances from finite-element applications, VLSI design and is one of the
default benchmarking sets~for~graph~partitioning. 

Table~\ref{c:gp:tab:test_instances_walshaw} gives basic properties of the graphs
from both benchmark sets. We ran the unoptimized integer linear program that
solves the graph partitioning problem to optimality from
Section~\ref{c:gp:ss:ilp} on the five smallest instances from the Walshaw
benchmark set. With a time limit of $30$ minutes, the solver has only been able
to compute a solution for the graphs uk and add32 with $k=2$. For higher values
of $k$ the solver was unable to find any solution in the time limit. Even giving
a starting solution does not increase the number of ILPs solved. Hence, we omit
further experiments in which we run an ILP solver on the full graph.

\subsection{Impact of Optimizations}
\label{c:gp:ss:exp_opti}
We now evaluate the impact of the optimization strategies for the ILP that we
presented in Section~\ref{c:gp:ss:opti}.  
In this section, we use the variant of our local search algorithm in which
$\mathcal{K}$ is obtained by starting depth-one breadth-first search at the $25$
highest gain vertices, and set the limit on the non-zeros in the ILP to
$\mathcal{N}=\infty$. However, due to preliminary experiments we expect the
results in terms of speedup to be similar for different vertex selection
strategies. To evaluate the ILP performance, we run KaFFPa using the strong
preconfiguration on each of the graphs from set $A$ using $\epsilon=0$ and
$k\in\{2,4,8,16,32,64\}$ and then use the computed partition as input to each
ILP (with the different optimizations). As the optimizations do not change the
objective value achieved in the ILP and we only look at ILP formulations solved
to optimality in this subsection, we only report running times of our different
approaches. We set the time limit of the ILP solver to 30 minutes.

We use five variants of our algorithm in this experiment: \textttA{Basic} does
not contain any optimizations; \textttA{BasicSym} enables symmetry breaking;
\textttA{BasicSymSSol} additionally gives the input partition to the ILP solver.
The two variants \textttA{BSSSConst=} and \textttA{BSSSConst$<$} are the same as
\textttA{BasicSymSSol} with additional constraints to the solution quality:
\textttA{BSSSConst=} has the additional constraint that the objective has to be
smaller or equal to the start solution,  \textttA{BSSSConst$<$} has the
constraint that the objective value of a solution must be better than the
objective value of the start solution.
Figure~\ref{c:gp:fig:plotresultsilpvariants} summarises the results.

\begin{figure}[t]
  \centering
  \includegraphics[width=.9\textwidth]{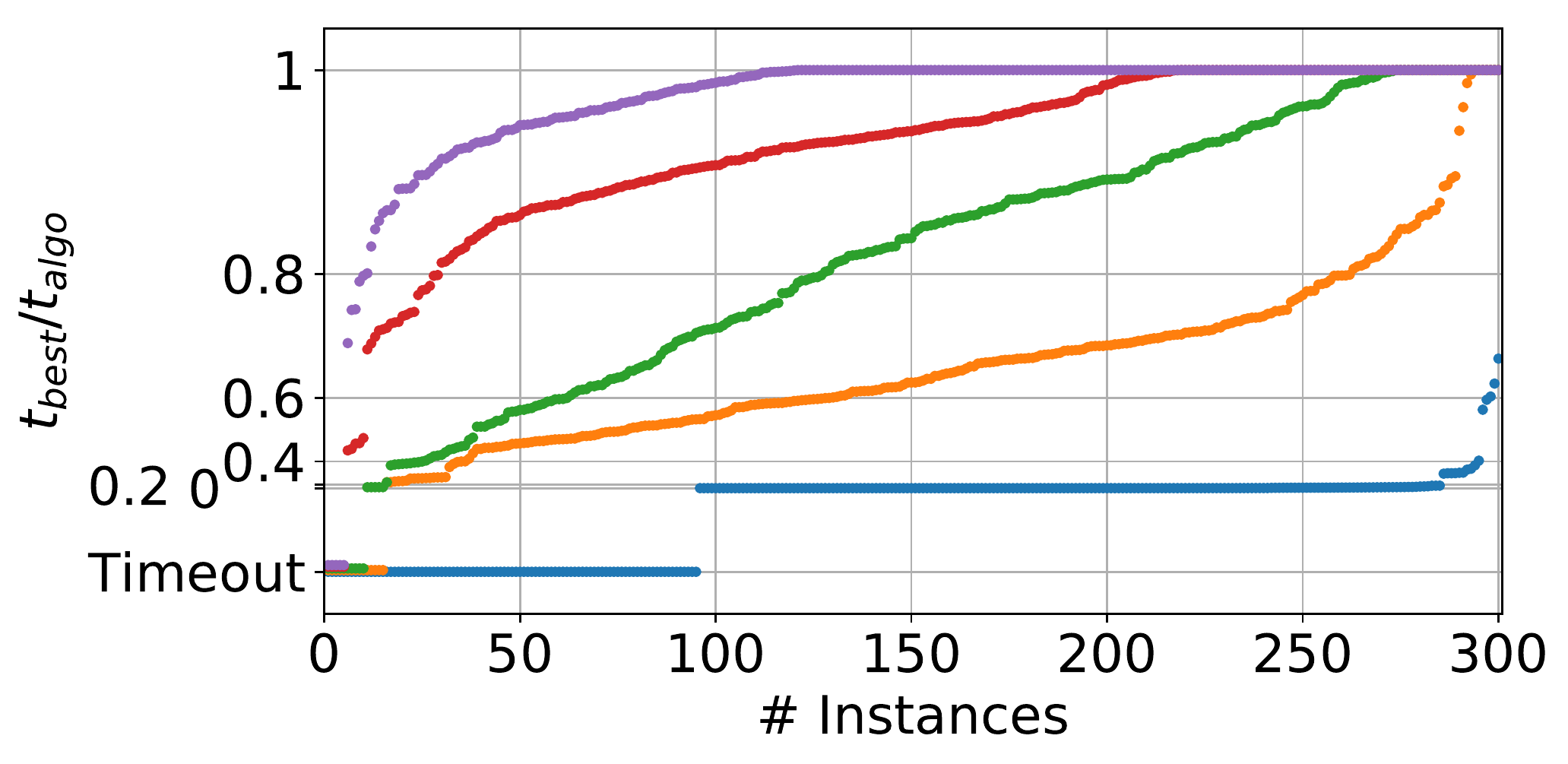}
  \centering
  \includegraphics[width=\textwidth]{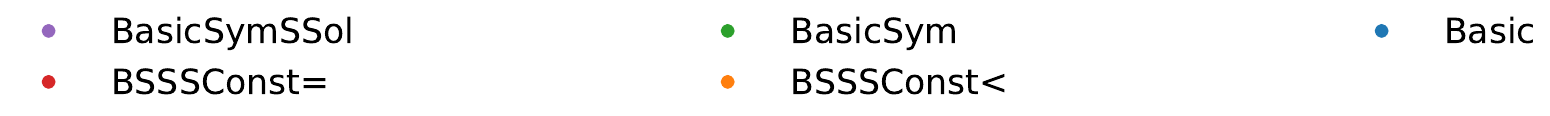}
  \caption{Performance plot for five variants of our algorithm: \textttA{Basic}
  does not contain any optimizations; \textttA{BasicSym} enables symmetry
  breaking; \textttA{BasicSymSSol} additionally gives the input partition to the
  ILP solver. The two variants \textttA{BSSSConst=} and \textttA{BSSSConst$<$} are
  the same as \textttA{BasicSymSSol} with additional constraints:
  \textttA{BSSSConst=} has the additional constraint that the objective has to be
  smaller or equal to the start solution,  \textttA{BSSSConst$<$} has the
  constraint that the solution must be better than~the~start~solution.}
  \label{c:gp:fig:plotresultsilpvariants}
\end{figure}

In our experiments, which are detailed in
Figure~\ref{c:gp:fig:plotresultsilpvariants}, the basic configuration reaches
the time limit in 95 out of the 300 runs. Overall, enabling symmetry breaking
drastically speeds up computations.  On all of the instances which the
\textttA{Basic} configuration could solve within the time limit, each other
configuration is faster than the \textttA{Basic} configuration. Symmetry breaking
speeds up computations by a factor of 41 in the geometric mean on those
instances. The largest obtained speedup on those instances was a factor of 5663
on the graph \textttA{adaptive} for $k=32$. The configuration solves all but the
two instances (\textttA{boneS01}, $k=32$) and (\textttA{Dubcova3}, $k=16$) within
the time limit. Providing the start solution (\textttA{BasicSymSSol}) gives an
additional speedup of 22\% on average. Over the \textttA{Basic} configuration,
the average speedup is 50 with the largest speedup being 6495 and the smallest
speedup being 1.47. This configuration can solve all instances within the time
limit except the instance \textttA{boneS01} for $k=32$. Providing the objective
function as a constraint (or strictly smaller constraint) does not further
reduce the running time of the solver. Instead, the additional constraints even
increase the running time. We attribute this to the fact that the solver has to
do additional work to~evaluate~the~constraint. We conclude that
\textttA{BasicSymSSol} is the fastest configuration of the ILP. Hence, we use
this configuration in all the following experiments. Moreover, from
Figure~\ref{c:gp:fig:plotresultsnonzero} we can see that this configuration can
solve most of the instances within the time limit if the number of non-zeros in
the ILP is below $10^6$. Hence, we set the parameter $\mathcal{N}$ to~$10^6$ in
the following section.

\begin{figure}[t!]
  \centering
\includegraphics[width=.9\textwidth]{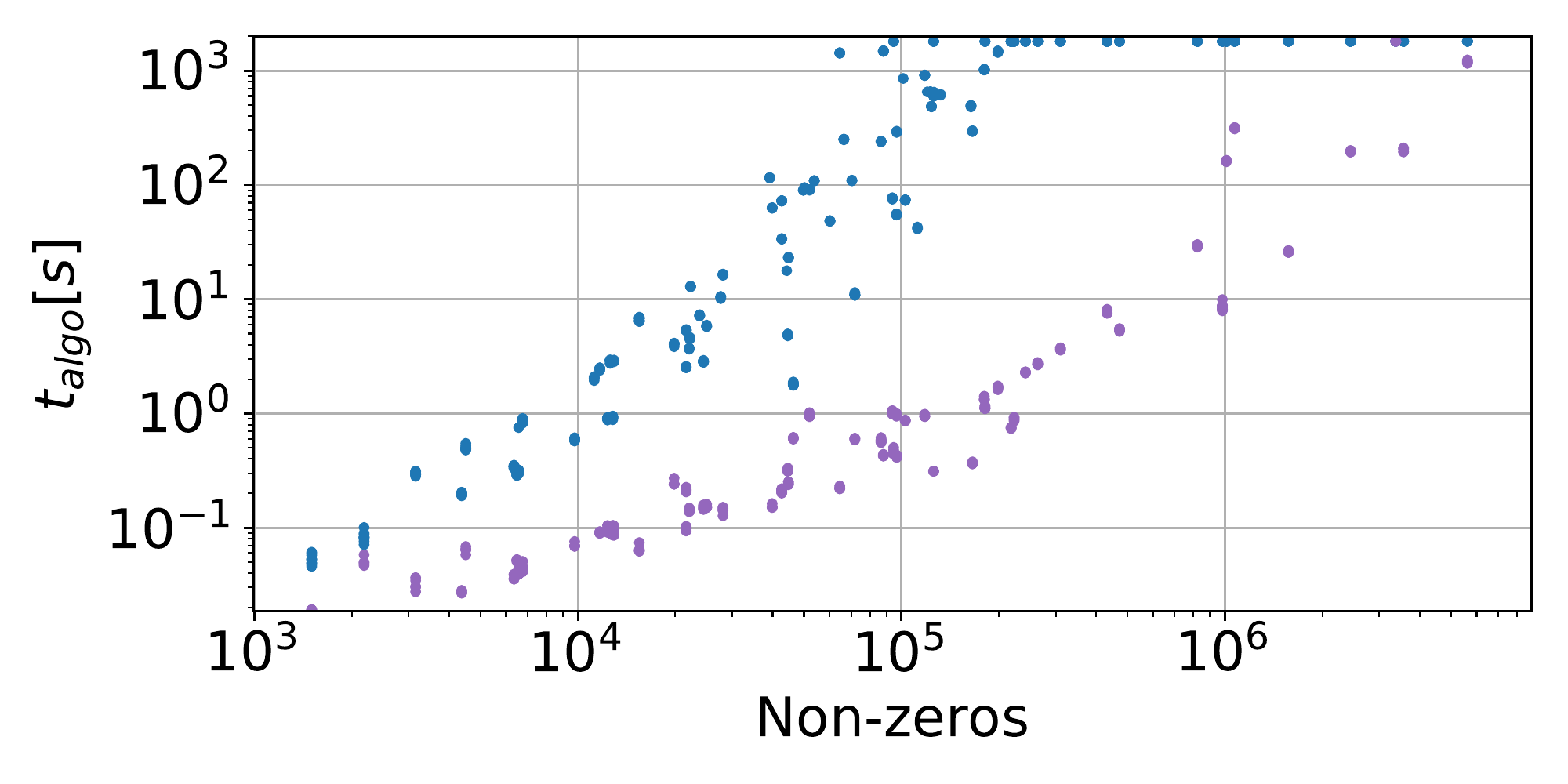}
\centering
\includegraphics[width=\textwidth]{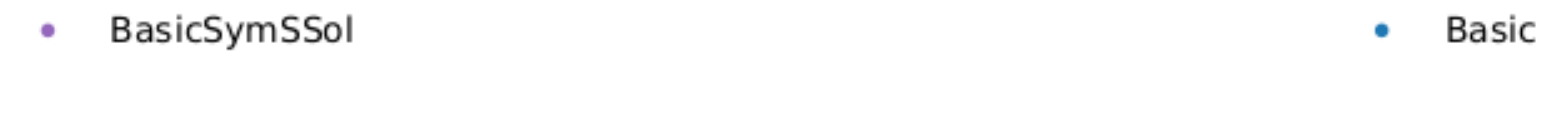}

\caption{Performance of the slowest (\textttA{Basic}) and fastest ILPs
(\textttA{BasicSymSSol}) depending on the number of non-zeros in the ILP.}
\label{c:gp:fig:plotresultsnonzero}
\end{figure}

\subsection{Vertex Selection Rules}
\label{c:gp:s:node_selection}

\begin{figure}[t!]
  \centering
  \includegraphics[width=.9\textwidth]{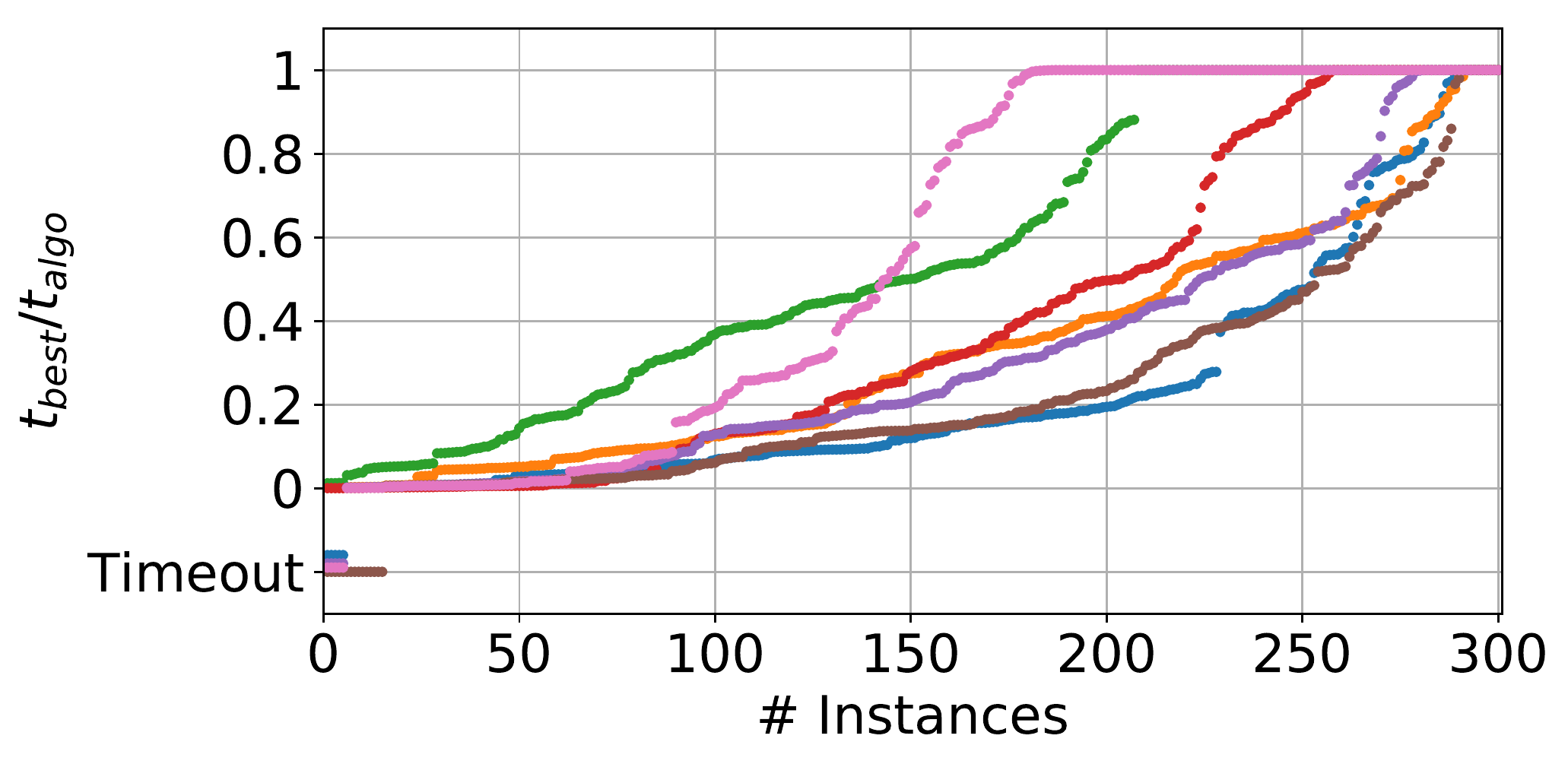}
  \centering
  \includegraphics[width=\textwidth]{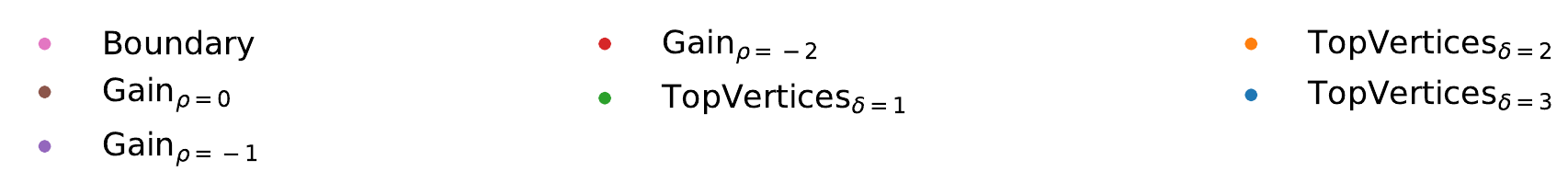} 
  \caption{Performance plot for all vertex selection strategies.}
  \label{c:gp:fig:plotperf}
\end{figure}

\begin{figure}[b!]
  \centering
\includegraphics[width=.9\textwidth]{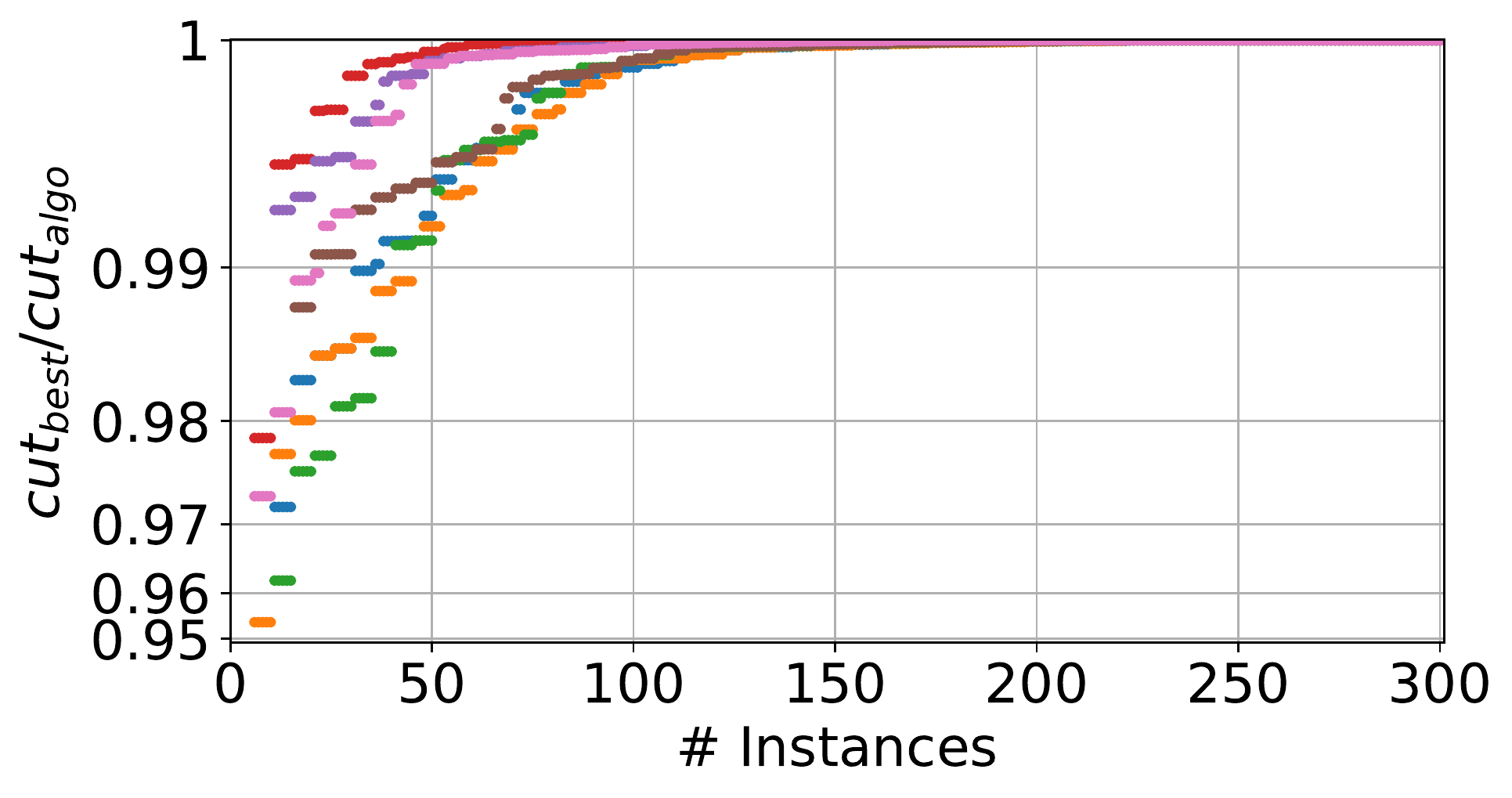}

  \includegraphics[width=\textwidth]{img/legend2.pdf}
\caption{Cut value of vertex selection strategies in comparison to the best
 result given by any strategy.}
\label{c:gp:fig:plotresults}
\end{figure}

We now evaluate the vertex selection strategies to find the set of vertices
$\mathcal{K}$ that model the ILP. We look at all strategies described in
Section~\ref{c:gp:ss:find}, \ie \textttA{Boundary}, \texttt{Gain$_{\rho}$} with
the parameter $\rho \in \{-2,-1,0\}$ as well as \textttA{TopVertices}$_\delta$
for $\delta \in \{1,2,3\}$. To evaluate the different selection strategies, we
use the best of five runs of KaFFPa-strong on each of the graphs from set $A$
using imbalance $\epsilon=0$ and number of partitions $k\in\{2,4,8,16,32,64\}$
and then use the computed partition as input to the ILP
(with~different~sets~$\mathcal{K}$). Table~\ref{table:variants} summarizes the
results of the experiment, \ie the number of cases in which our algorithm was
able to improve the result, the average running time in seconds for these
selection strategies as well as the number of cases in which the strategy
computed the best result (the partition having the lowest cut). We set the time
limit to $2$ days to be able to finish almost all runs without running into
timeout. For the average running time we exclude all graphs in which at least
one algorithm did not finish in $2$ days (\textttA{rgg\_15} $k=16$,
\textttA{delaunay\_n15} $k=4$, \textttA{G2\_circuit} $k=4,8$). If multiple runs
share the best result, they are all counted. However, when no algorithm improves
the input partition on a graph, we do not count them.

 \begin{table}[t!]
    \centering
    \small

    \caption{From top to bottom: Number of improvements found by different
    vertex selection rules relative to the total number of instances, average
    running time of the strategy on the subset of instances (graph, $k$) in
    which all strategies finished within the time limit, and the relative number
    of instances in which the strategy computed the lowest cut. Best values are
    highlighted in bold. \label{table:variants}}
    \resizebox{\textwidth}{!}{
      \begin{tabular}{|l||rrr|rrr|r|}

        \hline
          &\textttA{Gain} &&&\textttA{TopVertices} &&& \textttA{Boundary} \\
      $k$& $\rho = 0 $& $\rho = -1 $ & $\rho=-2$ &$\delta = 1$ & $\delta=2$&
      $\delta=3$& \\\hline \hline
&      \multicolumn{7}{c|}{Relative Number of Improvements}  \\\hline
       2 & \textbf{\numprint{70}\%} & \textbf{\numprint{70}\%} &
       \textbf{\numprint{70}\%} & \numprint{50}\% & \textbf{\numprint{70}\%} &
       \textbf{\numprint{70}\%} & \textbf{\numprint{70}\%} \\
       4 & \numprint{50}\% & \numprint{60}\% & \textbf{\numprint{80}\%} &
       \numprint{70}\% & \numprint{70}\% & \numprint{70}\% &
       \textbf{\numprint{80}\%} \\
       8 & \numprint{50}\% & \numprint{60}\% & \textbf{\numprint{78}\%} &
       \numprint{60}\% & \numprint{60}\% & \numprint{60}\% & \numprint{48}\% \\
      16 & \numprint{30}\% & \numprint{50}\% & \textbf{\numprint{70}\%} &
      \numprint{40}\% & \numprint{30}\% & \numprint{30}\% & \numprint{40}\% \\
      32 & \textbf{\numprint{60}\%} & \textbf{\numprint{60}\%} & \numprint{46}\%
      & \numprint{50}\% & \numprint{50}\% & \numprint{20}\% & \numprint{20}\% \\
      64 & \textbf{\numprint{70}\%} & \textbf{\numprint{70}\%} & \numprint{50}\%
      & \numprint{30}\% & \numprint{20}\% & \numprint{20}\% & \numprint{0}\% \\
  \hline\hline
&      \multicolumn{7}{c|}{Average Running Time} \\
\hline
                2 & \numprint{189}.943s & \numprint{292}.573s &
                \numprint{357}.145s &  \textbf{\numprint{34}.045s} &
                \numprint{61}.152s & \numprint{92}.452s & \numprint{684}.198s \\
       4 & \numprint{996}.934s & \numprint{628}.950s & \numprint{428}.353s &
       \textbf{\numprint{87}.357s} & \numprint{255}.223s & \numprint{558}.578s &
       \numprint{1467}.595s \\
       8 & \numprint{552}.183s & \numprint{244}.470s & \numprint{244}.046s &
       \numprint{105}.737s & \numprint{167}.164s & \numprint{340}.900s &
       \textbf{\numprint{96}.763s} \\
      16 & \numprint{118}.532s & \numprint{52}.547s & \numprint{90}.363s &
      \numprint{53}.385s & \numprint{141}.814s & \numprint{243}.957s &
      \textbf{\numprint{34}.790s} \\
               32 & \numprint{40}.300s & \numprint{24}.607s & \numprint{94}.146s
               & \numprint{27}.156s & \numprint{80}.252s & \numprint{116}.023s &
               \textbf{\numprint{7}.596s} \\
               64 & \numprint{15}.866s & \numprint{21}.908s & \numprint{24}.253s
               & \numprint{14}.627s & \numprint{30}.558s & \numprint{44}.813s &
               \textbf{\numprint{4}.187s} \\
      \hline
      \hline
&      \multicolumn{7}{c|}{Relative Number Best Algorithm} \\
\hline

   2 & \numprint{20}\% & \textbf{\numprint{60}\%} & \numprint{50}\% &
   \numprint{10}\% & \numprint{10}\% & \numprint{0}\% & \textbf{\numprint{60}\%}
   \\
                4 & \numprint{10}\% & \numprint{0}\% & \textbf{\numprint{50}\%}
                & \numprint{10}\% & \numprint{0}\% & \numprint{0}\% &
                \numprint{30}\% \\
                8 & \numprint{0}\% & \numprint{20}\% & \textbf{\numprint{30}\%}
                & \numprint{10}\% & \numprint{10}\% & \numprint{10}\% &
                \numprint{26}\% \\
               16 & \numprint{0}\% & \numprint{10}\% & \textbf{\numprint{54}\%}
               & \numprint{10}\% & \numprint{0}\% & \numprint{10}\% &
               \numprint{20}\% \\
               32 & \numprint{0}\% & \numprint{8}\% & \textbf{\numprint{38}\%} &
               \numprint{0}\% & \numprint{0}\% & \numprint{0}\% & \numprint{4}\%
               \\
      64 & \numprint{0}\% & \numprint{16}\% &          \textbf{\numprint{36}\%}
      & \numprint{0}\% & \numprint{0}\% & \numprint{0}\% & \numprint{0}\% \\

        \hline
    \end{tabular}
    }
  \end{table}

Looking at the number of improvements, the \textttA{Boundary} strategy is able to
improve the input for small values of $k$, but with increasing number of blocks
$k$ improvements decrease to no improvement in all runs with $k=64$. Because of
the limit on the number of non-zeros, the ILP contains only random boundary
vertices for large values of $k$ in this case. Hence, there are not sufficiently
many high gain vertices in the model and fewer improvements for large values of
$k$ are expected. For small values of $k \in \{2,4\}$, the \textttA{Boundary}
strategy can improve as many as the \texttt{Gain$_{\rho=-2}$} strategy but the
average running times are higher.

For $k = \{2,4,8,16\}$, the strategy \texttt{Gain$_{\rho=-2}$} has the highest
number of improvements, for $k=\{32,64\}$ it is surpassed by the strategy
\texttt{Gain$_{\rho=-1}$}. However, the strategy \texttt{Gain}$_{\rho=-2}$ finds
the best cuts in most cases among all tested strategies. Due to the way these
strategies are designed, they are able to put a lot of high gain vertices into
the model as well as vertices that can be used to balance vertex movements. The
\textttA{TopVertices} strategies are overall also able to find a large number of
improvements. However, the improvements are typically smaller than for the
\textttA{Gain} strategies. This is due to the fact that the \textttA{TopVertices}
strategies grow BFS balls with a predefined depth around high gain vertices
first, and later on are not able to include vertices that could be used to
balance their movement. Hence, there are less potential vertex movements that
could yield an improvement.

For almost all strategies, we can see that the average running time decreases as
the number of blocks $k$ increases. This happens because we limit the number of
non-zeros $\mathcal{N}$ in our ILP. As the number of non-zeros grows linearly
with the underlying model size, the models are far smaller for higher values of
$k$. Using symmetry breaking, we already fixed the block of the $k$ vertices
$\mu_i$ which represent the vertices not part of $\mathcal{K}$. Thus the ILP
solver can quickly prune branches which would place vertices connected heavily
to one of these vertices in~a~different~block. Additionally, our data indicates
that a large number of small areas in our model results faster in solve times
than when the model contains few large areas. The performance plot in
Figure~\ref{c:gp:fig:plotperf} shows that the strategies \textttA{Boundary},
\textttA{TopVertices}$_{\delta=1}$ and \texttt{Gain$_{\rho=-2}$} have lower
running times than other strategies. These strategies all select a large number
of vertices to initialize the breadth-first search. Therefore they output a
vertex set $\mathcal{K}$ that is the union of many small areas around these
vertices. Variants that initialize the breadth-first search with fewer vertices
have fewer areas, however each area is larger. Figure~\ref{c:gp:fig:plotresults}
shows that for almost all instances the variants \texttt{Gain$_{\rho=-1}$} and
\texttt{Gain$_{\rho=-2}$} give very good solutions, even if they are not the
best variant on that particular instance.

\subsection{Walshaw Benchmark}
\label{c:gp:ss:walshaw}
  \begin{table}[b!]
    \centering
            \caption{Relative number of improved instances by performing an
            ILP-based local search in the Walshaw Benchmark starting from
            current entries reported in the Walshaw benchmark.
            \label{c:gp:tab:existingimprovment}}
     \begin{tabular}{|l|r|r|r|r|r|r||r|}
      \hline
      $\epsilon \backslash k$ & \numprint{2} & \numprint{4} & \numprint{8} &
      \numprint{16} & \numprint{32} & \numprint{64} & overall \\ \hline \hline
      $0\%$ & $6\%$ & $18\%$ & $26\%$ & $50\%$ & $62\%$ & $68\%$ & $38\%$ \\
      $1\%$ & $12\%$ & $9\%$ & $24\%$ & $26\%$ & $47\%$ & $59\%$ & $29\%$ \\
      $3\%$ & $6\%$ & $6\%$ & $12\%$ & $29\%$ & $47\%$ & $71\%$ & $28\%$ \\
      $5\%$ & $6\%$ & $18\%$ & $15\%$ & $29\%$ & $53\%$ & $76\%$ & $33\%$ \\
      \hline
     \end{tabular}
    \end{table}

In this section, we present the results when running our best configuration on
all graphs from Walshaw's benchmark archive. Note that the rules of the
benchmark imply that running time is not an issue, but algorithms should achieve
the smallest possible cut value while satisfying the balance constraint. We run
our algorithm in the following setting: We take existing partitions from the
archive and use those as input to our algorithm. As indicated by the experiments
in Section~\ref{c:gp:s:node_selection}, the vertex selection strategies
\texttt{Gain$_{\rho \in \{-1,-2\}}$} perform best for different values of $k$.
Thus we use the variant \texttt{Gain$_{\rho=-2}$} for $k \leq 16$ and both
\texttt{Gain$_{\rho=-2}$} and \texttt{Gain$_{\rho=-1}$} otherwise in this
section. We repeat the experiment once for each instance (graph, $k$) and run
our algorithm for $k = \{2,4,8,16,32,64\}$ and $\epsilon \in \{0, 1\%, 3\%,
5\%\}$. For larger values of $k \in \{32,64\}$, we strengthen our strategy and
use  $\mathcal{N} = 5 \cdot 10^6$ as a bound for the number of non-zeros. We set
the time limit to two hours. Table~\ref{c:gp:tab:existingimprovment} summarizes
the results. Detailed per-instance results are given in
Section~\ref{c:gp:s:additional}.

\begin{figure}
  \centering
  \begin{tabular}{c c}
    \includegraphics[width=.45\textwidth]{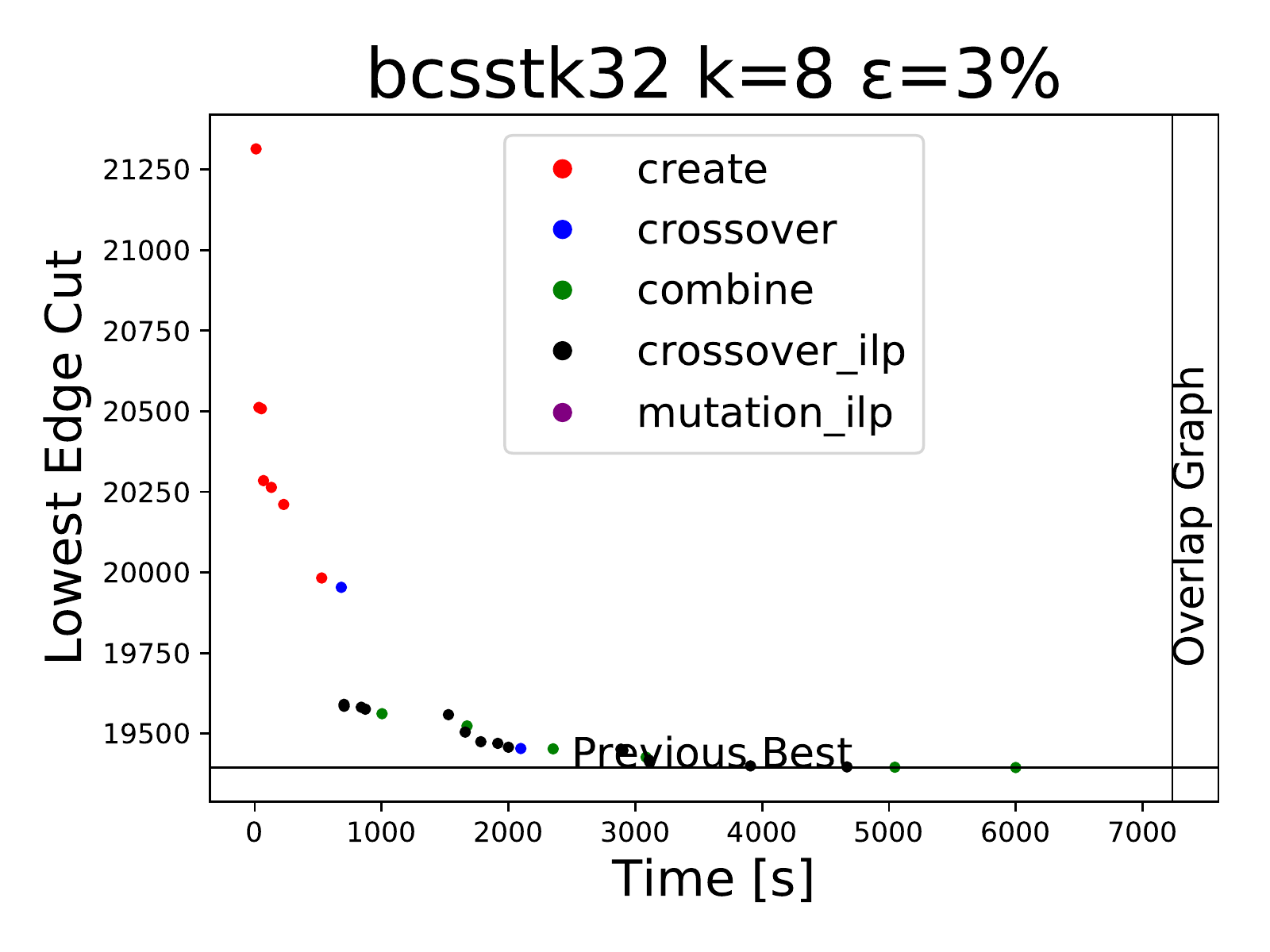} &
    \includegraphics[width=.45\textwidth]{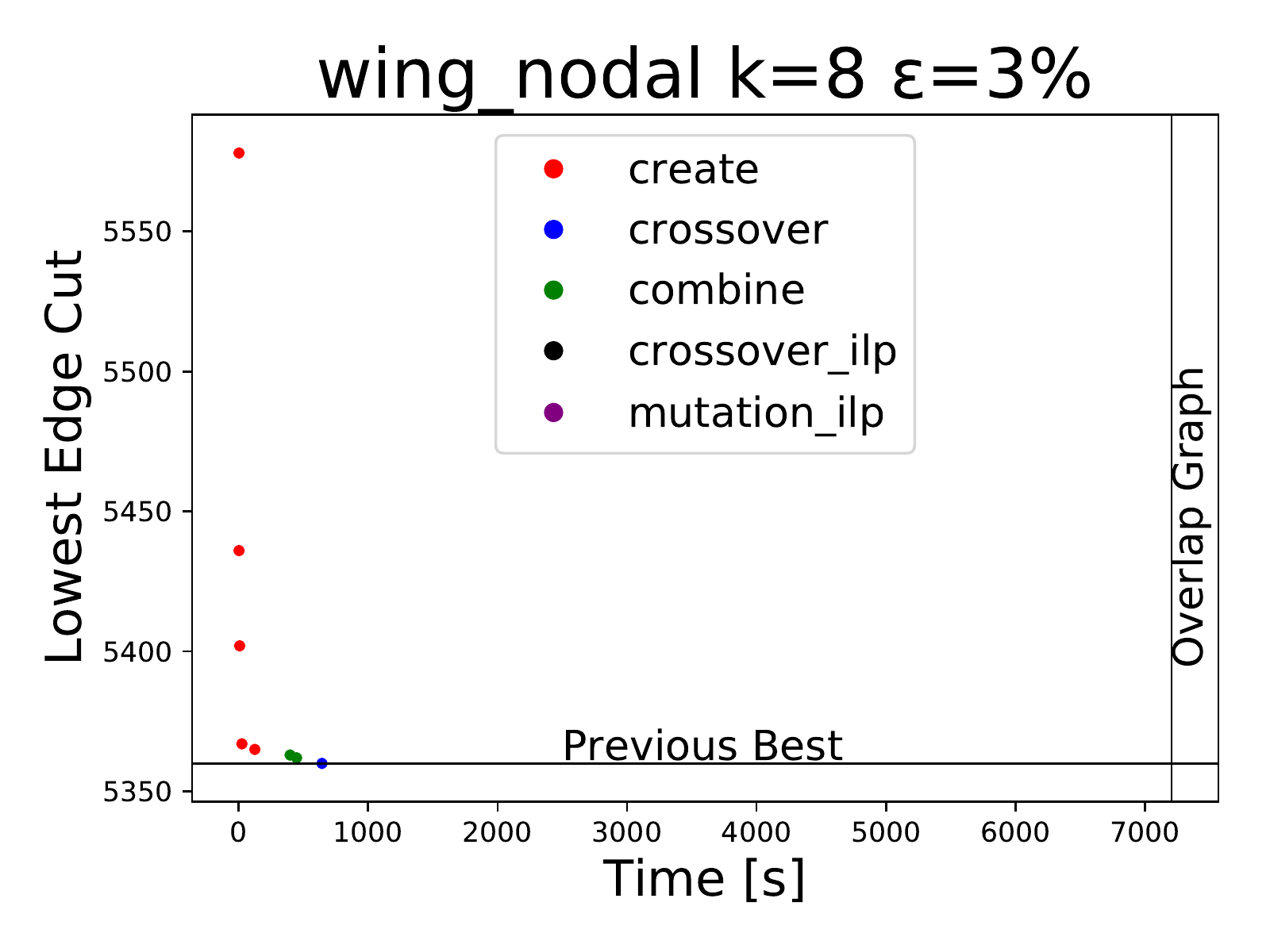} \\

    \includegraphics[width=.45\textwidth]{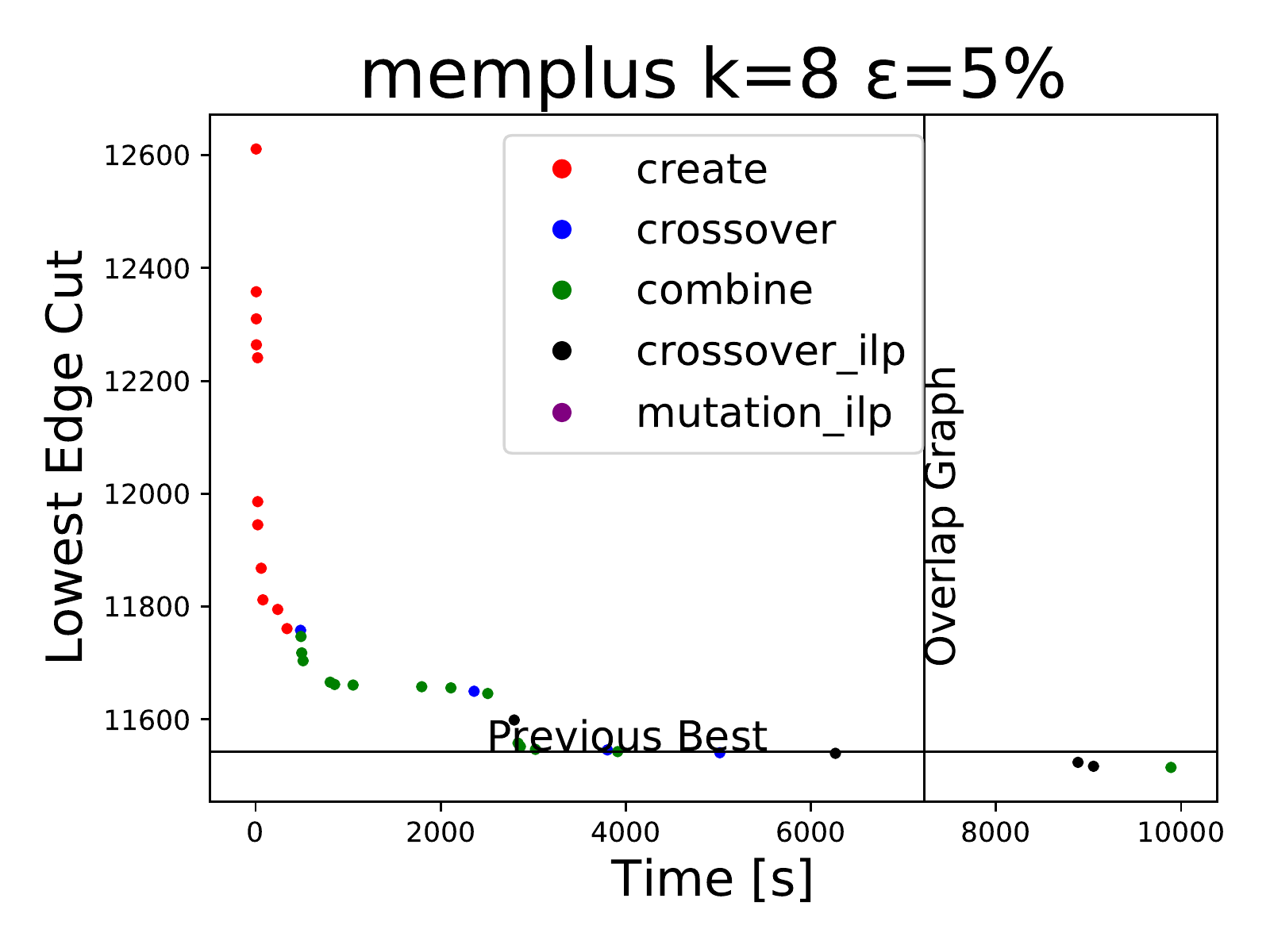} &
    \includegraphics[width=.45\textwidth]{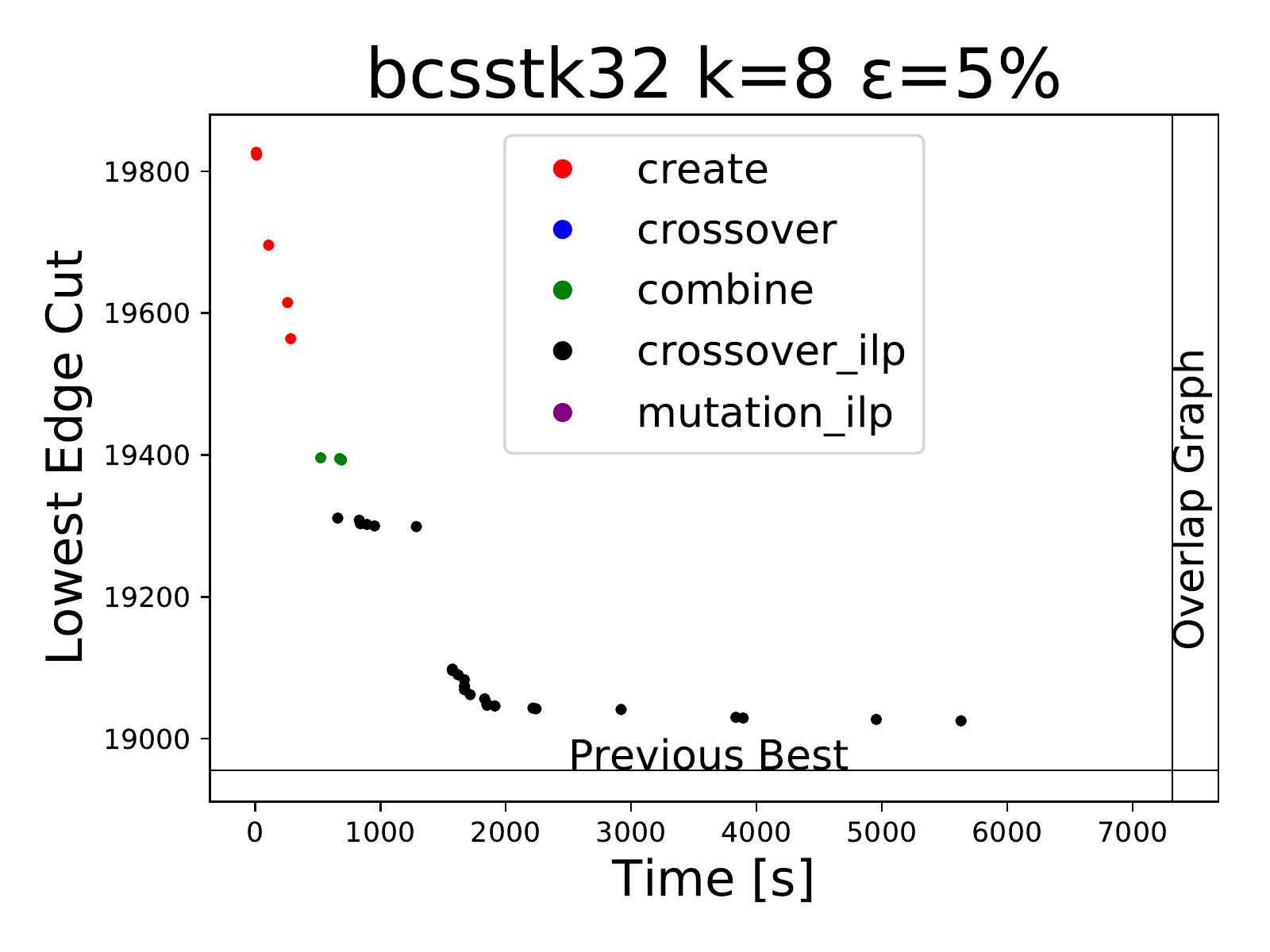} \\

    \includegraphics[width=.45\textwidth]{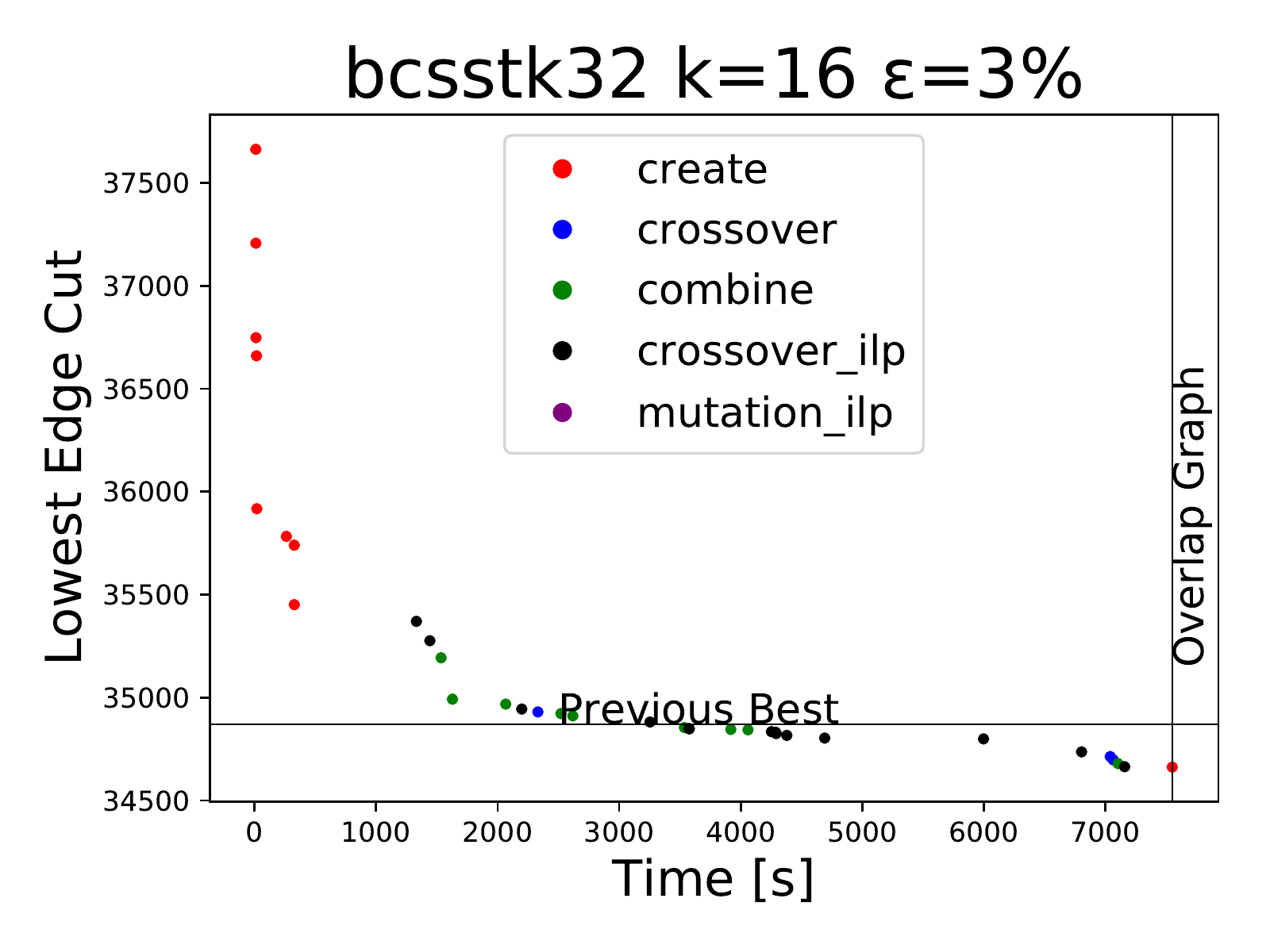} &
    \includegraphics[width=.45\textwidth]{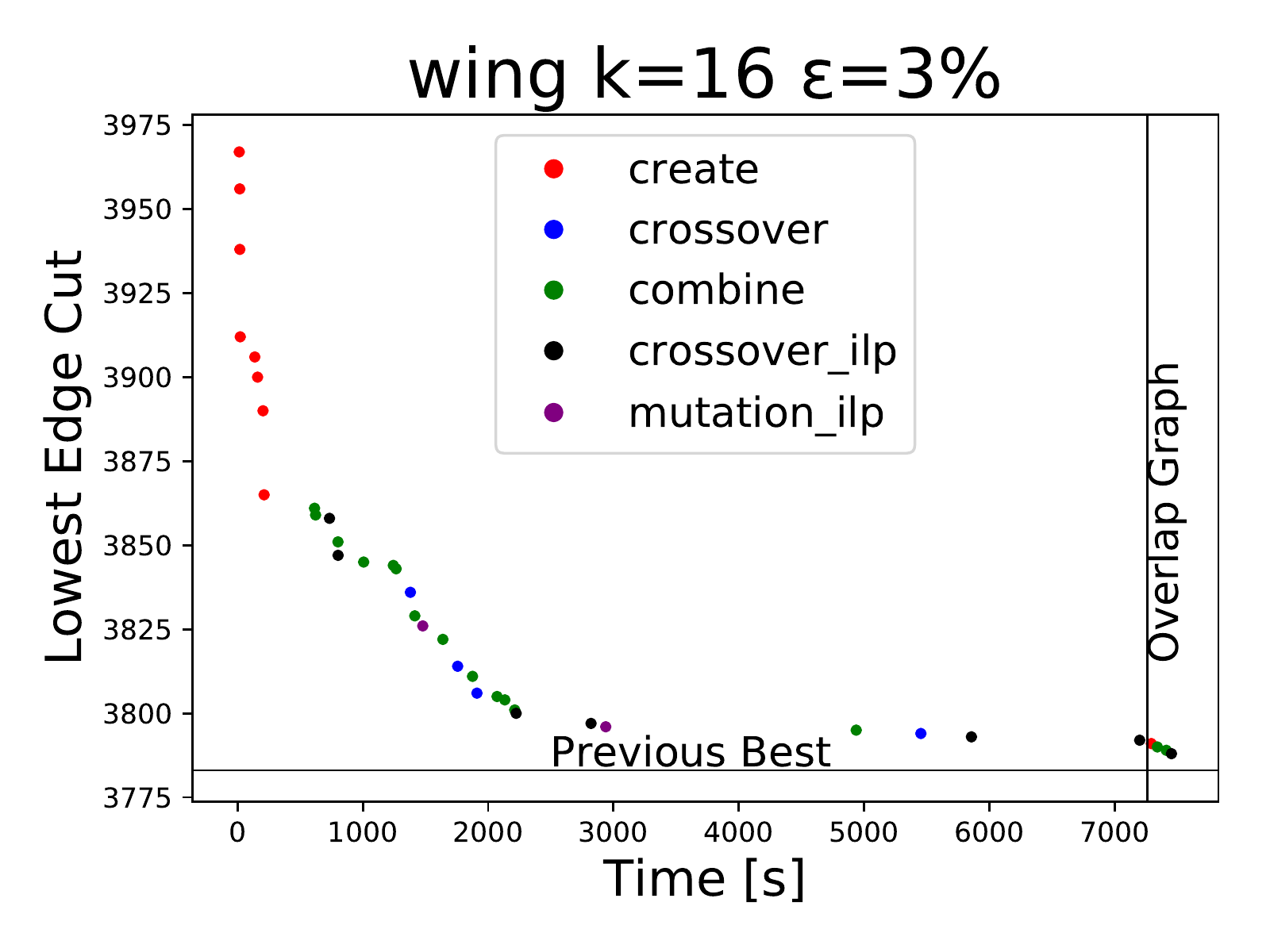} \\

    \includegraphics[width=.45\textwidth]{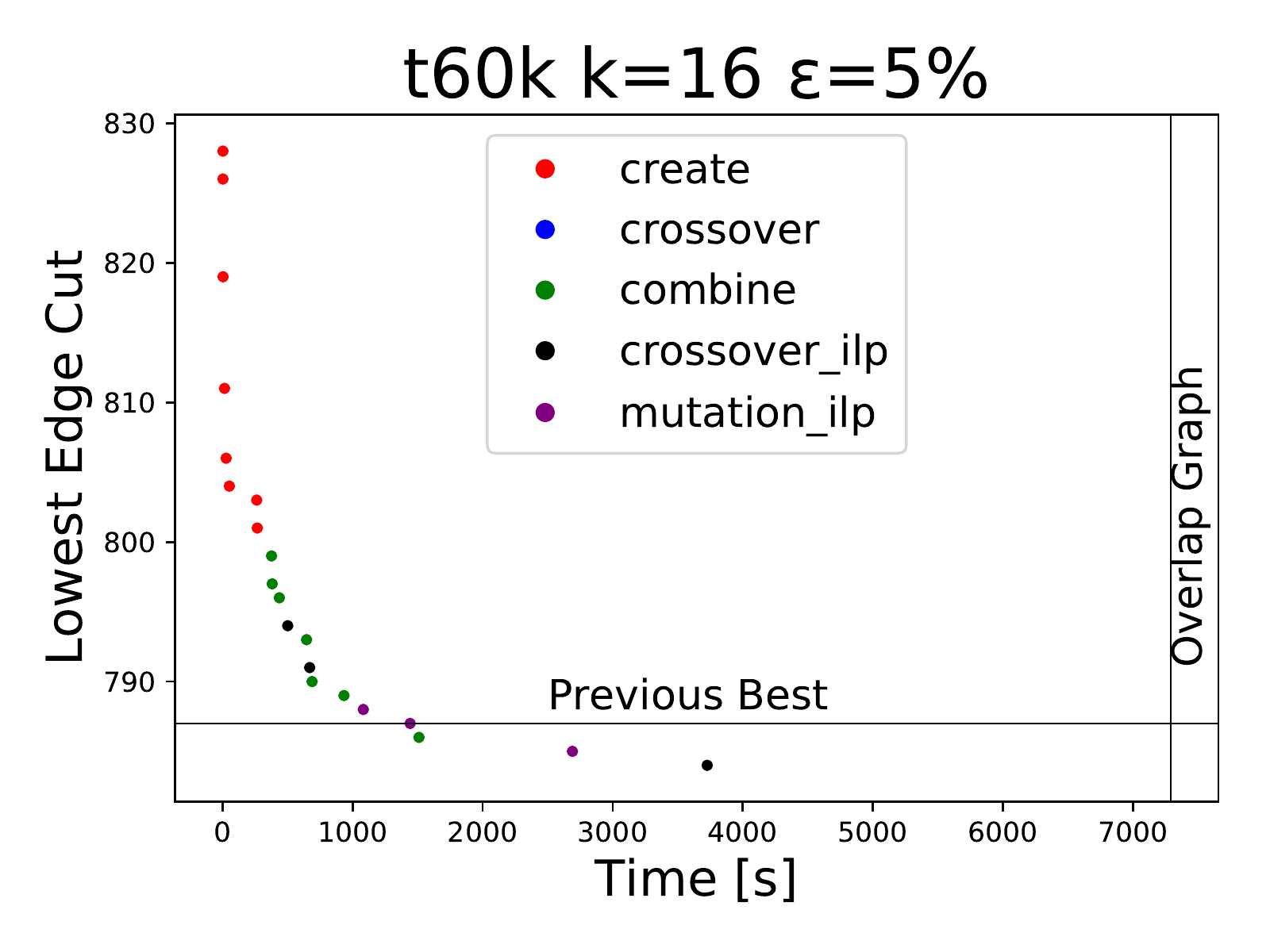} &
    \includegraphics[width=.45\textwidth]{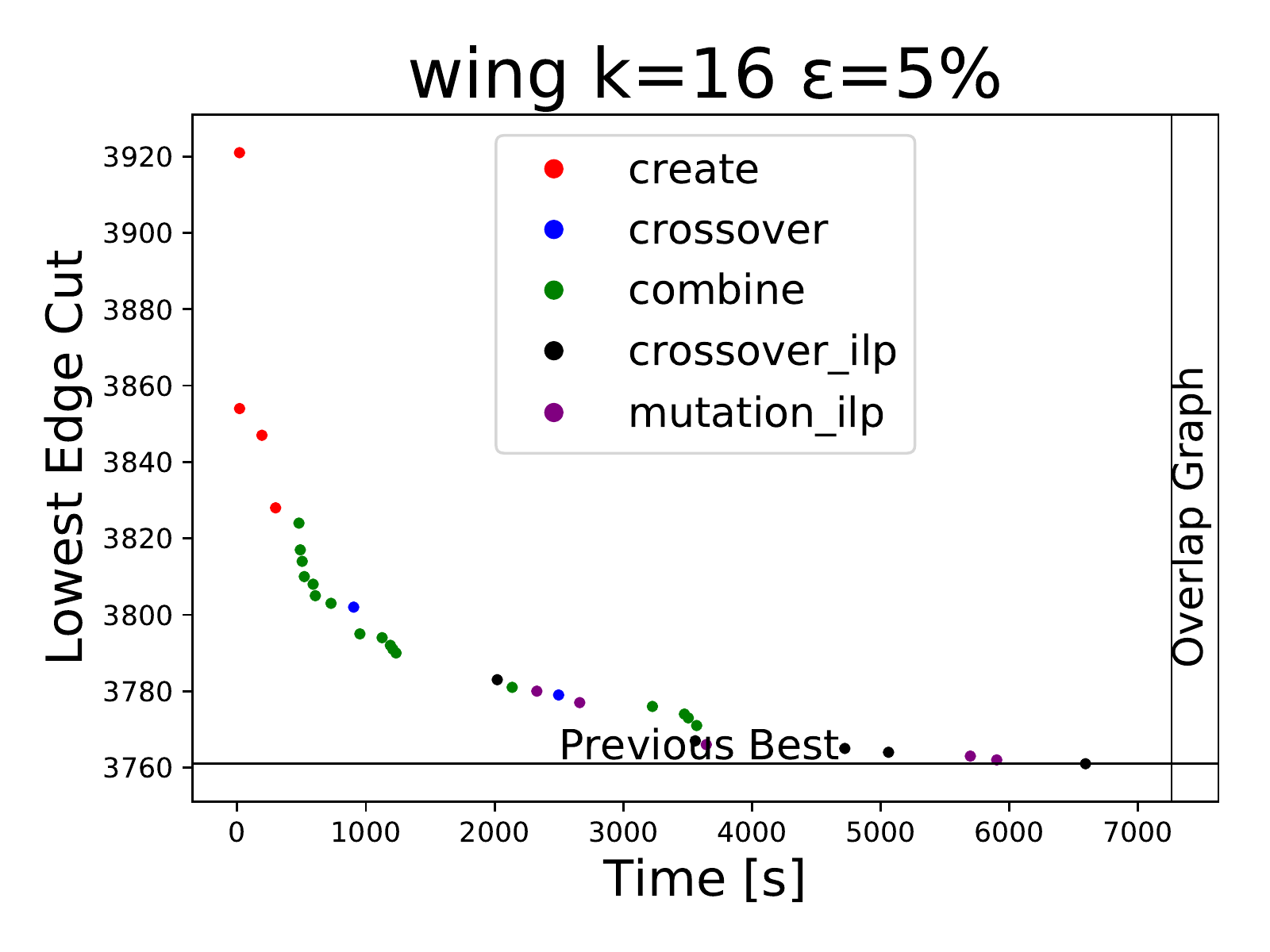} \\
  \end{tabular}
  \caption{Improvement of best partition over time, compared to previously best
  solution.\label{c:gp:fig:evo1}}
\end{figure}  

\begin{figure}
  \centering
  \begin{tabular}{c c}

    \includegraphics[width=.45\textwidth]{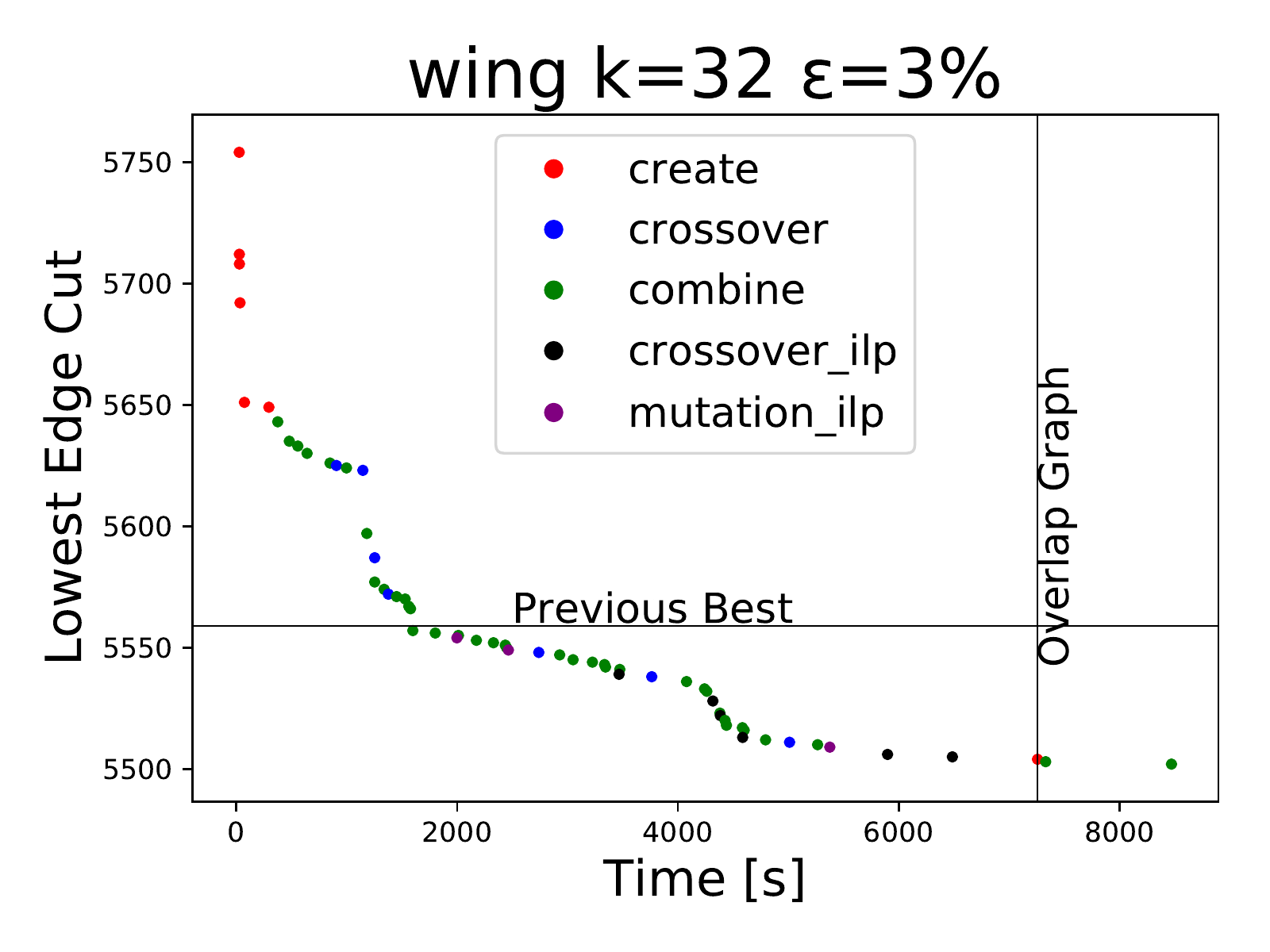} &
    \includegraphics[width=.45\textwidth]{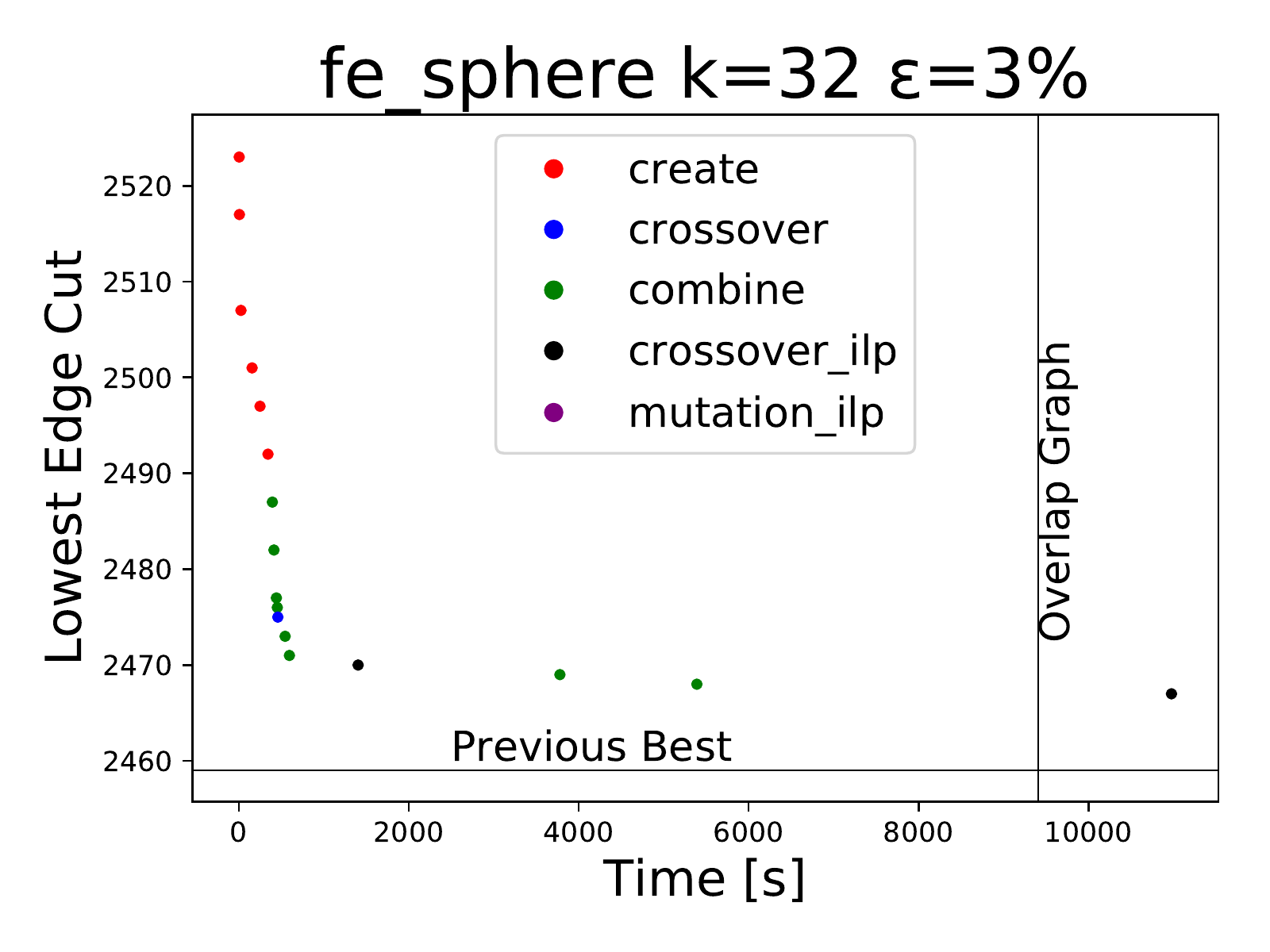} \\
    
    \includegraphics[width=.45\textwidth]{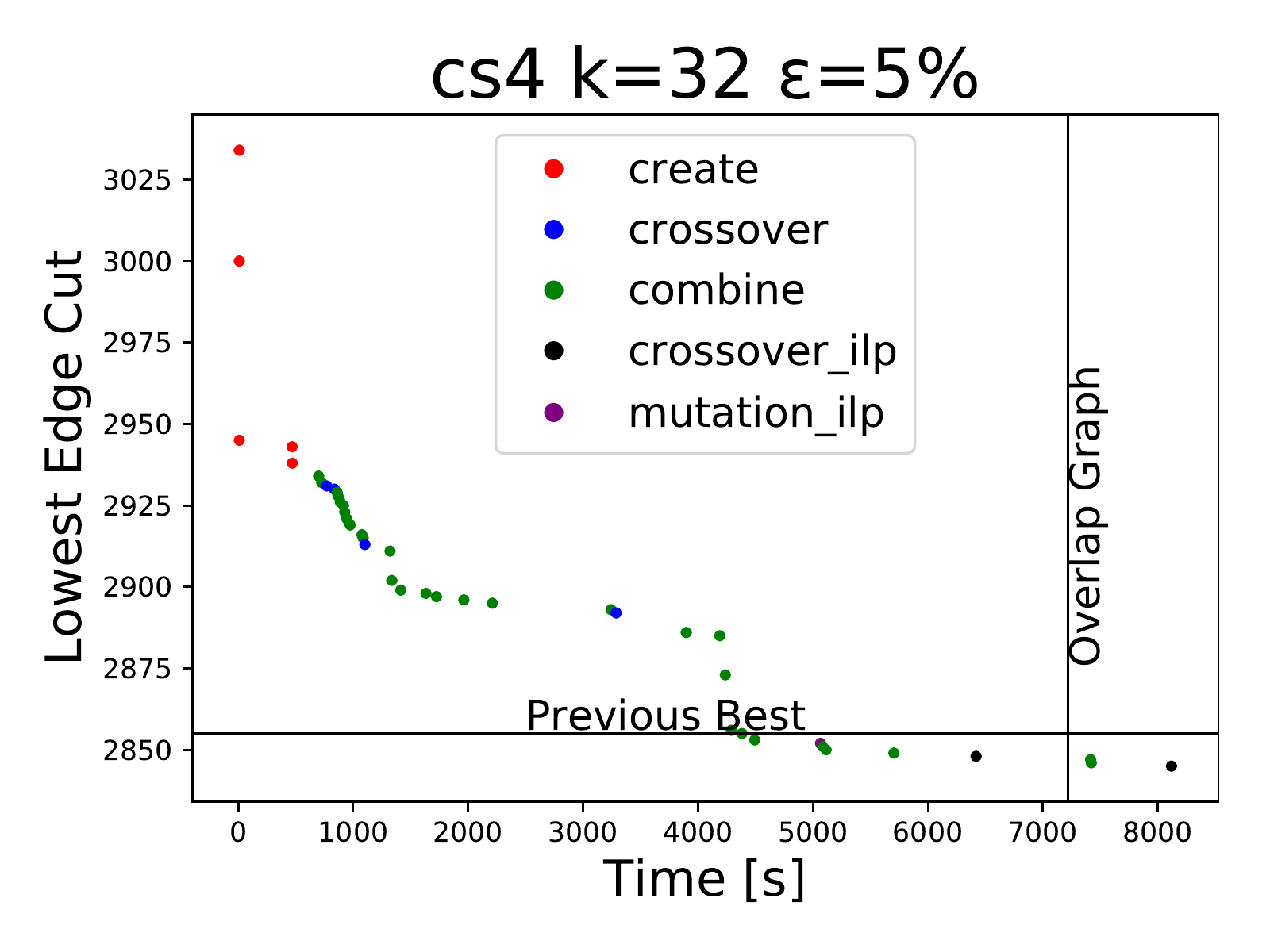} &
    \includegraphics[width=.45\textwidth]{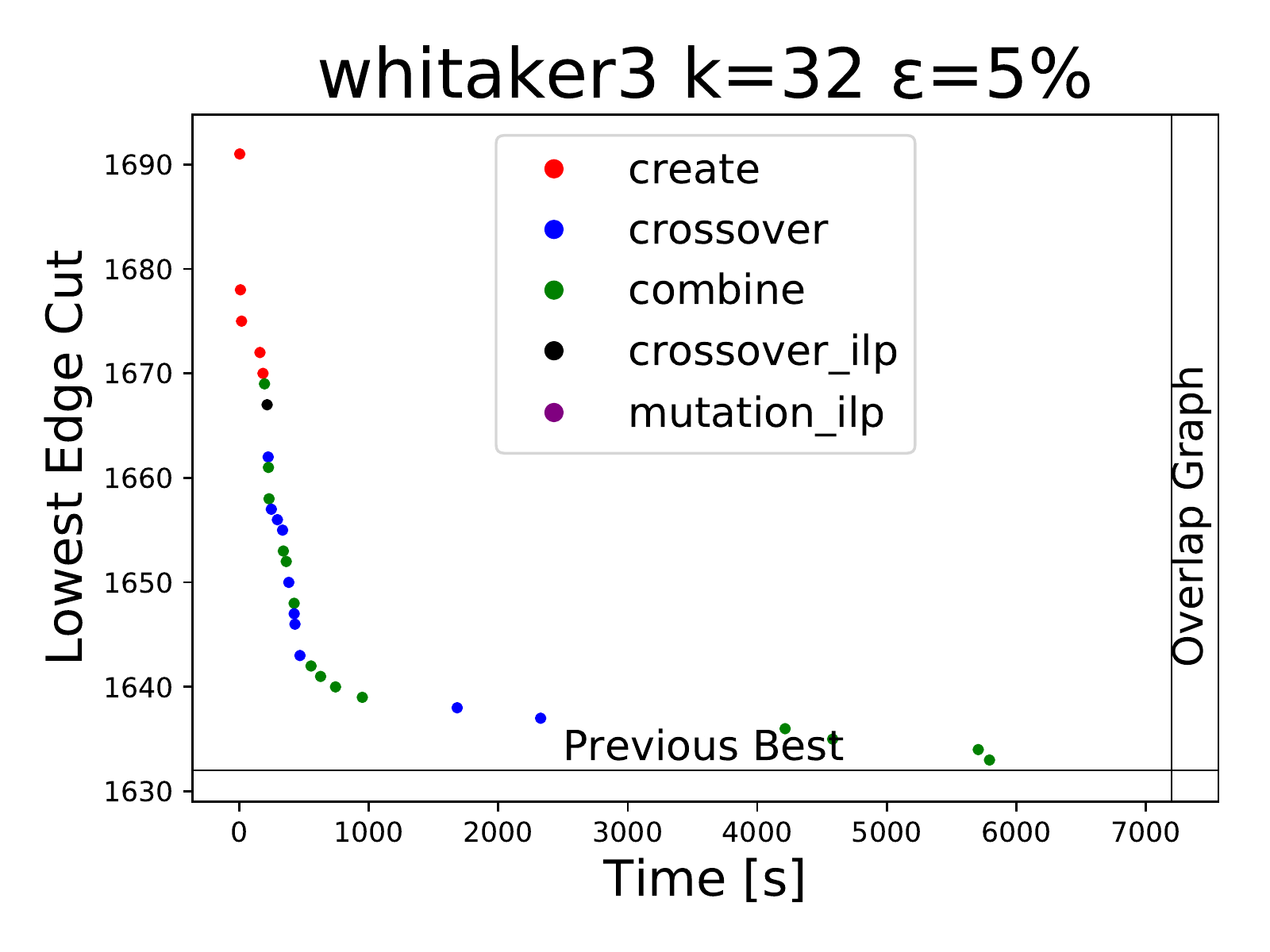} \\

    \includegraphics[width=.45\textwidth]{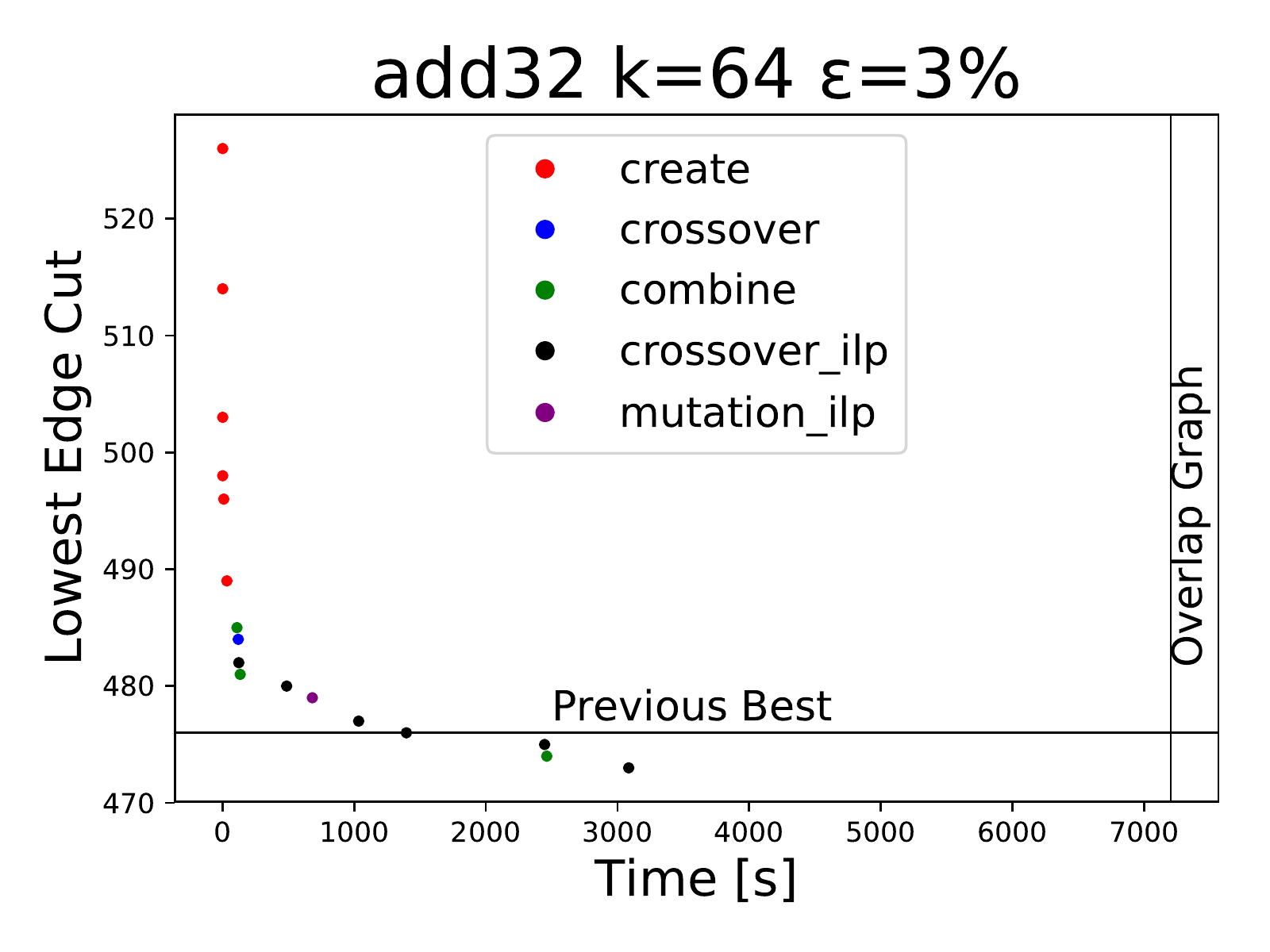} &
    \includegraphics[width=.45\textwidth]{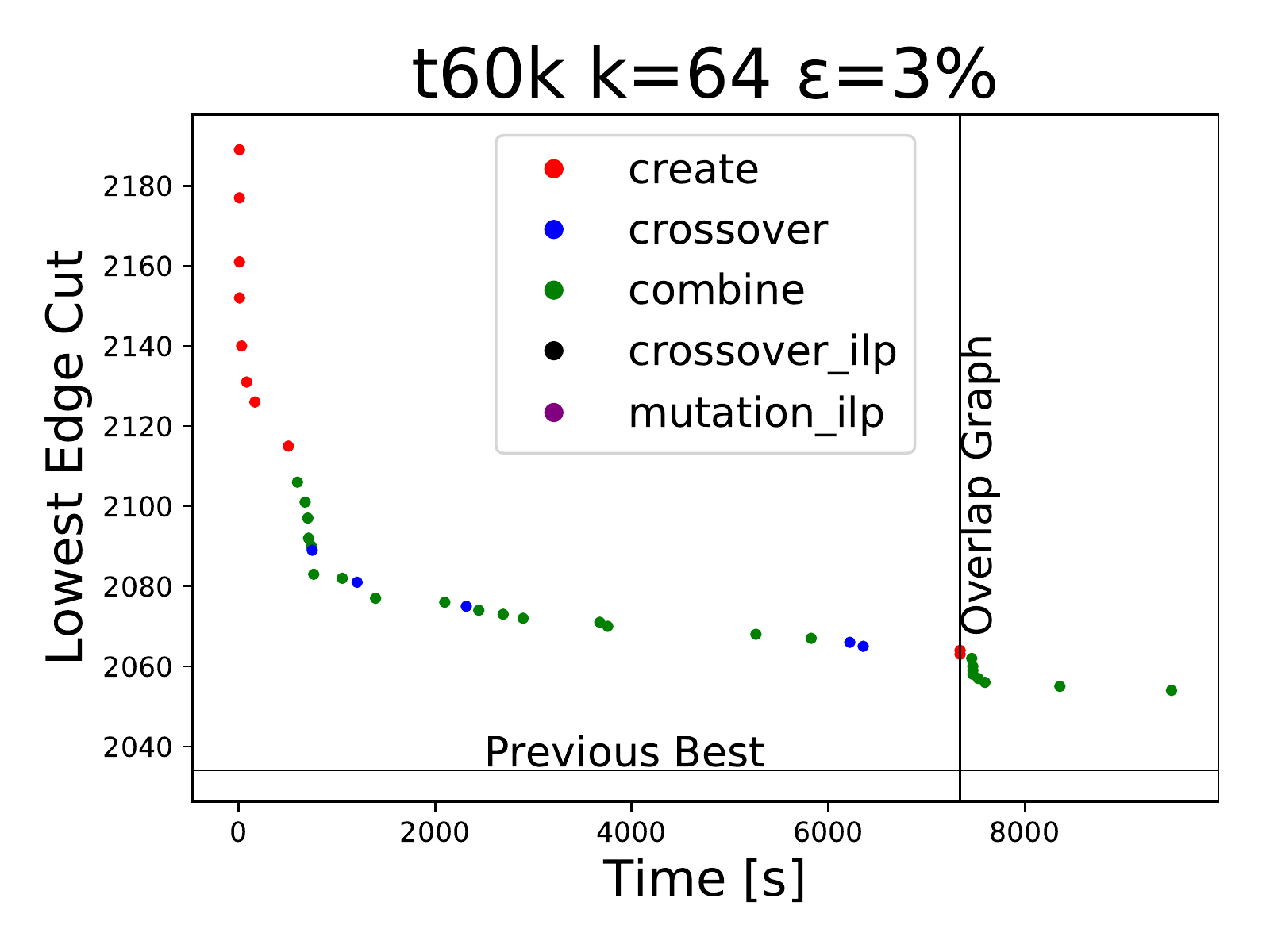} \\

    \includegraphics[width=.45\textwidth]{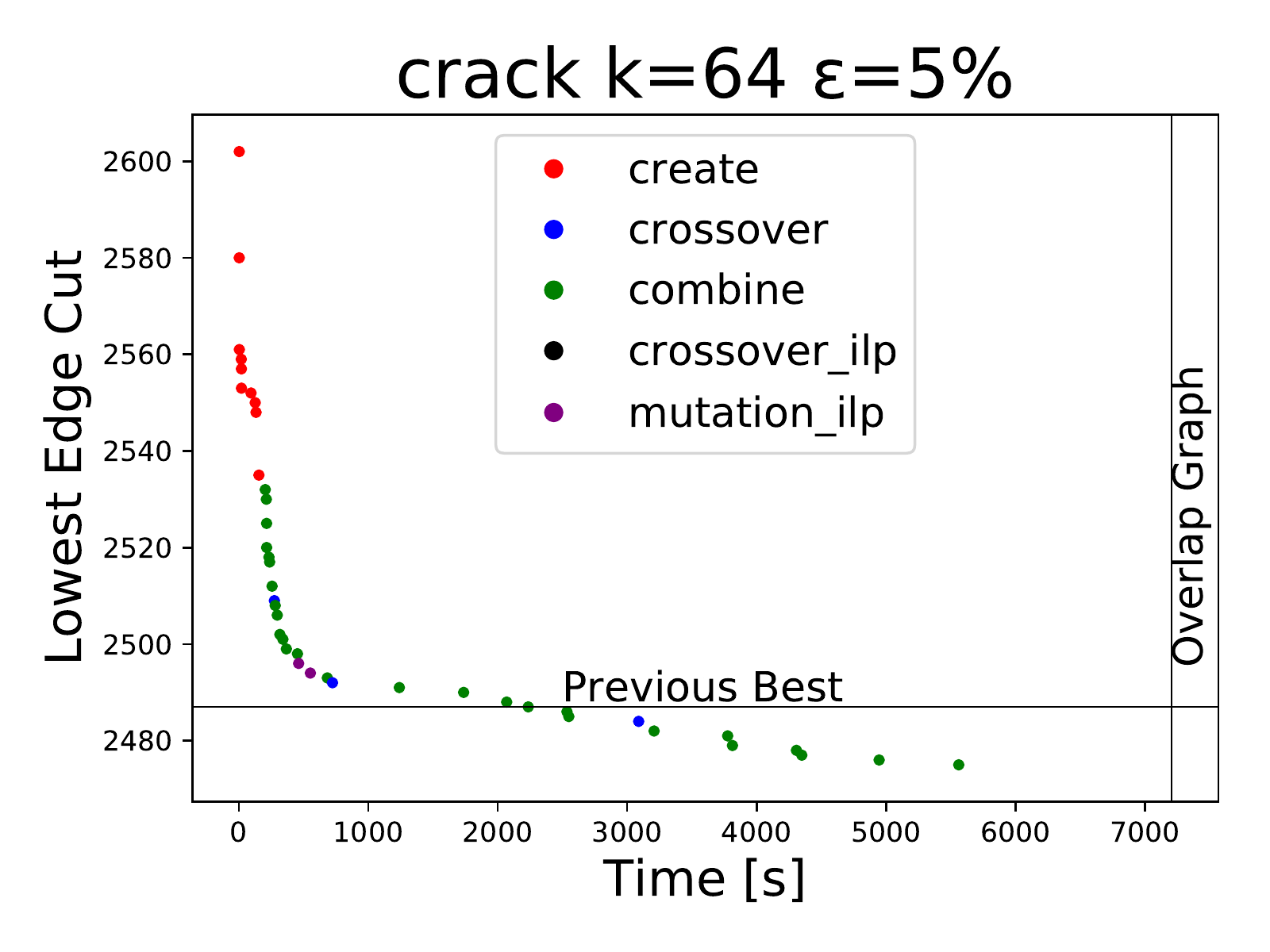} &
    \includegraphics[width=.45\textwidth]{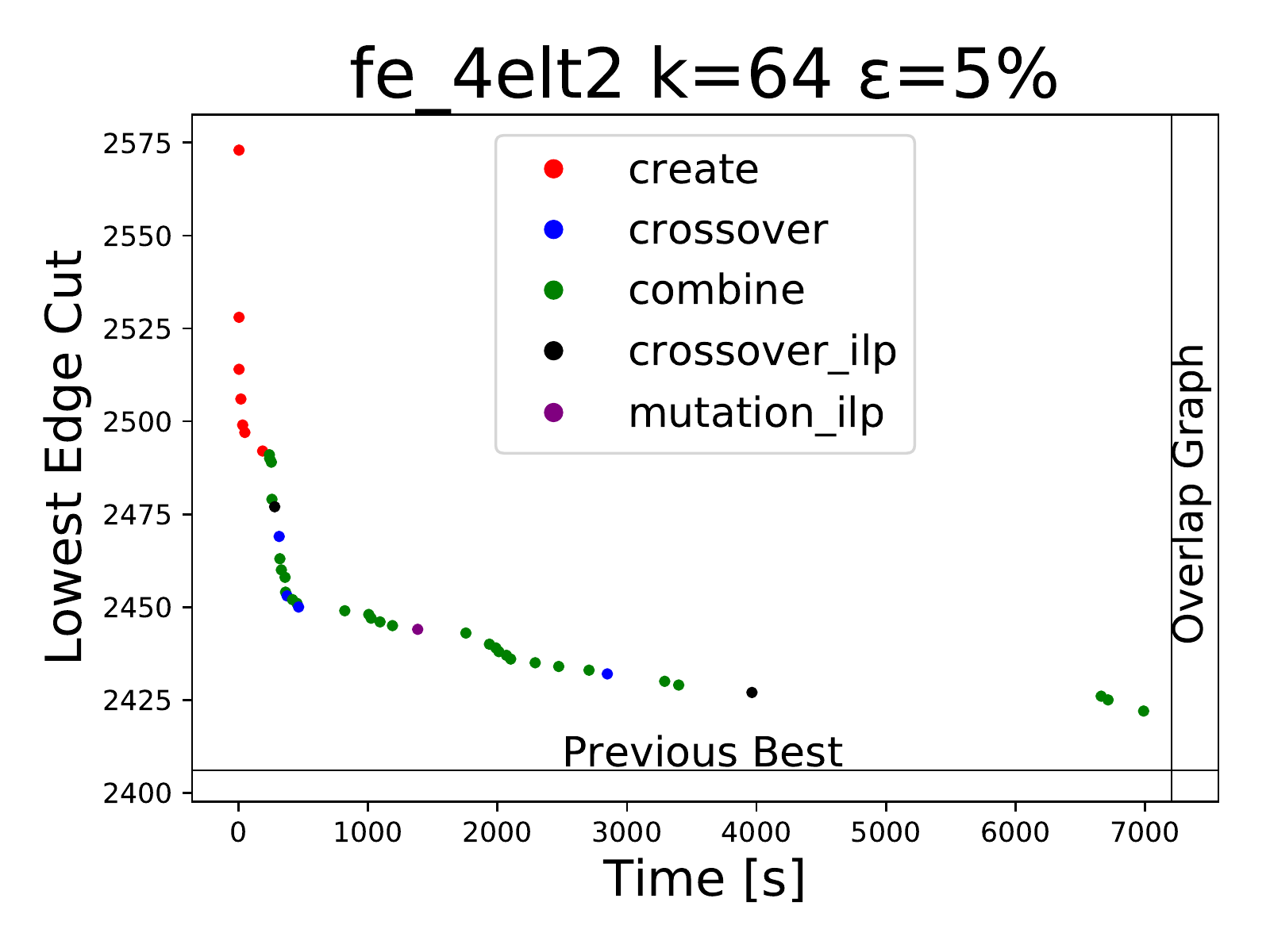}\\
  \end{tabular}
  \caption{Improvement of best partition over time, compared to previously best
  solution.\label{c:gp:fig:evo}}
\end{figure}

When running our algorithm using the currently best partitions provided in the
benchmark, we are able to improve $38\%$ of the currently reported perfectly
balanced results. We are able to improve a larger number of results for larger
values of $k$, more specifically, out of the partitions with $k \geq 16$, we can
improve $60\%$ of all perfectly balanced partitions. There is a wide range of
improvements with the smallest improvement being $0.0008\%$ for graph auto with
$k=32$ and $\epsilon = 3\%$ and with the largest improvement that we found being
$1.72\%$ for fe\_body for $k=32$ and $\epsilon = 0\%$. The largest absolute
improvement we found is~$117$ for bcsstk32 with $k=64$ and $\epsilon = 0\%$. In
general, the total number of improvements is lower if some imbalance is allowed.
This is also expected since traditional local search methods have a larger
amount of freedom to move vertices. However, the number of improvements still
shows that the method is also able to improve many partitions even if some
imbalance is allowed. We submitted the improved partitionings of our ILP-based
local search algorithm to the Walshaw graph partitioning
archive~\cite{wswebsite}, where it is denoted by \texttt{$^*$-ILP}.

\subsection{Integration into KaBaPE}
\label{c:gp:ss:exp_evo}

Section~\ref{c:gp:s:evo} shows how we integrate our approach into the memetic
graph partitioning algorithm KaBaPE. We detail the two new operations that we
introduce to KaBaPE. In KaBaPE, we use the standard parameters given by the
original authors~\cite{kabapeE}.

We run experiments on the small and medium sized graphs of the Walshaw graph
partitioning benchmark archive~\cite{wswebsite} as shown in
Tables~\ref{c:gp:tab:evo3}~and~\ref{c:gp:tab:evo5}, which are the graphs also
used in the original KaBaPE paper~\cite{kabapeE}. Similar to their experiments,
we also give $2$ hours for each problem. Afterwards, we perform post-processing
by running the algorithm for $1$ hour on the overlap graph given by the best
$100$ unique partitions. Note, that even though we have a total running time of
$3$ hours instead of $2$ hours in the results of KaBaPE~\cite{kabapeE}, all
problems in which KaBaPE+ILP outperforms the current best solution in the
Walshaw archive, the solution was already better before post processing.

We run experiments on the problems that have $k \in \{8,16,32,64\}$ and
$\epsilon \in \{3\%, 5\%\}$. These are the hard instances of the benchmark, in
which algorithms do not just reproduce the same solution as previous approaches.
Figures~\ref{c:gp:fig:evo1}~and~\ref{c:gp:fig:evo} show the development of the
fittest individual over the course of the algorithm for a variety of graphs. A
summary of the results is shown in Table~\ref{c:gp:tab:evo}, complete results
for KaBaPE+ILP on all problems are given in
Tables~\ref{c:gp:tab:evo3}~and~\ref{c:gp:tab:evo5} in the appendix. 

\begin{table}[t!]
  \centering
          \caption{Relative number of improved instances by KaBaPE in the
          Walshaw Benchmark starting from scratch. \label{c:gp:tab:evo}}
   \begin{tabular}{|l|r|r|r|r||r|}
    \hline
    $\epsilon \backslash k$ & \numprint{8} & \numprint{16} & \numprint{32} &
    \numprint{64} & overall \\ \hline \hline
    $3\%$ & $16\%$ & $4\%$ & $20\%$ & $4\%$ & $11\%$ \\
    $5\%$ & $8\%$ & $32\%$ & $28\%$ & $24\%$ & $23\%$ \\ \hline
   \end{tabular}
  \end{table}

On those $200$ problems we manage to improve the best known solution in $35$
cases. The previously best results hereby include the improvements given in the
previous experiments. In $62$ of the problems, KaBaPE+ILP reproduces the best
known cut. The highest improvement can be found on graph \textttA{bcsstk32},
$k=32$, $\epsilon = 5\%$, where we improve the best known edge cut by a value of
more than $800$.

Note that feeding the best known solution from the Walshaw archive into the
population as a seed partition does not increase the quality of the solution.
For all $35$ instances in which KaBaPE+ILP outperforms the best solution from
the Walshaw archive, a larger improvement is only seen in $3$ instances when
additionally using a seed partition.

\section{Conclusion}\label{c:gp:s:conclusion} We presented a
novel meta-heuristic for the balanced graph partitioning problem. Our approach
is based on an integer linear program that solves a model to combine
unconstrained vertex movements into a global feasible improvement. Through a
given input partition, we were able to use symmetry breaking and other
techniques that make the approach scale to large inputs. In Walshaw’s benchmark,
we were able to improve a large number of partitions. 

We also integrated the algorithm into the KaHIP framework by adding new
crossover operations based on integer linear programs into the evolutionary
algorithm KaBaPE~\cite{kabapeE}. This extended evolutionary algorithm produces
high quality partitions from scratch. On half of the hard problems from
Walshaw's benchmark, our new algorithm produces a result that is at least as
good as the previously best result. On $17\%$, the solution given is better than
the previous best solution.

We would like to look at other objective functions as long as they can be
modelled linearly. Moreover, we want to investigate whether this kind of
contractions can be useful for other ILPs. Besides using other exact techniques
like branch-and-bound to solve the model, it may also be worthwhile to use a
heuristic algorithm instead. In the Walshaw graph partitioning
benchmark~\cite{wswebsite}, the results given by this algorithm are denoted by
\textttA{KaBaPE+ILP}.

\newpage

\section{Additional Tables}
\label{c:gp:s:additional}

  \begin{table}[ht!]
   \centering
    \small
      \caption{Improvement of existing partitions from the Walshaw benchmark
with $\epsilon = 0\%$ using our ILP approach. In each $k$-column the results
computed by our approach are on the left and the current Walshaw cuts are on the
right. Results achieved by \texttt{Gain$_{\rho=-1}$} are marked with \^\ and
results achieved by \texttt{Gain$_{\rho=-2}$} are marked with *.
\label{table:detailedimprovement}} \resizebox{\linewidth}{!}{


               }
            \end{table}

\part{The Multiterminal Cut Problem}
\label{p:mtc}

\chapter[Branch-and-Reduce for Multiterminal Cut]{
  Shared-memory Branch-and-Reduce for Multiterminal Cut}
\label{c:mtc}

We introduce the fastest known exact algorithm~for~the multiterminal cut problem
with $k$ terminals. In particular, we engineer existing as well as new highly
effective data reduction rules to transform the graph into a smaller equivalent
instance. We use these rules within a branch-and-reduce framework as well as to boost the
performance of an ILP formulation. In addition, we present a local search
algorithm that can significantly improve a given solution to the multiterminal
cut problem. Our algorithms achieve improvements in running time of up to
\emph{multiple orders of magnitudes} over the ILP formulation without data
reductions, which has been the de facto standard used by practitioners. This
allows us to solve instances to optimality that are significantly larger than
was previously possible; and give better solutions for problems that are too
large to be solved to optimality. Furthermore, we give an inexact heuristic
algorithm that computes high-quality solutions for very hard instances in
reasonable time.

The content of this chapter is based
on~\cite{henzinger2020faster}~and~\cite{henzinger2020sharedmemory}.

\section{Introduction}
We consider the multiterminal cut problem with $k$ terminals. Its input is an
undirected edge-weighted graph $G=(V,E,w)$ with edge weights $w: E \mapsto
\MdN_{>0}$ and its goal is to divide its set of nodes into~$k$ blocks such that
each blocks contains exactly one terminal and the weight sum of the edges
running between the~blocks~is~minimized. The problem has applications in a wide
range of areas, for example in multiprocessor
scheduling~\cite{DBLP:journals/tse/Stone77},
clustering~\cite{DBLP:journals/njc/PferschyRW94} and
bioinformatics~\cite{karaoz2004whole,nabieva2005whole,vazquez2003global}. It is
a fundamental combinatorial optimization problem which was first formulated by
Dahlhaus~\etal\cite{dahlhaus1994complexity} and
Cunningham~\cite{cunningham1989optimal}. It is NP-hard for $k \geq
3$~\cite{dahlhaus1994complexity}, even on planar graphs, and reduces to the
minimum $s$-$t$-cut problem, which is in P, for $k=2$. The minimum $s$-$t$-cut
problem aims to find the minimum cut in which the vertices $s$ and $t$ are in
different blocks. Most algorithms for the minimum multiterminal cut problem use
minimum s-t-cuts as a subroutine. Dahlhaus~\etal\cite{dahlhaus1994complexity}
give a $2(1-1/k)$ approximation algorithm with polynomial running time. Their
approximation algorithm uses the notion of \emph{isolating cuts}, \ie the
minimum cut separating a terminal from all other terminals. They prove that the
union of the $k-1$ smallest isolating cuts yields a valid multiterminal cut with
the desired approximation ratio. The currently best known approximation
algorithm by Buchbinder~\etal\cite{buchbinder2013simplex} uses linear program
relaxation to achieve an approximation ratio of $1.323$.

While the multiterminal cut problem is NP-hard, it is \emph{fixed-parameter
tractable} (FPT), parameterized by the multiterminal cut weight $\wgt(G)$.
Marx~\cite{marx2006parameterized} proves that the multiterminal cut problem is
FPT and Chen~\etal\cite{chen2009improved} give the first FPT algorithm with a
running time of $4^{\wgt(G)}\cdot n^{\Oh{1}}$, later improved by
Xiao~\cite{xiao2010simple} to $2^{\wgt(G)}\cdot n^{\Oh{1}}$ and by
Cao~\etal\cite{cao20141} to $1.84^{\wgt(G)}\cdot n^{\Oh{1}}$. However, to the
best of our knowledge, there is no actual implementation for
any~of~these~algorithms. 

The minimum $s$-$t$-cut problem and its equivalent counterpart, the maximum
$s$-$t$-flow problem~\cite{ford1956maximal} were first formulated by
Harris~\etal\cite{harris1955fundamentals}. Ford and
Fulkerson~\cite{ford1956maximal} gave the first algorithm for the problem with a
running time of $\Oh{mn\wgt}$. One of the fastest known algorithms in practice
is the push-relabel algorithm of Goldberg and Tarjan~\cite{goldberg1988new} with
a running~time~of~$\Oh{mn\log(n^2/m)}$.

Problems related to the minimum multiterminal cut problem also appear in the
data mining community, namely the very similar and heavily studied \emph{seed
expansion problem}, for which the aim is to find ground-truth clusters when
given a small subset of the cluster vertices. In contrast to the minimum
multiterminal cut problem, these clusters might overlap. There is a multitude of
approaches adding and removing vertices
greedily~\cite{andersen2006communities,clauset2005finding,luo2008exploring,mislove2010you}.
PageRank~\cite{page1999pagerank} is reported to be well suited for the
problem~\cite{kloumann2014community} and there are multiple approaches that aim
to make PageRank perform even
better~\cite{andersen2006local,bian2017many,leskovec2010empirical}. Another
approach is to use machine learning methods such as
geometric~\cite{ye2017learning} or relational~\cite{macskassy2003simple}
neighborhood classifiers.

Closely related to the problem is also the global minimum cut problem, which is
discussed in Part~\ref{p:mincut} of this work. In this chapter, we adapt some of the
reductions discussed there that are applicable to the minimum multiterminal cut
problem and use them to reduce the size of the problem.

Our work on the multiterminal cut problem has the following \emph{main
contributions}: We engineer existing as well as new data reduction rules for the
minimum multiterminal cut problem with $k$ terminals. These reductions are used
within a branch-and-reduce framework as well as to boost the performance of an
ILP formulation for the problem. Through extensive experiments we show that
kernelization has a significant impact on both, the branch-and-reduce framework
as well as the ILP formulation. Our experiments also show a clear trade-off:
combining reduction rules with the ILP is very fast for problems which have a
small kernel but a high cut value and the fixed-parameter tractable
branch-and-reduce algorithm is highly efficient when the cut value is small.
Using this observation we combine the branch-and-reduce framework with an ILP
formulation and solve subproblems using the solver better suited to the
subproblem in question. In addition, we present a local search algorithm that
can significantly improve a given solution to the multiterminal cut problem.
Overall, we obtain algorithms that are multiple orders of magnitude faster than
the ILP formulation which is de facto standard to solve the problem to
optimality. Additionally, we give an inexact algorithm that gives high-quality
solutions to hard problems in reasonable time, but does not give an optimality
guarantee. 

\section{Preliminaries}\label{c:mc:s:preliminaries}

\subsection{Basic Concepts}
Let $G = (V, E, w)$ be a weighted undirected graph with vertex set $V$, edge set
$E \subset V \times V$ and non-negative edge weights $c: E \rightarrow
\mathbb{N}$.  We use the same terminology to describe graphs as in
Parts~\ref{p:mincut}~and~\ref{p:gp} of this dissertation. A \emph{$k$-cut}, or
\emph{multicut}, is a partitioning of $V$ into $k$ disjoint non-empty blocks,
\ie $V_1 \cup \dots \cup V_k = V$. The weight of a $k$-cut is defined as the
weight sum of all edges crossing block boundaries, \ie $c(E \cap
\bigcup_{i<j}V_i\times V_j)$.

\subsection{Multiterminal Cuts}

A \emph{multiterminal cut} for a graph $G=(V,E)$ with $k$ terminals $T =
\{t_1,\shorterDots,t_k\}$ is a multicut with~$t_1 \in V_1,\shorterDots,t_k \in
V_k$. Thus, a multiterminal cut pairwisely separates all terminals from each
other. The edge set of the multiterminal cut with minimum weight of $G$ is
called $\cut(G)$ and the associated optimal partitioning of vertices is denoted
as $\vopt = \{\vopt_1, \dots, \vopt_k\}$. $\cut$ can be seen as the set of all
edges that cross block boundaries in $\vopt$, \ie $\cut(G) = \bigcup \{e = (u,v)
\mid \vopt_u \neq \vopt_v\}$. The weight of the minimum multiterminal cut is
denoted as $\wgt(G) = c(\cut(G))$. At any point in time, the best currently
known upper bound for $\wgt(G)$ is denoted as $\bestwgt(G)$ and the best
currently known multiterminal cut is denoted as $\bestcut(G)$. If graph $G$ is
clear from the context, we omit it in the notation. There may be multiple
minimum multiterminal cuts, however, we aim to find one multiterminal cut with
minimum weight.

In this paper we use \emph{minimum s-T-cuts}. For a vertex $s$ (\emph{source})
and a non-empty vertex set $T$ (\emph{sinks}), the minimum s-T-cut is the
smallest cut in which $s$ is one side of the cut and all vertices in $T$ are on
the other side. This is a generalization of minimum s-t-cuts that allows
multiple vertices in $t$ and can be easily replaced by a minimum s-t-cut by
connecting every vertex in $T$ with a new super-sink by infinite-capacity edges.
We denote the capacity of a minimum-s-T-cut, \ie the sum of weights in the
smallest cut separating $s$ from $T$, by $\lambda(G,s,T)$.

The examples in Figures~\ref{c:mc:fig:example_mtcut}~and~\ref{c:mc:fig:st_cuts}
show graphs with $4$ terminals each.
The minimum s-T-cut for each terminal with $T$ being the set of all terminals is
shown in red and the minimum multiterminal cut is shown in blue. We can see that
any $k-1$ minimum s-T-cuts (in red) separate all terminals and are thus a valid
multiterminal cut. 
In our algorithm we use \emph{graph contraction} and \emph{edge deletions}.
Given an edge $e = (u, v) \in E$, we define $G/e$ to be the graph  
after \emph{contracting} $e$. In the contracted graph, we delete vertex $v$ and
all incident edges. For each edge $(v, x) \in E$, we add an edge $(u, x)$ with
$c(u, x) = c(v, x)$ to $G$ or, if the edge already exists, we give it the edge
weight $c(u,x) + c(v,x)$. For the \emph{edge deletion} of an edge $e$, we define
$G-e$ as the graph $G$ in which $e$ has been removed. Other vertices and edges
remain the same.

For a given multiterminal cut $S$, the graph $G\backslash S$ splits $G$ into $k$
blocks as defined by the cut edges in $S$, each containing exactly one terminal.
Let the residual $R(t_i)$ be the connected component of $G \backslash S$
containing $t_i$ and $\delta(t_i) = |E(R(t_i), V \backslash R(t_i))|$ be the
edges in $S$ incident to $t_i$.

\begin{figure}[t]
   \centering
   \includegraphics[width=.7\textwidth]{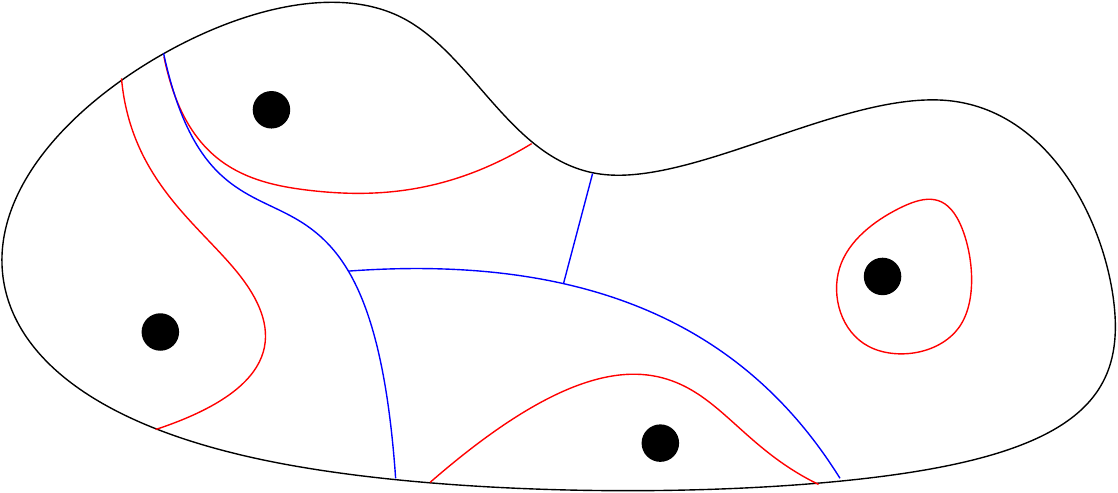}
   \caption{\label{c:mc:fig:example_mtcut}Graph with $4$ terminals. Minimum
   $s$-$T$-cut for each terminal shown in red, minimum multiterminal cut $\cut$
   shown in blue.}
 \end{figure}

\section{Branch and Reduce for Multiterminal Cut}
\label{c:mc:s:algorithm}

In this section we give an overview of our approach to find the optimal
multiterminal cut in large graphs. Our algorithm combines kernelization
techniques with an engineered bounded search. 

We begin by finding all connected components of~$G$. We can then look at all
connected components independently from each other, as there is a trivial cut of
weight $0$ between different connected components. If a connected component
contains only one terminal $t$, it can be separated from all other terminals by
using the whole connected component as the block $\vopt_t$ belonging to terminal
$t$. Due to it being not connected to any other terminals, the cut value is $0$.
If a connected component contains no terminals, the result $\wgt$ is identical
no matter which block $\vopt$ the connected component belongs to. For a
connected component $C$ with two terminals $s$ and $t$, we can run a minimum
s-t-cut algorithm on $C$ to find the minimum cut. The optimal blocks $\vopt_s$
and $\vopt_t$ then consist of the two sides of the s-t-cut. On a connected
component with more than two terminals, the problem is
NP-hard~\cite{dahlhaus1994complexity}. We run our branch and reduce algorithm on
this component. As those runs are completely independent, we only look at one
connected component in the following and disregard the rest of the graph for
now. 

\begin{figure}[t]
  \centering
  \includegraphics[width=.4\textwidth]{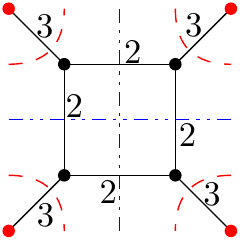}
  \caption{\label{c:mc:fig:st_cuts}Simple graph with $4$ terminals, in which minimum
  s-T-cuts (red) are different from minimum multiterminal cut (blue).}
\end{figure}

For a graph $G$, Dahlhaus~\etal~\cite{dahlhaus1994complexity} show that the sum
of minimum s-T-cut weights minus the heaviest of them is an upper bound
$\bestwgt$ of the weight of the minimum multiterminal cut, as denoted in Equation~\ref{c:mc:eq:upper}.

\begin{equation} \label{c:mc:eq:upper}
\wgt(G) \leq \bestwgt(G) = \sum_{s\in T} \lambda(G,s,T\backslash\{s\}) - \arg\,max_{s \in T}
\lambda(G,s,T\backslash\{s\})
\end{equation}

The intuition behind Equation~\ref{c:mc:eq:upper} is that any set of $t-1$
s-T-cuts pairwisely separates all terminals and is thus a valid multiterminal
cut of weight $\bestwgt(G)$. However, $\bestwgt(G)$ is not necessarily the value
of the minimum multiterminal cut $\cut(G)$, as the minimum s-T-cuts might share
edges -- which then do not need to be counted twice -- and the minimum
multiterminal cut might be smaller. For a simple example where the minimum
multiterminal cut is smaller than any set of $t-1$ minimum multiterminal cuts,
see Figure~\ref{c:mc:fig:st_cuts}, where any set of $t-1$ minimum s-T-cuts
result in a multiterminal cut of weight $9$ whereas the minimum multiterminal
cut has a weight of $8$.

Dahlhaus~\etal~\cite{dahlhaus1994complexity} also give a lower bound for the
minimum multiterminal cut: as $\lambda(G,s,T\backslash\{s\})$ is by definition
minimal, $\cut$ has at least as
many edges incident to terminal $s$ as $\lambda(G,s,T\backslash\{s\})$. As this
is true for every terminal (and every edge is only incident to two vertices),
$\cut(G) \cdot 2 \geq \sum_{s \in T} \lambda(G,s,T\backslash\{s\})$, so that
$\cut(G) \geq \sum_{s \in T} \lambda(G,s,T\backslash\{s\}) / 2$.  

In our algorithm, we keep a queue $\queue$ of problems. A problem in $\queue$
consists of a graph $G_{\queue}$, a set of terminals, the upper and lower bound
for $\wgt(G_{\queue})$ and the weight sum of all deleted edges in $G_{\queue}$.
When our algorithm is initialized, $\queue$ is initialized with a single
problem, whose graph is $G$ and whose set of terminals is $T$. The problem has
$0$ deleted edges and its lower and upper bound for $\wgt(G)$ can be set as
previously described. As the problem is currently the only one, the global upper
bound $\hat{\wgt(G)}$ is equal to the upper bound of $G$. Over the course of the
algorithm, we repeatedly take a problem from $\queue$ and check whether we can
reduce the graph size using our kernelization techniques outlined in
Section~\ref{c:mc:ss:kernel}. When possible, we perform the kernelization and
push the kernelized problem to $\queue$. Otherwise, we branch on an edge $e$
adjacent to one of the terminals. 

The kernelization techniques detailed in Section~\ref{c:mc:ss:kernel} reduce the size
of the graph by finding edges that are (1) either guaranteed to be in a minimum
multiterminal cut or (2) guaranteed not to be part of at least one minimum
multiterminal cut. As we only want to find a single multiterminal cut with
minimum sum of edge weights, we can delete edges in (1) and contract edges in
(2). 

In Section~\ref{c:mc:ss:branch} we detail the branching procedure which is used
if these reduction techniques are unable to find any further reduction
possibilities. For any edge~$e$, either it is in the multiterminal cut or it is
not. We create two subproblems for $G$: $G/e$ and $G-e$. We aim to find the
minimum multiterminal cut on either. We also give an enhanced branching scheme
that aims to increase performance by creating more than two subproblems. Further
details on the branching and edge selection are given in
Section~\ref{c:mc:ss:branch}. 

We compute upper and lower
bounds for each of the problems and follow the branches whose lower bounds are
lower than $\bestwgt$, the best cut weight previously found. In
Section~\ref{c:mc:ss:queue_impl} we discuss queue implementation and whether using a
priority queue to first process 'promising' problems is useful in practice. We
employ shared-memory parallelism by having multiple threads pull problems from
$\queue$.

In Section~\ref{c:mc:s:local} we describe our local search algorithm which can
improve a given solution by iteratively moving vertices on the original graph
until the solution reaches a local optimum. This allows us to significantly
lower $\bestwgt$ and therefore improve performance by pruning subproblems whose
lower bound is $\geq \bestwgt$.

We then give a variant of our algorithm in Section~\ref{c:mc:s:inexact} that
does not guarantee optimality but is able to solve significantly larger
instances. This variant aggressively prunes problems that are unlikely to
improve the solution quality and performs additional data reductions that do not
have an optimality guarantee but can significantly shrink the graph while
maintaining the most promising regions therein.

\subsection{Kernelization}
\label{c:mc:ss:kernel}

We now show how to reduce the size of our graph to make the problem more
manageable. This is achieved by contracting edges that are guaranteed not to be
in the minimum multiterminal cut and deleting edges that are guaranteed to be in
it. Before we detail the kernelization rules we show that edges not in $\cut$
can be safely contracted and edges in $\cut$ can be safely deleted if we store
the weight sum of all deleted edges so far. The kernelization rules given in the
following and outlined in Figure~\ref{c:mc:fig:various_reductions} are used to
identify such edges.

\begin{figure}[t!]
  \centering
  \includegraphics[width=\textwidth]{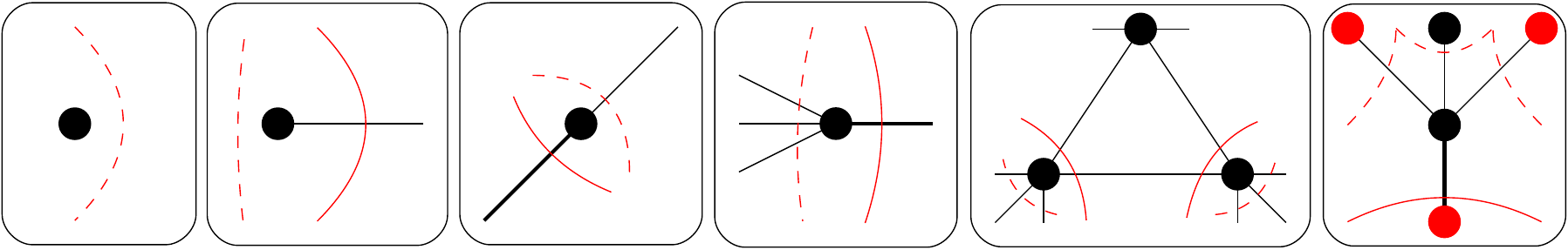}
  \caption{\label{c:mc:fig:various_reductions} Reductions. Solid line cannot be
  minimal as dashed line has smaller weight: (1) \textttA{IsolatedVertex}, (2)
  \textttA{DegreeOne}, (3) \textttA{DegreeTwo} , (4) \textttA{HeavyEdge},
  (5)~\textttA{HeavyTriangle} and (6) \textttA{SemiEnclosedVertex}.}
\end{figure}

\begin{lemma}\label{c:mc:lem:cont} \cite{cao20141} If an edge $e = (u,v) \in G$ is
  guaranteed not to be in at least one multiterminal cut $\cut(G)$ (\ie $P_u =
  P_v$), we can contract $e$ and $\wgt(G/e) = \wgt(G)$.
\end{lemma}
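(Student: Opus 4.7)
The plan is to prove equality by showing both inequalities $\wgt(G/e) \leq \wgt(G)$ and $\wgt(G/e) \geq \wgt(G)$, exploiting the natural correspondence between multiterminal cuts of $G$ that do not separate $u$ and $v$ and multiterminal cuts of $G/e$. The main ingredient is a weight-preserving bijection between these two sets of cuts.

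First I would establish the upper bound $\wgt(G/e) \leq \wgt(G)$. By hypothesis there exists at least one minimum multiterminal cut of $G$ with partition $\vopt = \{\vopt_1,\dots,\vopt_k\}$ such that $u$ and $v$ lie in the same block $\vopt_i$. Contracting $e$ produces a single vertex $uv$; placing $uv$ in block $i$ and leaving all other vertices in their original blocks yields a valid multiterminal partition of $G/e$ (each terminal is still in exactly one block, since contraction does not identify terminals with each other). Every edge of $G$ that crosses the partition boundary corresponds to an edge of $G/e$ crossing the boundary with the same weight: parallel edges $(u,x)$ and $(v,x)$ to a common neighbor $x$ outside $\vopt_i$ get merged into a single edge of weight $c(u,x)+c(v,x)$, which is exactly their combined contribution to the cut. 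Hence this partition of $G/e$ has weight $\wgt(G)$, giving the upper bound.

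For the lower bound $\wgt(G/e) \geq \wgt(G)$, I would go in the reverse direction. Take any multiterminal cut of $G/e$ with partition $\{W_1,\dots,W_k\}$ and define a partition of $G$ by placing $u$ and $v$ in the block containing the contracted vertex $uv$ and keeping every other vertex in its original block. This is a valid multiterminal partition of $G$, and again by the weight-preserving correspondence on edges (noting that the edge $e$ itself does not cross the boundary since $u,v$ are in the same block) the weight equals that of the original cut in $G/e$. Thus any multiterminal cut of $G/e$ lifts to a multiterminal cut of $G$ of the same weight, so the minimum cannot decrease.

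There is no real obstacle beyond carefully bookkeeping the edge-weight merging under contraction and checking that terminals are never identified (which holds because $e=(u,v)$ is a single edge between two vertices, and even if one of $u,v$ is a terminal, the resulting merged vertex is still a single terminal of the same block). Combining the two inequalities gives $\wgt(G/e)=\wgt(G)$, as claimed.
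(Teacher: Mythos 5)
Your proof is correct and takes essentially the same route as the paper: the paper likewise observes that $\cut(G)$ survives contraction as a multiterminal cut of $G/e$ of the same weight (your upper bound), and that contraction only removes cuts and creates no new ones (your lower bound, which you make explicit by lifting a cut of $G/e$ back to $G$). Your version just spells out the weight-preserving correspondence in more detail.
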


\begin{proof}
    As $e \not\in \cut(G)$, $\cut(G/e)$ is equal to $\cut(G)$ and thus still has
    weight equal to $c(\cut(G)) = \wgt(G)$. As an edge contraction only removes
    cuts and does not create any new cuts, an edge contraction can not lower the
    weight of the minimum multiterminal cut, \ie $\wgt(G/e) \geq \wgt(G)$. As
    $\cut(G/e)$ has weight $\wgt(G)$, it is a multiterminal cut in $G/e$ with
    weight equal to $\wgt(G)$. Thus it is definitely a minimum multiterminal cut
    with weight $\wgt(G)$.
\end{proof}

Lemma~\ref{c:mc:lem:cont} allows us to reduce the graph size by contracting an edge
if we can prove that both incident vertices are in the same partition in
$\vopt$. The lemma can be generalized trivially to contract a connected vertex
set by applying the lemma to each edge connecting two vertices of the set.

\begin{lemma}\label{c:mc:lem:del} \cite{cao20141} If an edge $e = (u,v) \in E$ is
  guaranteed to be in a minimum multiterminal cut, \ie there is a minimum
  multiterminal cut $\cut(G)$ in which $P_u \neq P_v$, we can delete $e$ from
  $G$ and $\cut(G - e)$ is still a valid minimum multiterminal cut.
\end{lemma}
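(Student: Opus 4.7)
The plan is to establish the exact identity $\wgt(G-e) = \wgt(G) - c(e)$, so that after recording $c(e)$ in the running sum of deleted edge weights, solving the problem on $G-e$ yields a minimum multiterminal cut of $G$ by adding $e$ back. Since every cut of $G-e$ corresponds in a natural way to a partition of the same vertex set, the correctness of the reduction reduces to this single numerical identity together with a partition-level correspondence of witnesses.

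First I would show the inequality $\wgt(G-e) \leq \wgt(G) - c(e)$. By hypothesis there exists a minimum multiterminal cut $\cut(G)$ of $G$ in which $u$ and $v$ lie in different blocks, so $e \in \cut(G)$. Removing the edge $e$ from $E$ does not change the partition $\vopt$; the same partition is still a valid multiterminal cut of $G-e$, and its cut weight in $G-e$ is exactly $\wgt(G) - c(e)$, because every other edge crossing the partition is still present with the same weight. Hence the minimum multiterminal cut weight of $G-e$ is at most $\wgt(G) - c(e)$.

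Next I would prove the reverse inequality $\wgt(G) \leq \wgt(G-e) + c(e)$. Take any minimum multiterminal cut of $G-e$ with associated partition $\vopt'$; this partition also partitions $V$ in $G$ with each terminal in its own block, so it is a feasible multiterminal cut of $G$. Its weight in $G$ equals its weight in $G-e$ plus $c(e)$ if $u$ and $v$ fall in different blocks of $\vopt'$, and equals its weight in $G-e$ otherwise; in either case it is at most $\wgt(G-e) + c(e)$. Combining the two inequalities gives $\wgt(G-e) = \wgt(G) - c(e)$, and the first half of the argument exhibits an explicit minimum multiterminal cut of $G-e$ obtained by dropping $e$ from $\cut(G)$, so adding $e$ back to any minimum multiterminal cut of $G-e$ (or keeping the same cut, if $u,v$ land in one block) yields a minimum multiterminal cut of $G$.

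I do not anticipate a serious obstacle here; the only thing to be careful about is the bookkeeping convention declared in Section~\ref{c:mc:s:algorithm}, where each problem in $\queue$ stores the accumulated weight of deleted edges. With that convention, the identity above guarantees that solving the reduced instance $G-e$ and then adding $c(e)$ to the recorded deletion total preserves optimality, so the branch-and-reduce framework can treat ``$e$ is in the cut'' as a safe deletion rather than a commitment that risks losing a witness. This mirrors the role of Lemma~\ref{c:mc:lem:cont} for contractions and together with it justifies the kernelization rules in Section~\ref{c:mc:ss:kernel}.
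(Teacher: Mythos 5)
Your proposal is correct and follows essentially the same route as the paper: both establish the identity $\wgt(G-e) = \wgt(G) - c(e)$ by pairing the observation that the optimal partition of $G$ (with $u$ and $v$ separated) remains feasible for $G-e$ at weight $\wgt(G) - c(e)$ with the observation that deleting a single edge can lower the weight of any multiterminal cut by at most $c(e)$. The only cosmetic difference is that you phrase the second inequality as lifting a cut of $G-e$ back to $G$, whereas the paper phrases it as bounding the decrease when passing from $G$ to $G-e$; these are the same partition-level correspondence.
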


\begin{proof}
  Let $\wgt(G)$ be the weight of the minimum multiterminal cut $\cut(G)$. We
  show that for an edge $e \in \cut(G)$, $\wgt(G - e) = \wgt(G) - c(e)$. Thus,
  we can delete $e$ (and thus replace $G$ with $G - e$) and store the weight of
  the deleted edge. Obviously, $\cut(G - e)$ has weight equal to $\wgt(G) -
  c(e)$, as we just deleted $e$ and all other edges in $\cut(G)$ are still in
  $G$. By deleting $e$, the weight of any multiterminal cut can be decreased by
  at most $c(e)$ (as a multiterminal cut is a set of edges and $e$ can at most
  be once in that set). As $\wgt(G)$ is minimal by definition and no cut weight
  can be decreased by more than $c(e)$, $G - e$ cannot have a minimum
  multiterminal cut with weight $< \wgt(G) - c(e)$. Thus, $\cut(G-e)$ is a
  minimum multiterminal cut of $G-e$ with weight $\wgt(G-e)$.
\end{proof}

\subsubsection{Minimum Isolating Cuts}
When we look at a problem, we first solve the minimum s-T-cut problem for each
terminal $s \in T$. This results in one or multiple minimum cuts that separate
$s$ from all other terminals. We call the side of the cut containing $s$ the
\emph{isolating cut} of $s$. Dahlhaus~\etal\cite{dahlhaus1994complexity} prove
that there is a minimum multiterminal cut $\cut$ in which the complete isolating
cut is in $\vopt_s$. Thus, according to Lemma~\ref{c:mc:lem:cont}, we can contract all
vertices of the largest isolating cut into a single vertex. In
Figure~\ref{c:mc:fig:example_mtcut} this would result in contracting the red areas
into their respective terminals. This contraction might result in edges
connecting terminals. Such an edge $e=(u,v)$, where both $u$ and $v$ are
terminal vertices is guaranteed to be a part of $\cut(G)$. This comes from the
fact that we know $\vopt_u \neq \vopt_v$, \ie $u$ and $v$ are not in the same
block in the minimum multiterminal cut, as both $u$ and $v$ are terminals.
According to Lemma~\ref{c:mc:lem:del} they can therefore be deleted.

\subsubsection{Local Contraction}
We aim to find edges that cannot be part of the minimum multiterminal cut. If we
find an edge that can be contracted, we mark it in a union find data
structure~\cite{gabow1985linear}. This union-find structure is initialized with
each vertex as its own block, an edge contraction then merges the two blocks of
incident vertices. After all kernelization criteria are tested, we contract all
edges that are marked as contractible. As a contraction might open up new
contractions in its neighborhood, we run the contraction routines until they do
not find any more contractible edges. To ensure low overhead, we run only the
first iteration completely and subsequently check only the neighborhoods of
vertices that were changed in the previous iteration.

\subsubsection{Low-Degree Vertices~\cite{cao20141}}
Figures~\ref{c:mc:fig:various_reductions}.(1),~\ref{c:mc:fig:various_reductions}.(2)
and \ref{c:mc:fig:various_reductions}.(3) show examples of non-terminal vertices
with degree $\leq 2$ that can be contracted while maintaining a minimum
multiterminal cut. A non-terminal vertex with no neighbors
(\textttA{IsolatedVertex}) can be deleted as there is no incident edge that could
affect a cut. For a non-terminal vertex $v$ with only one adjacent edge $e =
(v,x)$ (\textttA{DegreeOne}), $e$ can not be part of the minimum multiterminal
cut $\cut(G)$. Any multiterminal cut that contains $e$ can be improved by
removing $e$ and moving $v$ to the block of its neighbour $x$. Thus, we can
contract $e$. On a non-terminal vertex with two adjacent edges $e_1$ and $e_2$
(\textttA{DegreeTwo}), the heavier edge $e_1$ can not be part of $\cut$, as
replacing it with $e_2$ improves the cut value. If $e_1$ and $e_2$ have equal
weight, we can contract either (but not both!). These reductions are performed
in a single run, which we denote as \textttA{Low}.

\subsubsection{Heavy Edges}
We now look to contract heavy edges. The reductions
\textttA{HeavyEdge}~(\ref{c:mc:fig:various_reductions}.(4)) and
\textttA{HeavyTriangle}~(\ref{c:mc:fig:various_reductions}.(5)) were originally
used for the minimum cut
problem~\cite{Chekuri:1997:ESM:314161.314315,henzinger2018practical,padberg1990efficient}
and are described in Part~\ref{p:mincut}~(Section~\ref{p:mincut:ss:pr}) of this work.
We adapt them and transfer them to the minimum multiterminal cut problem. 

\textttA{HeavyEdge} says that an edge $e=(u,v)$ which has a weight of at least
half of the total edge degree of a non-terminal vertex $u$ can be contracted, as
any cut containing $e$ can instead also contain all other edges incident to $u$.
If $e$ has at least $\frac{deg(u)}{2}$, all other incident edges together are
not heavier.

For a \textttA{HeavyTriangle} with vertices $v_1$, $v_2$ and $v_3$, we can relax
the condition. If for two of the vertices the incident triangle edges together
are at least as heavy as all other incident edges, we can contract those, as
shown in Figure~\ref{c:mc:fig:various_reductions}.(5). Each of the continuous lines
between $v_1$ and $v_2$ can be replaced with the dashed line without increasing
the value of the cut. Thus, in every case ($v_3$ can be on either side of the
cut), there is an optimal solution in which $v_1$ and $v_2$ are in the same
block. Thus, we can contract the edge according to Lemma~\ref{c:mc:lem:cont}.

The condition \textttA{SemiEnclosed}, shown in
Figure~\ref{c:mc:fig:various_reductions}.(6), considers a vertex $v$ which is mostly
incident to terminal vertices. Let $t_1$ be the terminal that is most strongly
connected to $v$ and $t_2$ the terminal with second highest connection strength.
Now say that $v$ is contracted into any terminal vertex. All edges connecting
$v$ with other terminals are then edges connecting terminals and are guaranteed
to be in $\cut$. If $c(v, t_1) > c(v, t_2) + \sum_{u \in V \backslash T} c(v,
u)$, \ie $(v, t_1)$ is heavier than the sum of $(v, t_2)$ and all edges
connecting $v$ with non-terminals, we can contract $v$ into $t_1$. This follows
from the fact that the weight of cut edges incident to $v$ is at most $deg(v) -
c(v, t_1)$ if $v$ is in the same block as $t_1$. If we instead add $v$ to the
block of $t_2$ (or any other block), at most $c(v, t_2) + \sum_{u \in V
\backslash T} c(v, u)$ of the edges incident to $v$ would not be part of the
cut. Thus, the locally best choice is contracting $v$ into $t_1$. As this does
not affect any other graph areas, this choice is guaranteed to be optimal. We
check both \textttA{HeavyEdge} and \textttA{SemiEnclosed} in a single run labelled
\textttA{High}. \textttA{HeavyTriangle} is checked in a run named
\textttA{Triangle}.

\subsubsection{High-connectivity edges}

The \emph{connectivity} of an edge $e=(u,v)$ is the value of the minimum cut
separating $u$ and $v$. If an edge has connectivity $\geq \bestwgt(G)$, it is
guaranteed that $u$ and $v$ are in the same block in $\vopt$, as there can not
be a multiterminal cut that separates them and has value $<\bestwgt(G)$. We can
therefore contract $u$ and $v$. We now show how to improve the bound.

\begin{lemma}\label{c:mc:lem:noi} If for a graph $G$ with best known multiterminal
  cut $\bestcut(G)$, vertices $u$ and $v$ belong to different connected
  components of the minimum multiterminal cut $G\backslash \cut$, then
  $\lambda(u,v)+\frac{\sum_{i \in \{1,\dots,t\}\backslash \max_2}
  \lambda(G,t_i,T\backslash\{t_i\})}{4}  \leq |\wgt(G)|$, where $\max_2$ is the
  set of the indices of the largest $2$ values
  $\lambda(G,t_i,T\backslash\{t_i\})$ in the sum. 
\end{lemma}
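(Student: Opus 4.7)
The plan is to bound $\wgt(G)$ from below by combining two standard facts about the optimal partition $\vopt = \{V_1, \dots, V_t\}$ (with $t_i \in V_i$), using the indices $a, b$ with $u \in V_a$, $v \in V_b$, $a \neq b$ as the ``bridge'' between the $\lambda(u,v)$ term and the sum of isolating cut values.

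First I would set up the double-counting identity: since every edge of the minimum multiterminal cut lies on the boundary of exactly two blocks, $\sum_{i=1}^{t} c(\partial V_i) = 2\wgt(G)$, where $\partial V_i$ denotes the edges of $E[V_i, V \setminus V_i]$. Next I would record two lower bounds on individual boundary weights:
\begin{enumerate}
\item For every $i \in \{1,\dots,t\}$, the boundary $\partial V_i$ is an s-T-cut isolating $t_i$ from $T\setminus\{t_i\}$, so $c(\partial V_i) \geq \lambda(G, t_i, T\setminus\{t_i\})$.
\item For $i \in \{a, b\}$, the boundary $\partial V_i$ additionally separates $u$ from $v$ (since exactly one of the two lies in $V_i$), hence $c(\partial V_a) \geq \lambda(u,v)$ and $c(\partial V_b) \geq \lambda(u,v)$.
\end{enumerate}

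Using bound (2) on the terms $i = a, b$ and bound (1) on every other term in the double-counting identity yields
\begin{equation*}
2\lambda(u,v) \;+\; \sum_{i \notin \{a,b\}} \lambda(G, t_i, T\setminus\{t_i\}) \;\leq\; 2\wgt(G),
\end{equation*}
i.e.\ $\lambda(u,v) + \tfrac{1}{2}\sum_{i \notin \{a,b\}} \lambda(G,t_i,T\setminus\{t_i\}) \leq \wgt(G)$. To move from the set $\{a,b\}$ (which depends on the unknown blocks of $u$ and $v$) to the algorithmically accessible set $\max_2$ of the two largest $\lambda$-values, I would observe that removing the two largest terms from a nonnegative sum gives the smallest possible remainder, so $\sum_{i \notin \max_2} \lambda(G, t_i, T\setminus\{t_i\}) \leq \sum_{i \notin \{a,b\}} \lambda(G, t_i, T\setminus\{t_i\})$. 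Combining this with the previous inequality gives the claimed bound, with the stronger constant $1/2$ in place of $1/4$ (so the lemma as stated follows a fortiori).

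The only subtlety is verifying that $\partial V_a$ and $\partial V_b$ are genuine $u$-$v$ separators, which is immediate from $u \in V_a$, $v \notin V_a$ (resp.\ $v \in V_b$, $u \notin V_b$); this is the step where the hypothesis ``$u$ and $v$ lie in different components of $G \setminus \cut$'' is used. No further obstacles arise: the remainder is double-counting edges and a monotonicity observation about sorted sums.
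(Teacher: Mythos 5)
Your proof is correct, and it proves a strictly stronger inequality (with constant $\tfrac{1}{2}$ in place of $\tfrac{1}{4}$), but it takes a genuinely different route from the paper. The paper works in the quotient graph $G_R$ obtained by contracting each block of the optimal partition, takes a \emph{minimum} $u$-$v$-cut $S(u,v)$ there, and uses a local-exchange argument (any contracted block not containing $u$ or $v$ can contribute at most half of its quotient-degree to $S(u,v)$, else moving it to the other side would improve the cut) to derive $\lambda(u,v) \leq \frac{1}{4}\sum_i \delta(R(t_i)) + \frac{1}{4}\bigl(\delta(R(u)) + \delta(R(v))\bigr)$; the two $\tfrac{1}{4}$-terms for the blocks of $u$ and $v$ are then absorbed by discarding the two largest isolating-cut values, which is where $\max_2$ and the factor $\tfrac{1}{4}$ come from. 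You avoid the auxiliary cut and the exchange argument entirely: you observe that the boundaries $\partial V_a$ and $\partial V_b$ of the optimal partition are themselves $u$-$v$-separators, so each already has weight at least $\lambda(u,v)$, and the double-counting identity $\sum_i c(\partial V_i) = 2\wgt(G)$ plus the trivial bounds $c(\partial V_i) \geq \lambda(G,t_i,T\setminus\{t_i\})$ does the rest. This is more elementary, yields the better constant, and in fact never uses minimality of the multiterminal cut --- the same bound holds for any valid multiterminal partition separating $u$ from $v$. One small point worth making explicit: you silently identify ``different connected components of $G\setminus\cut$'' with ``different blocks $V_a \neq V_b$''; this is the right reading in the context where the lemma is applied (to an edge $(u,v)$, so crossing a component boundary means crossing a block boundary), and the paper's own proof makes the same identification, so no harm is done.
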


In order to prove Lemma~\ref{c:mc:lem:noi} we first prove the following useful claim:

  \begin{claim}\label{t:blocks} For any two nodes $u$ and $v$, if $u$ and $v$
    belong to different connected components of $G \backslash \cut(G)$, then
    $\lambda(u,v) \leq \frac{\sum_{i \in \{1,\dots,k\}}\delta (R(t_i))}{4} +
    \frac{\delta(R(u)) + \delta({R(v)})}{4}$, where $\delta$ are the weighted
    node degrees in the quotient graph corresponding to $\cut(G)$ and $R(x)$ is
    the block of a vertex $x$ as defined by the cut $\cut(G)$.
  \end{claim}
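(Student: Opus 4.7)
My plan is to construct a $u$-$v$ cut in $G$ by cleverly partitioning the blocks of $\cut(G)$, and then argue via the probabilistic method that some such partition is cheap enough. Let $Q$ denote the quotient graph obtained by contracting each block $R(t_i)$ to a single node. Any bipartition of $V(Q)$ that places $R(u)$ and $R(v)$ on opposite sides induces a $u$-$v$ cut in $G$ whose weight equals the total weight of the cut edges in $Q$; so it suffices to exhibit a bipartition of $V(Q)$ of small total cut weight with $R(u)$ and $R(v)$ separated.

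First I would fix $R(u)$ on side $A$ and $R(v)$ on side $B$, and then assign each remaining block $R(t_i)$ independently and uniformly at random to $A$ or $B$. For an edge of $Q$ with both endpoints in $\{R(u),R(v)\}$ this cut always contains the edge; for every other edge of $Q$, exactly one endpoint is random (or both are) and the edge is cut with probability exactly $1/2$. Summing and using that the total weight of edges in $Q$ equals $\wgt(G)$, the expected cut weight equals
\[
c(R(u),R(v)) + \tfrac{1}{2}\bigl(\wgt(G)-c(R(u),R(v))\bigr) \;=\; \tfrac{\wgt(G)+c(R(u),R(v))}{2}.
\]
By the probabilistic method some outcome attains at most this value, so $\lambda(u,v) \le \tfrac{1}{2}\wgt(G) + \tfrac{1}{2} c(R(u),R(v))$.

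To match the form stated in the claim I would rewrite the two summands using the degrees $\delta(R(t_i))$ in $Q$. Since each cut edge of $\cut(G)$ contributes to the degree of exactly two blocks, $\sum_i \delta(R(t_i)) = 2\,\wgt(G)$, so $\tfrac{1}{2}\wgt(G) = \tfrac{1}{4}\sum_i \delta(R(t_i))$. Moreover $c(R(u),R(v)) \le \min\{\delta(R(u)),\delta(R(v))\} \le \tfrac{1}{2}(\delta(R(u))+\delta(R(v)))$, and hence $\tfrac{1}{2} c(R(u),R(v)) \le \tfrac{1}{4}(\delta(R(u))+\delta(R(v)))$. Adding the two estimates yields the inequality in the claim.

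The main obstacle I expect is the careful case analysis of the cut probability for each edge type in $Q$: it is crucial that one side of every edge of $Q$ not in $\{R(u),R(v)\}\times\{R(u),R(v)\}$ is truly unconstrained, so that the cut probability is exactly $\tfrac{1}{2}$; this relies on the fact that $R(u)\ne R(v)$, which is precisely the hypothesis that $u$ and $v$ lie in different connected components of $G\setminus\cut(G)$. Everything else is bookkeeping with the identity $\sum_i \delta(R(t_i))=2\wgt(G)$ and the trivial bound $c(R(u),R(v))\le\min\{\delta(R(u)),\delta(R(v))\}$.
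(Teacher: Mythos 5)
Your proof is correct, but it reaches the bound by a genuinely different argument than the paper. The paper works with the \emph{minimum} $u$-$v$-cut $S(u,v)$ in the quotient graph and uses local optimality: for every block $w$ other than $R(u)$ and $R(v)$, at most $\delta(w)/2$ of its incident weight can lie in $S(u,v)$ (otherwise moving $w$ to the other side would yield a smaller $u$-$v$-cut), while $R(u)$ and $R(v)$ may contribute their full degrees; double-counting each cut edge over its two endpoint blocks then gives $2|S(u,v)|\le \frac{1}{2}\sum_i\delta(R(t_i))+\frac{1}{2}\bigl(\delta(R(u))+\delta(R(v))\bigr)$. You instead exhibit a cheap (not necessarily minimum) $u$-$v$-cut via the probabilistic method, obtaining the intermediate bound $\lambda(u,v)\le\frac{1}{2}\wgt(G)+\frac{1}{2}c(R(u),R(v))$, and then relax $c(R(u),R(v))$ to $\frac{1}{2}\bigl(\delta(R(u))+\delta(R(v))\bigr)$ and apply $\sum_i\delta(R(t_i))=2\wgt(G)$. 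Both arguments rest on the same quotient-graph reduction and the same double-counting identity, and they yield the identical final inequality; your route buys a strictly sharper intermediate estimate (valuable when $R(u)$ and $R(v)$ are only weakly connected in the quotient graph), while the paper's local-improvement argument is deterministic and makes explicit the ``at most half of any intermediate block's degree is cut'' structure. Your care about why the cut probability is exactly $1/2$ for every quotient edge not joining $R(u)$ to $R(v)$ --- which requires $R(u)\ne R(v)$, i.e., the hypothesis of the claim --- is placed exactly where the argument genuinely needs it.
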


  \begin{proof}
    Let $G_R$ be the contracted graph where every block $R(t_i)$ in $G$ is
    contracted into a single vertex and let $|S(u,v)|$ be a minimum $u$-$v$-cut
    in $G_R$. By definition of the minimum cut $\lambda(u,v)$, $\lambda(u,v)
    \leq |S(u,v)|$.
    
    For every vertex $w \in G_R$ that does not represent a block that contains
    either $u$ or $v$, at most $\frac{\text{deg}(w)}{2}$ edges are in
    $|S(u,v)|$. This follows directly from the assumption that $|S(u,v)|$ is
    minimal. If more than $\frac{\text{deg}(w)}{2}$ edges incident to $w$ are in
    $|S(u,v)|$, moving $w$ to the other side of the cut would give a better cut.
    Thus, at most half of the edges incident to $w$ are in $|S(u,v)|$. 
    
    We can not make this argument for the blocks containing $u$ and $v$, as
    potentially all edges incident to their blocks could be in the minimum
    multiterminal cut. Thus, $2 \cdot |S(u,v)| \leq \frac{\sum_{i \in
    \{1,\dots,k\}} \delta(R(t_i))}{2} + \frac{\delta(R(u))}{2} +
    \frac{\delta(R(v))}{2}$. The factor $2$ on the left side is caused by the
    fact that every edge is incident to two blocks. As we do not know the
    multiterminal cut $S$, we need to assume that they could be the blocks with
    the largest cuts $\delta(R(t_i))$. Dividing each side by $2$ finishes the
    proof.
  \end{proof}

  \begin{claim}\label{t:wgt} For any two nodes $u$ and $v$, if $u$ and $v$
    belong to different connected components of $G \backslash \cut(G)$, then
    $\lambda(u,v) + \frac{\sum_{i \in \{1,\dots,k\}} \delta(R(t_i))}{4} \leq
    \wgt$.
  \end{claim}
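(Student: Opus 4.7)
The plan is to derive Claim~\ref{t:wgt} by combining Claim~\ref{t:blocks} with a simple double-counting identity that relates $\sum_i \delta(R(t_i))$ to the multiterminal cut weight $\wgt$. Together these two ingredients already contain essentially everything that is needed, so the proof should amount to a short algebraic manipulation.

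First, I would establish the identity
\begin{equation*}
  \sum_{i=1}^{k} \delta(R(t_i)) \;=\; 2\,\wgt.
\end{equation*}
This is the standard handshake argument applied to the quotient graph $G_R$: every edge of $\cut(G)$ crosses between exactly two blocks, so its weight contributes to $\delta$ of both of those blocks and is therefore counted twice in the sum on the left-hand side.

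Second, I would invoke Claim~\ref{t:blocks}, which gives
\begin{equation*}
\lambda(u,v) \;\leq\; \tfrac{1}{4}\sum_{i} \delta(R(t_i)) \;+\; \tfrac{1}{4}\bigl(\delta(R(u)) + \delta(R(v))\bigr).
\end{equation*}
Since $u$ and $v$ lie in distinct connected components of $G\setminus \cut(G)$, there are distinct indices $i_u \neq i_v$ with $R(u)=R(t_{i_u})$ and $R(v)=R(t_{i_v})$, so $\delta(R(u))$ and $\delta(R(v))$ already appear as two terms of the sum on the right. Splitting the sum around these two indices and applying the identity from the first step yields $\delta(R(u)) + \delta(R(v)) = 2\wgt - \sum_{i\notin\{i_u,i_v\}}\delta(R(t_i))$, which I would substitute back into the Claim~\ref{t:blocks} bound. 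After cancellation, this isolates $\lambda(u,v)$ in terms of $\wgt$ and the remaining $\delta$-values, and a final rearrangement produces the inequality claimed in Claim~\ref{t:wgt}.

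The proof is essentially bookkeeping and I do not anticipate any deep obstacle; the main point of care is to keep track of which blocks receive the weaker $\deg(w)$ bound from Claim~\ref{t:blocks} (namely $R(u)$ and $R(v)$) versus the stronger $\deg(w)/2$ local-optimality bound (all other blocks), and not to double-count the contributions of $R(u), R(v)$ when splitting the sum. Once the double-counting identity is written down, Claim~\ref{t:wgt} follows by inspection.
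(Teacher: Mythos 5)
Your two ingredients (Claim~\ref{t:blocks} plus the handshake identity $\sum_i \delta(R(t_i)) = 2\wgt$) are exactly the ones the paper's proof uses, and your bookkeeping is sound right up to the last sentence --- but the ``final rearrangement'' you defer to is the step that fails. Substituting $\delta(R(u))+\delta(R(v)) = 2\wgt - \sum_{i \notin \{i_u,i_v\}}\delta(R(t_i))$ into the bound of Claim~\ref{t:blocks} and simplifying yields
\[
\lambda(u,v) + \frac{\sum_{i \notin \{i_u,i_v\}} \delta(R(t_i))}{4} \;\leq\; \wgt,
\]
where the sum omits the two blocks containing $u$ and $v$. No further rearrangement restores the full sum: that would require absorbing an extra $\frac{\delta(R(u))+\delta(R(v))}{4}$ on the left, and nothing in Claim~\ref{t:blocks} pays for it. Indeed the full-sum statement is false. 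Take three terminals, join $u$ to $t_1$ and $v$ to $t_2$ by very heavy edges, join $u$ to $v$ by a unit edge, and attach $t_3$ to $u$ by an edge of weight $\epsilon < 1$. Then $\cut(G)=\{(u,v),(t_3,u)\}$, $\wgt = 1+\epsilon$, $\lambda(u,v)=1$ and $\sum_i\delta(R(t_i)) = 2\wgt$, so the claimed inequality reads $1 + \frac{1+\epsilon}{2} \leq 1+\epsilon$, which fails.

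You have not missed an idea the authors had: the paper's own two-line proof asserts that Claim~\ref{t:blocks} gives $\lambda(u,v) + \frac{1}{4}\sum_i\delta(R(t_i)) \leq \frac{1}{2}\sum_i\delta(R(t_i))$, i.e.\ it silently discards the $\frac{\delta(R(u))+\delta(R(v))}{4}$ term, and so has exactly the gap your more careful accounting exposes. The inequality your derivation actually establishes --- with the sum over $i\notin\{i_u,i_v\}$ --- is the correct and usable form: since $i_u,i_v$ are not known in advance, one lower-bounds that reduced sum by discarding the two largest $\delta$-values, which is precisely the $\max_2$ correction appearing in Lemma~\ref{c:mc:lem:noi}. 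So the fix is not to your argument but to the statement of Claim~\ref{t:wgt}, which should carry the same exclusion.
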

  \begin{proof}
    Using Claim~\ref{t:blocks} we know that $\lambda(u,v) + \frac{\sum_{i \in
    \{1,\dots,k\}} \delta(R(t_i))}{4} \leq \frac{\sum_{i \in \{1,\dots,k\}}
    \delta(R(t_i))}{2}$. By definition of $\delta$, $\frac{\sum_{i \in
    \{1,\dots,k\}} \delta(R(t_i))}{2} = \wgt(G)$.
  \end{proof}

  We now use Claims~\ref{t:blocks}~and~\ref{t:wgt} to prove Lemma~\ref{c:mc:lem:noi}.
  \begin{proof}
    Let vertices $u$ and $v$ be in different blocks. Then \\    
    $\lambda(u,v)+\frac{\sum_{i \in \{1,\dots,t\}\backslash \max_2}
    \lambda(G,t_i,T\backslash\{t_i\})}{4} \leq$\\ 
    $\lambda(u,v)+\frac{\sum_{i \in \{1,\dots,t\}\backslash \max_2}
    \delta(R(t_i))}{4} \leq \\ \frac{\sum_{i \in \{1,\dots,t\}\backslash \max_2}
    \delta(R(t_i))}{2} = \wgt(G)$.

    The first inequality follows from the fact that $\lambda$ is per definition
    the minimal cut separating $t$ from $T\backslash\{t_i\}$ and thus
    $\lambda(G,t_i,T\backslash\{t_i\}) \leq \delta(R(t_i))$.

    Thus, we know that if $\lambda(u,v)+\frac{\sum_{i \in
    \{1,\dots,t\}\backslash \max_2} \lambda(G,t_i,T\backslash\{t_i\})}{4} >
    \wgt(G)$, $u$ and $v$ are in the same block and the edge connecting them can
    be safely contracted.
\end{proof}

We can use Lemma~\ref{c:mc:lem:noi} to contract edges whose high connectivity ensures
that they are not in a minimum multiterminal cut. For any edge $e=(u,v)$, if
$\lambda(u,v)+\frac{\sum_{i \in \{1,\dots,k\}\backslash \max_2}
\lambda(G,t_i,T\backslash\{t_i\})}{4} > |\wgt(G)| > |\bestwgt(G)|$, $u$ and $v$
are guaranteed to be in the same block in $\vopt$. Thus, we can contract them
into a single vertex according to Lemma~\ref{c:mc:lem:cont}. This
condition is denoted as \textttA{HighConnectivity}. 

As it is very expensive to compute the connectivity for every edge, we use the
CAPFOREST algorithm of
Nagamochi~\etal~\cite{nagamochi1992computing,nagamochi1994implementing}~(see
Section~\ref{p:mincut:ss:noi} for a description of the CAPFOREST algorithm)
to compute a connectivity lower bound $\gamma(u,v)$ for each edge $e = (u,v)$ in
$G$ in near-linear time. If the lower bound $\gamma(u,v)$ fulfills
Equation~\ref{eq:cap}, we can use Lemma~\ref{c:mc:lem:noi} to contract $u$ and $v$.

\begin{equation}
  \label{eq:cap} \gamma(u,v) > |\hat{\wgt}| - \frac{\sum_{i \in \{1,\dots,k\}\backslash \max_2} \lambda(G,t_i,T\backslash\{t_i\})}{4}
\end{equation}

\subsubsection{Articulation Points}
\label{ss:ap}

Let $\phi \in V$ be an articulation point in $G$ whose removal disconnects the
graph into multiple connected components. For any of these components that does
not contain any terminals, we show that all vertices in the component can be
contracted into $\phi$.

\begin{lemma} \label{c:mc:lem:ap} For an articulation point $\phi$ whose removal
  disconnects the graph $G$ into multiple connected components $(G_1, \dots,
  G_p)$ and a component $G_i$ with $i \in \{1,\dots,p\}$ that does not contain
  any terminals, no edge in $G_i$ or connecting $G_i$ with $\phi$ can be part of
  $\cut(G)$.
\end{lemma}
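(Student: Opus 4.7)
The plan is to prove the lemma by an exchange argument: start from an arbitrary minimum multiterminal cut and transform it into another minimum multiterminal cut of no greater weight in which none of the edges of $G_i$ (and none of the edges between $G_i$ and $\phi$) are cut. Since every edge can thereby be shown to lie outside of at least one minimum multiterminal cut, Lemma~\ref{c:mc:lem:cont} then justifies the contraction.

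First I would fix an arbitrary minimum multiterminal cut $\cut(G)$ with associated partition $\vopt = \{\vopt_1,\dots,\vopt_k\}$, and consider the block $\vopt_j$ that contains the articulation point $\phi$. I would then construct a new partition $\vopt'$ by taking every vertex $v \in V(G_i)$ and reassigning it to $\vopt_j$, while leaving all other vertices (including every terminal) in their original block. This reassignment is well defined because, by the hypothesis of the lemma, $G_i$ contains no terminal, so no terminal changes block and the new partition is still a valid multiterminal partition with each terminal isolated in its own block.

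Next I would compare the weights of the two cuts. By the articulation-point property, every edge incident to a vertex of $G_i$ is either (a) an edge internal to $G_i$, or (b) an edge between some vertex of $G_i$ and $\phi$; no edge can join $G_i$ directly to any vertex outside $G_i \cup \{\phi\}$, otherwise $\phi$ would not disconnect $G_i$ from the rest. In the new partition $\vopt'$, all vertices of $G_i \cup \{\phi\}$ lie in the same block $\vopt_j$, so none of these incident edges cross a block boundary. Hence the weight contribution of edges touching $G_i$ drops to zero in $\vopt'$, while the weights of all other edges are unchanged. Therefore $c(\cut(\vopt')) \leq c(\cut(G))$, and by minimality of $\cut(G)$ we must have equality, so $\cut(\vopt')$ is itself a minimum multiterminal cut.

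Finally, since $\cut(\vopt')$ contains no edge internal to $G_i$ and no edge between $G_i$ and $\phi$, each such edge $e$ satisfies the hypothesis of Lemma~\ref{c:mc:lem:cont} (there is a minimum multiterminal cut in which its endpoints lie in the same block), and hence may be safely contracted. I do not expect a hard obstacle here: the main delicate point is simply to invoke the articulation-point property to rule out edges from $G_i$ to $V \setminus (V(G_i) \cup \{\phi\})$, which is what makes the reassignment free of side effects outside $G_i \cup \{\phi\}$.
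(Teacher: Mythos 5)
Your proposal is correct and uses essentially the same argument as the paper: reassign all of $G_i$ to the block of $\phi$ and use the articulation-point property to show that no new cut edges are introduced while all edges touching $G_i$ leave the cut. The paper merely phrases this as a contradiction (assuming some such edge is cut and deriving a strictly cheaper multiterminal cut), whereas you present it as a direct exchange yielding a minimum cut that avoids these edges; the substance is identical.
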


\begin{proof}
  Let $e$ be an edge that connects two vertices in $\{V_i \cup \phi\}$. Assume
  $e \in \cut(G)$, \ie $e$ is part of the minimum multiterminal cut of $G$. This
  means that vertices in $\{V_i \cup \phi\}$ are not all in the same block. By
  changing the block affiliation of all vertices in $\{V_i \cup \phi\}$ to
  $\vopt(\phi)$ we can remove all edges connecting vertices in $\{V_i \cup
  \phi\}$ from the multiterminal cut, thus decrease the weight of the
  multiterminal cut by at least $c(e)$. As $\phi$ is an articulation point,
  $G_i$ is only connected to the rest of $G$ through $\phi$ and thus no new
  edges are introduced to the multiterminal cut. This is a contradiction to the
  minimality of $\cut(G)$, thus no edge $e$ that connects two vertices in $\{V_i
  \cup \phi\}$ is in the minimum multiterminal cut $\cut(G)$.
\end{proof}

Using Lemmas~\ref{c:mc:lem:cont}~and~\ref{c:mc:lem:ap} we can contract all
components that contain no terminals into the articulation point $\phi$. All
articulation points of a graph can be found in linear time using an algorithm by
Tarjan and Vishkin~\cite{tarjan1985efficient} based on depth-first search. The
algorithm performs a depth-first search and checks in the backtracking step
whether for a vertex $v$ there exists an alternative path from the parent of $v$
to every of descendant of $v$. If there is no alternative path, $v$ is an
articulation point in $G$. This reduction rule is denoted as \textttA{ArticulationPoints}.

\subsubsection{Equal Neighborhoods}

In many cases, the resulting graph of the reductions contains groups of vertices
that are connected to the same neighbors. If the neighborhood and respective
edge weights of two vertices are equal, we can use
Lemmas~\ref{c:mc:lem:cont}~and~\ref{c:mc:lem:nbrhd} to contract them into a single vertex.

\begin{lemma} \label{c:mc:lem:nbrhd} For two vertices $v_1$ and $v_2$ with $\{N(v_1)
  \backslash v_2\} = \{N(v_2) \backslash v_1\}$ where for all $v \in \{N(v_1)
  \backslash v_2\}$, $c(v_1, v) = c(v_2, v)$, there is at least one minimum
  multiterminal cut where $\vopt(v_1) = \vopt(v_2)$.
\end{lemma}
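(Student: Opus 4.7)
The plan is an exchange argument: starting from an arbitrary minimum multiterminal cut, either $v_1$ and $v_2$ already lie in the same block, or we can reassign one of them to the other's block without increasing the cut weight. Since the statement of the lemma is only used as justification for contracting $v_1$ and $v_2$, I will implicitly assume at most one of them is a terminal (otherwise contraction would be forbidden and the lemma would not be invoked); if exactly one is a terminal, the non-terminal is the vertex we reassign.

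First, fix a minimum multiterminal cut $\vopt = \{\vopt_1,\ldots,\vopt_k\}$ of weight $\wgt(G)$ and suppose $v_1 \in \vopt_i$, $v_2 \in \vopt_j$ with $i \ne j$. Let $N := N(v_1)\setminus\{v_2\} = N(v_2)\setminus\{v_1\}$, and for each block index $\ell$ define
\begin{equation*}
W_\ell \;:=\; \sum_{u \in N \cap \vopt_\ell} c(v_1,u) \;=\; \sum_{u \in N \cap \vopt_\ell} c(v_2,u),
\end{equation*}
where the second equality uses the hypothesis $c(v_1,u) = c(v_2,u)$ for all $u \in N$. Write $\alpha := c(v_1,v_2)$ if the edge $(v_1,v_2)$ exists and $\alpha := 0$ otherwise. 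The contribution of the edges incident to $\{v_1,v_2\}$ to the cut weight is then $\sum_{\ell \ne i} W_\ell + \sum_{\ell \ne j} W_\ell + \alpha$, since the edge between $v_1$ and $v_2$ is cut exactly when $i \ne j$.

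Next, I would compute the change in cut weight under the two candidate reassignments. Moving $v_2$ into $\vopt_i$ leaves the contribution of $v_1$'s incident edges unchanged and changes the contribution of $v_2$'s incident edges from $\sum_{\ell \ne j} W_\ell + \alpha$ to $\sum_{\ell \ne i} W_\ell$, a net change of $W_j - W_i - \alpha$. Symmetrically, moving $v_1$ into $\vopt_j$ gives net change $W_i - W_j - \alpha$. The sum of the two net changes is $-2\alpha \le 0$, so at least one of them is $\le 0$. Choose the reassignment with non-positive change; as at most one of $v_1,v_2$ is a terminal, we can always choose the move that reassigns a non-terminal (if $v_1$ is a terminal, the first move is the one reassigning the non-terminal $v_2$, and vice versa; and even then one of the two moves is non-positive since we can pick $W_i \le W_j$ or $W_j \le W_i$ as needed). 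The resulting partition is still a valid multiterminal cut (each block still contains its terminal), has weight no larger than $\wgt(G)$, and places $v_1, v_2$ in the same block. By minimality of $\wgt(G)$ the new partition is itself a minimum multiterminal cut with $\vopt(v_1) = \vopt(v_2)$.

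The only subtle point is the terminal case: if, say, $v_1$ is a terminal, we cannot move $v_1$, so we are forced to take the move that reassigns $v_2$. A priori the change $W_j - W_i - \alpha$ for that move could be strictly positive. To handle this I would observe that if $v_1 = t_i$ is a terminal then we may as well choose our starting minimum multiterminal cut so that $v_2$ is already placed to minimize its contribution; equivalently, by a standard exchange one can assume $W_j \le W_i + \alpha$ in any minimum cut, since otherwise moving $v_2$ into $\vopt_i$ strictly decreases the weight, contradicting minimality. This observation is the only delicate step, and the rest is direct bookkeeping.
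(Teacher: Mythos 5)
Your main argument is correct and is essentially the paper's proof. The paper runs the same exchange: it defines $cc(v)$ as the cut weight incident to $v$, assumes w.l.o.g.\ that $cc(v_2)\geq cc(v_1)$, moves $v_2$ into the block of $v_1$, and uses the equal-neighborhood hypothesis to see that the new contribution of $v_2$ equals the old contribution of $v_1$ (with both quantities further reduced by $c(v_1,v_2)$ if that edge exists). Your per-block bookkeeping with the $W_\ell$ and the observation that the two candidate changes $W_j-W_i-\alpha$ and $W_i-W_j-\alpha$ sum to $-2\alpha\leq 0$ is the same computation in different notation.

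The part that does not work is your patch for the case where exactly one of the two vertices is a terminal. You claim that in a minimum cut one may assume $W_j \leq W_i + \alpha$ ``since otherwise moving $v_2$ into $\vopt_i$ strictly decreases the weight'' --- but the change of that move is $W_j - W_i - \alpha$, so $W_j > W_i + \alpha$ makes the move strictly \emph{increase} the weight; what minimality of the starting cut actually gives you is the reverse inequality $W_j \geq W_i + \alpha$ (no improving move exists), which is exactly the wrong direction, and the move of $v_1$ that would supply the inequality you need is unavailable precisely because $v_1$ is a terminal. The gap is not cosmetic: with one terminal the statement can genuinely fail. Take $V=\{t_1,t_2,u,v_2\}$ with unit-weight edges $(t_1,u)$ and $(v_2,u)$ and an edge $(t_2,u)$ of weight $10$; then $N(t_1)\backslash\{v_2\} = N(v_2)\backslash\{t_1\} = \{u\}$ with equal edge weights, yet the unique minimum multiterminal cut is $(\{t_1\},\{t_2,u,v_2\})$, which separates $t_1$ from $v_2$. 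So the lemma must be read --- as the paper's proof implicitly does by feeling free to move whichever of $v_1,v_2$ has the larger incident cut weight --- as a statement about two non-terminal vertices. Once you restrict to that case, your argument is complete and no patch is needed.
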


\begin{proof}
  Let $C$ be a partitioning of the vertices in $G$ with $C(v_1) \not = C(v_2)$,
  let $\zeta$ be the corresponding cut, where $e=(u,v) \in \zeta$, if $C(u) \neq
  C(v)$ and let $cc(v)$ be the total weight of edges in $\zeta$ incident to a
  vertex $v \in V$. W.l.o.g. let $v_2$ be the vertex with $cc(v_2) \geq
  cc(v_1)$. We analyze this in two steps: We assume that when moving $v_2$ to
  $C(v_1)$ that all edges incident to $v_2$ in its old location are removed from
  $\zeta$, which drops the weight of $\zeta$ by $cc(v_2)$ and then all edges
  incident to $v_2$ in its new location are added to $\zeta$, which is exactly
  $cc(v_1)$ by the conditions of the lemma. Thus the weight of $\zeta$ changes
  by $cc(v_1) - cc(v_2) \le 0$. If the edge $e_{12} = (v_1, v_2)$ exists, both
  $cc(v_1)$ and $cc(v_2)$ are furthermore decreased by $c(e_{12})$, as the edge
  connecting them is not a cut edge anymore. As we only moved the block
  affiliation of $v_2$, the only edges newly introduced to $\zeta$ are edges
  incident to $v_2$. Thus, the total weight of the multiterminal cut was not
  increased by moving $v_1$ and $v_2$ into the same block and we showed that for
  each cut $\zeta$, in which $C(v_1) \not = C(v_2)$ there exists a cut of equal
  or better value in which $v_1$ and $v_2$ are in the same block. Thus, there
  exists at least one multiterminal cut where $\vopt(v_1) = \vopt(v_2)$. 
\end{proof}

We detect equal neighborhoods for all vertices with neighborhood size smaller or
equal to a constant $c_{N}$ using two linear time routines. To detect
neighboring vertices $v_1$ and $v_2$ with equal neighborhood, we sort the
neighborhood vertex IDs including edge weights by vertex IDs (excluding the
respective other vertex) for both $v_1$ and $v_2$ and check for equality. To
detect non-neighboring vertices $v_1$ and $v_2$ with equal neighborhood, we
create a hash of the neighborhood sorted by vertex ID for each vertex with
neighborhood size smaller or equal to $c_{N}$. If hashes are equal, we check
whether the condition for contraction is actually fulfilled. As the
neighborhoods to sort only have constant size, they can be sorted in constant
time and thus the procedures can be performed in linear time. We perform both
tests, as the neighborhoods of neighboring vertices contain each other and
therefore do not result in the same hash value; and non-neighboring vertices are
not in each others neighborhood and therefore finding them requires checking the
neighborhood of every neighbor, which results in a large search space. We set
$c_{N} = 5$, as in most cases where we encountered equal neighborhoods they are
in vertices with neighborhood size $\leq 5$. This reduction rule is denoted as \textttA{EqualNeighborhoods}

\subsubsection{Maximum Flow from Non-terminal Vertices}

Let $v$ be an arbitrary vertex in $V \backslash T$, \ie a non-terminal vertex of
$G$. Let $(V_v, V \backslash V_v)$ be the largest minimum isolating cut that
separates $v$ from the set of terminal vertices $T$. Lemma~\ref{c:mc:lem:flow}
shows that there is at least one minimum multiterminal cut $\cut(G)$ so that
$\forall x \in V_v: \vopt(x) = \vopt(v)$ and thus $V_v$ can be contracted into a
single vertex.

\begin{lemma} \label{c:mc:lem:flow} Let $v$ be a vertex in $V \backslash T$. Let
  $(V_v, V \backslash V_v)$ be the largest minimum isolating cut of $v$ and the
  set of terminal vertices $T$ and let $\lambda(G, v, T)$ be the weight of the
  minimum isolating cut $(V_v, V \backslash V_v)$. There exists at least one
  minimum multiterminal cut $\cut(G)$ in which $\forall x \in V_v: \vopt(x) =
  \vopt(v)$.
\end{lemma}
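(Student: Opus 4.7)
The plan is to mimic the classical isolating-cut argument of Dahlhaus~\etal, generalized from a terminal source to the non-terminal source vertex $v$. I would start from an arbitrary minimum multiterminal cut with partition $\vopt = (P_1, \ldots, P_k)$ and let $j$ be the index such that $v \in P_j$. Define a new partition $\vopt'$ by ``pulling'' all of $V_v$ into the block containing $v$:
\begin{equation*}
  P_j' := P_j \cup V_v, \qquad P_i' := P_i \setminus V_v \text{ for } i \neq j.
\end{equation*}
Because $V_v$ is the source side of a cut separating $v$ from $T$, we have $V_v \cap T = \emptyset$, so every $P_i'$ still contains exactly the terminal $t_i$ and $\vopt'$ is a valid multiterminal partition. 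The goal is then to show $c(\cut(\vopt')) \leq c(\cut(\vopt))$, which by minimality of $\vopt$ forces equality and yields the claimed minimum multiterminal cut with $V_v \subseteq \vopt'(v)$.

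The key observation that drives the inequality is that $V_v \cap P_j$ is a set containing $v$ but no terminal (since no terminal lies in $V_v$), hence it is a feasible set separating $v$ from $T$, so $\delta(V_v \cap P_j) \geq \lambda(G,v,T) = \delta(V_v)$. I would then combine this with the submodularity of the cut function, applied with $A = V_v$ and $B = P_j$:
\begin{equation*}
  \delta(V_v) + \delta(P_j) \;\geq\; \delta(V_v \cup P_j) + \delta(V_v \cap P_j) \;\geq\; \delta(V_v \cup P_j) + \delta(V_v),
\end{equation*}
which gives $\delta(P_j \cup V_v) \leq \delta(P_j)$. This controls the contribution of block $j$, but the change of partition also modifies every other block $P_i$ by removing $V_v \cap P_i$.

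To handle these remaining blocks, I would turn to a direct edge-counting argument: split edges according to whether each endpoint lies in $V_v$ or in $V \setminus V_v$, and for the former also record which $P_i$ it came from. Edges internal to $V_v$ that previously crossed block boundaries are no longer cut; edges with one endpoint in $V_v \setminus P_j$ and the other in some $P_i$ with $i \neq j$ may now join the cut; edges fully inside $V \setminus V_v$ are unaffected. Writing $\alpha_{ik}, \beta_{ik}, \gamma_{ik}$ for the total weights of edges between $V_v\cap P_i$ and $V_v\cap P_k$, between $V_v\cap P_i$ and $(V\setminus V_v)\cap P_k$, and between $(V\setminus V_v)\cap P_i$ and $(V\setminus V_v)\cap P_k$ respectively, I expect a short computation to reduce the desired inequality to
\begin{equation*}
  \sum_{i < k}\alpha_{ik} + \sum_{i \neq j} \beta_{ij} \;\geq\; \sum_{i \neq j} \beta_{ii},
\end{equation*}
which in turn follows from $\delta(V_v \cap P_j) \geq \delta(V_v)$ expanded in the same notation.

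\textbf{Main obstacle.} The submodularity inequality alone only bounds $\delta(P_j \cup V_v)$; it does not by itself bound the sum $\sum_i \delta(P_i')$ that determines the new multiterminal cut weight, because trimming $V_v$ out of the other blocks can increase some of their individual boundaries. The delicate step is therefore the bookkeeping showing that the edge weight newly introduced into the cut (edges from $V_v \cap P_i$ to $(V\setminus V_v) \cap P_i$ for $i \neq j$) is already accounted for by the edges internal to $V_v$ between different $P_i$'s that leave the cut, with the isolating-cut inequality on $V_v \cap P_j$ supplying exactly the slack needed. I expect that this reduction either admits a short direct proof by the edge-counting above, or can be phrased more cleanly by applying submodularity iteratively to the pair $(V_v \cap P_i,\ V_v \cap P_j)$ for each $i \neq j$.
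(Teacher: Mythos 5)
Your plan is correct and is essentially the paper's own argument: the paper likewise takes an optimal multiterminal cut, moves $V_C = V_v \setminus P_j$ into the block of $v$, and observes that the removed cut weight is at least $c(V_C, V_v\setminus V_C)$ while the added weight is at most $c(V_C, V\setminus V_v)$, where $c(V_C, V_v\setminus V_C)\geq c(V_C, V\setminus V_v)$ by minimality of the isolating cut — which is exactly your inequality $\delta(V_v\cap P_j)\geq\delta(V_v)$ written out. Your bookkeeping (and the final reduction, which I checked does follow from that inequality since $\sum_{i<k}\alpha_{ik}\geq\sum_{i\neq j}\alpha_{ij}\geq\sum_{i\neq j}\beta_{ii}$) is just a more explicit version of the same exchange, with the submodularity step serving only as a dispensable warm-up.
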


\begin{proof}
  As $(V_v, V \backslash V_v)$ is a minimum isolating cut with the terminal set
  as sinks, we know that no terminal vertex is in $V_v$. Assume that $\cut(G)$
  cuts $V_v$, \ie there is a non empty vertex set $V_C \in V_v$ so that $\forall
  x \in V_C: \vopt(x) \not\in \vopt(v)$. We will show that the existence of such
  a vertex set contradicts the minimality of $\cut(G)$.
  Figure~\ref{c:mc:fig:flow} gives an illustration of the vertex sets defined
  here.

    \begin{figure}
      \centering
      \includegraphics[width=3.3cm]{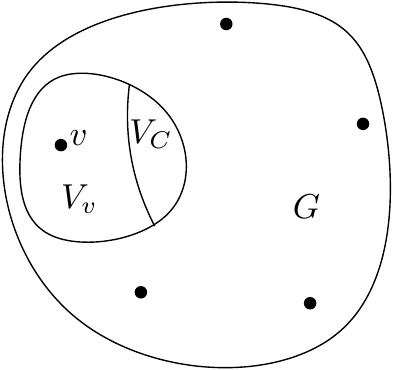}
      \caption{\label{c:mc:fig:flow} Illustration of vertex sets in
      Lemma~\ref{c:mc:lem:flow}.}
    \end{figure}

  Due to the minimality of the minimum isolating cut $(V_v, V \backslash V_v)$,
  we know that $c(V_C, V_v \backslash V_C) \geq c(V_C, V \backslash V_v)$ (i.e.
  the connection of $V_C$ to the rest of $V_v$ is at least as strong as the
  connection of $V_C$ to $(V \backslash V_v)$), as otherwise we could remove
  $V_C$ from $V_v$ and find an isolating cut of smaller size.
  
  We now show that by changing the block affiliation of all vertices in $V_C$ to
  $\vopt(v)$, \ie removing all vertices from the set $V_C$, we can construct a
  multiterminal cut of equal or better cut value. By changing the block
  affiliation of all vertices in $V_C$ to $\vopt(v)$, we remove all edges
  connecting $V_C$ to $(V_v \backslash V_C)$ from $\cut(G)$ and potentially
  more, if there were edges in $\cut(G)$ that connect two vertices both in
  $V_C$. At most, the edges connecting $V_C$ and $(V \backslash V_v)$ are newly
  added to $\cut(G)$. As $c(V_C, V_v \backslash V_C) \geq c(V_C, V \backslash
  V_v)$, the cut value of $\cut(G)$ will be equal or better than previously.
  Thus, there is at least one multiterminal cut in which $V_C$ is empty and
  therefore $\forall x \in V_v: \vopt(x) = \vopt(v)$.
\end{proof}

We can therefore solve a maximum $s$-$T$-flow problem for an arbitrary
non-terminal vertex $s$ and the set of all terminals $T$ and contract the source
side of the largest minimum isolating cut into a single vertex, using
Lemmas~\ref{c:mc:lem:cont}~and~\ref{c:mc:lem:flow}. These flow problems can be
solved embarrassingly parallel, in which every processor solves an independent
maximum $s$-$T$-flow problem for a different non-terminal vertex $v$.

While it is possible to run a flow problem from every vertex in $V$, this is
obviously not feasible as it would entail excessive running time overheads.
Promising vertices to use for maximum flow computations are either high degree
vertices or vertices with a high distance from every terminal. High degree
vertices are promising, as due to their high degree it is more likely that we
can find a minimum isolating cut of weight less than their degree. Vertices that
have a high distance to all terminals are on 'the edge of the graph',
potentially in a subgraph only weakly connected to the rest of the graph.
Running a maximum flow then allows us to contract this subgraph. In every
iteration, we run $5$ flow problems starting from high-distance vertices and $5$
flow problems starting from high-degree vertices. This reduction rule is denoted
as \textttA{NonTerminalFlows}.

\subsubsection{Other Reductions}

We now briefly present other reductions that we tried, but have been
unsuccessful since they are either subsumed by other reductions or have
excessive running time overheads in comparison to how many contractions are
found.

\paragraph*{Bridges.}

A \emph{bridge} is an edge whose removal disconnects a graph $G$ into two blocks
$G_1$ and $G_2$. For every bridge, if one block has no terminals, we can
contract this block into a single vertex, similar to the articulation point
reduction in Section~\ref{ss:ap}. As the two incident vertices of a bridge are
always articulation points, the articulation point reduction already finds these
contractions and finding bridges is not faster than finding articulation. If
both blocks contain terminals, branching on this bridge allows the disconnection
of the problem in one of the subproblems. However, we found that even if bridges
like this exist in the original graph, generally they are already added to the
multiterminal cut by other routines and thus all contractions that the bridge
reduction finds are already found by other reductions.

\paragraph*{Semi-isolated Clique.}

If a graph contains a clique $C$ that has only a weak connection to the rest of
the graph, no minimum multiterminal cut can cut $C$ and we can thus contract it
into a single vertex. We employed the maximal clique search algorithm of
Eppstein~\etal\cite{eppstein2011listing} with aggressive pruning of
cliques that have a strong connection to non-clique vertices. However, as
maximal clique detection is an NP-complete problem~\cite{eppstein2011listing},
even aggressive pruning still entails excessive running time. Also, as the
instances contracted with all reductions usually have increased average degree
and decreased diameter, almost all cliques in them have a large amount of edges
to other vertices and thus there are only few semi-isolated cliques to be found.

\subsection{Branching Tree Search}
\label{c:mc:ss:branch}

If our reductions detailed in Section~\ref{c:mc:ss:kernel} are unable to contract any
edges in $G$, we branch on an edge adjacent to a terminal.
Figure~\ref{c:mc:fig:branch_edge} shows an example in which we chose an edge to
branch on. For each edge, there are two options: either the edge is part of the
minimum multiterminal cut $\cut(G)$ or it is not.
Lemmas~\ref{c:mc:lem:cont}~and~\ref{c:mc:lem:del} show that we can delete an edge that is
in $\cut(G)$ and contract an edge that is not. Therefore we can build two
subproblems, $G/e$ and $G-e$ and add them to the problem queue $\queue$. This
branching scheme for the multiterminal cut problem was introduced by
Chen~\etal\cite{chen2009improved} in their FPT algorithm for the problem.

\begin{figure}[t]
  \centering
  \includegraphics[width=.9\textwidth]{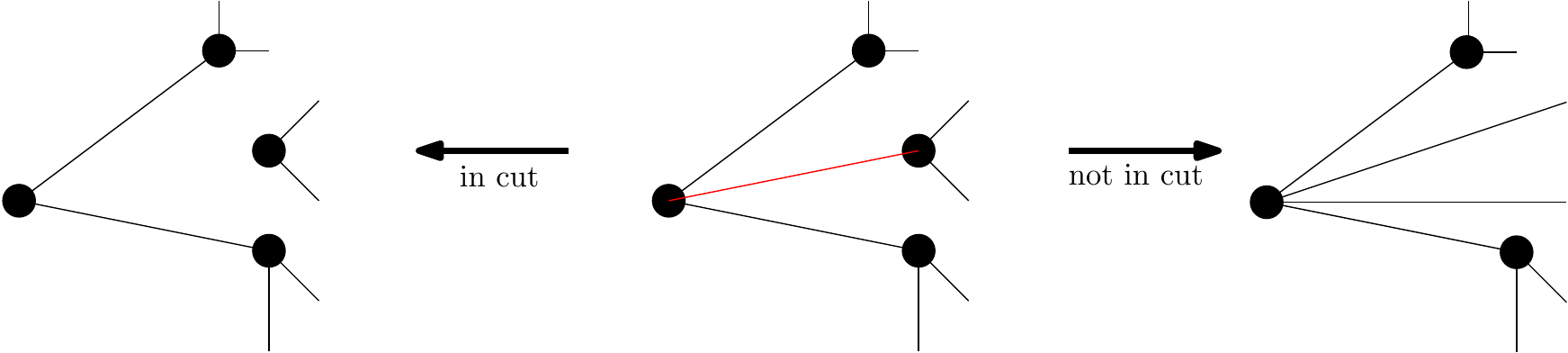}
  \caption{\label{c:mc:fig:branch_edge}Branch on marked edge $e$ in $G$, adjacent to
  a terminal - create two subproblems, (1) $G/e$ and (2) $G-e$.}
\end{figure}

Both of the subproblems will have a higher lower bound and thus, the algorithm
will definitely terminate. For $G-e$, we know that $e$ is adjacent to a terminal
$s$ but not an edge connecting two terminals (otherwise it would have been
deleted). Thus, it is in exactly one minimum s-T-cut
$\lambda(G,s,T\backslash\{s\})$. For the lower bound, we half the value of all
minimum s-T-cuts. Deleting the edge indicates that it is definitely part of the
multiterminal cut. Thus, we increased the lower bound by $c(e) - \frac{c(e)}{2}
= \frac{c(e)}{2}$. 

For $G/e$ we know that $e=(s,v)$ is part of the largest isolating cut of $s$ (as
we contract the largest isolating cut). In $G/e$ terminal $s$ is guaranteed to
have a larger minimum s-T-cut, as otherwise there would be an isolating cut of
equal value containing $v$, which contradicts the maximality of the contracted
isolating cut. Thus $\lambda(G/e,s,T\backslash\{s\}) >
\lambda(G,s,T\backslash\{s\})$ and no other minimum s-T-cut can be decreased by
an edge contraction. Thus, the lower bound of $\wgt(G/e)$ and $\wgt(G-e)$ are
both guaranteed to be higher than the lower bound of $\wgt(G)$.

\subsubsection{Vertex Branching} \label{c:mc:ss:vtxbranch}

When our multiterminal cut algorithm is initialized, it only has a single problem
containing the whole graph $G$. While independent minimum isolating cuts are
computed in parallel, most of the shared-memory parallelism comes
from the embarrassingly parallel solving of different problems on separate
threads. When branching, we select the highest degree vertex that is
adjacent to a terminal and branches on the heaviest edge connecting it to one of
the terminals. The algorithm thus creates only up to two subproblems and is
still not able to use the whole machine. 

We now give a new branching rule that overcomes these limitations by selecting
the highest degree vertex incident to at least one terminal and use it to create
multiple subproblems to allow for faster startup. Let $x$ be the vertex used for
branching, $\{t_1,\dots,t_i\}$ for some $i \geq 1$ be the adjacent terminals of
$x$ and $w_M$ be the weight of the heaviest edge connecting $x$ to a terminal.
We now create up to $i + 1$ subproblems as follows: 

For each terminal $t_j$ with $j \in \{1,\dots,i\}$ with $c(x, t_j) + c(x, V
\backslash T) > w_M$ create a new problem $P_j$ where edge $(x, t_j)$ is
contracted and all other edges connecting $x$ to terminals are deleted. Thus in
problem $P_j$, vertex $x$ belongs to block $\vopt(t_j)$. If $c(x, t_j) + c(x, V
\backslash T) \leq w_M$, \ie the weight sum of the edges connecting $x$ with
$t_j$ and all non-terminal vertices is not heavier than $w_M$, the assignment to
block $\vopt(t_j)$ cannot be optimal and thus we do not need to create the
problem $P_j$, also called \emph{pruning} of the problem. The following
Lemma~\ref{lem:prune} proves the correctness of this pruning step.

\begin{lemma} \label{lem:prune} Let $G = (V,E)$ be a graph, $T \subseteq V$ be
  the set of terminal vertices in $G$, and $x \in V$ be a vertex that is
  adjacent to at least one terminal and for an $i \in \{1,\dots,|T|\}$ be the
  index of the terminal for which $e_i=(x,t_i)$ is the heaviest edge connecting
  $x$ with any terminal. Let $w_M$ be the weight of $e_i$. If there exists a
  terminal $t_j$ adjacent to $x$ with $j \in \{1,\dots, |T|\}$ with $c(x, t_j) +
  c(x, V \backslash T) \geq w_M$, there is at least one minimum multiterminal
  cut $\cut(G)$ so that $\vopt(x) \not = j$, \ie $x$ is not in block $j$. 
  \end{lemma}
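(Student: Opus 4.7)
My plan is a direct exchange argument on any minimum multiterminal cut. Let $\cut(G)$ be a minimum multiterminal cut with associated partition $\vopt$; if $\vopt(x) \neq j$ we are done, so assume $\vopt(x) = j$ and construct a candidate partition $\vopt'$ by reassigning $x$ alone to block $i$. Since $x$ is non-terminal and every other block still contains exactly its terminal, $\vopt'$ is a valid multiterminal partition, so it suffices to show that its induced cut has weight at most $\wgt(G)$, because any such cut is itself a minimum multiterminal cut witnessing $\vopt'(x) = i \neq j$.

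I would bound the change in cut weight under this reassignment in two pieces. For edges incident to terminals, only $(x, t_i)$ and $(x, t_j)$ change status: $(x,t_i)$ of weight $w_M$ leaves the cut, $(x,t_j)$ of weight $c(x,t_j)$ enters, and every other terminal edge $(x, t_\ell)$ stays cut; the net terminal change is $c(x,t_j) - w_M$, which is $\leq 0$ by the maximality of $w_M$. For edges $(x,y)$ with $y \in V\setminus T$, the change depends on $\vopt$: the edge enters the cut iff $\vopt(y) = j$ and leaves iff $\vopt(y) = i$, so the non-terminal contribution has absolute value at most $c(x, V\setminus T)$. Summing, the total change in cut weight is bounded above by $c(x, t_j) - w_M + c(x, V\setminus T) = c(x, t_j) + c(x, V\setminus T) - w_M$.

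The main obstacle I anticipate is exactly the direction of the hypothesized inequality. The crude one-vertex exchange above certifies a nonpositive change in cut weight under $c(x, t_j) + c(x, V\setminus T) \leq w_M$, whereas the lemma as stated hypothesizes $c(x, t_j) + c(x, V\setminus T) \geq w_M$, so the argument does not immediately close. To reconcile this, my plan is to sharpen the non-terminal accounting by bundling the move: reassign $x$ together with the set $Y := \{y \in N(x) \cap (V\setminus T) : \vopt(y) = j\}$ of non-terminal neighbours of $x$ already placed in block $j$, so that $c(x, V\setminus T)$ in the bound is replaced by the weight of edges leaving $\{x\} \cup Y$ into blocks other than $i$, which is typically much smaller. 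The key technical point then becomes verifying, potentially by a minimum-$s$-$T$-cut argument around $x$'s non-terminal neighbourhood (leveraging that the isolating-cut and \textttA{SemiEnclosed} reductions have already been applied exhaustively before any branching), that the hypothesis in the stated direction does force this refined quantity to be at most $w_M - c(x, t_j)$; I expect this step, cross-checked against the accompanying pruning rule which creates $P_j$ only when $c(x,t_j) + c(x,V\setminus T) > w_M$, to be the delicate part of the proof.
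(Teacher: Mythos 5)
Your first two paragraphs are essentially the paper's own proof: the paper bounds the weight of cut edges incident to $x$ from above by $E(x)-w_M$ when $\vopt(x)=i$ and from below by $E(x)-\left(c(x,t_j)+c(x,V\backslash T)\right)$ when $\vopt(x)=j$, and compares the two — which is exactly your one-vertex exchange, written in terms of incident cut weight rather than the difference. Your diagnosis of the inequality is also correct, and you should trust it rather than work around it: the hypothesis in the lemma is a sign typo and should read $c(x,t_j)+c(x,V\backslash T)\leq w_M$. This is confirmed by the pruning rule the lemma is meant to justify (problem $P_j$ is created when $c(x,t_j)+c(x,V\backslash T)>w_M$ and pruned when $\leq w_M$), and by the paper's own proof, which repeats the $\geq$ but draws a conclusion (``the weight sum of incident cut edges is not lower than when $x$ is placed in block $i$'') that only follows from $\leq$.

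Consequently your third paragraph is chasing a false statement and should be dropped. The $\geq$ version cannot be rescued by bundling $x$ with its block-$j$ neighbours or by appealing to earlier reductions: take $x$ adjacent to $t_i$ and $t_j$ with $c(x,t_i)=2$, $c(x,t_j)=1$, and to a single non-terminal $y$ with $c(x,y)=100$, where $y$ is attached to $t_j$ by an edge of weight $1000$. Then $w_M=2$ and $c(x,t_j)+c(x,V\backslash T)=101\geq w_M$, yet the unique minimum multiterminal cut places both $y$ and $x$ in block $j$ (cut value $2$, versus at least $101$ for any assignment with $\vopt(x)\neq j$), so no minimum multiterminal cut satisfies $\vopt(x)\neq j$. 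With the hypothesis read as $\leq$, your exchange argument as given in the first two paragraphs is already a complete and correct proof and nothing further is needed.
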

  
  \begin{proof}
    If $\vopt(x) = i$, \ie $x$ is in the block of the terminal it has the
    heaviest edge to, the sum of cut edge weights incident to $x$ is $\leq E(x)
    - w_M$, as edge $e_i$ of weight $w_M$ is not a cut edge in that case. If
    $\vopt(x) = j$, \ie $x$ is in the block of terminal $j$, the sum of cut edge
    weights incident to $x$ is $\geq E(x) - (c(x, V \backslash T) + c(x, t_j))$,
    as all edges connecting $x$ with other terminals than $t_j$ are guaranteed
    to be cut edges. As $c(x, t_j) + c(x, V \backslash T) \geq w_M$, even if all
    non-terminal neighbors of $x$ are in block $j$, the weight sum of incident
    cut edges is not lower than when $x$ is placed in block $i$. As the block
    affiliation of $x$ can only affect its incident edges, the cut value of
    every solution that sets $\vopt(x) = j$ would be improved or remain the same
    by setting $\vopt(x) = i$.
  \end{proof}

If $c(x, V \backslash T) > w_M$ and $i < |T|$, we also create problem $P_{i+1}$,
in which all edges connecting $x$ to a terminal are deleted. This problem
represents the assignment of $x$ to a terminal that is not adjacent to it. We
add each subproblem whose lower bound is lower than the currently best found
solution $\bestwgt$ to the problem queue $\queue$. As we create up to $|T|$
subproblems, this allows for significantly faster startup of the algorithm and
allows us to use the whole parallel machine after less time than before.

\subsubsection{Edge Selection}

In Section~\ref{c:mc:ss:branch_impl} we evaluate the following edge selection
strategies: \textttA{HeavyEdge} branches on the heaviest edge incident to a
terminal; \textttA{HeavyVertex} branches on the edge between the heaviest vertex
that is in the neighborhood of a terminal to that terminal; \textttA{Connection}
searches the vertex that is most strongly connected to the set of terminals and
branches on the heaviest edge connecting it to a terminal;
\textttA{NonTerminalWeight} branches on the edge between the vertex that has the
highest weight sum to non-terminal vertices and the terminal it is most strongly
connected with; and \textttA{HeavyGlobal} branches on the heaviest edge in
the~graph.  

\subsubsection{Sub-problem Order}

In Section~\ref{c:mc:ss:queue_impl} we evaluate the following comparators for the
priority queue $\queue$, \ie the order in which we look at the problems. A
straightforward indicator on whether a problem can lead to a low cut is the
current lower and upper bound for the best solution. If a problem has a good
lower bound, it has a large potential for improvement and if it has a good upper
bound there is already a good solution, potentially close to an even better
solution in the neighborhood. Thus, \textttA{LowerBound} orders the problems by
their lower bound and solves the ones with a better lower bound first while
\textttA{UpperBound} first examines problems with a lower bound. In either
comparator, the respective other bound acts as a tie breaker. \textttA{BoundSum}
orders problems by the sum of their upper and lower bound.

\textttA{BiggerDistance} first examines problems in which the distance between
lower and upper bound is very large. The conceptual idea is that those problems
still have many unknowns and thus could be interesting to examine. In contrast
to that, \textttA{LowerDistance} first examines problems with a lower distance of
upper and lower bound, as those branches will likely have fewer subbranches.
Following the same idea, \textttA{MostDeleted} first explores the problem that
has the highest deleted weight. \textttA{SmallerGraph} orders the graphs by the
number of vertices and first examines the smallest graph. As over the course of
the algorithm a terminal might become isolated (as all incident edges were
deleted), not all problems have the same amount of terminals. The isolated
terminals are inactive and thus do not need any more flow computations.
\textttA{FewTerminals} first examines problems with a lower number of active
terminals. As there are many solutions with the same amount of terminals, ties
are broken using \textttA{LowerBound}.

\subsection{Parallel Branch and Reduce}

Our algorithm is shared-memory parallel. As we maintain a queue of problems
which are independent from each other, we can run our algorithm embarassingly
parallel. The shared-memory priority queue of problems is implemented as a
separate queue for each thread to pull from. When a thread adds a problem to the
priority queue, it is added to a random queue with minimum queue size. In order
to exploit data and cache locality, we add problems to the queue of the local
thread if it is one of the queues with minimum size. Additionally, we fix each
thread to a single CPU thread in order to actually use those locality benefits.
In the beginning of the algorithm, there is only a single problem, which would
leave all except for one processors idle, potentially for a long time, as we
have to solve $k$ flow problems on the whole (potentially very large) graph.
Thus, if there are idle processors, we distribute the flow problems over
different threads.

\subsection{Combining Kernelization with ILP}

Multiterminal cut problems are generally solved in practice using integer linear
programs~\cite{nabieva2005whole}. The following ILP formulation is adapted from
our implementation for the graph partitioning problem in
Section~\ref{c:gp:ss:ilp} (without balance constraints) and implemented using
Gurobi 8.1.1. It is functionally equal to \cite{nabieva2005whole}.


\begin{ceqn}
  \begin{align}
    \min \sum_{\{u,v\} \in E}& e_{uv} \cdot c(\{u,v\})\\
 \forall \{u,v\} \in E, \forall k&: e_{uv} \geq x_{u,k} - x_{v,k}\\
 \forall \{u,v\} \in E, \forall k&: e_{uv} \geq x_{v,k} - x_{u,k}\\
    \forall v \in V&: \sum_k x_{v,k} = 1 \\
    \forall i,j \in \{1,\dots,|T|\}&: x_{t_i,j} = [i = j ] 
  \end{align}
\end{ceqn}

Here, $x_{u,k}$ is $1$ iff vertex $u$ is in $V_k$ and $0$ otherwise
and $e_{uv}$ is $1$ iff $(u,v)$ is a cut edge. We use this ILP formulation as a
baseline of comparison. Additionally, we also create a new algorithm that
combines the kernelization of our algorithm with integer linear programming.
Using flow computations and kernelization routines, we are able to significantly
reduce the size of most graphs while still preserving the minimum multiterminal
cut. As the complexity of the ILP depends on the size of the graph and the
complexity of the branch-and-reduce algorithm also depends on the value of the
cut, this is fast on graphs with a high cut value in which the kernelization
routines can reduce the graph to a very small size but with a large cut value.
In the following, our algorithm \textttA{Kernel+ILP} first runs kernelization
until no further reduction is possible and then solves the problem using the
above integer linear programming formulation. We also integrate the ILP
formulation directly into the branch-and-reduce solver as an alternative to a
branching operation. We hereby give the ILP solver a time limit and if it is
unable to find an optimal solution within the time limit, we instead perform a
branch operation. In Section~\ref{ss:exp_ilp} we study which subproblems to
solve with an ILP first.

\subsection{Local Search} \label{c:mc:s:local}

Our algorithm for the multiterminal cut problem prunes problems which cannot
result in a solution which is better than the best solution found so far.
Therefore, even though it is a deterministic algorithm that will output the
optimal result when it terminates, performing greedy optimization on
intermediate solutions allows for more aggressive pruning of problems that
cannot be optimal. Additionally, the algorithm has reductions that depend on the
value of $\bestwgt(G)$ and can thus contract more edges if the cut value
$\bestwgt(G)$ is lower.

For a subproblem $H=(V_H, E_H)$ with solution $\rho$, the original graph $G =
(V_G, E_G)$ and a mapping $\pi: V_G \rightarrow V_H$ that maps each vertex in
$V_G$ to the vertex in $V_H$ that encompasses it, we can transfer the solution
$\rho$ to a solution $\gamma$ of $G$ by setting the block affiliation of every
vertex $v \in V_G$ to $\gamma(v) := \pi(\rho(v))$. The cut value of the solution
$c(\gamma)$ is defined as the sum of weights of the edges crossing block
boundaries, \ie the sum of edge weights where the incident vertices are in
different blocks. Let $\xi_i(V_G)$ be the set of all vertices $v \in V_G$ where
$\gamma(v) = i$.

We introduce the following greedy optimization operators that can transform
$\gamma$ into a better multiterminal cut solution $\gamma_{\text{IMP}}$ with
$c(\gamma_{\text{IMP}}) < c(\gamma)$.

\subsubsection{Kernighan-Lin Local Search}

Kernighan and Lin~\cite{lin1973effective} give a heuristic for the
traveling-salesman problem that has been adapted to many hard optimization
problems~\cite{sandersschulz2013,traff2006direct,xu2005survey,dorigo2006ant},
where each vertex $v \in V_G$ is assigned a gain $g(v) = \max_{i \in
\{i,\dots,|T|\}, i \not = \gamma(v)} \sum c(v, \xi_i(V_G)) - c(v,
\xi_{\gamma(v)}(V_G))$, \ie the improvement in cut value to be gained by moving
$v$ to another block, the best connected other block. We perform runs where we
compute the gain of every vertex that has at least another neighbor in a
different block and move all vertices with non-negative gain. Additionally, if a
vertex $v$ has a negative gain, we store its gain and associated best connected
other block. For any neighbor $u$ of $v$ that also has the same best connected
other block, we check whether $g(w) + g(v) + 2 \cdot c(v, u) > 0$, \ie moving
both $u$ and $v$ at the same time is a positive gain move. If it is, we perform
the move.

\subsubsection{Pairwise Maximum Flow}

For any pair of blocks $1 \leq i < j \leq |T|$ where $c(\xi_i(V_G), \xi_j(V_G))
> 0$, \ie there is at least one edge from block $i$ to block $j$, we can create
a maximum $s$-$t$ flow problem between them: we create a graph $F_{ij}$ that
contains all vertices in $\xi_i(V_G)$ and $\xi_j(V_G)$ and all edges that
connect these vertices. 

Let $H$ be a problem graph created by performing reductions and branching on the
original graph $G$. All vertices that are encompassed in the same vertex in
problem graph $H$ as the terminals $i$ and $j$ are hereby contracted into the
corresponding terminal vertex. We perform a maximum $s$-$t$-flow between the two
terminal vertices and re-assign vertex assignments in $\gamma$ according to the
minimum $s$-$t$-cut between them. As we only model blocks $\xi_i(V_G)$ and
$\xi_j(V_G)$, this does not affect other blocks in $\gamma$. In the first run we
perform a pairwise maximum flow between every pair of blocks $i$ and $j$ where
$c(\xi_i(V_G), \xi_j(V_G)) > 0$ in random order. We continue on all pairs of
blocks where $c(\xi_i(V_G), \xi_j(V_G))$ was changed since the end of the
previous maximum flow iteration between them.

We first perform Kernigham-Lin local search until there is no more improvement,
then pairwise maximum flow until there is no more improvement, followed by
another run of Kernigham-Lin local search. As pairwise maximum flow has
significantly higher running time, we spawn a new thread to perform the
optimization if there is a CPU core that is not currently utilized.

\subsection{Fast Inexact Algorithm} \label{c:mc:s:inexact}  

Our algorithm for the multiterminal cut problem in an exact algorithm, \ie when
it terminates the output is guaranteed to be optimal. As the multiterminal cut
problem is NP-complete~\cite{dahlhaus1994complexity}, it is not feasible to
expect termination in difficult instances of the problem. In fact, in difficult
instances the algorithm often does not terminate with an optimal result but runs
out of time or memory and returns the best result found up to that point. Thus,
it makes sense to relax the optimality constraint and aim to find a high-quality
(but not guaranteed to be optimal) solution faster.

\begin{figure}[t] 
  \centering
  \includegraphics[width=.48\textwidth]{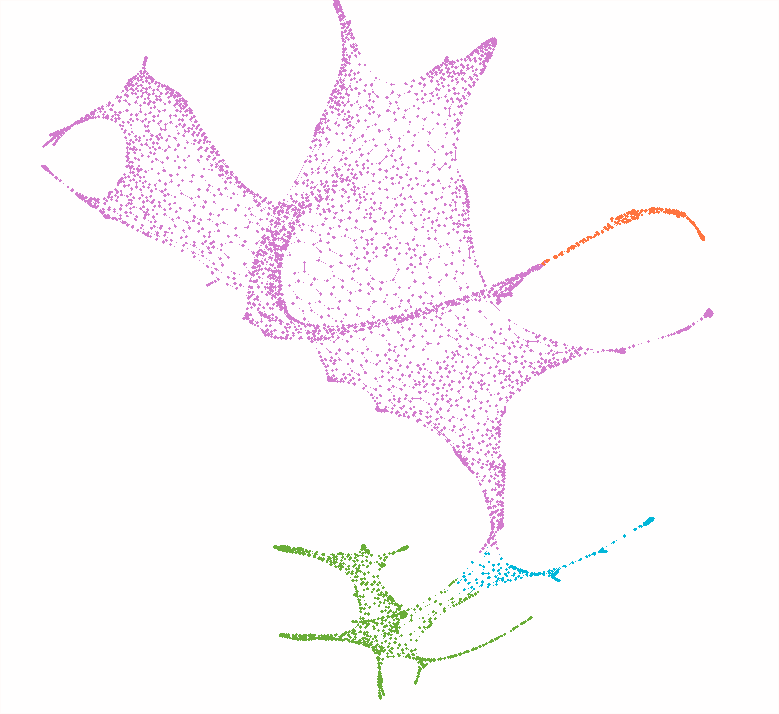}
  \includegraphics[width=.48\textwidth]{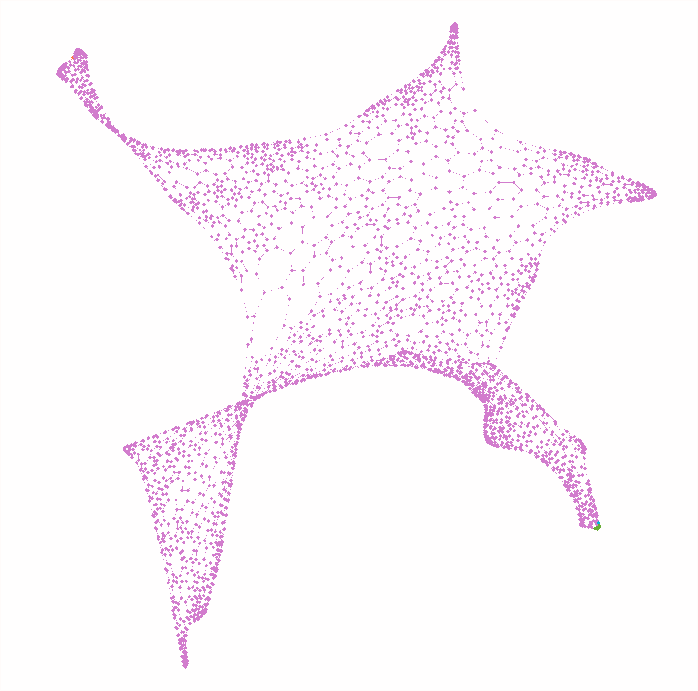}
  \caption{\label{c:mc:fig:uk} Minimum multiterminal cut for graph
  \textttA{uk}~\cite{soper2004combined} and four terminals - on original graph
  (left) and remaining graph at time of first branch operation (right),
  visualized using Gephi-0.9.2~\cite{bastian2009gephi}.}
\end{figure}

A key observation is that in many problems, most, if not all vertices
that are not already contracted into a terminal at the time of the first branch
will be assigned to a few terminals whose weighted degree at that point is
highest. See Figure~\ref{c:mc:fig:uk} for an example with $4$ terminals (selected
with high distance to each other) on graph \textttA{uk} from the Walshaw Graph
Partitioning Archive~\cite{soper2004combined}. As we can see, at the time of the
first branch (right figure), most vertices that are not assigned to the pink
terminal in the optimal solution are already contracted into their respective
terminals. The remainder is mostly assigned to a single terminal. As we can
observe similar behavior in many problems, we propose the following heuristic
speedup operations:

Let $\delta \in (0, 1)$ be a contraction factor and $T_H$ be the set of all
terminals that are not yet isolated in graph $H$. In each branching operation on
an intermediate graph $H$, we delete all edges around the $\lceil \delta \cdot
|T_H| \rceil$ terminals with lowest degree. Additionally, we contract all
vertices adjacent to the highest degree terminal that are not adjacent to any
other terminal into the highest degree terminal. This still allows us to find
all solutions in which no more vertices were added to the lowest degree
terminals and the adjacent vertices are in the same block as the highest degree
terminals.

Additionally, in a branch operation on vertex $v$, we set a maximum branching
factor $\beta$ and only create problems where $v$ is contracted into the $\beta$
adjacent terminals it has the heaviest edges to and one problem in which it is
not contracted into either adjacent terminal. This is based on the fact that all
other edges connecting $v$ to other terminals will be part of the multiterminal
cut and the greedy assumption that it is likely that the optimal solution does
not contain at least one of these heavy edges. By default, we set $\delta = 0.1$
and $\beta = 5$.

\section{Experiments and Results} \label{c:mc:s:experiments}

We now perform an experimental evaluation of the multiterminal cut algorithms
described in this chapter. This is done in the following order: first analyze
the impact of algorithmic components on our branch-and-reduce algorithm in a
non-parallel setting, i.e.~we compare different variants for branching edge
selection, priority queue comparator and the effects of the kernelization
operators. We then report the speedup over ILP formulation on a variety of
graphs. Lastly, we perform experiments on protein-protein interaction networks
and social, map and web graphs to compare the performance of different variants of
our algorithm.

This section describes experiments performed
for~\cite{henzinger2020sharedmemory}~and~\cite{henzinger2020faster},
where~\cite{henzinger2020sharedmemory} introduces our first algorithm for the
multiterminal cut problem and~\cite{henzinger2020faster} enhances this algorithm
by adding more reduction rules, improving the branching rule and including ILP
and local search into the algorithm. The previous sections of this chapter give
the full algorithm as described in both of our works. In the following we will
use the terminology of~\cite{henzinger2020faster}, where the preliminary
algorithm of~\cite{henzinger2020sharedmemory} is denoted as \vcbase{}, the full
algorithm is denoted as \exact{} and the inexact algorithm described in
Section~\ref{c:mc:s:inexact} is denoted as \inexact{}.

\vcbase{} is a shared-memory parallel branch-and-reduce algorithm that uses the
reduction rules \textttA{Low}, \textttA{High}, \textttA{Triangle} and
\textttA{HighConnectivity} to reduce the size of a graph instance and branches on
an edge incident to a terminal when this is not possible anymore. 

The \exact{} and \inexact{} algorithms additionally use the reduction rules
\textttA{ArticulationPoints}, \textttA{EqualNeighborhoods} and
\textttA{NonTerminalFlows}, create multiple subproblems when branching as
described in Section~\ref{c:mc:ss:vtxbranch} and integrate local search and ILP
into the algorithm.

\subsection{Experimental Setup and Methodology}

We implemented the algorithms using \CC-17 and compiled all codes using
g++-7.4.0 with full optimization (\textttA{-O3}). Our experiments are conducted
on three machine types: Machine A is a machine with two Intel Xeon Gold 6130
with 2.1GHz with 16 CPU cores each and $256$ GB RAM in total. Machine B is a
machine with two Intel Xeon E5-2643v4 with $3.4$ GHz with $6$ CPU cores each and
$1.5$ TB RAM in total. Machine C is a machine in the Vienna Scientific Cluster
with two Intel Xeon E5-2650v2 with $2.6$GHz with $8$ CPU cores each and $64$ GB
RAM in total.

We perform five repetitions per instance and report average running~time. In
this section we first describe experimental methodology. Afterwards, we evaluate
different algorithmic choices in our algorithm and then we compare our algorithm
to the state of the art. When we report a mean result we give the geometric mean
as problems differ strongly in result and time.

\begin{table}[t] \centering
  \small
   \caption{\label{t:multicutgraphs}Large Real-world Benchmark Instances. }
   \resizebox*{.49\textwidth}{!}{
   \begin{tabular}{| l | r | r |}
     \hline
     Graph & $n$& $m$\\
     \hline \hline
     \multicolumn{3}{|c|}{Social, Web and Map
     Graphs (1A)}\\
     \hline \hline
     bcsstk30 \cite{soper2004combined} & \numprint{28924} & $1.01M$\\
     ca-2010 \cite{bader2013graph} & $710K$ & $1.74M$\\
     ca-CondMat \cite{davis2011university} & \numprint{23133} &
     \numprint{93439}\\
     cit-HepPh \cite{davis2011university} & \numprint{34546} & $422K$\\
     eu-2005 \cite{BoVWFI} & $862K$ & $16.1M$\\
     higgs-twitter \cite{davis2011university} & $457K$ & $14.9M$\\
     in-2004 \cite{BoVWFI} & $1.38M$ & $13.6M$\\
     ny-2010 \cite{bader2013graph} & $350K$ & $855K$ \\
     uk-2002 \cite{BoVWFI} & $18.5M$ & $261M$ \\
     vibrobox \cite{soper2004combined} & \numprint{12328} & $165K$\\     
     \hline \hline
     
     \multicolumn{3}{|c|}{Social, Web and Map
     Graphs (1B)}\\
     \hline\hline
     598a \cite{soper2004combined} & $111K$ & $742K$ \\
     astro-ph \cite{davis2011university} & \numprint{16706} & $121K$\\
     caidaRouterLevel \cite{davis2011university} & $192K$ & $609K$\\
     citationCiteseer \cite{davis2011university} & $268K$ & $1.16K$\\
     cnr-2000 \cite{davis2011university} & $326K$ & $2.74M$\\
     coAuthorsCiteseer \cite{davis2011university} & $227K$ & $814K$ \\
     cond-mat-2005 \cite{davis2011university} & \numprint{40421} & $176K$\\
     coPapersCiteseer \cite{davis2011university} & $434K$ & $16.0M$\\
     cs4 \cite{soper2004combined} & \numprint{22499} & \numprint{43858} \\
     fe\_body \cite{soper2004combined} & \numprint{45087} & $164K$ \\
     NACA0015 \cite{davis2011university} & $1.04M$ & $3.11M$\\
     venturiLevel3 \cite{davis2011university} & $4.03M$ & $8.05M$ \\
     &&\\
     \hline
     
   \end{tabular}
   }
   \resizebox*{.49\textwidth}{!}{
   \begin{tabular}{| l | r | r |}
    \hline
    Graph & $n$& $m$\\
    \hline\hline
    \multicolumn{3}{|c|}{Protein-protein
    Interaction~\cite{szklarczyk2010string,szklarczyk2018string} (2)}\\
    \hline \hline
    Acidi. ferrivorans & \numprint{3093} & \numprint{5394}\\
    Agaricus bisporus & \numprint{11271} & \numprint{14636} \\
    Candida maltosa & \numprint{5948} & \numprint{19462} \\
    Escherichia coli & \numprint{4127} & \numprint{13488} \\
    Erinaceus europaeus & \numprint{19578} & \numprint{68066} \\
    Homo sapiens & \numprint{19566}& $324K$ \\
    Mesoplasma florum & \numprint{683} & \numprint{2365} \\
    S. cerevisiae & \numprint{6691} & \numprint{69809} \\
    Toxoplasma gondii & \numprint{7988} & \numprint{11779} \\
    Vitis vinifera & \numprint{29697} & \numprint{70206} \\
    \hline \hline
     \multicolumn{3}{|c|}{Map Graphs (3)} \\
     \hline\hline
     ak2010 \cite{bader2013graph} & \numprint{45292} & $109K$\\
     ct2010 \cite{bader2013graph} & \numprint{67578} & $168K$\\
     de2010 \cite{bader2013graph} & \numprint{24115} & \numprint{58028}\\
     hi2010 \cite{bader2013graph} & \numprint{25016} & \numprint{62063}\\
     luxembourg.osm \cite{davis2011university} & $115K$ & $120K$ \\
     me2010 \cite{bader2013graph} & \numprint{69518} & $168K$\\
     netherlands.osm \cite{davis2011university} & $2.22M$ & $2.44M$ \\
     nh2010 \cite{bader2013graph} & \numprint{48837} & $117K$\\
     nv2010 \cite{bader2013graph} & \numprint{84538} & $208K$\\
     ri2010 \cite{bader2013graph} & \numprint{25181} & \numprint{62875}\\
     sd2010 \cite{bader2013graph} & \numprint{88360} & $205K$\\
     vt2010 \cite{bader2013graph} & \numprint{32580} & \numprint{77799}\\
     \hline
   \end{tabular}
   }
 \end{table}

\subsubsection{Instances}

We use multiple sets of instances to avoid overtuning the branch-and-reduce
algorithm. To analyze the impact of algorithmic components in
Sections~\ref{c:mc:ss:branch_impl} and \ref{c:mc:ss:queue_impl}, we generate random
hyperbolic graphs using the KaGen graph generator~\cite{funke2018communication}.
These graphs have $n = 2^{14} - 2^{18}$ and an average degree of $8$, $16$ and
$32$. For each graph size, we use three generated graphs and compute the
multiterminal cut, each with $k \in \{3,4,5,6,7\}$. We use random hyperbolic
graphs as they have power-law degree distribution and resemble a wide variety of
real-world networks. Additionally, we also use a family of weighted graphs from
the $10^{th}$ DIMACS implementation challenge~\cite{bader2013graph}. These
graphs depict US states, where a vertex depicts a census block and a weighted
edge denotes the length of the border between two blocks. We use the $10$ states
with the fewest census blocks (AK, CT, DE, HI, ME, NH, NV, RI, SD, VT). For each
state, we set the number of terminals $k \in \{3,4,5,6,7\}$. A multiterminal cut
on these graphs depicts the shortest border that respects census blocks and
separates a set of pre-defined blocks (or groups of blocks). Here, we use one
processor and set a timeout of $3$ minutes and a memory limit of $20$GiB.

As the instances generally do not have any terminals, we find random vertices
that have a high distance from each other in the following way: we start with a
random vertex $r$, run a breadth-first search starting at $r$ and select the
vertex $v$ encountered last as first terminal. While the number of terminals is
smaller than desired, we add another terminal by running a breadth-first search
from all current terminals and adding the vertex encountered last to the list of
terminals. We then run a bounded-size breadth-first search around each terminal
to create instances where the minimum multiterminal cut does not have $k - 1$
blocks consisting of just a single vertex each. The parameter $p \in (0,1)$ hereby bounds
the size of the terminal, \ie only up to $\frac{p \cdot n}{k}$ vertices are
added to each terminal. This results in problems in which well separated
clusters of vertices are partitioned and the task consists of finding a
partitioning of the remaining vertices in the boundary regions between already
partitioned blocks. This relates to clustering tasks, in which well separated
clusters are labelled and the task consists of labelling the remaining vertices
inbetween.

When comparing \textttA{Kernel+ILP} with \vcbase{} in
Sections~\ref{c:mc:ss:ppi}~and~\ref{c:mc:ss:social} on large instances, we use all $32$
cores of machine A (for the ILP as well as the branch and reduce framework).
Here, we set a time limit of $1$ hour and a memory limit of $250$GiB. Note that
is a soft limit, in which the algorithm finishes the current operation and exits
afterwards if the time or memory limit is reached. As many of these are very
large instances, most instances in this section are not solved to optimality.

In Section~\ref{c:mc:ss:ppi} we perform experiments on protein-protein
interaction networks (graph family 2) generated from the STRING protein
interaction database~\cite{szklarczyk2010string,szklarczyk2018string} by using
all edges they predict with a high certainty. We use the protein description to
assign functions (block terminal affiliations) to proteins (vertices). We use
the first occurence of a set of pre-defined function classes. For each graph, we
examine problems with the $4,5,6,7,8$ most often occuring functions and with all
(up to $15$, if all occuring in an organism) classes.

\begin{figure}[!t]
  \centering
  \begin{subfigure}{\textwidth}
    \includegraphics[width=\linewidth]{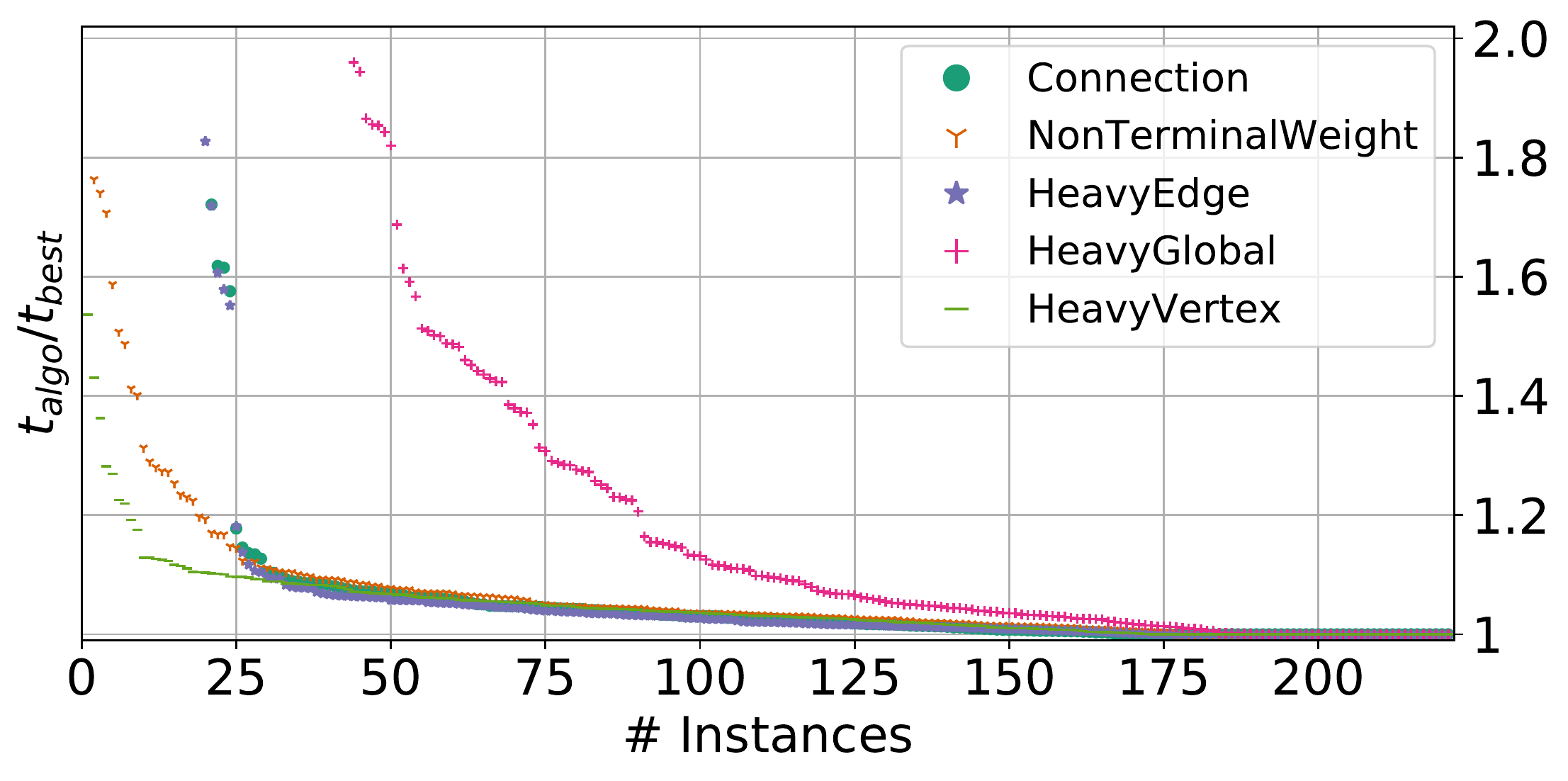}
    \caption{RHG graphs with partition centers as terminals.}
    \label{c:mc:fig:branch2}
  \end{subfigure}
  \begin{subfigure}{\textwidth}
    \includegraphics[width=\linewidth]{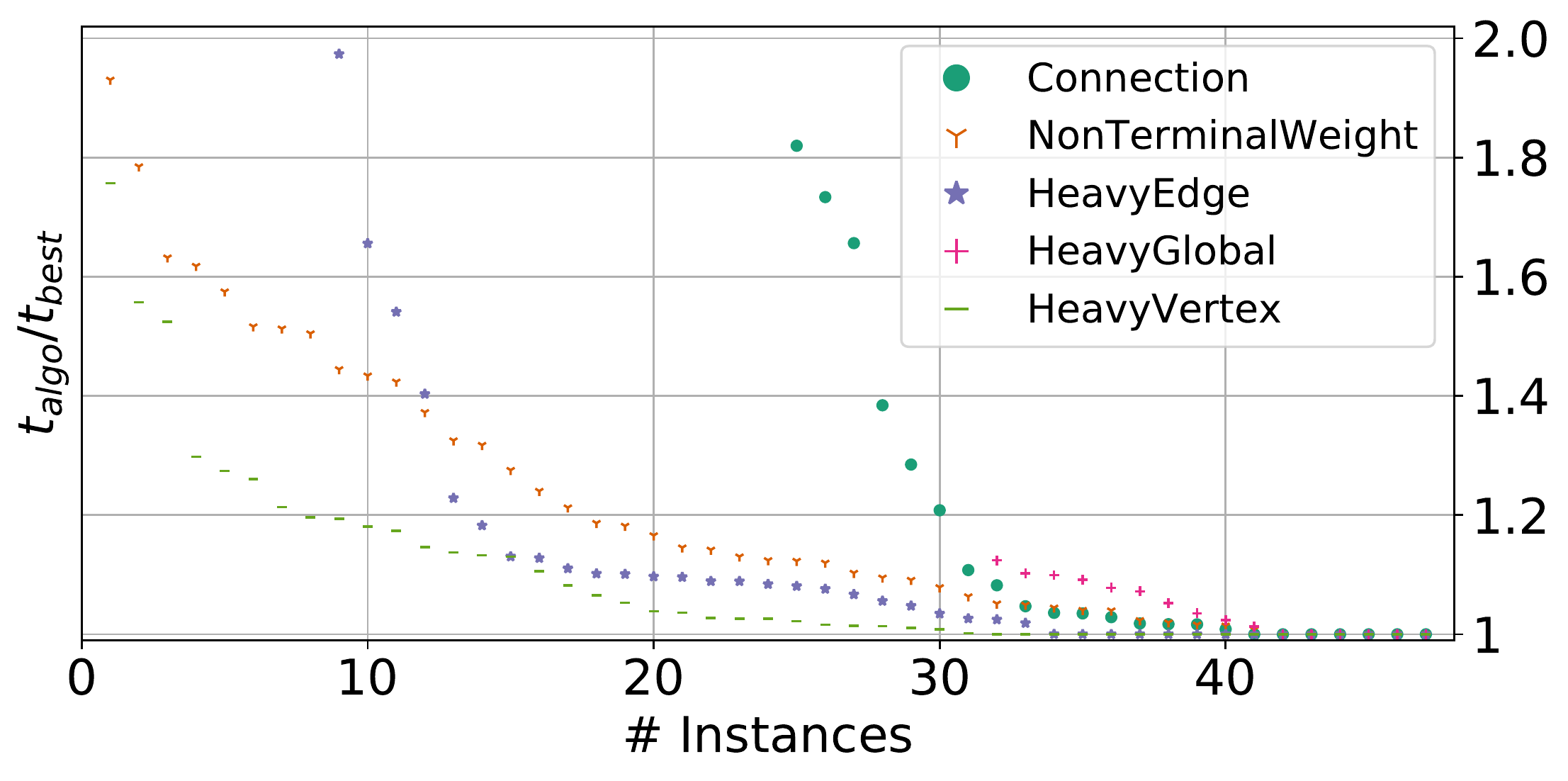}
    \caption{Map graphs with partition centers as terminals.}
    \label{c:mc:fig:branch4}
  \end{subfigure}
  \caption{Performance plots for branching edge selection variants.}
  \label{c:mc:fig:branch}
  \end{figure}

\subsection{Branching Edge Selection}\label{c:mc:ss:branch_impl}

Figure~\ref{c:mc:fig:branch} shows the results for the branching edge selection
rules on machine A. In Subfigure~\ref{c:mc:fig:branch2}, we show performance
plots for RHG graphs and in Subfigure~\ref{c:mc:fig:branch4} we show performance
plots for map graphs. To find terminals, we partition the RHG graphs into $k$
parts and perform a breadth-first search starting in the block boundary. We
define the vertex encountered last as the block center and use it as a terminal.
In this experiment we use the \textttA{BoundSum} comparator and enable
\textttA{Low}, \textttA{High}, \textttA{Triangle} and \textttA{HighConnectivity}
kernelization rules.

As the minimum multiterminal cut of those problems usually turns out to be the
trivial multiterminal cut of $k-1$ blocks of size $1$ and one block that
comprises of the rest of the graph, we instead pick the last $10$ vertices
encountered by the breadth-first search per block and contract them into a
terminal. The minimum multiterminal cut of the resulting graph is usually not
equal to the trivial multiterminal cut.

 In general, we aim to increase the lower bound by a large margin to reduce the
 number of subproblems that need to be checked. When we branch on a heavy edge,
 this increases the lower bound for $G-e$ by a large amount. For $G/e$, the
 lower bound is increased by half the amount of flow that is now added to the
 network. For a vertex that has a large number of edges to non-terminal
 vertices, contracting it into a terminal is expected to increase the flow by a
 large margin. The variant \textttA{HeavyVertex} chooses the edge $e$, for which
 the sum of edge weight and outgoing weights are maximized. It thus outperforms
 all other variants in both experiments. The only variant that is not guaranteed
 to be fixed-parameter tractable is \textttA{HeavyGlobal}, as this variant can
 also contract edges that are not incident to a terminal (and thus do not
 necessarily increase the lower bound). However, most edge contractions happen
 near terminals, so most heavy edges occur near terminals and thus
 \textttA{HeavyGlobal} often performs similar to \textttA{HeavyEdge}. 

 In all following experiments we use \textttA{HeavyVertex}, as it outperforms all
 other variants consistently.

 \begin{figure}[t]
  \centering
  \begin{subfigure}{\textwidth}
    \centering
    \includegraphics[width=\linewidth]{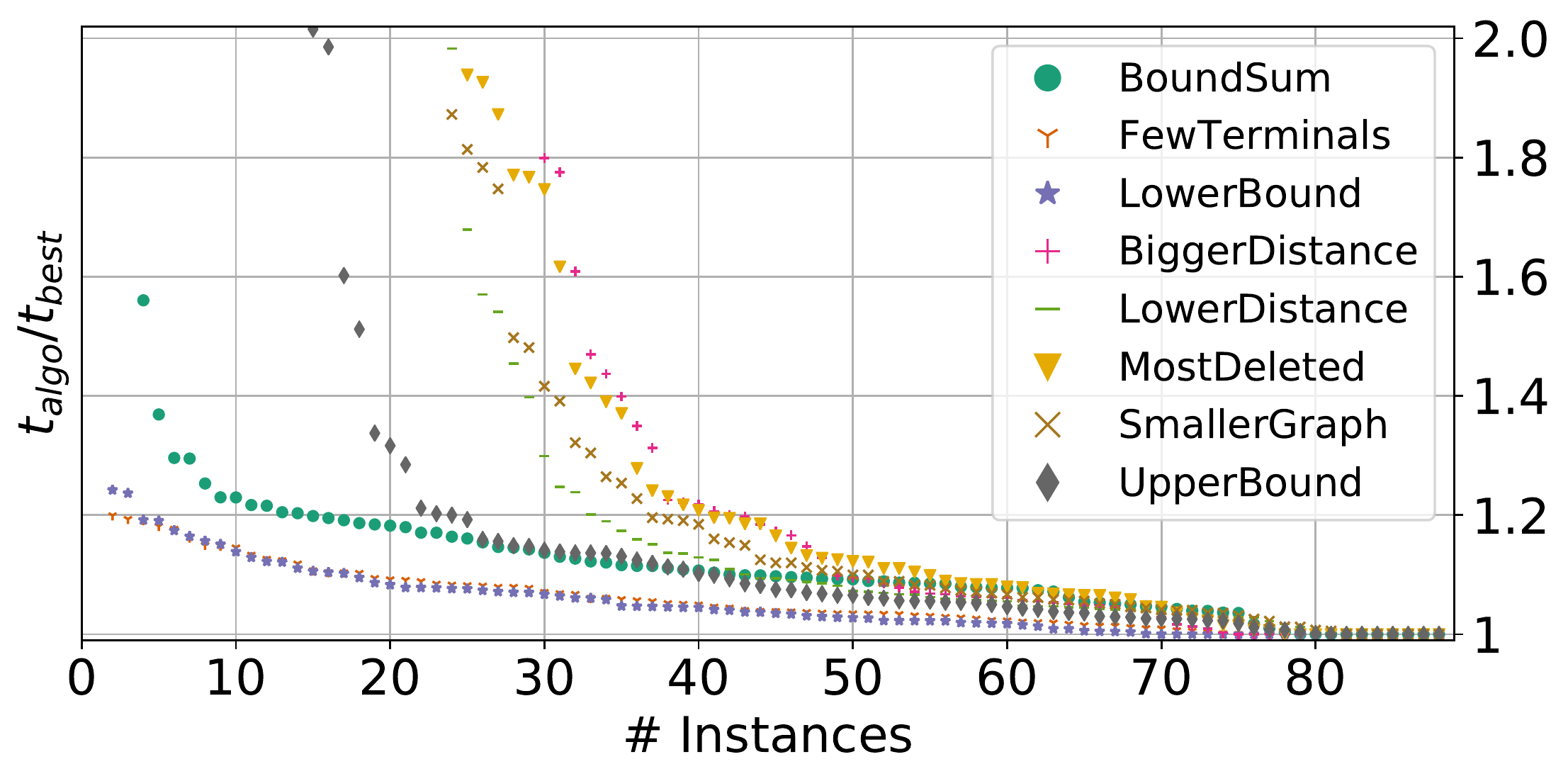}
    \caption{Graphs with $20\%$ of vertices in terminal.}
    \label{c:mc:fig:pq1}
  \end{subfigure}
  \begin{subfigure}{\textwidth}
    \centering
    \includegraphics[width=\linewidth]{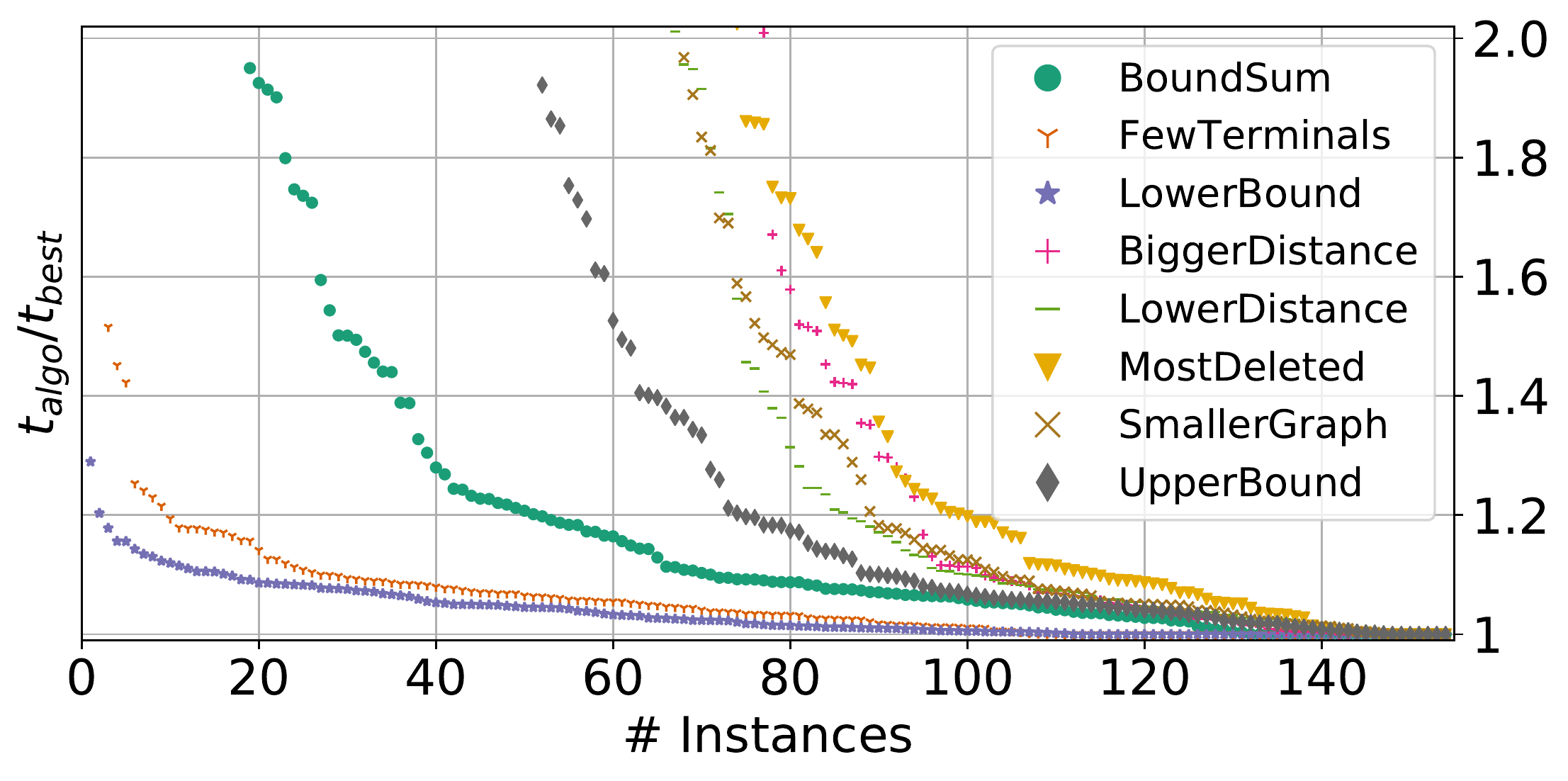}
    \caption{Graphs with $80\%$ of vertices in terminal.}
    \label{c:mc:fig:pq2}
  \end{subfigure}

  \caption{Performance plots for priority queue comparator variants.}
  \label{c:mc:fig:pq}
  \end{figure}

\subsection{Priority Queue Comparator}\label{c:mc:ss:queue_impl}

We now explore the effect of the comparator used in the priority queue $\queue$.
This experiment was performed in machine A with algorithm \vcbase{}. The choice
of comparator decides which problems are highest priority and will be explored
first. We want to first explore the problems and branches which will result in
an improved solution, as this allows us to prune more branches. However, it is
not obvious which criterion correctly identifies problems that might yield
improved solutions, either directly on indirectly. Thus, we perform experiments
on the same set of random hyperbolic and map graphs. 

On the random hyperbolic graphs examined in the previous experiment, the minimum
multiterminal cut is often equal to the sum of all minimum-s-T-cuts excluding
the heaviest. This is the cut that is found in the first iteration. If this is
also the optimal cut, we definitely have to check all subproblems whose lower
bound is lower than this cut. As the priority queue comparator only changes the
order in which we examine those problems, the experimental results using the
same problems as the previous section turned out very inconclusive. However, if
we contract a sizable fraction of each block into its terminal, the minimum
multiterminal cut is usually not equal to the union of s-T-cuts.
Figure~\ref{c:mc:fig:pq1} shows results for $20\%$ of vertices in the terminal
on RHG graphs and Figure~\ref{c:mc:fig:pq2} show results for $80\%$ of vertices
in the terminal. 

\textttA{LowerBound} and \textttA{FewTerminals} are very competitive on most
graphs. This indicates that problems with a low lower bound are very likely to
yield improved results. The next fastest variant is \textttA{BoundSum}, which is
almost competitive with $20\%$ of vertices in the terminal but significantly
slower with $80\%$ of vertices in the terminal. However, \textttA{BoundSum} uses
far less memory, as the lower bound of the newly created problems depends on the
lower bound of the current problem. \textttA{BoundSum} examines many problems for
which the lower bound is close to the currently best known solution. Thus, many
newly created subproblems are immediately discarded when their lower bound is
not lower than the currently best known solution. None of the other variants
have noteworthy performance.

\subsection{Kernelization}
\label{c:mc:ss:kernelizationexp}

\begin{figure}[t!]
  \centering
  \includegraphics[width=\textwidth]{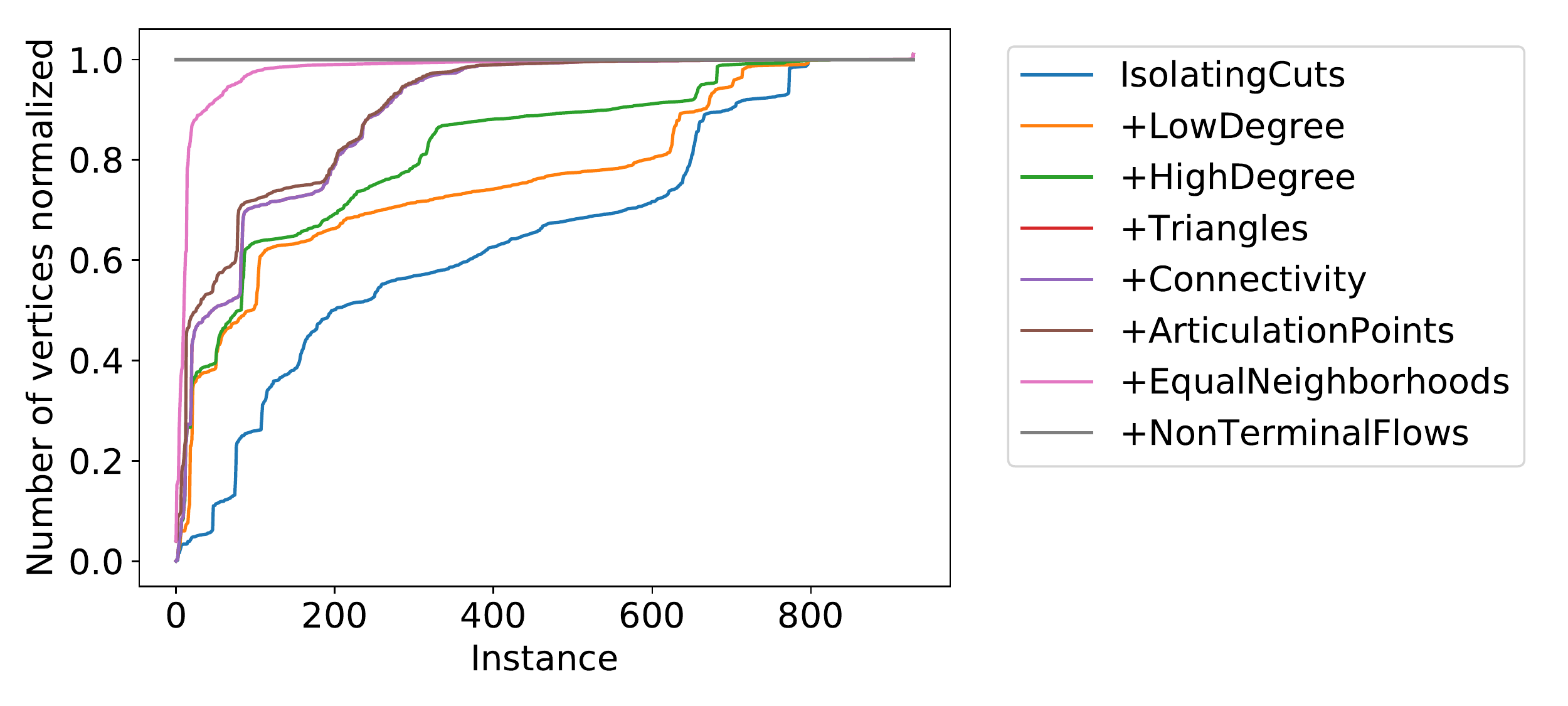}
  \caption{\label{fig:kernel} Number of vertices in graph after reductions are
  finished, normalization by (\# vertices remaining with all reductions / \#
  vertices remaining in variant) and sorted by normalized value.}
\end{figure}

We analyze the impact of the different reductions on the size of the graph at
the time of first branch. For this, we run experiments on all social, web and
map graphs (graph families (1A), (1B) and (3) in Table~\ref{t:multicutgraphs}) with
$k=\{4,8,10\}$ terminals and $10\%$ of all vertices added to the terminals on
machine C. For these instances, we run subsets of all contractions exhaustively
and check how many vertices remain in the graph. Figure~\ref{fig:kernel} gives
results with $8$ different variants, starting with a version that only runs
isolating cuts and adding one reduction family per version. For this, we sorted
the reductions by their impact on the total running time. 

For each instance and variant we normalize by the number of vertices remaining
with all reductions divided by the number of vertices remaining in a given
variant. Thus, a value close to $1$ indicates that this variant already performs
most reductions that the full algorithm does and a value close to $0$ indicates
that the resulting graph is much larger than it is when using the full
algorithm. The effectiveness of a reduction can therefore be read from the area
between a line and the line below it.

We can see that running the local reductions in \vcbase{} are very effective on
almost all instances. In average, \textttA{IsolatingCuts} reduce the number of
vertices by $33\%$, \textttA{LowDegree} reduces the number of vertices in the
remaining graph by $17\%$, \textttA{HighDegree} by $7\%$ and \textttA{Triangles}
by $8\%$. In contrast, \textttA{Connectivity} only has a negligible effect, which
can be explained by the fact that it contracts edges whose connectivity is
larger than a value related to the difference of upper bound to total weight of
deleted edges. As there are almost no deleted edges in the beginning, this value
is very high and almost no edge has high enough connectivity.

Out of the new reductions that are not part of \vcbase{}, all find a significant
amount of contractible edges on the graphs already contracted by the reductions
included therein. In average, \textttA{ArticulationPoints} reduces the number of
vertices on the already contracted graphs by $1.9\%$,
\textttA{EqualNeighborhoods} reduces the number of vertices by $7.8\%$ and
\textttA{NonTerminalFlows} reduces the number of vertices by $2.0\%$. However,
there are some instances in which these reductions reduce the
number of vertices remaining by more than $99\%$.

\begin{figure}[!t]
  \centering
  \includegraphics[width=\linewidth]{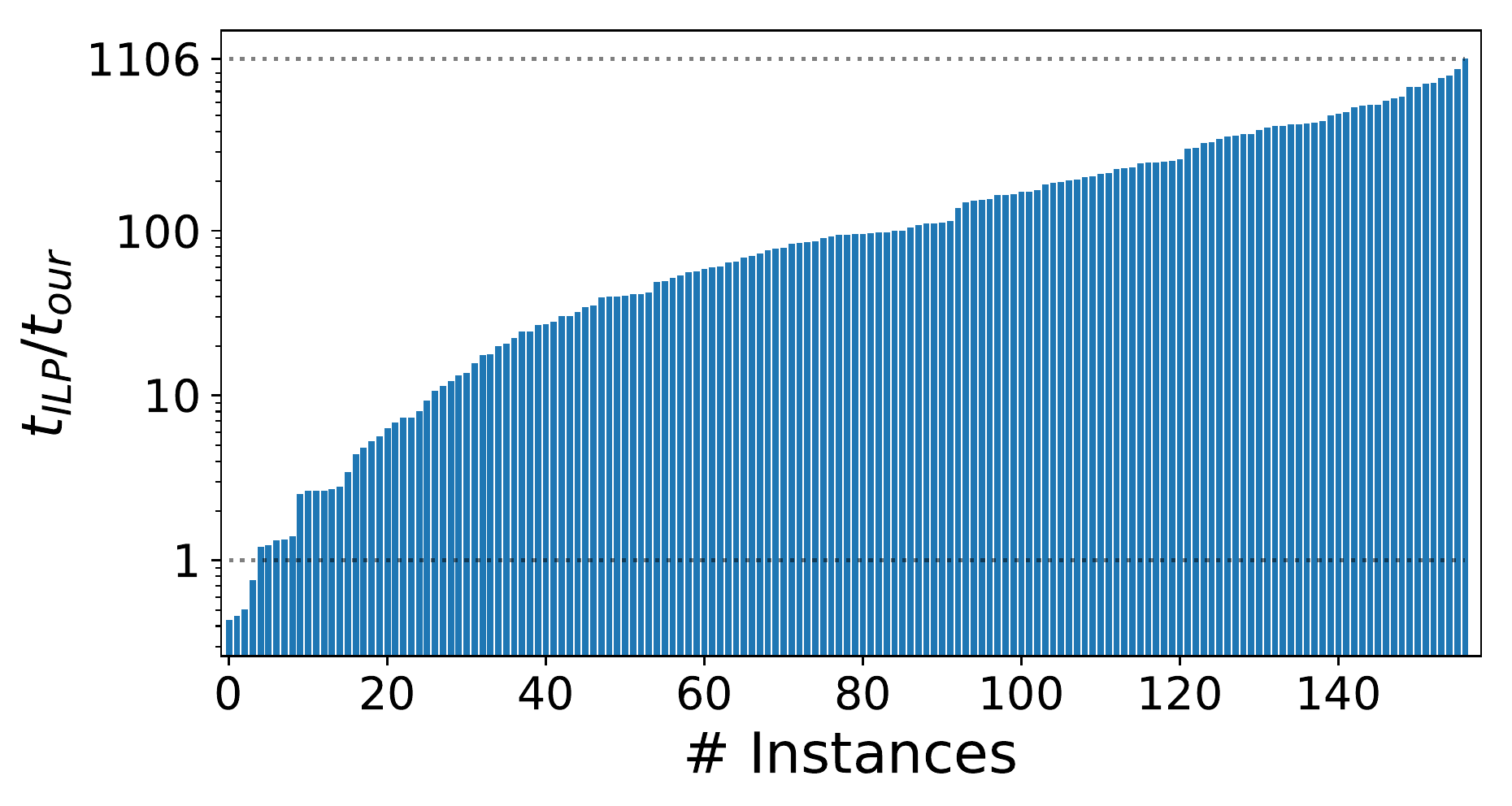}
  \caption{Speedup of opt. branch-and-reduce \vcbase{} to ILP.}
  \label{c:mc:fig:speeduptoilp}
\end{figure}

\begin{figure}[!t]
  \centering
  \includegraphics[width=\linewidth]{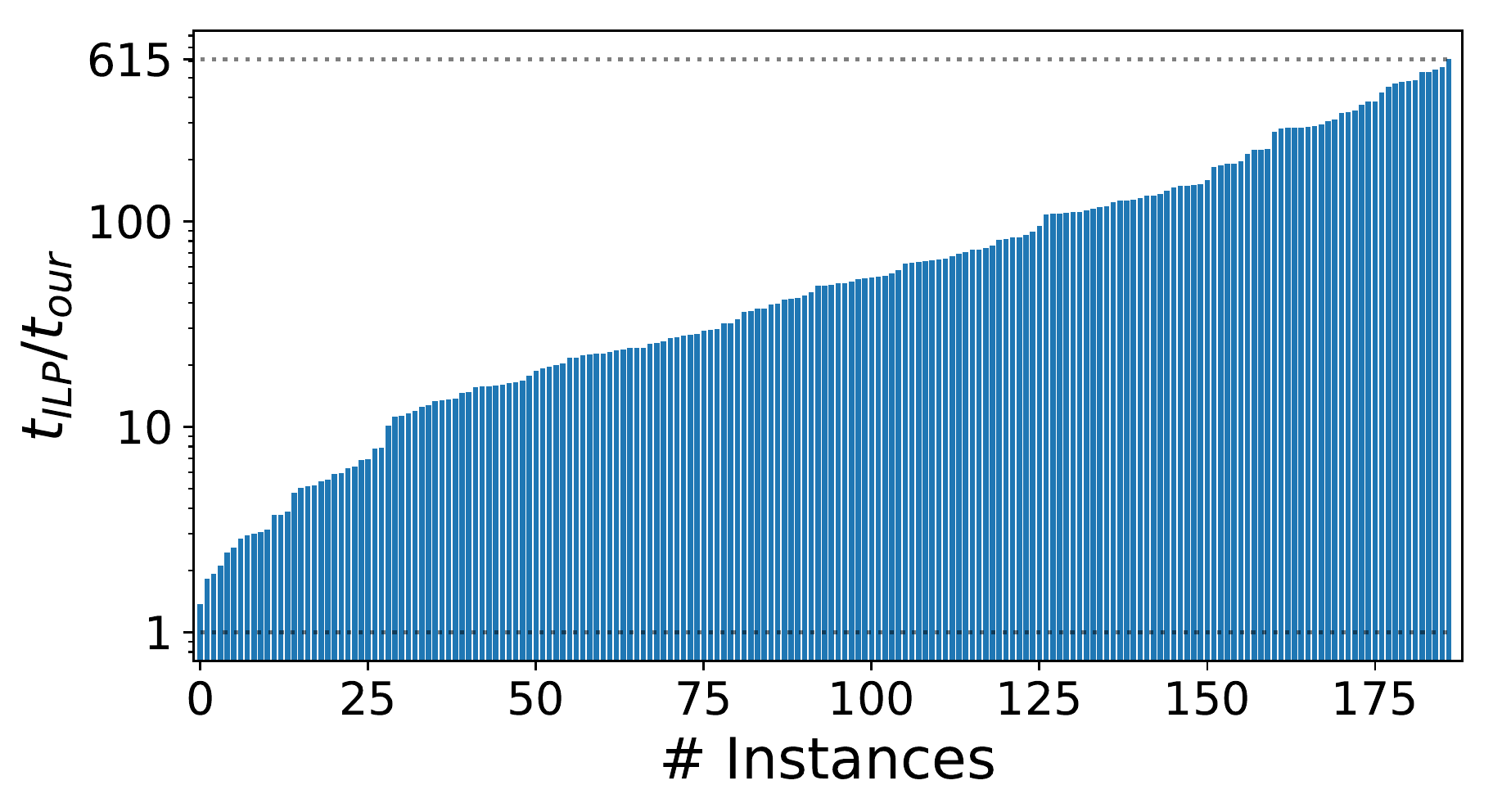}
  \caption{Speedup of \textttA{Kernel+ILP} to ILP.}
  \label{c:mc:fig:nbtoilp}
\end{figure}

\begin{figure}[!t]
  \centering
  \includegraphics[width=\linewidth]{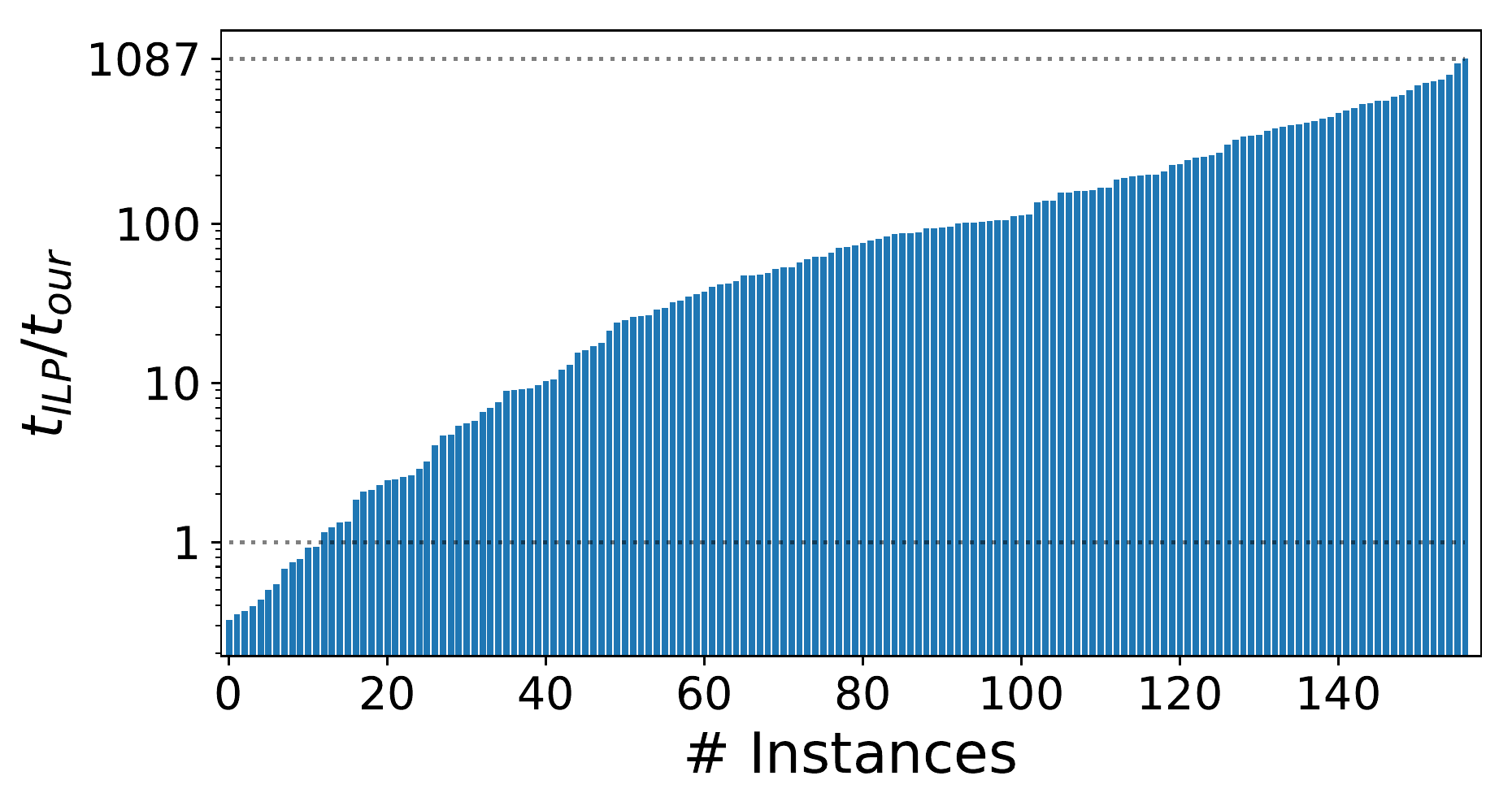}
  \caption{Speedup of avg. branch-and-reduce \vcbase{} to ILP.}
  \label{c:mc:fig:alltoilp}
\end{figure}

\subsection{Comparison between VieCut-MTC and ILP}
\label{c:mc:ss:comparisonilp}

Figure~\ref{c:mc:fig:speeduptoilp} shows the speedup of the engineered \vcbase{}
algorithm, using \textttA{HeavyVertex} edge selection, \textttA{LowerBound}
priority queue comparator and all kernelization rules of \vcbase{} enabled, to
the ILP on all graphs from Sections~\ref{c:mc:ss:branch_impl} and
\ref{c:mc:ss:queue_impl} in which the ILP managed to find the minimum
multiterminal cut within $3$ minutes. The branch-and-reduce algorithm
outperforms the ILP on almost all graphs, often by multiple orders of magnitude.
The ILP only solves $24\%$ of all problems, \vcbase{} solves $61\%$; on the
problems solved by both, the branch-and-reduce algorithm has a mean speedup
factor of $67$, a median speedup factor of $95$ and a maximum speedup factor of
$\numprint{1106}$. The mean speedup factor of the average of all algorithm
configurations compared to ILP is $43$ with a median speedup factor of $71$. The
speedup can be seen in Figure~\ref{c:mc:fig:alltoilp}. Compared to the original
ILP, \textttA{Kernel+ILP} is faster on all instances, has a mean speedup factor
of $44$ and a median speedup factor of $49$, as shown in
Figure~\ref{c:mc:fig:nbtoilp}.

This allows us to solve instances with more than a million vertices, while the
ILP was unable to solve any instance with more than $\numprint{100000}$
vertices. As the basic ILP is unable to solve any large instances, we do not use
it in the following experiments on large graphs.

\subsection{VieCut-MTC on Protein-Protein Interaction Networks}
\label{c:mc:ss:ppi}

Multiterminal cuts can be used for protein function prediction by creating a
terminal for each possible protein function and adding all proteins which have
this function to this
terminal~\cite{karaoz2004whole,nabieva2005whole,vazquez2003global}.
Table~\ref{t:ppiov} shows the results for these graphs. We can see that
\textttA{Kernel+ILP} outperforms \vcbase{} by a large margin on most
graphs. This is the case because the kernelization is able to reduce the size of
the graphs severely. These small problems with high cut values are better suited
for \textttA{Kernel+ILP} than the branch-and-bound variants whose running time is
more correlated with the value of the minimum multiterminal cut. The mean times
are very low as some problems can be solved very quickly and thus drag the mean
of all algorithms down. Due to these results
in~\cite{henzinger2020sharedmemory}, \exact{} integrates the ILP solving into
the branch-and-reduce algorithm and solves some subproblems using an ILP solver.
In the following section we examine which subproblems should be solved using
branching and which should be solved using ILP.

 \begin{table}[t] \centering
  \small
  \caption{Result overview on protein-protein-interaction networks.\label{t:ppiov}}
   \begin{tabular}{| l | r | r | r | r |}
     \hline
     Algorithm & \textttA{K+ILP} & \textttA{BSum} & \textttA{FTerm} &
     \textttA{LBound}\\
     \hline
     best result & \textbf{\numprint{57}} & \numprint{34} & \numprint{26} &
     \numprint{23}\\ 
     terminated & \textbf{\numprint{57}} & \numprint{25} & \numprint{23} &
     \numprint{21}\\
     mean result & \textbf{\numprint{4183}} & \numprint{4210} & \numprint{4218}
     & \numprint{4222}\\
     mean time & \textbf{\numprint{0.21}s} & \numprint{0.33}s & \numprint{0.36}s
     & \numprint{0.40}s\\
     \hline    
   \end{tabular}
 \end{table}

\subsection{Integer Linear Programming}
\label{ss:exp_ilp}

In order to get all a wide variety of ILP problems, we run the \inexact{}
algorithm on all instances in graph families (1A), (1B) and (3) of
Table~\ref{t:multicutgraphs} with $k=10$ terminals and $10\%$ of vertices added to the
terminals on machine C. As \inexact{} removes low-degree terminals and contracts
edges, we have subproblems with very different sizes and numbers of terminals.
In this experiment, whenever the algorithm chooses between branching and ILP on
graph $G$, we select a random integer $r \in (1, $\numprint{200000}$)$. We use
this random integer, as we want to have problems of all different sizes and
using a hard limit would result in many instances just barely below that size
limit. We select $200000$ edges as the maximum, as we did not encounter any
larger instances in which the ILP was solved to optimality in the allotted time.
If $|E| < r$, the problem is solved with ILP, otherwise the algorithm branches
on a vertex incident to a terminal. The timeout is set to $60$ seconds.

\begin{figure}[t!]
  \centering
  \includegraphics[width=\textwidth]{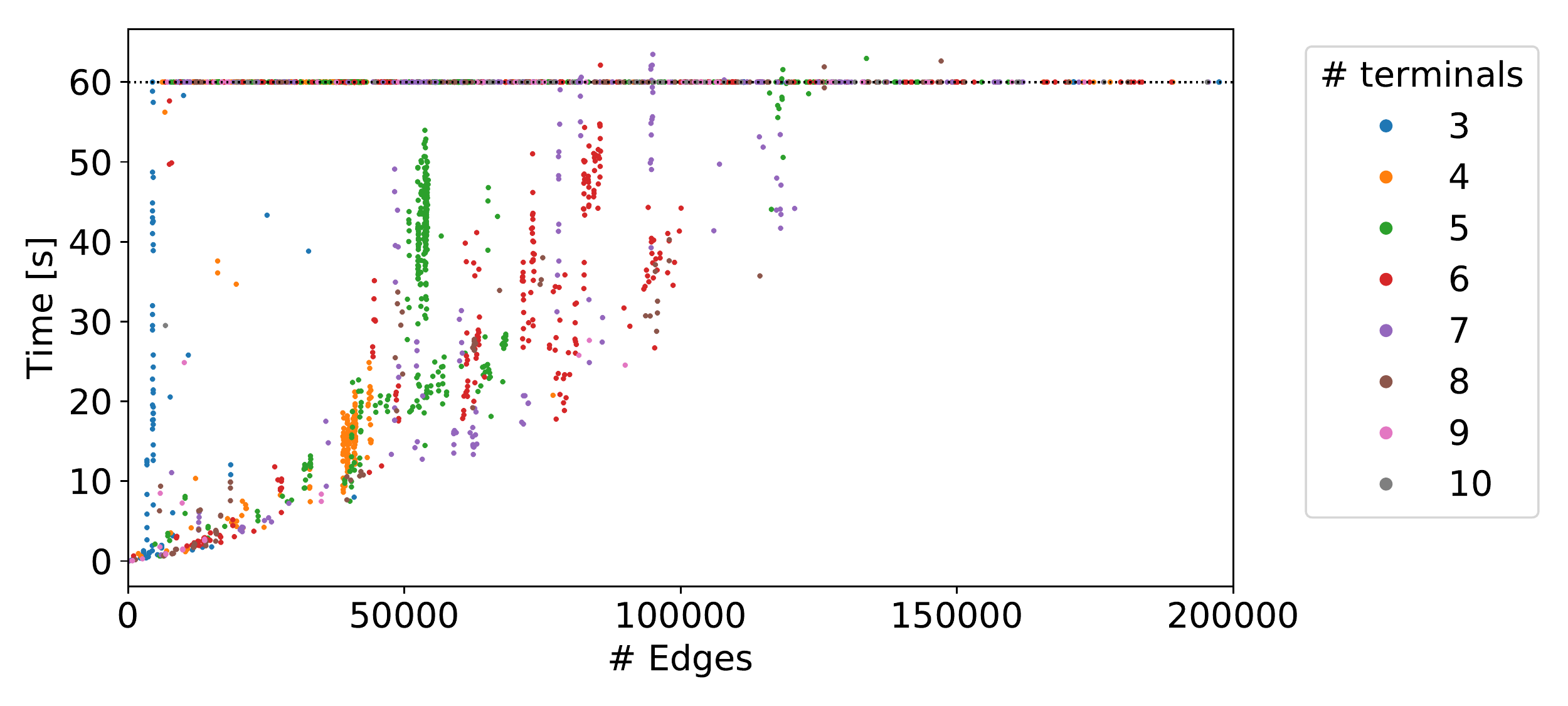}
  \caption{\label{fig:ilp} Running time of ILP subproblems in relation to
  $|E|$.}  
\end{figure}

Figure~\ref{fig:ilp} shows the time needed to solve the ILP problems in relation
to the number of edges in the graph. We can see that there is a strong
correlation between problem size and total running time, but there are still a
large number of outliers that cannot be solved in the allotted time even though
the instances are rather small. In the following, we set the limit to
\numprint{50000} edges and solve all instances with fewer than \numprint{50000}
edges with an integer linear program. If the instance has at least
\numprint{50000} edges, we branch on a vertex incident to a terminal and create
more subproblems.

\subsection{Large Real-World Networks}
\label{c:mc:ss:social}

\begin{figure}[t]
  \centering
  \begin{subfigure}{.49\textwidth}
    \includegraphics[width=\linewidth]{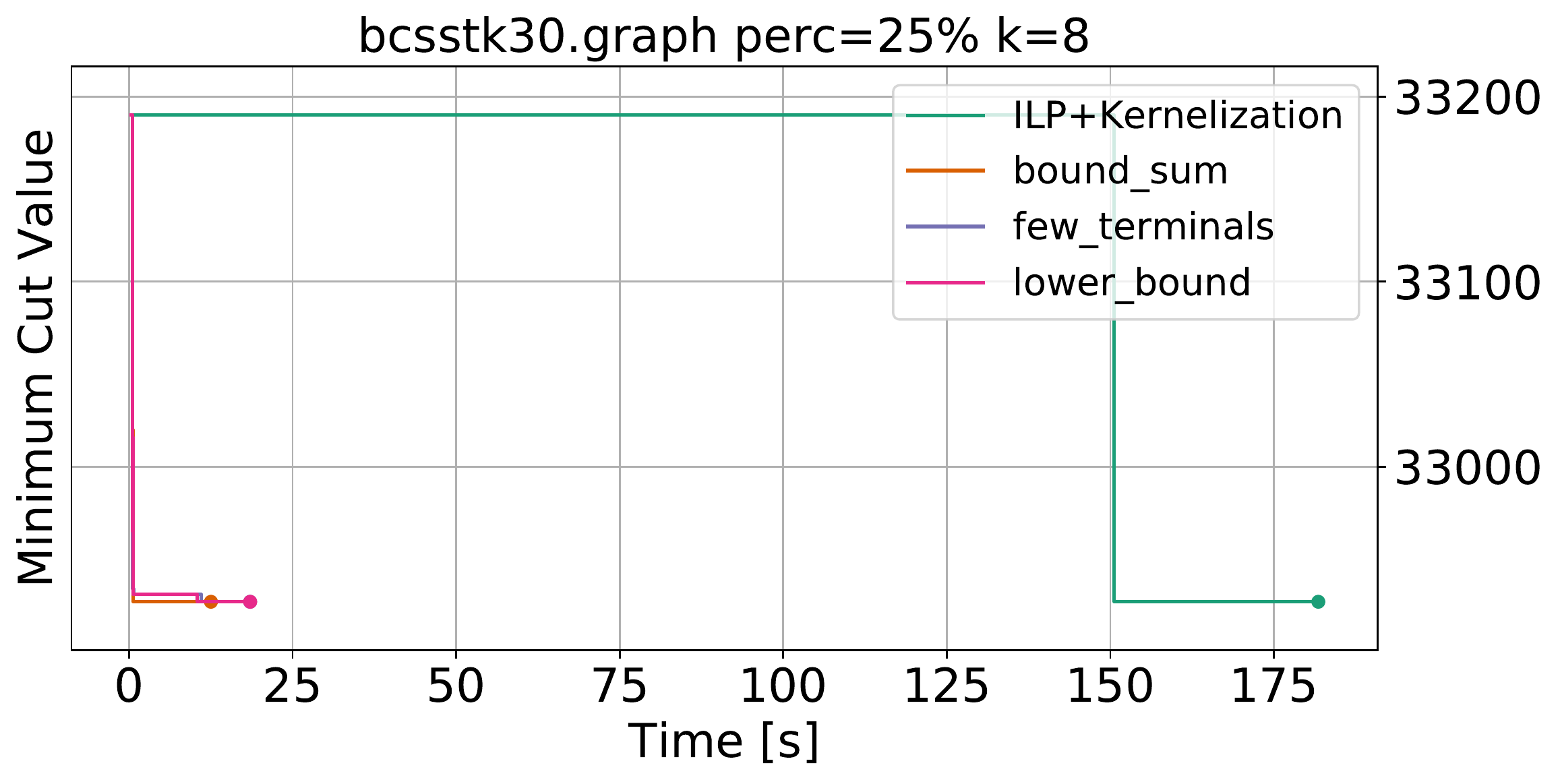}
  \end{subfigure}%
  \begin{subfigure}{.49\textwidth}
    \includegraphics[width=\linewidth]{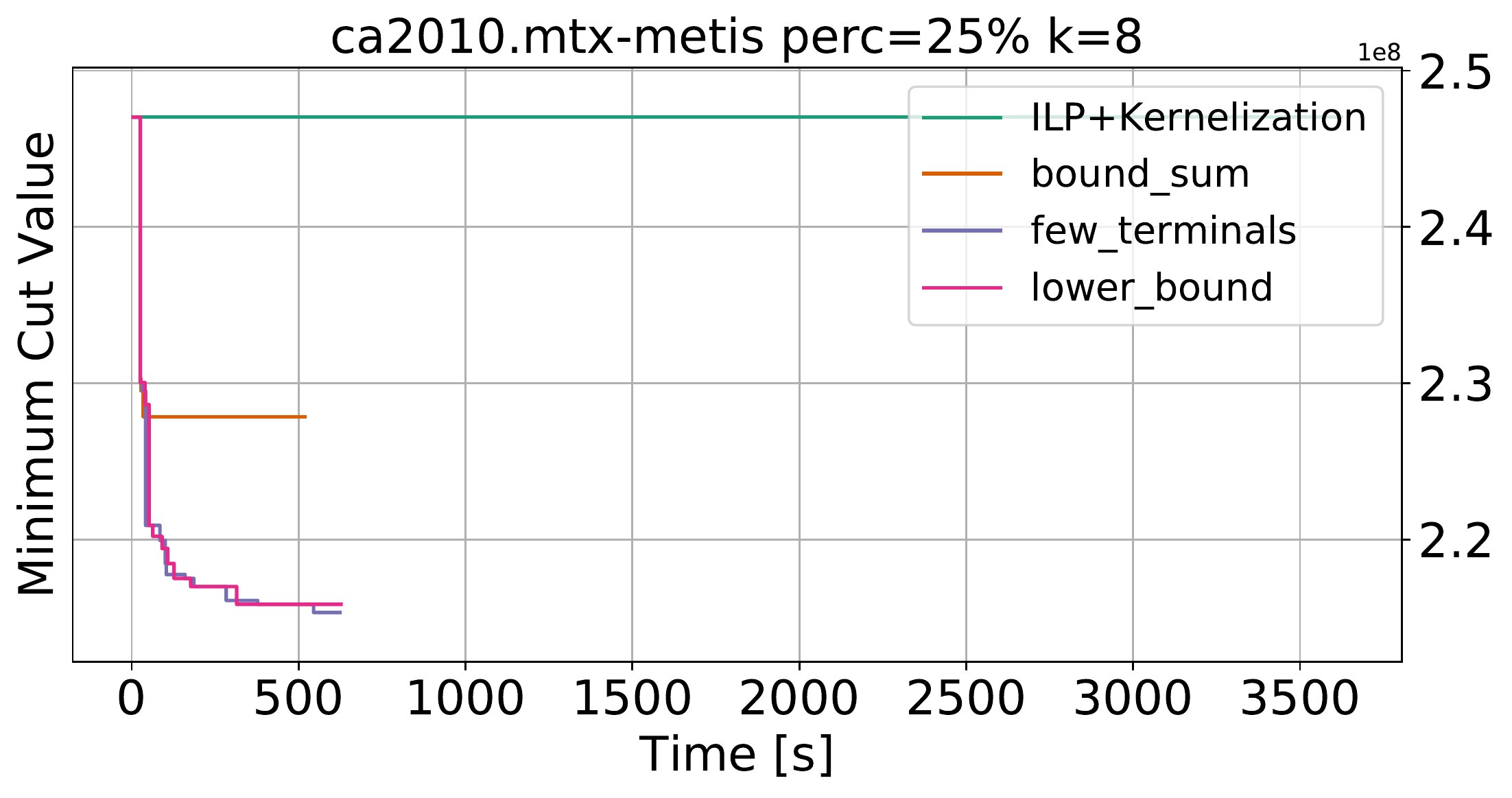}
  \end{subfigure}
  \begin{subfigure}{.49\textwidth}
    \includegraphics[width=\linewidth]{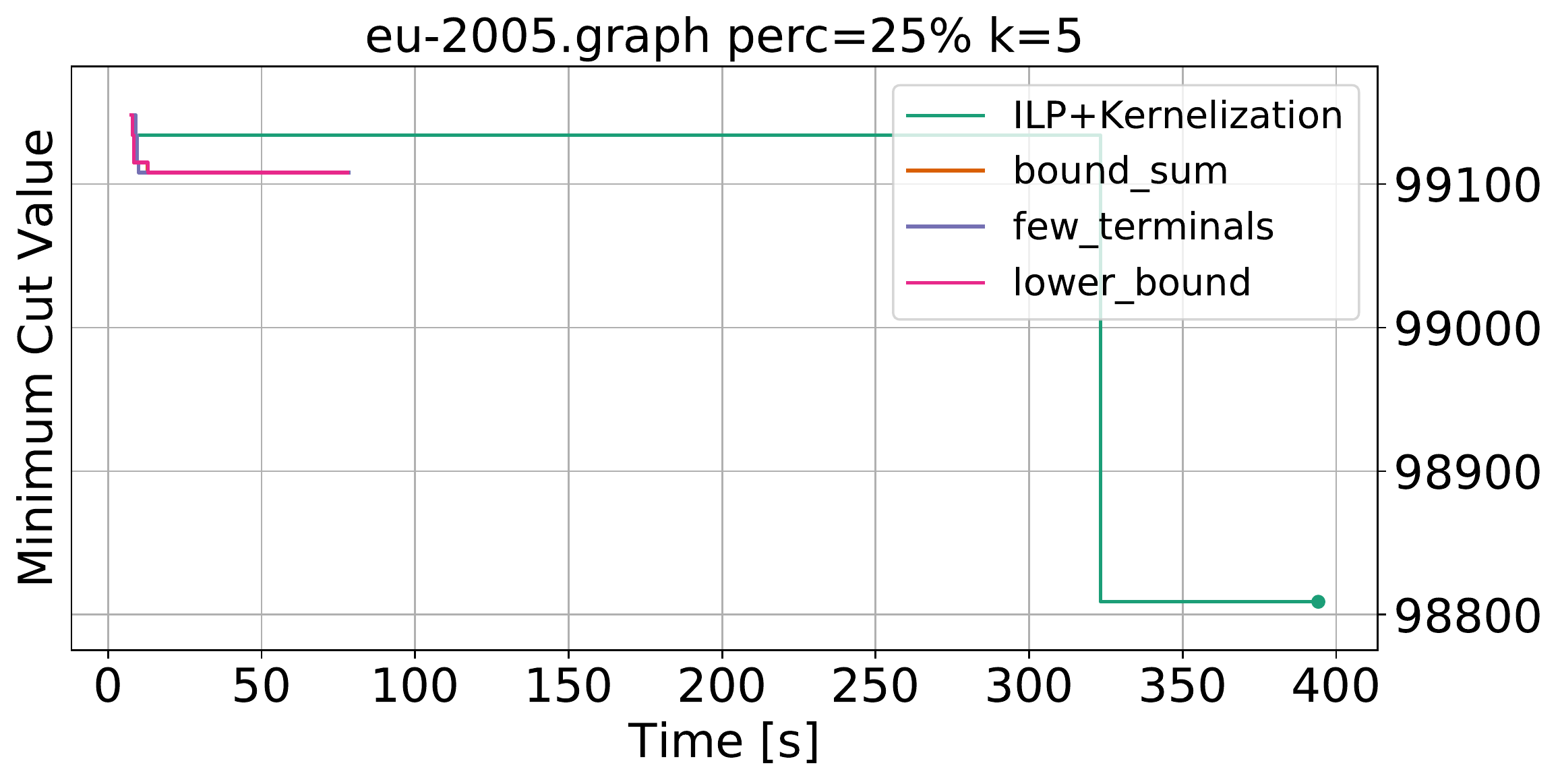}
  \end{subfigure}%
  \begin{subfigure}{.49\textwidth}
    \includegraphics[width=\linewidth]{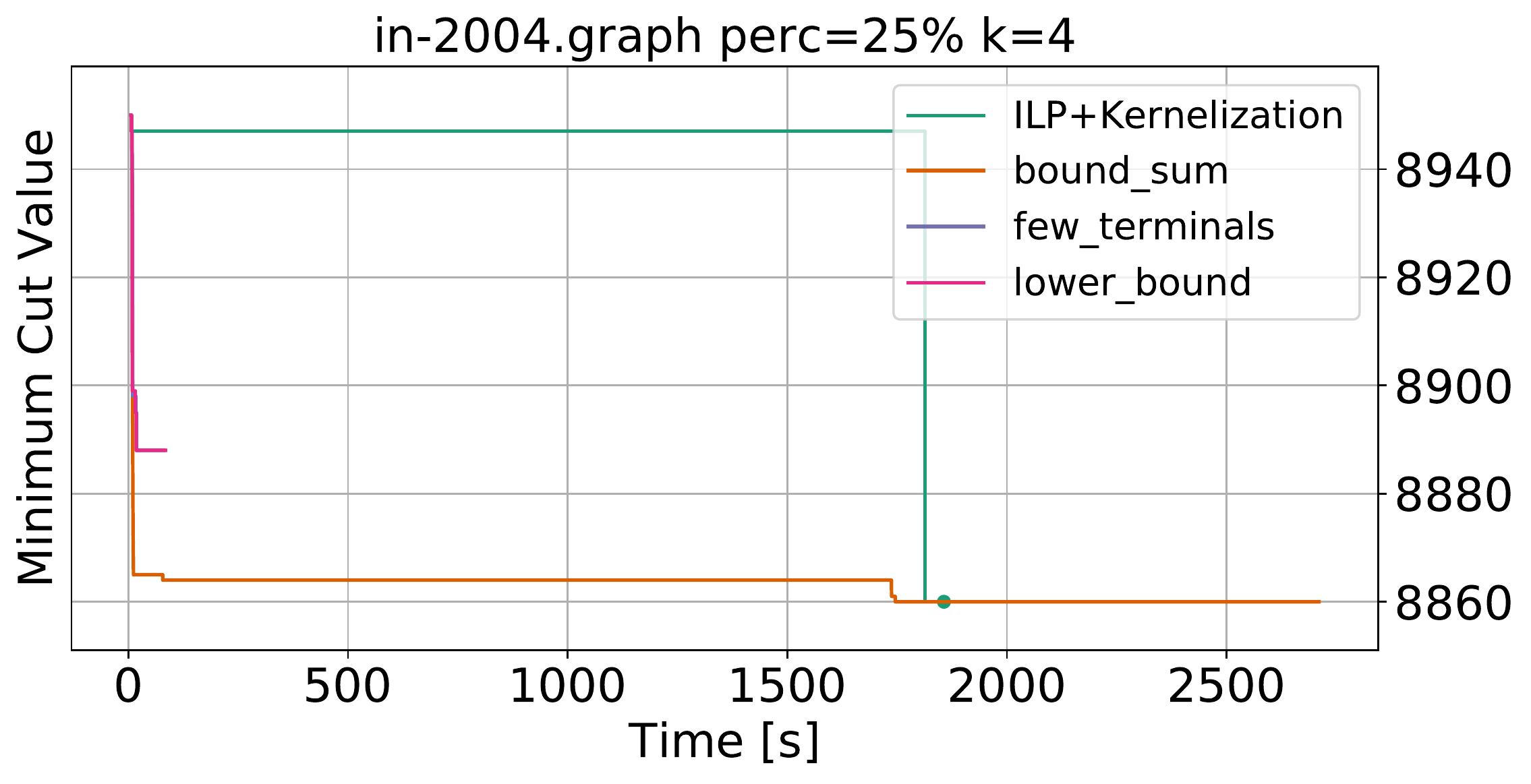}
  \end{subfigure}
  \begin{subfigure}{.49\textwidth}
    \includegraphics[width=\linewidth]{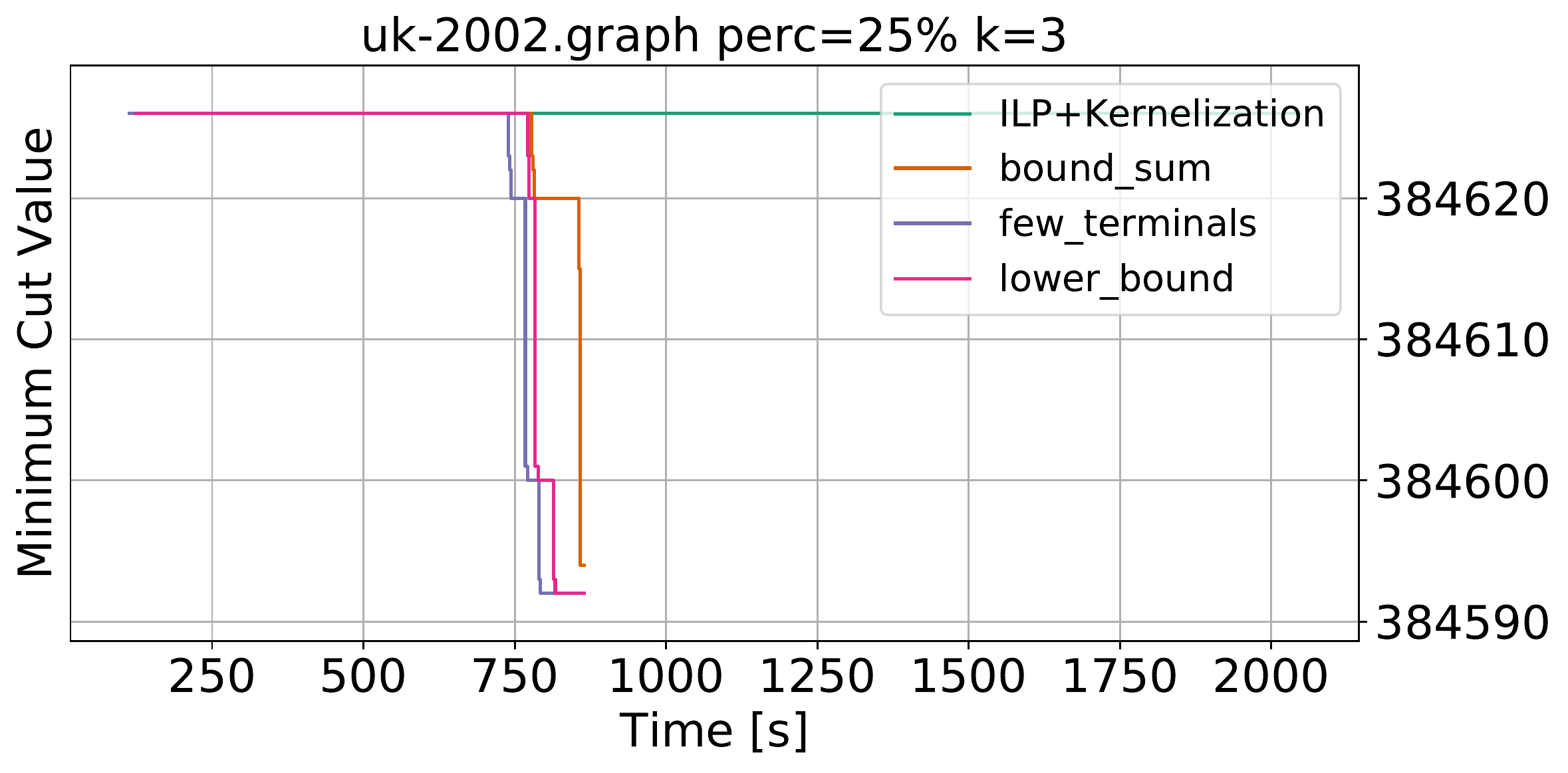}
  \end{subfigure}%
  \begin{subfigure}{.49\textwidth}
    \includegraphics[width=\linewidth]{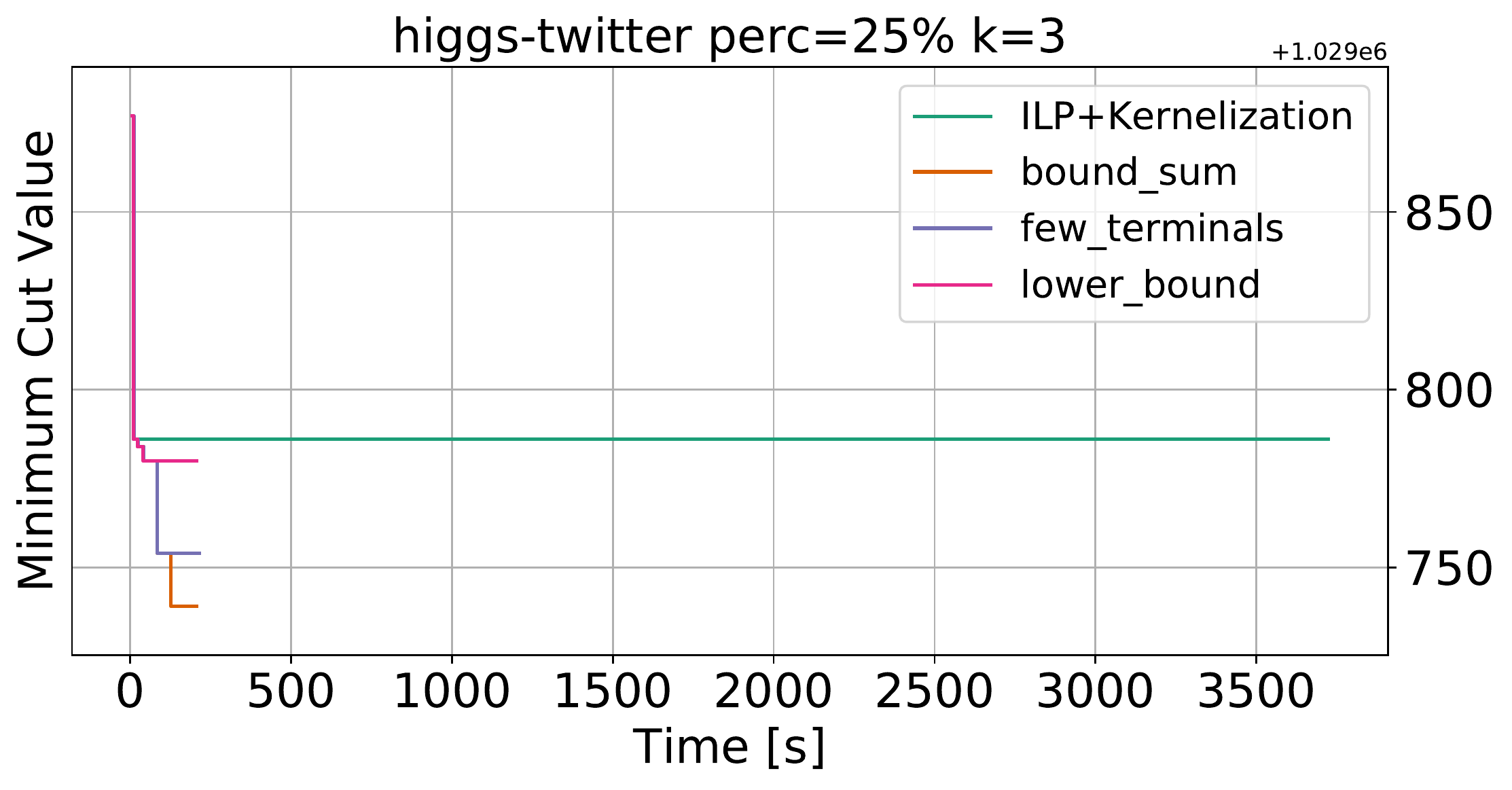}
  \end{subfigure}

  \caption{\label{c:mc:fig:social} Progression of best result over time. Dot at
  end symbolizes that algorithm certifies optimality.}
\end{figure}

\begin{table}[t] \centering
 \small
 \caption{Result overview of Section~\ref{c:mc:ss:social}.\label{t:overview}}
  \begin{tabular}{| l | r | r | r | r |}
    \hline
    Algorithm & \textttA{K+ILP} & \textttA{BSum} & \textttA{FTerm} &
    \textttA{LBound}\\
    \hline
    best result & \numprint{118} & \textbf{\numprint{136}} & \numprint{126} &
    \numprint{125}\\ 
    terminated & \textbf{\numprint{46}} & \numprint{35} & \numprint{33} &
    \numprint{33}\\
    mean result & \numprint{146570} & \textbf{\numprint{145961}} &
    \numprint{146052} & \numprint{146025}\\
    mean time & \numprint{18.69}s & \textbf{\numprint{6.71}s} & \numprint{6.97}s
    & \numprint{6.78}s\\
    \hline    
  \end{tabular}
\end{table}

In this experiment we compare configurations of \vcbase{} with
\textttA{Kernel+ILP}. We use graph family (1A) of Table~\ref{t:multicutgraphs}.
For each graph, we solve the minimum multiterminal cut problem for $k \in
\{3,4,5,8\}$ terminals and $p \in \{10\%, 15\%, 20\%, 25\%\}$ vertices in the
terminal. We hereby use the priority queue configurations \textttA{BoundSum},
\textttA{LowerBound} and \textttA{FewTerminals}. Figure~\ref{c:mc:fig:social}
shows the progression of the best result over time for a set of interesting
problems. Table~\ref{t:overview} gives an overview over the results. For each
variant we show how often it produced the best result over all variants and how
often it terminated with the optimal result. It also gives the mean result and
time for all problems which were solved to optimality by all variants. In both 
Figure~\ref{c:mc:fig:social} and Table~\ref{t:overview} we can see that the
branch and reduce variants find good solutions faster than \textttA{Kernel+ILP}.
However, the variants often run out of memory in some of the largest instances.
In cases where the best multiterminal cut was already found (but not
confirmed to be optimal) by the kernelization, \textttA{Kernel+ILP} managed to
certify optimality more often than the branching variants. Thus it has the
highest amount of terminated results, but reports significantly worse results on
average. \textttA{Kernel+ILP} has about half as much improvements as the best
variant \textttA{BoundSum}. In addition to giving the best results, variant
\textttA{BoundSum} also has the lowest mean time for problems which were solved by
all variants, however the improvement over the other branch-and-reduce variants
is miniscule. The correlation between running time and number of vertices in the
kernel graph is much stronger in \textttA{Kernel+ILP} compared to the branching
variants.

\begin{figure}[t!]
  \includegraphics[width=\linewidth]{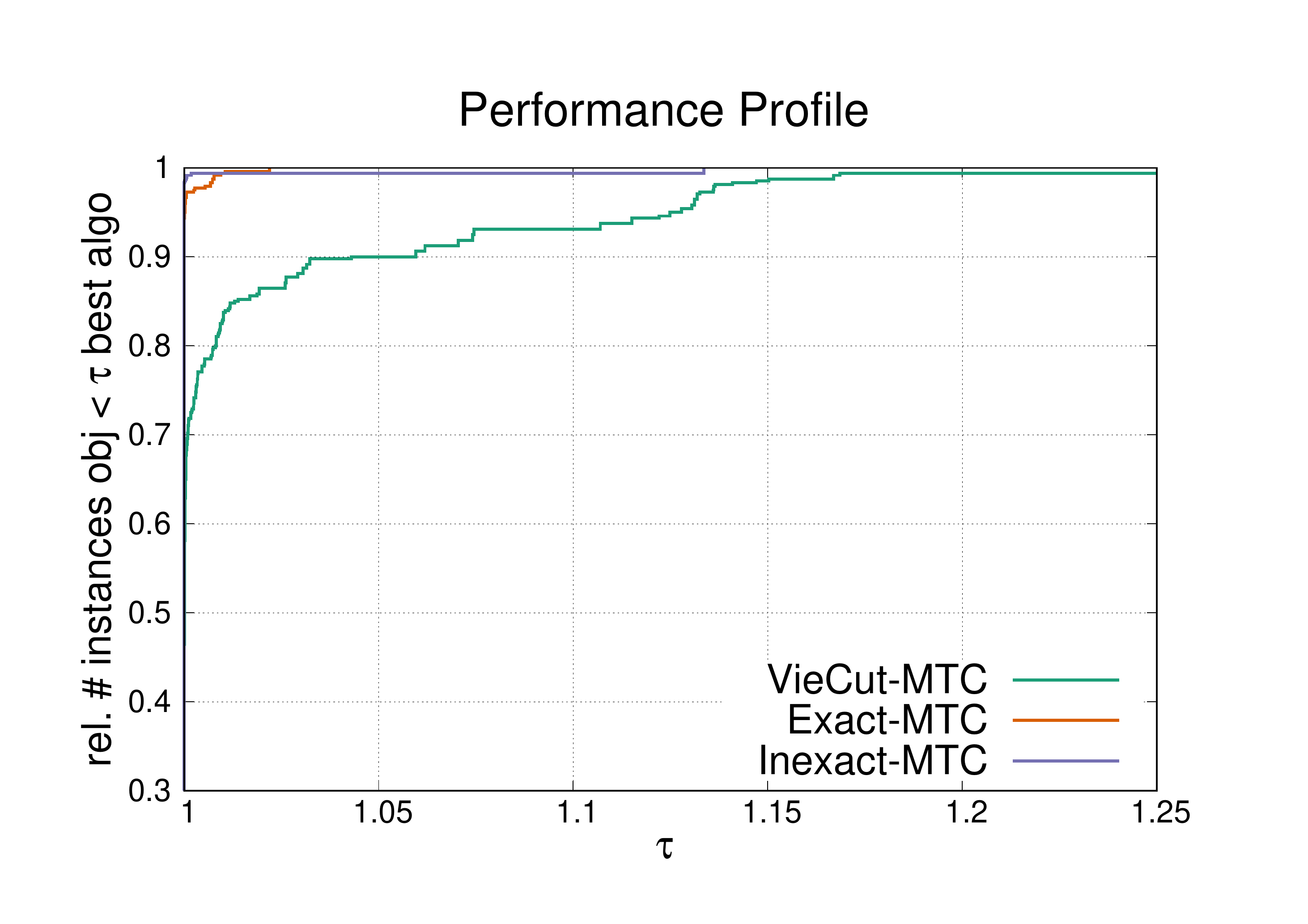}
  \caption{\label{fig:pp53} Performance profile for $k \in
  \{3,4,5,8\}$ and graph family (1A).}
\end{figure}

We use the same instances to compare \exact{} to \vcbase{} (both using
\textttA{BoundSum} as priority queue implementation), using machine C with all
$12$ cores and a time limit to $600$ seconds. Out of $160$ instances, \vcbase{}
terminates with an optimal result in $32$ instances, while \exact{} terminates
with an optimal result in $46$ instances. Of the $114$ instances that were
not solved to optimality by both algorithms, \exact{} gives a better result on
$75$ instances and the same result on all others. The geometric mean of results
given by \exact{} and \inexact{} are both about $1.5\%$ lower than \vcbase{}.
Note that in the first iteration of this experiment, which uses a larger machine
($32$ cores) and has a timeout of $3600$ seconds, \vcbase{} has a geometric mean
of about $0.1\%$ better than \vcbase{} in this comparison. The largest part of
the improvement of \exact{} and \inexact{} over \vcbase{} is gained by the
local search algorithm detailed in Section~\ref{c:mc:s:local}.

Figure~\ref{fig:pp53} shows the performance profile of this experiment. We can
see that both \exact{} and \inexact{} are almost always equal to the best result
on this instance or very close to it. In contrast, \vcbase{} gives noticeably
worse results on about $20\%$ of instances and more than $5\%$ worse results on
$10\%$ of instances. 

\begin{figure}[t!]
    \includegraphics[width=\linewidth]{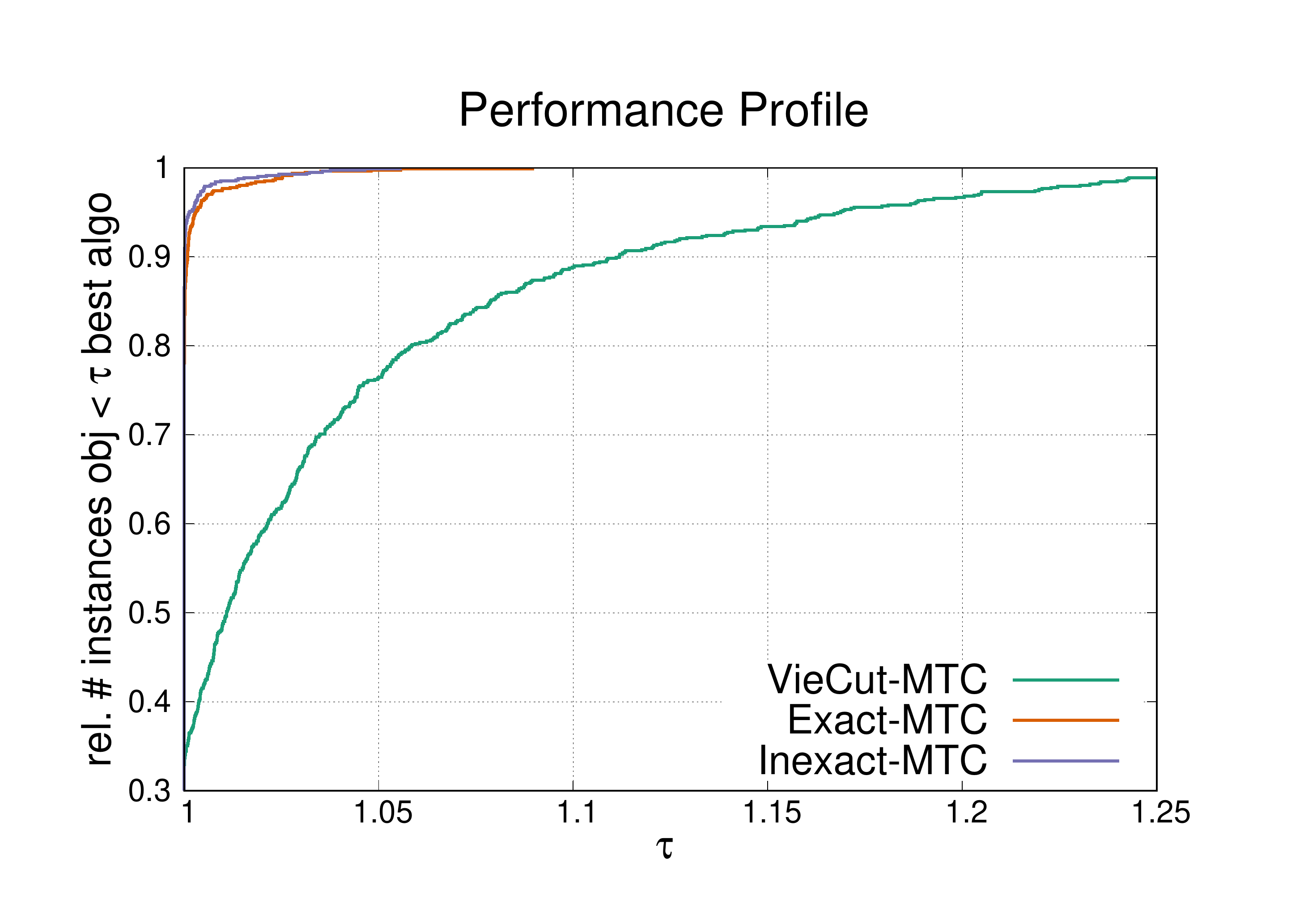}
    \caption{\label{fig:pp54} Performance Profiles for $k \in
    \{4,5,8,10\}$ and graph families (1A), (1B) and (3).}
\end{figure}

\begin{table}[t!] \centering
  \caption{\label{t:probs} Result overview for Section~\ref{c:mc:ss:social}.}
  \begin{tabular}{llrrr}
    \toprule
    \# Terminals & & \vcbase & \exact & \inexact \\
    \midrule
    4 & Best Solution & \numprint{109} & \textbf{\numprint{183}} &
    \numprint{175} \\
    & Mean Solution & \numprint{161799} & \textbf{\numprint{159402}} &
    \numprint{159499} \\
    & Better Exact & \numprint{6} & \textbf{\numprint{94}} & --- \\
    \midrule
    5 & Best Solution & \numprint{81} & \textbf{\numprint{173}} & \numprint{158}
    \\
    & Mean Solution & \numprint{216191} & \textbf{\numprint{210928}} &
    \numprint{211090} \\
    & Better Exact & \numprint{6} & \textbf{\numprint{121}} & --- \\
    \midrule
    8 & Best Solution & \numprint{42} & \numprint{139} & \textbf{\numprint{175}}
    \\
     & Mean Solution & \numprint{346509} & \numprint{331112} &
     \textbf{\numprint{330856}} \\
     & Better Exact & \numprint{2} & \textbf{\numprint{162}} & --- \\
    \midrule
    10 & Best Solution & \numprint{37} & \numprint{129} &
    \textbf{\numprint{173}} \\
     & Mean Solution & \numprint{412138} & \numprint{392561} &
     \textbf{\numprint{391822}} \\
     & Better Exact & \numprint{1} & \textbf{\numprint{165}} & --- \\
    \bottomrule
  \end{tabular}
\end{table}

Additionally, we compare \vcbase{}, \exact{} and \inexact{} on a larger set of
instances, all graphs from Table~\ref{t:multicutgraphs} graph families (1A),
(1B) and (3) with
$k=\{4,5,8,10\}$ terminals and $p = \{10\%,20\%\}$ of vertices added to the
terminal. For each combination of graph, number of terminals and factor of
vertices in terminal, we create three problems with random seeds $s=\{0,1,2\}$.
Thus, we have a total of $816$ problems. We set the time limit per algorithm and
problem to $600$ seconds. We run the experiment on machine B using all $12$ CPU
cores. If the algorithm does not terminate in the allotted time or memory limit,
we report the best intermediate result. Note that is a soft limit, in which the
algorithm finishes the current operation and exits afterwards if the time or
memory limit is reached.

Table~\ref{t:probs} gives an overview of the results. For each algorithm, we
give the number of times, where it gives the best (or shared best) solution over
all algorithms; the geometric mean of the cut value; and for \vcbase{} and
\exact{} the number of instances in which they have a better result than the
respective other. In all instances, in which \vcbase{} and \exact{} terminate
with the optimal result, \inexact{} also gives the optimal result. We can see
that in the problems with $4$ and $5$ terminals, \exact{} slightly outperforms
\inexact{} both in number of best results and mean solution value. In the
problems with $8$ and $10$ terminals, \inexact{} has slightly better results in
average. Thus, disregarding the optimality constraint can allow the algorithm to
give better solutions faster especially in hard problems with a large amount of
terminals. 

\begin{figure}[t!]
  \centering
  \begin{subfigure}{.49\textwidth}
    \includegraphics[width=\linewidth]{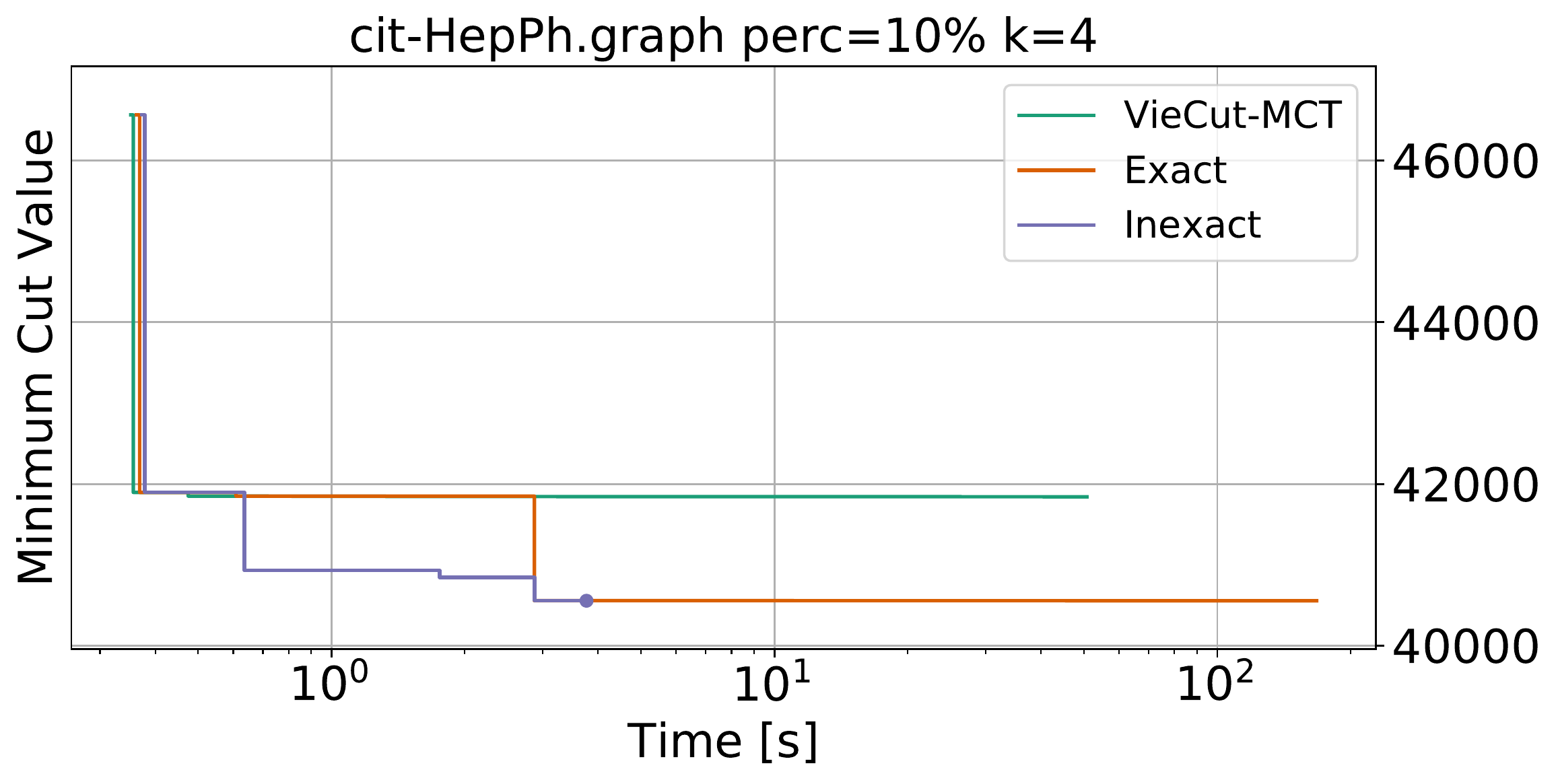}
  \end{subfigure}%
  \begin{subfigure}{.49\textwidth}
    \includegraphics[width=\linewidth]{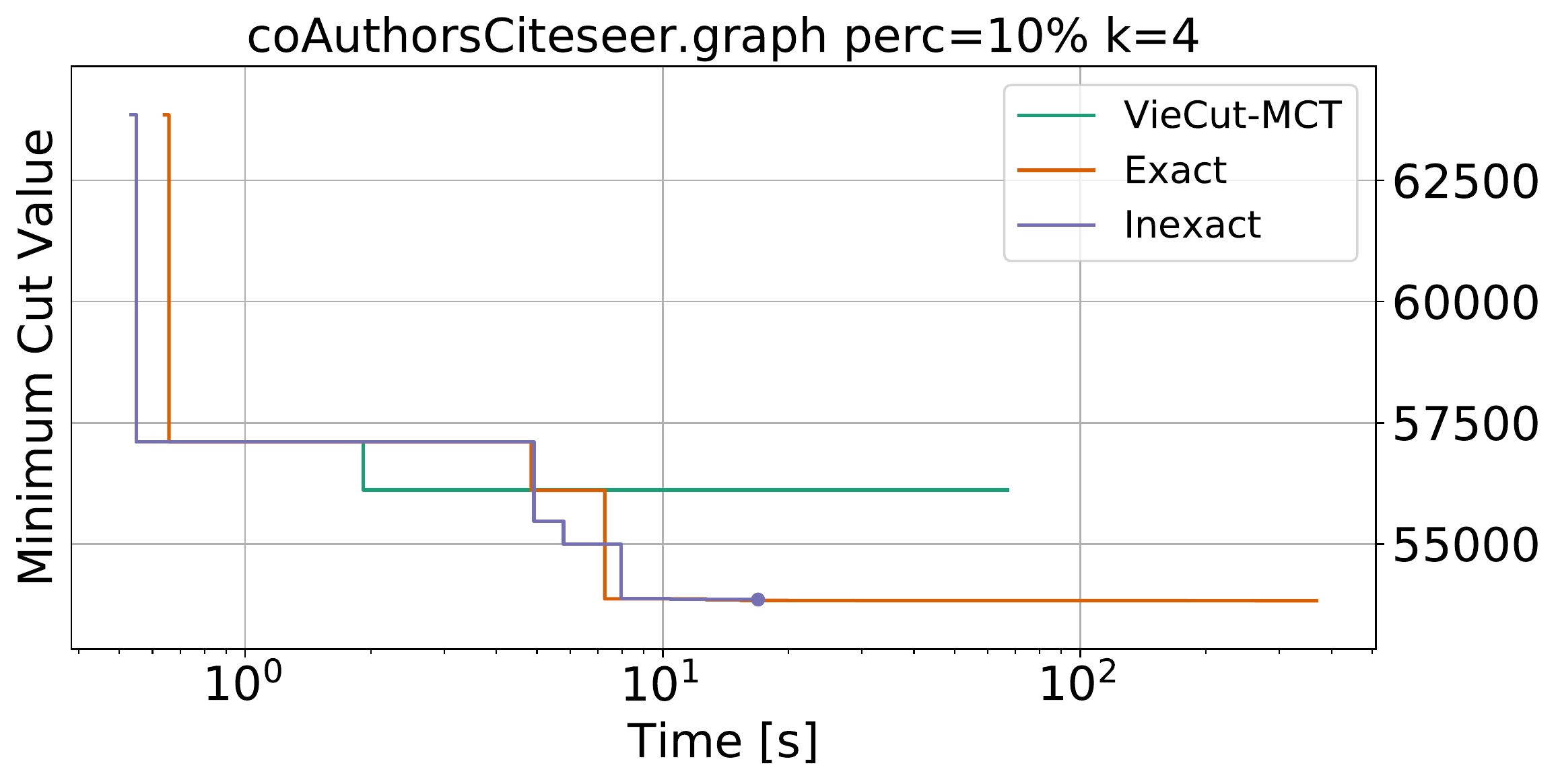}
  \end{subfigure}
  \begin{subfigure}{.49\textwidth}
    \includegraphics[width=\linewidth]{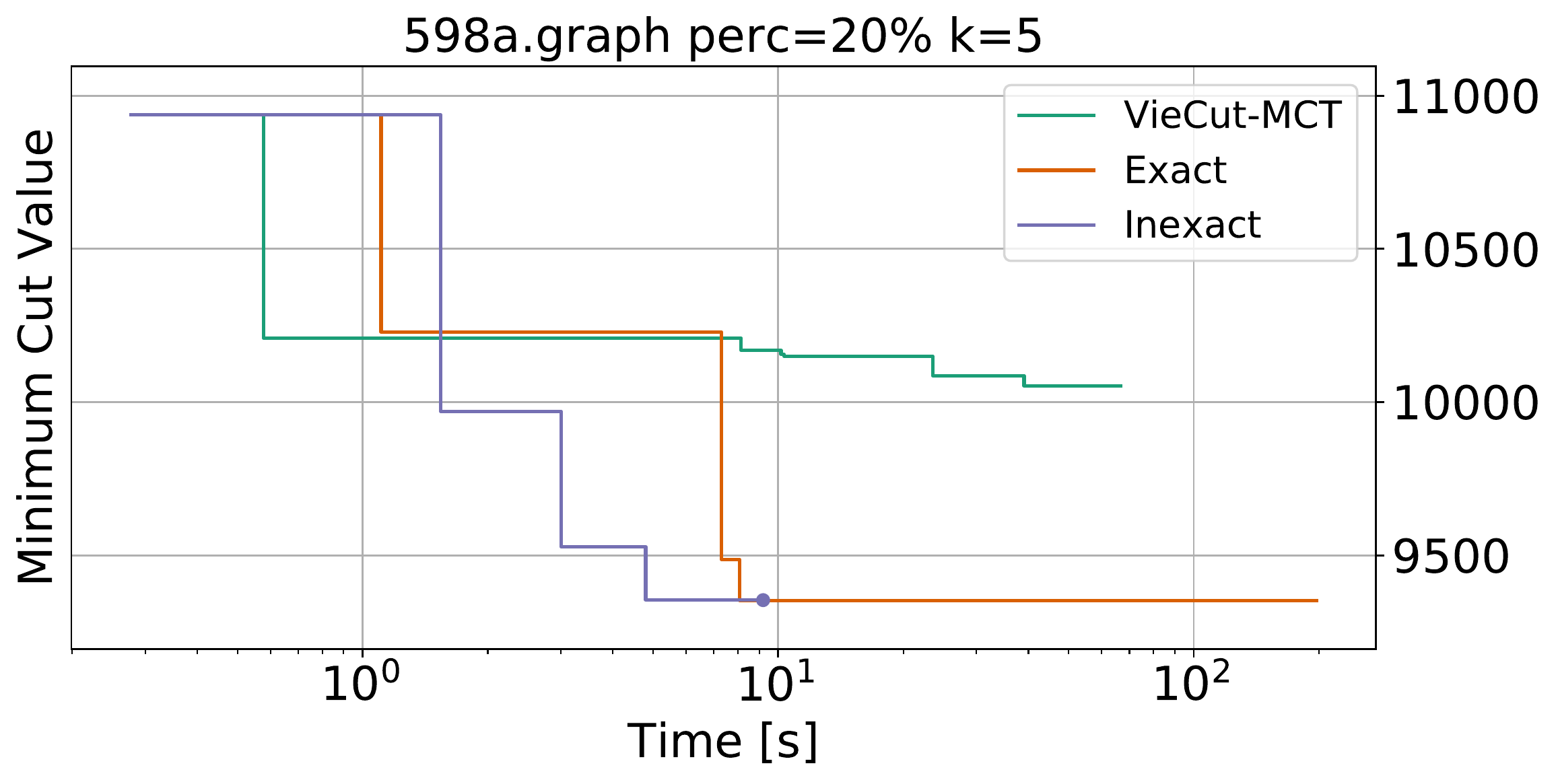}
  \end{subfigure}%
  \begin{subfigure}{.49\textwidth}
    \includegraphics[width=\linewidth]{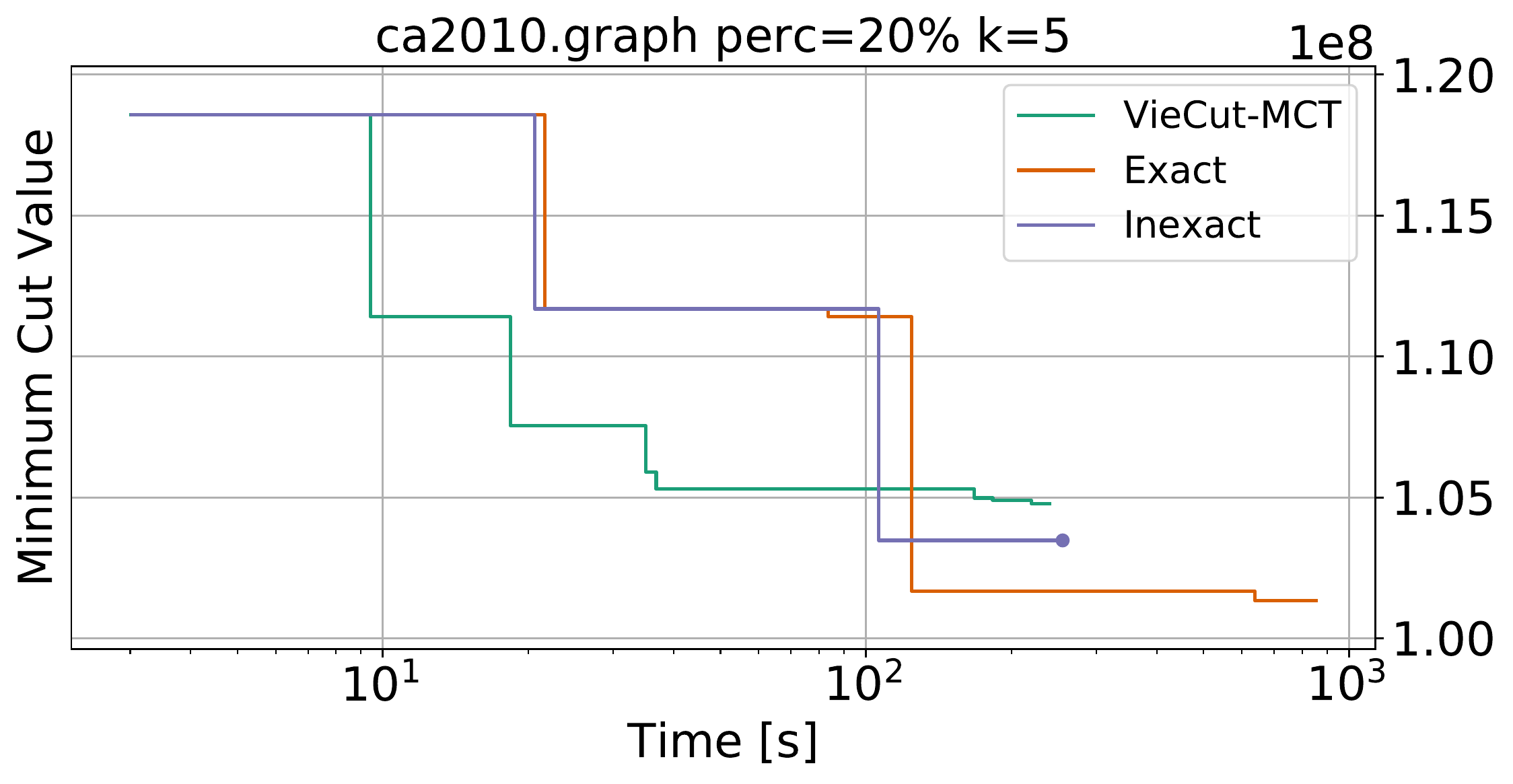}
  \end{subfigure}
  \begin{subfigure}{.49\textwidth}
    \includegraphics[width=\linewidth]{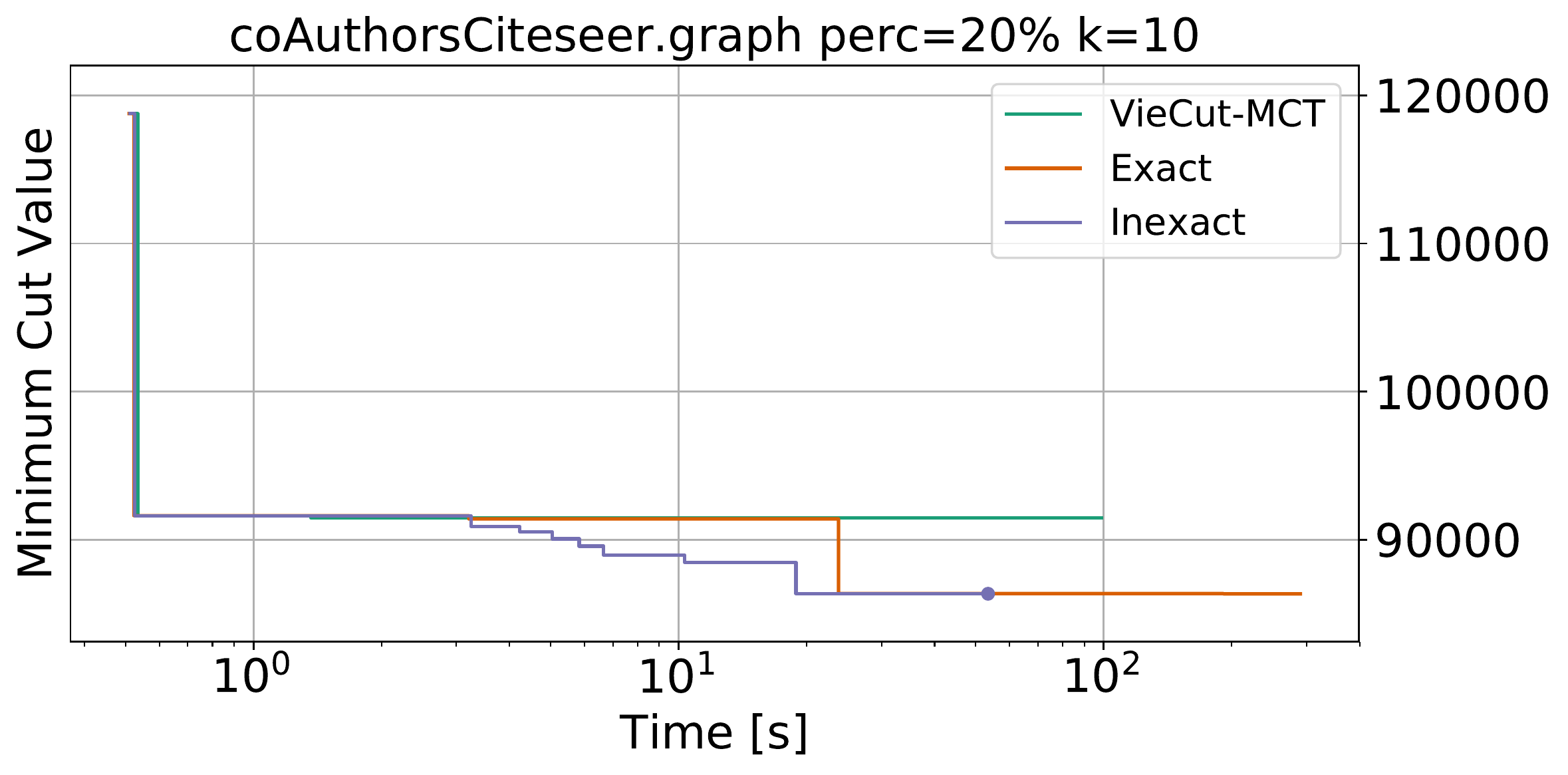}
  \end{subfigure}%
  \begin{subfigure}{.49\textwidth}
    \includegraphics[width=\linewidth]{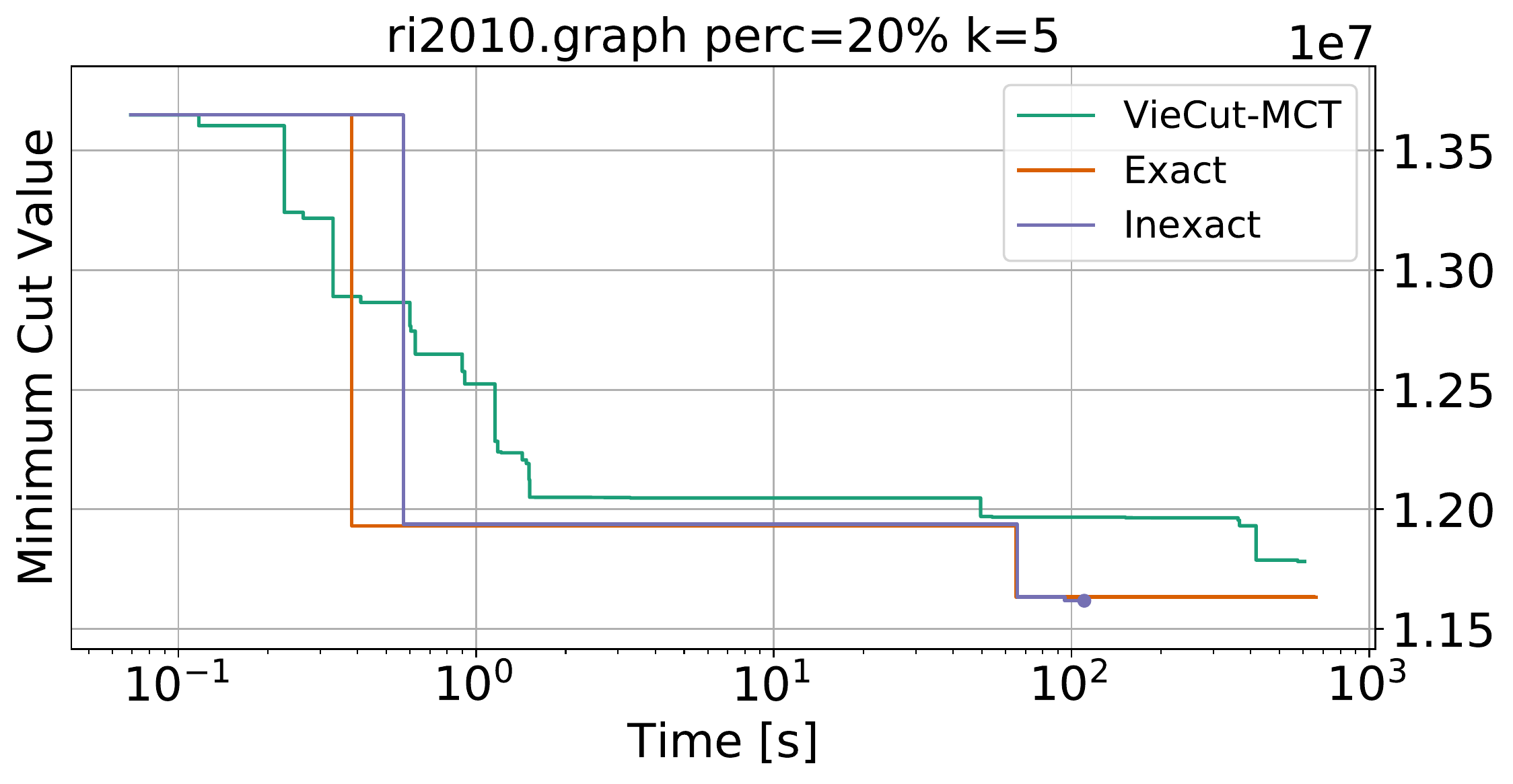}
  \end{subfigure}

  \caption{Progression of best result over time. Dot at end marks termination of algorithm.}
  \label{fig:progress}
\end{figure}

However, both algorithms outperform \vcbase{} on almost all instances where
not all algorithms give the same result. Here, \exact{} gives a better result
than \vcbase{} in $66\%$ of all instances, while \vcbase{} gives the better
result in only $2\%$ of all instances. As most problems do not terminate with an
optimal result, we are unable to say how far the solutions are from the globally
optimal solution. Note that \inexact{} gives an optimal result in all instances
in which all algorithms terminate. Figure~\ref{fig:progress} shows the progress
of the best solution for the algorithms in a set of problems. For both \exact{}
and \inexact{} we can see large improvements to the cut value when the local
search algorithm is finished on the first subproblem. In contrast, \vcbase{} has
more small step-by-step improvements and generally gives worse results.  

Figure~\ref{fig:pp54} shows the performance profile for the instances in this
section. Here we can see that \vcbase{} has significantly worse results on a
large subset of the instances, with more than $10\%$ of instances where the
result is worse by more than $10\%$. Also, on a few instances, the results given
by \exact{} and \inexact{} differ significantly. In general, both of them
outperform \vcbase{} on most instances that are not solved to optimality by every
algorithm.

\section{Conclusion}\label{c:mc:s:conclusion}

In this chapter, we give a fast parallel solver that gives high-quality
solutions for large multiterminal cut problems. We give a set of
highly-effective reduction rules that transform an instance into a smaller
equivalent one. Additionally, we directly integrate an ILP solver into the
algorithm to solve subproblems well suited to be solved using an ILP; and
develop a flow-based local search algorithm to improve a given optimal solution.
These optimizations significantly increase the number of instances that can be
solved to optimality and improve the cut value of multiterminal cuts in
instances that can not be solved to optimality. Additionally, we give an inexact
algorithm for the multiterminal cut problem that aggressively shrinks the graph
instances and is able to outperform the exact algorithm on many of the
hardest instances that are too large to be solved to optimality while still
giving the exact solution for most easier instances. Important future work
consists of improving the scalability of the algorithm by giving a distributed
memory version.

\printbibliography

\backmatter

\thispagestyle{empty}

\end{document}